\setlist[enumerate]{itemsep=0pt}
\declaretheorem[name=Theorem,numberwithin=section]{theorem}
\declaretheorem[name=Definition,numberwithin=section]{definition}
\declaretheorem[name=Lemma,numberwithin=section]{lemma}
\declaretheorem[name=Proposition,numberwithin=section]{proposition}
\declaretheorem[name=Corollary,numberwithin=section]{corollary}
\numberwithin{equation}{section}
\newtheorem{remark}[theorem]{Remark}
\newcommand{\eqn}[1]{(\ref{eqn:#1})}
\newcommand{\eq}[1]{(\ref{eq:#1})}
\newcommand{\thm}[1]{\hyperref[thm:#1]{Theorem~\ref*{thm:#1}}}
\newcommand{\cor}[1]{\hyperref[cor:#1]{Corollary~\ref*{cor:#1}}}
\newcommand{\defn}[1]{\hyperref[defn:#1]{Definition~\ref*{defn:#1}}}
\newcommand{\lem}[1]{\hyperref[lem:#1]{Lemma~\ref*{lem:#1}}}
\newcommand{\prop}[1]{\hyperref[prop:#1]{Proposition~\ref*{prop:#1}}}
\newcommand{\fig}[1]{\hyperref[fig:#1]{Figure~\ref*{fig:#1}}}
\newcommand{\tab}[1]{\hyperref[tab:#1]{Table~\ref*{tab:#1}}}
\newcommand{\algo}[1]{\hyperref[algo:#1]{Algorithm~\ref*{algo:#1}}}
\renewcommand{\sec}[1]{\hyperref[sec:#1]{Section~\ref*{sec:#1}}}
\newcommand{\append}[1]{\hyperref[append:#1]{Appendix~\ref*{append:#1}}}
\newcommand{\fac}[1]{\hyperref[fac:#1]{Fact~\ref*{fac:#1}}}
\newcommand{\lin}[1]{\hyperref[lin:#1]{Line~\ref*{lin:#1}}}
\newcommand{\fnote}[1]{\hyperref[fnote:#1]{Footnote~\ref*{fnote:#1}}}
\newcommand{\specialcell}[2][c]{%
  \begin{tabular}[#1]{@{}c@{}}#2\end{tabular}}
\def\>{\rangle}
\def\<{\langle}
\newcommand{\N}{\mathbb{N}}
\newcommand{\Z}{\mathbb{Z}}
\newcommand{\R}{\mathbb{R}}
\newcommand{\C}{\mathbb{C}}
\newcommand{\E}{\mathbb{E}}
\newcommand{\B}{\mathrm{B}}
\newcommand{\K}{\mathrm{K}}
\newcommand{\cvx}{\mathrm{C}}
\renewcommand{\S}{\mathrm{S}}
\renewcommand{\O}{O}
\newcommand{\Cone}{\mathrm{C}}
\newcommand{\memcost}{C_\mathrm{MEM}}
\newcommand{\arithcost}{n^{5}+n^{4.5}/\epsilon}
\newcommand{\querycost}{n^{3}+n^{2.5}/\epsilon}
\newcommand{\sinsq}{\mathrm{sin}^{2}}
\newcommand{\median}{\mathrm{median}}
\newcommand{\Cheby}{\mathrm{CB}}
\let\var\relax
\DeclareMathOperator{\vol}{Vol}
\DeclareMathOperator{\poly}{poly}
\DeclareMathOperator{\spn}{span}
\DeclareMathOperator{\var}{Var}
\DeclareMathOperator{\st}{St}
\newcommand{\bbr}{\mathbb{R}}
\newcommand{\rmk}{\mathrm{K}}
\newcommand{\rmb}{\mathrm{B}}
\newcommand{\rms}{\mathrm{S}}
\newcommand{\rma}{\mathrm{A}}
\newcommand{\rml}{\mathrm{L}}
\newcommand{\rmp}{\mathrm{P}}
\newcommand{\rmq}{\mathrm{Q}}
\newcommand{\rmke}{\mathrm{K}_{\epsilon}}
\newcommand{\rmkse}{\mathrm{K}_{\sqrt{\epsilon}n^{1/4}}}
\renewcommand{\d}{\mathrm{d}}
\renewcommand{\emptyset}{\varnothing}
\newcommand{\range}[1]{[#1]}
\newcommand{\hd}[1]{\vspace{4mm} \noindent \textbf{#1}\ \ }
\let\oldnl\nl
\newcommand{\nonl}{\renewcommand{\nl}{\let\nl\oldnl}}
\def\Ddots{\mathinner{\mkern1mu\raise\p@
\vbox{\kern7\p@\hbox{.}}\mkern2mu
\raise4\p@\hbox{.}\mkern2mu\raise7\p@\hbox{.}\mkern1mu}}
\begin{document}

\begin{titlepage}
\clearpage

\title{Quantum algorithm for estimating volumes of convex bodies}

\author{Shouvanik Chakrabarti\thanks{Department of Computer Science, Institute for Advanced Computer Studies, and Joint Center for Quantum Information and Computer Science, University of Maryland. Email address: \{shouv,tongyang,xwu\}@cs.umd.edu, \{amchilds,shung\}@umd.edu}\qquad Andrew M.\ Childs$^{*}$\qquad Shih-Han Hung$^{*}$

Tongyang Li$^{*}$\qquad Chunhao Wang\thanks{Department of Computer Science, University of Texas at Austin. Email address: chunhao@cs.utexas.edu}\qquad Xiaodi Wu$^{*}$}

\normalsize
\renewcommand\Authsep{  }
\renewcommand\Authands{  }

\date{}

\maketitle
\thispagestyle{empty}

\begin{abstract}
Estimating the volume of a convex body is a central problem in convex geometry and can be viewed as a continuous version of counting. We present a quantum algorithm that estimates the volume of an $n$-dimensional convex body within multiplicative error $\epsilon$ using $\tilde{O}(n^{3}+n^{2.5}/\epsilon)$ queries to a membership oracle and $\tilde{O}(n^{5}+n^{4.5}/\epsilon)$ additional arithmetic operations. For comparison, the best known classical algorithm uses $\tilde{O}(n^{4}+n^{3}/\epsilon^{2})$ queries and $\tilde{O}(n^{6}+n^{5}/\epsilon^{2})$ additional arithmetic operations. To the best of our knowledge, this is the first quantum speedup for volume estimation. Our algorithm is based on a refined framework for speeding up simulated annealing algorithms that might be of independent interest. This framework applies in the setting of ``Chebyshev cooling'', where the solution is expressed as a telescoping product of ratios, each having bounded variance. We develop several novel techniques when implementing our framework, including a theory of continuous-space quantum walks with rigorous bounds on discretization error. To complement our quantum algorithms, we also prove that volume estimation requires $\Omega(\sqrt n+1/\epsilon)$ quantum membership queries, which rules out the possibility of exponential quantum speedup in $n$ and shows optimality of our algorithm in $1/\epsilon$ up to poly-logarithmic factors.
\end{abstract}

\end{titlepage}

\newpage

%%%%%%%%%%%%%%%%%%%%%%%%%%%%%%%%%%%%%%%%%%%%%%%%%%%%%%%%%%%%%%%%%%%%%%%%%%%%%%

\section{Introduction}\label{sec:intro}
Estimating the volume of a convex body is a central challenge in theoretical computer science. Volume estimation is a basic problem in convex geometry and can be viewed as a continuous version of counting. Furthermore, algorithms for a generalization of volume estimation---namely log-concave sampling---can be directly used to perform convex optimization, and hence can be widely applied to problems in statistics, machine learning, operations research, etc. See the survey~\cite{vempala2005geometric} for a more comprehensive introduction.

Volume estimation is a notoriously difficult problem. References \cite{barany1987computing,elekes1986geometric} proved that any \emph{deterministic algorithm} that approximates the volume of an $n$-dimensional convex body within a factor of $n^{o(n)}$ necessarily makes exponentially many queries to a membership oracle for the convex body. Furthermore, Refs.~\cite{dyer1988complexity,khachiyan1988complexity,khachiyan1989problem} showed that estimating the volume exactly (deterministically) is \#P-hard, even for explicitly described polytopes.

Surprisingly, volumes of convex bodies can be approximated efficiently by \emph{randomized algorithms}. Dyer, Frieze, and Kannan \cite{dyer1991random} gave the first polynomial-time randomized algorithm for estimating the volume of a convex body in $\R^{n}$. They present an iterative algorithm that constructs a sequence of convex bodies. The volume of the convex body of interest can be written as the telescoping product of the ratios of the volumes of consecutive convex bodies, and these ratios are estimated by Markov chain Monte Carlo (MCMC) methods via random walks inside these convex bodies. The algorithm in~\cite{dyer1991random} has complexity\footnote{Throughout the paper, $\tilde{O}$ omits factors in $\poly(\log R/r, \log 1/\epsilon, \log n)$ where $R$ and $r$ are defined in \eqn{ball-relationship}.} $\tilde{O}(n^{23})$ with multiplicative error $\epsilon=\Theta(1)$. Subsequent work~\cite{lovasz1990mixing,applegate1991sampling,dyer1991computing,lovasz1993random,kannan1997random,Lovasz99,LV06} improved the design of the iterative framework and the choice of the random walks. The state-of-the-art algorithm for estimating the volume of a general convex body~\cite{lovasz2006simulated} uses $\tilde{O}(n^{4})$ queries to the oracle for the convex body and $\tilde{O}(n^{6})$ additional arithmetic operations.

It is natural to ask whether quantum computers can solve volume estimation even faster than classical randomized algorithms. Although there are frameworks with potential quantum speedup for simulated annealing algorithms in general, with volume estimation as a possible application~\cite{wocjan2009quantum}, we are not aware of any previous quantum speedup for volume estimation. There are several reasons to develop such a result. First, quantum algorithms for volume estimation can be seen as performing a continuous version of quantum counting~\cite{brassard1998quantum,brassard2002amplitude}, a key algorithmic technique with wide applications in quantum computing. Second, quantum algorithms for volume estimation can exploit quantum MCMC methods (e.g., \cite{Richter2007,wocjan2008speedup,montanaro2015quantum}), and a successful quantum volume estimation algorithm may illuminate the application of quantum MCMC methods in other scenarios. Third, there has been recent progress on quantum algorithms for convex optimization~\cite{vanApeldoorn2018convex,chakrabarti2018quantum}, so it is natural to study the closely related task of estimating volumes of convex bodies.

\paragraph{Formulation}
Given a convex set $\K \subset \R^n$, we consider the problem of estimating its volume
\begin{align}\label{prb:volume}
\vol(\rmk):=\int_{x\in \rmk}\d x.
\end{align}
To get a basic sense about the location of $\K$, we assume that it contains the origin. Furthermore, we assume that we are given inner and outer bounds on $\rmk$, namely
\begin{align}\label{eqn:ball-relationship}
\rmb_{2}^{n}(0,r) \subseteq \rmk \subseteq \rmb_{2}^{n}(0,R),
\end{align}
where $\rmb_{2}^{n}(x,l)$ is the ball of radius $l$ in $\ell_{2}$-norm centered at $x\in\R^{n}$. Denote $D:=R/r$.

We consider the very general setting where the convex body $\rmk$ is only specified by an oracle. In particular, we have a \emph{membership oracle}\footnote{The membership oracle is commonly used in convex optimization research (see for example~\cite{grotschel2012geometric}). This model is not only general but also of practical interest. For instance, when $\rmk$ is a bounded convex polytope, the membership oracle can be efficiently implemented by checking if all its linear constraints are satisfied; see also~\cite{lee2018convergence}.} for $\rmk$ that determines whether a given $x\in\R^{n}$ belongs to $\rmk$. The efficiency of volume estimation is then measured by the number of queries to the membership oracle (i.e., the \emph{query complexity}) and the total number of other arithmetic operations.

In the quantum setting, the membership oracle is a unitary operator $O_{\rmk}$. Specifically, we have
\begin{align}\label{eqn:oracle-defn}
O_{\rmk}|x,0\>=|x,\delta[x \in \rmk]\>\qquad\forall x\in\R^{n},
\end{align}
where $\delta[P]$ is $1$ if $P$ is true and $0$ if $P$ is false.\footnote{Here $x$ can be approximated just as in the classical algorithms, such as with fixed-point numbers. Our algorithmic approach is robust under discretization (see \sec{implement-hit-and-run}), and our quantum lower bound holds even when $x$ is stored with arbitrary precision (\sec{quantum-lower}). We mostly assume for convenience that $O_{\rmk}$ operates on $x\in\R^{n}$, since this neither presents a serious obstacle nor conveys significant power.} In other words, we allow coherent superpositions of queries to the membership oracle. If the classical membership oracle can be implemented by an explicit classical circuit, then the corresponding quantum membership oracle can be implemented by a quantum circuit of about the same size. Therefore, the quantum query model provides a useful framework for understanding the quantum complexity of volume estimation.

%=========================================================
\subsection{Contributions}\label{sec:contributions}
Our main result is a quantum algorithm for estimating volumes of convex bodies:

\begin{restatable}[Main Theorem]{theorem}{maintheorem}\label{thm:main}
Let $\K \subset \R^n$ be a convex set with $\rmb_{2}^{n}(0,r) \subseteq \rmk \subseteq \rmb_{2}^{n}(0,R)$. Assume $0<\epsilon<1/2$. Then there is a quantum algorithm that returns a value $\widetilde{\vol(\K)}$ satisfying
\begin{align}\label{eqn:volume-estimation-defn}
\frac{1}{1+\epsilon}\vol(\K)\leq\widetilde{\vol(\K)}\leq(1+\epsilon)\vol(\K)
\end{align}
with probability at least $2/3$ using $\tilde{O}(\querycost)$ quantum queries to the membership oracle $O_{\K}$ (defined in \eqn{oracle-defn}) and $\tilde{O}(\arithcost)$ additional arithmetic operations.\footnote{Arithmetic operations (e.g., addition, subtraction, multiplication, and division) can be in principle implemented by a universal set of quantum gates using the Solovay-Kitaev Theorem~\cite{dawson2006solovay} up to a small overhead. In our quantum algorithm, the number of arithmetic operations is dominated by $n$-dimensional matrix-vector products computed in superposition for rounding the convex body (see \sec{round-quantum}).}
\end{restatable}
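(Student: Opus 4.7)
The plan is to follow the classical simulated-annealing paradigm of Lovász-Vempala and accelerate each ingredient quantumly. First, I would express $\vol(\K)$ as a telescoping product $Z_0 \prod_{i=0}^{m-1} Z_{i+1}/Z_i$, where $Z_i = \int_\K e^{-f_i(x)}\,\d x$ for a carefully chosen sequence of log-concave densities (Gaussians of decreasing variance, say). The ``Chebyshev cooling'' schedule should be designed so that each ratio $Z_{i+1}/Z_i$ can be written as an expectation under the distribution proportional to $e^{-f_i}$, whose variance is bounded by an absolute constant. This bounded-variance property is the essential hook for the quantum speedup: it lets quantum mean estimation (Brassard-Høyer-Mosca-Tapp / Montanaro-style) return each ratio to multiplicative error $\epsilon/m$ using $\tilde{O}(m/\epsilon)$ samples from the appropriate distribution, rather than the $\tilde{O}(m^2/\epsilon^2)$ required classically. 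With a schedule of length $m = \tilde{O}(\sqrt{n})$ this saves one factor of $\sqrt{n}$ and one factor of $1/\epsilon$ over the classical count.

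The second ingredient is a quantum implementation of the sampling step. Here I would replace the classical ball walk or hit-and-run with a Szegedy-style quantum walk, so that preparing an approximate stationary state (warm-started from the previous phase) takes $\tilde{O}(\sqrt{\tau})$ walk steps instead of $\tilde{O}(\tau)$, where $\tau$ is the classical mixing time. For a well-rounded body this should produce samples at a per-phase cost small enough to yield the claimed $\tilde{O}(n^{2.5}/\epsilon)$ query complexity when multiplied against $\sqrt{n}$ phases and $1/\epsilon$ mean-estimation calls. Composing mean estimation with a unitary that prepares samples (rather than producing independent copies) requires a reflection-about-the-stationary-state subroutine, built from quantum walk phase estimation on a discretized chain.

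The third ingredient is preprocessing: before running the cooling procedure, I would put $\K$ into near-isotropic position by a rounding algorithm. Classically this costs $\tilde{O}(n^4)$ queries; quantumly, the rounding can itself be accelerated using the same sampling-and-mean-estimation machinery, aiming for $\tilde{O}(n^3)$ queries and $\tilde{O}(n^5)$ arithmetic operations. The arithmetic overhead comes from $n$-dimensional matrix-vector products (coherent affine transformations applied to walk states) needed to realize the rounding map in superposition.

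The principal obstacle, and the reason a direct quantization of the classical proof does not suffice, is that the underlying Markov chains live on the continuous set $\K \subset \R^n$ while Szegedy's framework and quantum phase estimation are stated for discrete chains. I would therefore develop a theory of continuous-space quantum walks: define the walk on $\ell_2(\K)$ through its transition kernel, discretize to a fine lattice inside $\K$, and prove that spectral gaps, stationary distributions, and amplitude-estimation outputs are all stable under the discretization up to errors that can be absorbed into $\epsilon$ with only polylogarithmic overhead in precision. Getting these perturbation bounds right --- in particular, propagating the discretization error through phase estimation, reflections, and the telescoping product without blowing up the sample budget --- is the delicate technical step on which the whole complexity bound rests. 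Once this continuous-space walk framework is in place, assembling the pieces (cooling $\Rightarrow$ mean estimation $\Rightarrow$ quantum walk $\Rightarrow$ rounding) yields the stated $\tilde{O}(\querycost)$ query and $\tilde{O}(\arithcost)$ arithmetic bounds.
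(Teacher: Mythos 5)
Your architecture matches the paper's: Chebyshev cooling with $m=\tilde{O}(\sqrt{n})$ phases, quantum mean estimation of each telescoping ratio, Szegedy-type walks warm-started between phases, a rounding step, and a continuous-space/discretized walk theory to make it rigorous. (The paper uses exponential densities $e^{-a_i x_0}$ over a pencil construction $([0,2D]\times\K)\cap\Cone$ rather than Gaussians, since that choice makes $a^{n+1}Z(a)$ log-concave and yields the constant variance bound of Eq.~\eqn{LV06-Lemma4.1}; this is a choice of schedule, not of method.) There are, however, two places where your budget does not close as written. First, your per-phase accounting --- $\tilde{O}(\sqrt{n}/\epsilon)$ samples, each costing $\tilde{O}(n^{1.5})$ walk steps ``warm-started from the previous phase'' --- presumes that the pool of stationary states survives the mean estimation performed at each phase. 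Mean estimation ends in a measurement; if that measurement collapses the states, you must regenerate them, and regenerating a copy of $|\pi_i\>$ from scratch costs $i\cdot\tilde{O}(n^{1.5})$ walk steps, inflating the total to $\tilde{O}(n^{3}/\epsilon)$ --- a loss of a full $\sqrt{n}$ factor. The paper's fix is a \emph{nondestructive} mean estimation (\lem{Chebyshev-lemma}, \fig{SA-block-nondes}): run $O(\log(1/\delta))$ amplitude estimations in parallel, coherently apply a $\sin^{2}$ map and compute the median into an ancilla, measure only the median (whose outcome is nearly deterministic, so the post-measurement disturbance is $O(\delta^{1/4})$), and then uncompute everything else, recovering the pool of states $|\pi_{i}\>$ for the next phase. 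Your ``reflection about the stationary state'' supplies the Grover iterate inside amplitude estimation, but it does not by itself preserve the samples across phases.

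Second, the rounding. A direct quantization of the Lov\'asz--Vempala rounding procedure gives $\tilde{O}(n^{3.5})$ queries, not the $\tilde{O}(n^{3})$ you assert; to reach $\tilde{O}(n^{3})$ the paper \emph{interlaces} the rounding with the annealing itself, carrying $\tilde{O}(n)$ extra stationary states through each phase and using them (again nondestructively) to compute the affine transformation $S_i$ that keeps the next density near-isotropic at a cost of $\tilde{O}(n^{2.5})$ per phase (\prop{rounding-interlacing}). Preprocessing the body into well-rounded position first, as you propose, is indeed a valid alternative, but only because that preprocessing routine is itself the interlaced algorithm run without the volume-estimation part. Neither gap is fatal to the plan, but both require an idea beyond what you have written down, and the first one is exactly where the claimed exponent $2.5$ would otherwise degrade to $3$.
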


To the best of our knowledge, this is the first quantum algorithm that achieves quantum speedup for this fundamental problem, compared to the classical state-of-the-art algorithm~\cite{lovasz2006simulated,cousins2015bypassing} that uses $\tilde{O}(n^{4}+n^{3}/\epsilon^{2})$ classical queries and $\tilde{O}(n^{6}+n^{5}/\epsilon^{2})$ additional arithmetic operations.\footnote{\label{fnote:well-rounded}This is achieved by applying~\cite{lovasz2006simulated} to preprocess the convex body to be well-rounded (i.e. $R/r=O(\sqrt{n})$) using $\tilde{O}(n^{4})$ queries and then applying~\cite{cousins2015bypassing} using $\tilde{O}(n^{3}/\epsilon^{2})$ queries to estimate the volume of the (well-rounded) convex body. The number of additional arithmetic operations has an overhead of $O(n^{2})$ due to the affine transformation.} Furthermore, our quantum algorithm not only achieves a quantum speedup in query complexity, but also in the number of arithmetic operations for executing the algorithm. This differs from previous quantum algorithms for convex optimization~\cite{vanApeldoorn2018convex,chakrabarti2018quantum} where only the query complexity is improved, but the gate complexity is the same as that of the classical state-of-the-art algorithm~\cite{lee2015faster,lee2018efficient}.

On the other hand, we prove in \sec{quantum-lower-n} that volume estimation with $\epsilon=\Theta(1)$ requires $\Omega(\sqrt{n})$ quantum queries to the membership oracle, ruling out the possibility of achieving superpolynomial quantum speedup for volume estimation. Classically, the best-known lower bound on the query complexity of volume estimation is $\tilde{\Omega}(n^{2})$ due to Rademacher and Vempala~\cite{rademacher2008dispersion}, but there are technical difficulties to lift it to a quantum lower bound (see \sec{lower-intro}). For the dependence on $1/\epsilon$, we establish a quantum query lower bound of $\Omega(1/\epsilon)$, and the same argument shows a classical query lower bound of $\Omega(1/\epsilon^{2})$ (see \sec{quantum-lower-eps}). As a result, our quantum algorithm in \thm{main} achieves a provable quadratic quantum speedup in $1/\epsilon$ and is optimal in $1/\epsilon$ up to poly-logarithmic factors.

Technically, we refine a framework for achieving quantum speedups of simulated annealing algorithms, which might be of independent interest. Our framework applies to MCMC algorithms with cooling schedules that ensure each ratio in a telescoping product has bounded variance, an approach known as \emph{Chebyshev cooling.} Furthermore, we propose several novel techniques when implementing this framework, including a theory of continuous-space quantum walks with rigorous bounds on discretization error, a quantum algorithm for nondestructive mean estimation, and a quantum algorithm with interlaced rounding and volume estimation of convex bodies (as described further in \sec{techniques} below). In principle, our techniques apply not only to the integral of the identity function (as in \thm{main}), but could also be applied to any log-concave function defined on a convex body, following the approach in~\cite{lovasz2006fast}.

We summarize our main results in \tab{main-volume}.

\begin{table}[htbp]
\centering
\resizebox{1\columnwidth}{!}{%
\begin{tabular}{|c||c|c|}
\hline
 & Classical bounds & Quantum bounds {(this paper)} \\ \hline\hline
Query complexity & $\tilde{O}(n^{4}+n^{3}/\epsilon^{2})$~\cite{lovasz2006simulated,cousins2015bypassing}, $\tilde{\Omega}(n^{2})$~\cite{rademacher2008dispersion} & $\tilde{O}(\querycost)$, $\Omega(\sqrt{n}+1/\epsilon)$ \\ \hline
Total complexity & $\tilde{O}\big((n^{2}+\memcost)\cdot (n^{4}+n^{3}/\epsilon^{2})\big)$~\cite{lovasz2006simulated,cousins2015bypassing} & $\tilde{O}\big((n^{2}+\memcost)\cdot(\querycost)\big)$ \\ \hline
\end{tabular}
}
\caption{Summary of complexities of volume estimation, where $n$ is the dimension of the convex body, $\epsilon$ is the multiplicative precision of volume estimation, and $\memcost$ is the cost of applying the membership oracle once. Total complexity refers to the cost of the of queries plus the number of additional arithmetic operations.}
\label{tab:main-volume}
\end{table}
\vspace{-2mm}

%=========================================================
\subsection{Techniques}\label{sec:techniques}
We now summarize the key technical aspects of our work.

%=========================================================
\subsubsection{Classical volume estimation framework}\label{sec:classical-volume-framework}

\paragraph{Volume estimation by simulated annealing}
The volume of a convex body $\K$ can be estimated using simulated annealing. Consider the value
\begin{align}\label{eqn:Za-techniques}
Z(a):=\int_{\K}e^{-a\|x\|_{2}}\,\d x,
\end{align}
where $\|x\|_{2} := \sqrt{x_{1}^{2}+\cdots+x_{n}^{2}}$ is the $\ell_{2}$-norm of $x$. On the one hand, $Z(0)=\vol(\K)$; on the other hand, because $e^{-\|x\|_{2}}$ decays exponentially fast with $\|x\|_{2}$, taking a large enough $a$ ensures that the vast majority of $Z(a)$ concentrates near 0, so it can be well approximated by integrating on a small ball centered at 0. Therefore, a natural strategy is to consider a sequence $a_{0}>a_{1}>\cdots>a_{m}$ with $a_{0}$ sufficiently large and $a_{m}$ close to 0. We consider a simulated annealing algorithm that iteratively changes $a_{i}$ to $a_{i+1}$ and estimates $\vol(\K)$ by the telescoping product
\begin{align}\label{eqn:telescoping-volume}
\vol(\K)\approx Z(a_{m})=Z(a_{0})\prod_{i=0}^{m-1}\frac{Z(a_{i+1})}{Z(a_{i})}.
\end{align}
In the $i^{\text{th}}$ step, a random walk is used to sample the distribution over $\K$ with density proportional to $e^{-a_{i}\|x\|_{2}}$. Denote one such sample by $X_{i}$, and let $V_{i}:=e^{(a_{i}-a_{i+1})\|X_{i}\|_{2}}$. Then we have
\begin{align}\label{eqn:V_i-techniques}
\E[V_{i}]=\int_{\K}e^{(a_{i}-a_{i+1})\|x\|_{2}}\frac{e^{-a_{i}\|x\|_{2}}}{Z(a_{i})}\,\d x=\int_{\K}\frac{e^{-a_{i+1}\|x\|_{2}}}{Z(a_{i})}\,\d x=\frac{Z(a_{i+1})}{Z(a_{i})}.
\end{align}
Therefore, each ratio $\frac{Z(a_{i+1})}{Z(a_{i})}$ can be estimated by taking i.i.d.\ samples $X_{i}$, computing the corresponding $V_{i}$s, and taking their average.

We can analyze this volume estimation algorithm by considering its behavior at three levels:
\begin{itemize}[leftmargin=*]
\item[1)] \textsf{High level:} The algorithm follows the simulated annealing framework described above, where the volume is estimated by a telescoping product as in \eqn{telescoping-volume}.
\item[2)] \textsf{Middle level:} The number of i.i.d.\ samples used to estimate $\E[V_{i}]$ (a ratio in the telescoping product given by \eqn{V_i-techniques}) is small. Intuitively, the annealing schedule should be slow enough that $V_{i}$ has small variance.
\item[3)] \textsf{Low level:} The random walk converges fast so that we can take each i.i.d.\ sample of $V_{i}$ efficiently.
\end{itemize}

\paragraph{Classical volume estimation algorithm}
Our approach follows the classical volume estimation algorithm in \cite{lovasz2006simulated} (see also \sec{volume-estimation-review}). At the \textsf{high level}, it is a simulated annealing algorithm that estimates the volume of an alternative convex body $\K'$ produced by the \emph{pencil construction}, which intersects a cylinder $[0,2R/r]\times \K$ and a cone $\Cone:=\{x\in\R^{n+1}:x_{0}\geq 0,\|x\|_{2}\leq x_{0}\}$. This construction shares the same intuition as above, but replaces the integral \eqn{Za-techniques} by $Z(a)=\int_{\K'}e^{-ax_{0}}\,\d x$ because it is easier to calculate while can be directly used to estimate $\vol(\K)$ when $a\approx 0$ by a standard Monte Carlo approach (see \lem{pencil-to-original}).

Without loss of generality, assume that $r=1$. Lov{\'a}sz and Vempala \cite{lovasz2006simulated} proved that if we take the sequence $a_{0}>\cdots>a_{m}$ where $a_{0}=2n$, $a_{i+1}=(1-\frac{1}{\sqrt{n}})a_{i}$, and $m=\tilde{O}(\sqrt{n})$, then $Z(a_{0})\approx \int_{\Cone}e^{-a_{0}x_{0}}\,\d x$ and
\begin{align}
\var[V_{i}^{2}]=O(1)\cdot\E[V_{i}]^{2},~\forall\,i\in\range{m},
\end{align}
i.e., the variance of $V_{i}$ is bounded by a constant multiple of the square of its expectation. Such a simulated annealing schedule is known as \emph{Chebyshev cooling} (see also \sec{nondestructive}). This establishes the \textsf{middle-level} requirement of the simulated annealing framework. Furthermore, \cite{lovasz2006simulated} proves that the product of the average of $\tilde{O}(\sqrt{n}/\epsilon^{2})$ i.i.d.\ samples of $V_{i}$ for all $i\in\range{m}$ gives an estimate of $\vol(\K')$ within multiplicative error $\epsilon$ with high success probability.

At the \textsf{low level}, Ref.~\cite{lovasz2006simulated} uses a \emph{hit-and-run walk} to sample $X_{i}$. In this walk, starting from a point $p$, we uniformly sample a line $\ell$ through $p$ and move to a random point along the chord $\ell\cap\K$ with density proportional to $e^{-ax_{0}}$ (see \sec{hit-and-run} for details). Reference~\cite{LV06} analyzes the convergence of the hit-and-run walk, proving that it converges to the distribution over $\K$ with density proportional to $e^{-ax_{0}}$ within $\tilde{O}(n^{3})$ steps, assuming that $\K$ is \emph{well-rounded} (i.e., $R/r=O(\sqrt{n})$).

Finally, Ref.~\cite{lovasz2006simulated} constructs an affine transformation that transforms a general $\K$ to be well-rounded with $\tilde{O}(n^{4})$ classical queries to its membership oracle, hence removing the constraint of the previous steps that $\K$ be well-rounded. Because the affine transformation is an $n$-dimensional matrix-vector product, this introduces an overhead of $O(n^{2})$ in the number of arithmetic operations.

Overall, the algorithm has $\tilde{O}(\sqrt{n})$ iterations, where each iteration takes $\tilde{O}(\sqrt{n}/\epsilon^{2})$ i.i.d.\ samples, and each sample takes $\tilde{O}(n^{3})$ steps of the hit-and-run walk. In total, the query complexity is
\begin{align}
\tilde{O}(\sqrt{n})\cdot\tilde{O}(\sqrt{n}/\epsilon^{2})\cdot\tilde{O}(n^{3})=\tilde{O}(n^{4}/\epsilon^{2}).
\end{align}
The number of additional arithmetic operations is $\tilde{O}(n^{4}/\epsilon^{2})\cdot O(n^{2})=\tilde{O}(n^{6}/\epsilon^{2})$ due to the affine transformation for rounding the convex body.

%=========================================================
\subsubsection{Quantum algorithm for volume estimation}
It is natural to consider a quantum algorithm for volume estimation following the classical framework in \sec{classical-volume-framework}. A naive attempt might be to develop a quantum walk that achieves a generic quadratic speedup of the mixing time. However, this is unfortunately difficult to achieve in general. Quantum walks are unitary processes that do not converge to stationary distributions in the classical sense. As a result, alternative and indirect quantum analogues of mixing properties of Markov chains have been proposed and studied (see \sec{quantum-MCMC-literature} for more detail). None of these methods provide a direct replacement for classical mixing, and we cannot directly apply them in our context.

Instead, we adapt one of the frameworks proposed in~\cite{wocjan2008speedup}. To give a quantum speedup for volume estimation by this method, we address the following additional technical challenges:
\begin{itemize}[leftmargin=*]
\item\textbf{Quantum walks in continuous space:} Quantum walks have mainly been studied in discrete spaces~\cite{szegedy2004quantum,MNRS11}, and we need to understand how to define a quantum counterpart of the hit-and-run walk.

\item\textbf{Quantum mean estimation:} Quantum counting~\cite{brassard2002amplitude} is a general tool for estimating a probability $p\in[0,1]$ with quadratic quantum speedup compared to classical sampling. However, estimating the mean of an unbounded random variable with a quantum version of Chebyshev concentration requires more advanced tools.

\item\textbf{Rounding:} Classically, rounding a general convex body takes $\tilde{O}(n^{4})$ queries~\cite{lovasz2006simulated}, more expensive than volume estimation of a well-rounded body using $\tilde{O}(n^{3}/\epsilon^{2})$ queries~\cite{cousins2015bypassing}. To achieve an overall quantum speedup, we also need to give a fast quantum algorithm for rounding convex bodies.

\item\textbf{Error analysis of the quantum hit-and-run walk:} We must bound the error incurred when implementing the quantum walk on a digital quantum computer with finite precision. Existing classical error analyses (e.g., \cite{frieze1999log}) do not automatically cover the quantum case.
\end{itemize}

We develop several novel techniques to resolve these issues:

%------------------------------------------------------------------------
\paragraph{Theory of continuous-space quantum walks (\sec{continuous-q-walk-theory})}
Our first technical contribution is to develop a quantum implementation of the \textsf{low-level} framework, i.e., to replace the classical hit-and-run walk by a \emph{quantum hit-and-run walk.} However, although quantum walks in discrete spaces have been well studied (see for example \cite{szegedy2004quantum,MNRS11}), we are not aware of comparable results that can be used to analyze spectral properties and mixing times of quantum walks in continuous space. Here we describe a framework for continuous-space quantum walks that can be instantiated to give a quantum version of the hit-and-run walk. In particular, we formally define such walks and analyze their spectral properties, generalizing Szegedy's theory~\cite{szegedy2004quantum} to continuous spaces (\sec{continuous-space-qwalk-defn}). We also show a direct correspondence between the stationary distribution of a classical walk and a certain eigenvector of the corresponding quantum walk (\sec{continuous-space-qwalk-stationary}).

%------------------------------------------------------------------------
\paragraph{Quantum volume estimation algorithm via simulated annealing (\sec{quantum-volume-estimation})}
Having described a quantum hit-and-run walk, the next step is to understand the \textsf{high-level} simulated annealing framework. As mentioned above, it is nontrivial to directly prepare stationary states of quantum walks. In this paper, we follow a quantum MCMC framework proposed by~\cite{wocjan2008speedup} that can prepare stationary states of quantum walks by simulated annealing (see \sec{quantum-MCMC}). In this framework, we have a sequence of slowly-varying Markov chains, and the stationary state of the initial Markov chain can be efficiently prepared. In each iteration, we apply fixed-point amplitude amplification of the quantum walk operator~\cite{grover2005different} due to Grover to transform the current stationary state to the next one; compared to classical slowly-varying Markov chains, the convergence rate of such quantum procedure is \emph{quadratically better in spectral gap.}

Our \textbf{main technical contribution} is to show how to adapt the Chebyshev cooling schedule in~\cite{lovasz2006simulated} to the quantum MCMC framework in~\cite{wocjan2008speedup} using our quantum hit-and-run walk. The conductance lower bound together with the classical $\tilde{O}(n^{3})$ mixing time imply that we can perform one step of fixed-point amplitude amplification using $\tilde{O}(n^{1.5})$ queries to $O_{\K}$. Furthermore, the inner product between consecutive stationary states is a constant. These two facts ensure that the stationary state in each iteration can be prepared with $\tilde{O}(n^{1.5})$ queries to the membership oracle $O_{\K}$. The total number of iterations is still $\tilde{O}(\sqrt{n})$, as in the classical case.

%------------------------------------------------------------------------
\paragraph{Quantum algorithm for nondestructive mean estimation (\sec{nondestructive})}
In the next step, we consider how to estimate each ratio in the telescoping product at the \textsf{middle level}. The basic tool is quantum counting~\cite{brassard2002amplitude}, which estimates a probability $p\in[0,1]$ with error $\epsilon$ and high success probability using $O(1/\epsilon)$ quantum queries, a quadratic speedup compared to the classical complexity $O(1/\epsilon^{2})$. However, in our case we need to estimate the expectation of a random variable with bounded variance. We use the ``quantum Chebyshev inequality'' developed in \cite{hamoudi2019Chebyshev} which truncates the random variable with reasonable upper and lower bounds and then reduces to quantum counting; see \sec{quantum-Chebyshev}.\footnote{A related technique is the quantum Monte Carlo method of Montanaro~\cite{montanaro2015quantum}. Here we use~\cite{hamoudi2019Chebyshev} for two reasons: first, it has the advantage of handling multiplicative instead of additive errors, which is appropriate for estimating the telescoping ratios. Second, its quantum algorithm is based on amplitude estimation and hence can readily be made nondestructive, as discussed below.} Compared to the classical counterpart, it achieves quadratic speedup in the dependences on both variance and multiplicative error.

There is an additional technical difficulty in quantum simulated annealing: classically, it is implicitly assumed that in the $(i+1)^{\text{st}}$ iteration we have samples to the stationary distribution in the $i^{\text{th}}$ iteration. Applying existing quantum mean estimation techniques to the quantum stationary state in the $i^{\text{th}}$ iteration would ruin that state and make it hard to use in the subsequent $(i+1)^{\text{st}}$ iteration.
 To resolve this issue, we estimate the mean \emph{nondestructively} in the quantum Chebyshev inequality while keeping its quadratic speedup in the error dependence using a nondestructive amplitude estimation technique developed in \cite{harrow2019adaptive}. Nondestructive mean estimation relies on the following observation: applying amplitude estimation on a state $|\psi\>$ results with high probability in the measurement collapsing to one of two states $|\psi_{+}\>,|\psi_{-}\>$ with constant overlap with $\psi$. The algorithm repeatedly projects these states onto $|\psi\>$: if the projection is successful then the state is restored, otherwise amplitude estimation can be performed again to obtain $|\psi_{+}\>,|\psi_{-}\>$ and the projection can be repeated. Due to the constant overlap, $\poly(\log(\delta^{-1}))$ repititions suffice to ensure that at least one of the projections succeeds with probability $\delta$. It remains to implement the required projection efficiently: we show how this can be accomplished using quantum walk operators corresponding to the Markov Chains in the MCMC framework; see \sec{proof-error-analysis}.

In our quantum volume estimation algorithm, we apply the quantum Chebyshev inequality under the same compute-uncompute procedure. This gives a quadratic speedup in $\epsilon^{-1}$ when estimating the $\E[V_{i}]$ in \eqn{V_i-techniques}, so that $\tilde{O}(\sqrt{n}/\epsilon)$ copies of the stationary state suffice\footnote{It is possible to use fewer copies of the stationary state. See \fnote{copies}.} (see \lem{Chebyshev-lemma}).

%------------------------------------------------------------------------
\paragraph{Quantum algorithm for volume estimation with interlaced rounding (\sec{round-quantum})}
The stationary states of the quantum hit-and-run walk can be prepared with $\tilde{O}(n^{1.5})$ queries to $O_{\K}$ only when the corresponding density functions are \emph{well-rounded}, i.e., every level set with probability $\mu$ contains a ball of radius $\mu r$ and the variance of the density is bounded by $R^2$, where $R/r = O(\sqrt{n})$.\footnote{When the density function is uniform in $\K$, this definition of well-roundedness reduces to that in \fnote{well-rounded}. The definition of level sets is the same as in~\cite{lovasz2006simulated}.} It remains to show how to ensure that the convex body is well-rounded.

Classically, Ref.~\cite{lovasz2006simulated} gave a rounding algorithm that transforms a convex body to ensure that all the densities sampled in the volume estimation algorithm are well-rounded. This algorithm uses $\tilde{O}(n^{4})$ queries, via $\tilde{O}(n)$ iterations of simulated annealing. A quantization of this algorithm along the same lines as detailed above gives an algorithm with $\tilde{O}(n^{3.5})$ quantum queries.

To improve over that approach, we instead follow a classical framework for directly rounding logconcave densities~\cite{lovasz2006fast}. The rounding is interlaced with the volume estimation algorithm, so that in each iteration of the simulated annealing framework, we use some of the samples to calculate an affine transformation that makes the next stationary state well-rounded. This ensures that the quantum hit-and-run walk continues to take only $\tilde{O}(n^{1.5})$ queries for each sample. Our algorithm maintains $\tilde{O}(n)$ extra quantum states for rounding, and the quantum hit-and-run walk is used to transform them from one stationary distribution to the next. In each iteration, we use a nondestructive measurement to sample the required affine transformation. With $\tilde{O}(\sqrt{n})$ iterations this results in an additional $\tilde{O}(\sqrt{n})\cdot\tilde{O}(n)\cdot\tilde{O}(n^{1.5}) = \tilde{O}(n^3)$ cost for rounding.

We also show that this framework can be used as a preprocessing step that puts the convex body itself in well-rounded position (i.e., $\rmb_2(0,r) \subseteq \K \subseteq \rmb_2(0,R)$ with $R/r = O(\sqrt{n})$) using $\tilde{O}(n^3)$ quantum queries. Putting a convex body in well-rounded position implies that several random walks used in simulated annealing algorithms (including the hit-and-run walk) mix fast without the need for further rounding. Therefore, as an alternative, we could preprocess the convex body to be well-rounded and then apply the simulated annealing algorithm to obtain a volume estimation algorithm that uses $\tilde{O}(n^3 + n^{2.5}/\epsilon)$ quantum queries.

%------------------------------------------------------------------------
\paragraph{Error analysis of discretized hit-and-run walks (\sec{implement-hit-and-run})}
Although we defined quantum hit-and-run walks abstractly in \sec{continuous-q-walk-theory}, implementing a continuous-space quantum walk on a digital quantum computers will lead to discretization error, and the error analysis of classical walks in a discrete space approximating $\R^{n}$ (such as~\cite{frieze1999log}) does not automatically apply to the quantum counterpart. To ensure that discretization errors do not affect a realistic implementation of our algorithm, in \sec{implement-hit-and-run} we propose a \emph{discretized hit-and-run walk} and provide rigorous bounds on the discretization error.

%------------------------------------------------------------------------
\paragraph{Summary}
Our quantum volume estimation algorithm can be summarized as follows.
\begin{itemize}[leftmargin=*]
\item[1)] \textsf{High level:} The quantum algorithm follows a simulated annealing framework using a quantum MCMC method~\cite{wocjan2008speedup}, where the volume is estimated by a telescoping product (as in \eqn{telescoping-volume}); the number of iterations is $\tilde{O}(\sqrt{n})$.
\item[2)] \textsf{Middle level:} We estimate the $\E[V_{i}]$ in \eqn{V_i-techniques}, a ratio in the telescoping product, using the nondestructive version of the quantum Chebyshev inequality~\cite{hamoudi2019Chebyshev}. This takes $\tilde{O}(\sqrt{n}/\epsilon)$ implementations of the quantum hit-and-run walk operators.
\item[3)] \textsf{Low level:} If the convex body $\K$ is well-rounded (i.e., $R/r=O(\sqrt{n})$), each quantum hit-and-run walk operator can be implemented using $\tilde{O}(n^{1.5})$ queries to the membership oracle $O_{\K}$ in \eqn{oracle-defn}.
\end{itemize}

Finally, we give a quantum algorithm that interlaces rounding and volume estimation of the convex body, using an additional $\tilde{O}(n^{2.5})$ quantum queries to $O_{\K}$ in each iteration. Because the affine transformation is an $n$-dimensional matrix-vector product, it introduces an overhead of $O(n^{2})$ in the number of arithmetic operations (just as in the classical rounding algorithm).

Overall, our quantum volume estimation algorithm has $\tilde{O}(\sqrt{n})$ iterations. Each iteration implements $\tilde{O}(\sqrt{n}/\epsilon)$ quantum hit-and-run walks, and each quantum hit-and-run walk uses $\tilde{O}(n^{1.5})$ queries; there is also a cost of $\tilde{O}(n^{2.5})$ for rounding. Thus the quantum query complexity is
\begin{align}
\tilde{O}(\sqrt{n})\cdot\big(\tilde{O}(\sqrt{n}/\epsilon)\cdot\tilde{O}(n^{1.5})+\tilde{O}(n^{2.5})\big)=\tilde{O}(\querycost).
\end{align}
The number of additional arithmetic operations is $\tilde{O}(\querycost)\cdot O(n^{2})=\tilde{O}(\arithcost)$ due to the affine transformations for interlaced rounding of the convex body.

 \fig{flowchart} summarizes our techniques. The volume estimation and interlaced rounding algorithms are given as \algo{quantum-volume} and \algo{quantum-interlacing}, respectively, in \sec{quantum-volume}.

 \begin{figure}[htbp]
 \centering
 {\tikzset{%
   >={Latex[width=2mm,length=2mm]},
             base/.style = {rectangle, rounded corners, draw=black,
                            minimum width=4cm, minimum height=1cm,
                            text centered, font=\sffamily},
        result/.style = {base, fill=green!30},
        new/.style = {base, fill=blue!15},
          old/.style = {base, minimum width=2.5cm, fill=yellow!30},
 }
 \begin{tikzpicture}[node distance=2cm,
     every node/.style={fill=white, font=\sffamily, scale=0.7}, align=center]
   \node (SA-volume)     [result]          {Quantum volume estimation\\algorithm (\sec{quantum-volume-estimation})};
   \node (Continuous-q-walk)     [new, above = 0.8 cm of SA-volume]   {Continuous-space\\quantum walk (\sec{continuous-q-walk-theory})};
   \node (Discretized-q-walk)     [new, left = 2.5 cm of Continuous-q-walk]   {Discretized quantum\\hit-and-run walk (\sec{implement-hit-and-run})};
   \node (rounding)     [new, below left = 0.8 cm and 1.6 cm of SA-volume]   {Quantum convex body\\rounding algorithm (\sec{round-quantum})};
   \node (nondestructive)     [new, right = 1 cm of rounding]   {Chebyshev cooling via nondestructive\\mean estimation (\sec{nondestructive})};
   \node (fixed-Grover)     [old, right = 1 cm of nondestructive]   {Fixed-point amplitude\\amplification (\sec{quantum-MCMC})};

   \draw[->]              (Continuous-q-walk) -- node {implement} (SA-volume);
   \draw[->]              (Discretized-q-walk) -- node {implement} (Continuous-q-walk);
   \draw[-,dotted]              (SA-volume) -- node {Step 1} ($(rounding.north east)+(-0.75,0.2)$);
   \draw[-,dotted]              (SA-volume) -- node {Step 2 (simulated annealing)} ($(nondestructive.north east)+(0.5,0.2)$);
   \draw[->]              (rounding) -- (nondestructive);
   \draw[->,dashed]              ([yshift=0.2 cm]nondestructive.east) -- ([yshift=0.2 cm]fixed-Grover.west);
   \draw[->,dashed]              ([yshift=-0.2 cm]fixed-Grover.west) -- ([yshift=-0.2 cm]nondestructive.east);
   \draw[black,dotted] ($(rounding.north west)+(-0.2,0.2)$)  rectangle ($(rounding.south east)+(0.2,-0.2)$);
   \draw[black,dotted] ($(nondestructive.north west)+(-0.2,0.2)$)  rectangle ($(fixed-Grover.south east)+(0.2,-0.2)$);
 \end{tikzpicture}}
 \caption{The structure of our quantum volume estimation algorithm. The four purple frames represent the four novel techniques that we propose, the yellow frame represents the known technique from~\cite{grover2005different}, and the green frame at the center represents our quantum algorithm.}
 \label{fig:flowchart}
 \end{figure}
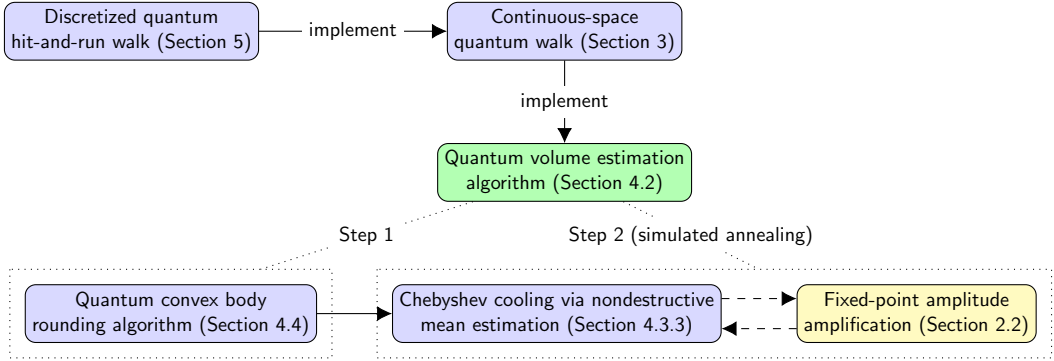

%------------------------------------------------------------------------
\subsubsection{Quantum lower bounds (\sec{quantum-lower})}\label{sec:lower-intro}
The classical state-of-the-art query lower bound for volume estimation is a  $\tilde{\Omega}(n^{2})$ bound for $n$-dimensional parallelopipeds~\cite{rademacher2008dispersion}.
The argument uses Yao's principle~\cite{yao1977probabilistic} to reduce the problem of estimating the volume of parallelopipeds to a corresponding average-case lower bound for deterministic algorithms.
However, in the quantum setting, it is unclear how to apply a similar argument since such a reduction to the deterministic case does not work in general.

Nevertheless, we prove that volume estimation requires $\Omega(\sqrt{n})$ quantum queries to the membership oracle, ruling out the possibility of exponential quantum speedup (see \thm{lower-bound}). We establish this by a reduction to search: for a hyper-rectangle $\K=\bigtimes_{i=1}^n [0,2^{s_i}]$ specified by a binary string $s=(s_1,\ldots,s_n)\in\{0,1\}^n$ with $|s|=0$ or $1$, we prove that a membership query to $\K$ can be simulated by a query to $s$. Thus, since $\vol(\K)=2$ if and only if $|s|=1$, the $\Omega(\sqrt{n})$ quantum lower bound on search~\cite{bennett1997strengths} applies to volume estimation.

In addition, we prove that volume estimation requires $\Omega(1/\epsilon)$ quantum queries (see \thm{lower-bound-eps}), which means that our quantum algorithm is optimal in $1/\epsilon$ up to poly-logarithmic factors. The idea is to construct a convex body whose volume estimation reduces to the Hamming distance problem with known tight quantum query complexity~\cite{nayak1999quantum}. To be more specific, we consider the $n$-dimensional unit hypercube and attach ``hyperpyramids" to its faces, such that its central axis passes through the center of the hypercube. We show that adding or deleting any hyperpyramid of volume $1/2n$ does not influence the convexity of the convex body, and calculating the volume of the body reveals the Hamming weight of a binary string that encodes the presence or absence of the hyperpyramids.

%=========================================================
\subsection{Related work}\label{sec:related-work}
While our paper gives the first quantum algorithm for volume estimation, classical volume estimation algorithms have been well-studied, as we review in \sec{classical-volume}. Our quantum algorithm builds upon quantum analogs of Markov chain Monte Carlo methods that we review in \sec{quantum-MCMC-literature}.

%------------------------------------------------------------------------
\subsubsection{Classical volume estimation algorithms}\label{sec:classical-volume}
There is a rich literature on classical algorithms for estimating volumes of convex bodies (e.g., see the surveys~\cite{vempala2005geometric,lee2018KLS}). The general approach is to consider a sequence of random walks inside the convex body $\K$ whose stationary distributions converge quickly to the uniform distribution on $\K$. Applying simulated annealing to this sequence of walks (as in \sec{techniques}), the volume of $\K$ can be approximated by a telescoping product.

The first polynomial-time algorithm for volume estimation was given by~\cite{dyer1991random}. It uses a \emph{grid walk} in which the convex body $\K$ is approximated by a grid mesh $\K_{\textrm{grid}}$ of spacing $\delta$ (i.e., $\K_{\textrm{grid}}$ contains the points in $\K$ whose coordinates are integer multiples of $\delta$). The walk proceeds as follows:
\begin{enumerate}
\item Pick a grid point $y$ uniformly at random from the neighbors of the current point $x$.
\item If $y\in\K_{\textrm{grid}}$, go to $y$; else stay at $x$.
\end{enumerate}

Dyer, Frieze, and Vempala~\cite{dyer1991random} proved that for a properly chosen $\delta$, the grid walk converges to the uniform distribution on $\K_{\textrm{grid}}$ in $\tilde{O}(n^{23})$ steps, and that $\delta^{n}|\K_{\textrm{grid}}|$ is a good approximation of $\vol(\K)$ (in the sense of \eqn{volume-estimation-defn}). Subsequently, more refined analysis of the grid walk improved its cost to $\tilde{O}(n^{8})$~\cite{lovasz1990mixing,applegate1991sampling,dyer1991computing}. However, this is still inefficient in practice.

Intuitively, the grid walk converges slowly because each step only moves locally in $\K$. Subsequent work improved the complexity by considering other types of random walk. These improvements mainly use two types of walk: the \emph{hit-and-run walk} and the \emph{ball walk}. In this paper, we use the hit-and-run walk (see also \sec{hit-and-run}), which behaves as follows:
\begin{enumerate}
\item Pick a uniformly distributed random line $\ell$ through the current point $p$.
\item Move to a uniformly random point along the chord $\ell\cap\K$.
\end{enumerate}

Smith~\cite{smith1984efficient} proved that the stationary distribution of the hit-and-run walk is the uniform distribution on $\K$. Regarding the convergence of the hit-and-run walk,~\cite{Lovasz99} showed that it mixes in $\tilde{O}(n^3)$ steps from a warm start after appropriate preprocessing, and~\cite{LV06} subsequently proved that the hit-and-run walk mixes rapidly from any interior starting point (see also \thm{hit-and-run-LV06}). Under the simulated annealing framework, the hit-and-run walk gives the state-of-the-art volume estimation algorithm with query complexity $\tilde{O}(n^{4})$~\cite{lovasz2006fast,lovasz2006simulated}. Our quantum volume estimation algorithm can be viewed as a quantization of this classical hit-and-run algorithm.

Given a radius parameter $\delta$, the ball walk is defined as follows:
\begin{enumerate}
\item Pick a uniformly random point $y$ from the ball of radius $\delta$ centered at the current point $x$.
\item If $y\in\K$, go to $y$; else stay at $x$.
\end{enumerate}
Lov{\'a}sz and Simonovits~\cite{lovasz1993random} proved that the ball walk mixes in $\tilde{O}(n^6)$ steps. Kannan et al.~\cite{kannan1997random} subsequently improved the mixing time to $\tilde{O}(n^3)$ starting from a warm start, giving a total query complexity of $\tilde{O}(n^5)$ for the volume estimation problem.

The analysis of the ball walk relies on a central conjecture in convex geometry, the Kannan-Lov{\'a}sz-Simonovits (KLS) conjecture (see~\cite{lee2018KLS}). The KLS conjecture states that the Cheeger constant of any log-concave density is achieved to within a universal, dimension-independent constant factor by a hyperplane-induced subset, where the Cheeger constant is the minimum ratio between the measure of the boundary of a subset to the measure of the subset or its complement, whichever is smaller. Although this quantity is conjectured to be a constant, the best known upper bound is only $O(n^{1/4})$~\cite{lee2017eldan}. However, in the special case when the convex body is well-rounded (i.e., $R/r=O(\sqrt{n})$), a recent breakthrough by Cousins and Vempala~\cite{cousins2014cubic,cousins2015bypassing} proved the KLS conjecture for Gaussian distributions. In other words, they established a volume estimation algorithm with query complexity $\tilde{O}(n^{3})$ in the well-rounded case.

\tab{classical-volume-summary} summarizes classical algorithms for volume estimation.

\begin{table}[htbp]
\centering
\resizebox{0.7\columnwidth}{!}{%
\begin{tabular}{|c|c|c|}
\hline
Method & \specialcell{State-of-the-art\\query complexity} & Restriction on the convex body \\ \hline\hline
Grid walk & $\tilde{O}(n^{8})$~\cite{dyer1991computing} & General ($R/r=\poly(n)$) \\ \hline
Hit-and-run walk & $\tilde{O}(n^{4})$~\cite{lovasz2006fast,lovasz2006simulated} & General ($R/r=\poly(n)$) \\ \hline
Ball walk & $\tilde{O}(n^{3})$~\cite{cousins2014cubic,cousins2015bypassing} & Well-rounded ($R/r=O(\sqrt{n})$) \\ \hline
\end{tabular}
}
\caption{Summary of classical methods for estimating the volume of a convex body $\K \subset \R^n$ when $\epsilon=\Theta(1)$, where $R,r$ are the radii of the balls centered at the origin that contain and are contained by the convex body, respectively.}
\label{tab:classical-volume-summary}
\end{table}

%------------------------------------------------------------------------
\subsubsection{Quantum Markov chain Monte Carlo methods}\label{sec:quantum-MCMC-literature}
The performance of Markov chain Monte Carlo (MCMC) methods is determined by the rate of convergence to their stationary distributions (i.e., the mixing time). Suppose we have a reversible, ergodic Markov chain with unique stationary distribution $\pi$. Let $\pi_{k}$ denote the distribution obtained by applying the Markov chain for $k$ steps from some arbitrary initial state. It is well-known (see for example~\cite{levin2017markov}) that $O(\frac{1}{\Delta}\log(1/(\epsilon\min_{x}\pi(x))))$ steps suffice to ensure $\|\pi_{k}-\pi\|_{1}\leq\epsilon$, where $\Delta$ is the spectral gap of the Markov chain.

Many authors have studied quantum analogs of Markov chains (in both continuous \cite{FG98} and discrete \cite{ABNVW01,aharonov2001quantum,szegedy2004quantum} time) and their mixing properties. While a quantum walk is a unitary process and hence does not converge to a stationary distribution, one can define notions of quantum mixing time by choosing the number of steps at random or by adding decoherence \cite{ABNVW01,aharonov2001quantum,CCDFGS03,AR05,richter2007almost,Richter2007,chakraborty2019analog}, and compare them to the classical mixing time. Note that distribution sampled by such a process may or may not be the same as the stationary distribution $\pi$ of the corresponding classical Markov process, depending on the structure of the process and the notion of mixing. It is also natural to ask how efficiently we can prepare a quantum state close to $|\pi\>:=\sum_{x}\sqrt{\pi_x}|x\>$, which can be viewed as a ``quantum sample'' from $\pi$. However, it is unclear how to do this efficiently in general, even in cases where a corresponding classical Markov process mixes quickly; in particular, a generic quantum algorithm for this task could be used to solve graph isomorphism \cite[Section 8.4]{aharonov2003adiabatic}.

It is also possible to achieve quantum speedup of MCMC methods by not demanding speedup of the mixing time of each separate Markov chain, but only for the procedure as a whole. In particular, MCMC methods are often implemented by simulated annealing algorithms where the final output is a telescoping product of values at different temperatures. From this perspective, Somma et al.~\cite{somma2007quantum,somma2008quantum,boixo2015quantum} used quantum walks to accelerate classical simulated annealing processes by exploiting the quantum Zeno effect, using measurements implemented by phase estimation of the quantum walk operators of these Markov chains. References~\cite{temme2011quantum,yung2012quantum} also introduced how to implement Metropolis sampling on quantum computers.

Our quantum volume estimation algorithm is most closely related to work of Wocjan and Abeyesinghe~\cite{wocjan2008speedup}, which achieves complexity $\tilde{O}(1/\sqrt{\Delta})$ for preparing the final stationary distribution of a sequence of slowly varying Markov chains, where $\Delta$ is the minimum of their spectral gaps. Their quantum algorithm transits between the stationary states of consecutive Markov chains by fixed-point amplitude amplification~\cite{grover2005different}, which is implemented by amplitude estimation with $\tilde{O}(1/\sqrt{\Delta})$ implementations of the quantum walk operators of these Markov chains (see \sec{quantum-MCMC} for more details).

Our simulated annealing procedure preserves the slowly-varying property, so we adopt the framework of~\cite{wocjan2008speedup} in our algorithm for volume estimation (see \sec{proof-inner-product}). We develop several novel techniques (described in \sec{techniques}) that allow us to implement the steps of this framework efficiently. Note that the slowly-varying property also facilitates other frameworks that give efficient adiabatic~\cite{aharonov2003adiabatic} or circuit-based~\cite{orsucci2018faster} quantum algorithms for generating quantum samples of the stationary state.

Previous work has mainly applied these quantum simulated annealing algorithms to estimating partition functions of discrete systems. Given an inverse temperature $\beta>0$ and a classical Hamiltonian $H\colon\Omega\to\R$ where $\Omega$ is a finite space, the goal is to estimate the partition function
\begin{align}\label{eqn:partition-defn}
Z(\beta):=\sum_{x\in\Omega}e^{-\beta H(x)}
\end{align}
within multiplicative error $\epsilon>0$. Wocjan et al.~\cite{wocjan2009quantum} gave a quantum algorithm that achieves quadratic quantum speedup with respect to both mixing time and accuracy.

The classical algorithm that~\cite{wocjan2009quantum} quantizes uses $\tilde{O}(\log |\Omega|)$ annealing steps to ensure that each ratio $Z(\beta_{i+1})/Z(\beta_{i})$ is bounded. In fact, it is possible to relax this requirement and use a cooling schedule with only $\tilde{O}(\sqrt{\log |\Omega|})$ steps such that the variance of each ratio is bounded, so its mean can be well-approximated by Chebyshev's inequality; this is exactly the Chebyshev cooling technique~\cite{SVV009} introduced in \sec{techniques} (see also \sec{nondestructive}). Montanaro~\cite{montanaro2015quantum} improves upon~\cite{wocjan2009quantum} using Chebyshev cooling; more recently, Harrow and Wei~\cite{harrow2019adaptive} further quadratically improved the spectral gap dependence of the estimation of the partition function.

%=========================================================
\subsection{Open questions}\label{sec:open-question}
This work leaves several natural open questions for future investigation. In particular:
\begin{itemize}
\item Can we improve the complexity of our quantum volume estimation algorithm? The gap between the current upper and lower bounds in $n$ is large; possible improvements might result from designing a shorter simulated annealing schedule, giving better analysis of the conductance of the hit-and-run walk, or even using other types of walks.

\item Can we prove better quantum query lower bounds on volume estimation? Note that classically there is an $\tilde{\Omega}(n^{2})$ query lower bound~\cite{rademacher2008dispersion}.

\item Can we give faster quantum algorithms for volume estimation in some special circumstances? For instance, volume estimation of well-rounded convex bodies only takes $\tilde{O}(n^{3})$ classical queries~\cite{cousins2015bypassing} (see also \sec{classical-volume}), and the volume of polytopes with $m$ faces can be estimated with only $\tilde{O}(mn^{2/3})$ classical queries~\cite{lee2018convergence}. Specifically, it is a natural question to ask whether the ball walk in~\cite{cousins2015bypassing} or the Riemannian Hamiltonian Monte Carlo (RHMC) method in~\cite{lee2018convergence} can be implemented by continuous-space quantum walks (and their discretizations).

\item Can we apply our simulated annealing framework to solve other problems? As a concrete example, it may be of interest to check whether our framework can recover the results of Ref.~\cite{harrow2019adaptive} on estimating the partition functions in counting problems.

\item After we have finished this paper, we became aware of a recent work~\cite{jia2020reducing} that gives a classical algorithm for rounding convex bodies with $\tilde{O}(n^{3.5})$ queries. Combined with~\cite{cousins2015bypassing}, this results in a classical algorithm for $n$-dimensional volume estimation with multiplicative error $\epsilon$ using $\tilde{O}(n^{3.5}+n^{3}/\epsilon^{2})$ queries. While our quantum algorithm still improves over this complexity, it is natural to ask whether their algorithm fits into the interlaced structure of \algo{quantum-interlacing} and achieves even better quantum query/gate complexities.
\end{itemize}

\paragraph{Organization}
We review necessary background in \sec{prelim}. We describe the theory of continuous-space quantum walks in \sec{continuous-q-walk-theory}. In \sec{quantum-volume}, we first review the classical state-of-the-art volume estimation algorithm in \sec{volume-estimation-review}, and then give our quantum algorithm for estimating volumes of well-rounded convex bodies in \sec{quantum-volume-estimation}. The proofs of our quantum algorithms are given in \sec{quantum-volume-estimation-proof}, and the quantum algorithm for rounding convex bodies is given in \sec{round-quantum}. The details of our discretized hit-and-run walk are given in \sec{implement-hit-and-run}, and we conclude with our quantum lower bound on volume estimation in \sec{quantum-lower}.

%%%%%%%%%%%%%%%%%%%%%%%%%%%%%%%%%%%%%%%%%%%%%%%%%%%%%%%%%%%%%%%%%%%%%%%%%%%%%%

\section{Preliminaries}\label{sec:prelim}
We summarize necessary tools used in this paper as follows.
\subsection{Classical and quantum walks}\label{sec:quantum-walk-prelim}
A Markov chain over a finite state space $\Omega$ is a sequence of random variables $X_0,X_1,\ldots$ such that
for each $i\in\mathbb{N}$, the probability of transition to the next state $y \in \Omega$,
\begin{align*}
\Pr[X_{i+1}=y\mid X_i=x,X_{i-1}=x_{i-1},\ldots,X_0=x_0] = \Pr[X_{i+1}=y\mid X_i=x]=:p_{x\to y}
\end{align*}
only depends on the present state $x \in \Omega$.
The Markov chain can be represented by the transition probabilities $p_{x\to y}$ satisfying $\sum_y p_{x\to y}=1$.
For each $i\in\mathbb{N}$, we denote by $\pi_i$ the distribution over $\Omega$ with density $\pi_i(x)=\Pr[X_i=x]$.
A \emph{stationary distribution} $\pi$ satisfies $\sum_{x\in\Omega} p_{x\to y}\pi(x)=\pi(y)$.
A Markov chain is \emph{reversible} if it has a stationary distribution $\pi$ such that $\pi(x)p_{x\to y}=\pi(y)p_{y\to x}$ for all $x,y\in\Omega$. The \emph{conductance} of a reversible Markov chain is defined as
\begin{align}\label{eqn:conductance-defn-discrete}
\Phi:=\inf_{\S\subseteq\Omega}\frac{\sum_{x\in\S}\sum_{y\in\Omega/\S}\pi(x)p_{x\to y}}{\min\{\sum_{x\in\S}\pi(x),\sum_{x\in\Omega/\S}\pi(x)\}}.
\end{align}

The theory of discrete-time quantum walks has also been well developed.
Given a classical reversible Markov chain on $\Omega$ with transition probability $p$, we define a unitary operator $U_{p}$ on $\C^{|\Omega|}\otimes\C^{|\Omega|}$ such that
\begin{align}
U_{p}|x\>|0\>=|x\>|p_{x}\>,\text{ where }|p_{x}\>:=\sum_{y\in\Omega}\sqrt{p_{x\to y}}|y\>.
\end{align}
The quantum walk is then defined as \cite{szegedy2004quantum}
\begin{align}
W_{p}:=S\big(2U_{p}(I_{\Omega}\otimes|0\>\<0|)U_{p}^{\dagger}-I_{\Omega}\otimes I_{\Omega}\big),
\end{align}
where $I_{\Omega}$ is the identity map on $\C^{|\Omega|}$ and $S:=\sum_{x,y\in\Omega}|x,y\>\<y,x|=S^\dag$ is the swap gate on $\C^{|\Omega|}\otimes\C^{|\Omega|}$.

To understand the quantum walk, it is essential to analyze the spectrum of $W_p$.
First, observing that $W_p = S(2\Pi-I)$ where
$\Pi=U_p(I_\Omega\otimes\ket{0}\bra{0})U_p^\dag=\sum_{x\in\Omega}\ket{x}\bra{x}\otimes\ket{p_x}\bra{p_x}$ projects onto the span of the states $\ket{x}\otimes\ket{p_x}$,
we consider the eigenvector $\ket{\lambda}$ of $\Pi S\Pi$ with eigenvalue $\lambda$.
We have $\Pi S\Pi=\sum_{x\in\Omega}D_{xy}\ket{x}\bra{y}\otimes\ket{p_x}\bra{p_y}$ where $D_{xy}:=\sqrt{p_{x\to y}p_{y\to x}}$.
Since $W_p\ket{\lambda}=S\ket{\lambda}$ and $W_pS\ket{\lambda}=2\lambda S\ket{\lambda}-\ket{\lambda}$,
the subspace $\text{span}\{\ket{\lambda},S\ket{\lambda}\}$ is invariant under $W_p$.
The eigenvalues of $W_p$ within this subspace are $\lambda\pm i\sqrt{1-\lambda^2}=e^{\pm i\arccos\lambda}$. For more details, see~\cite{szegedy2004quantum}.

The phase gap $\arccos\lambda\geq\sqrt{2(1-\lambda)}\geq\sqrt{2\delta}$, where $\delta$ is the spectral gap of $D$. Therefore, applying phase estimation using $O(1/\sqrt{\delta})$ calls to $W_p$ suffices to distinguish the state corresponding to the stationary distribution of the classical Markov chain from the other eigenvectors.

%=========================================================
\subsection{Quantum speedup of MCMC sampling via simulated annealing}\label{sec:quantum-MCMC}

Consider a Markov chain with spectral gap $\Delta$ and stationary distribution $\pi$. Classically, it takes $\Theta(\frac{1}{\Delta}\log(1/\epsilon\pi_{\min})))$ steps to sample from a distribution $\tilde{\pi}$ such that $\|\tilde{\pi}-\pi\|\leq\epsilon$, where $\pi_{\min}:=\min_i \pi_i$. Quantumly, \cite{wocjan2008speedup} proved the following result about a sequence of slowly varying Markov chains:

\begin{theorem}[{\cite[Theorem 2]{wocjan2008speedup}}]\label{thm:Wocjan-Abeyesinghe}
Let $p_{1},\ldots,p_{r}$ be the transition probabilities of $r$ Markov chains with stationary distributions $\pi_{1},\ldots,\pi_{r}$, spectral gaps  $\delta_{1},\ldots,\delta_{r}$, and quantum walk operators $W_{1},\ldots,\allowbreak W_{r}$, respectively; let $\Delta:=\min\{\delta_{1},\ldots,\delta_{r}\}$. Assume that $|\<\pi_{i}|\pi_{i+1}\>|^{2}\geq p$ for some $0<p<1$ and all $i\in\range{r-1}$, and assume that we can efficiently prepare the state $|\pi_{1}\>$ (where each $|\pi_i\>$ is a quantum sample defined as in \sec{quantum-MCMC-literature}). Then, for any $0<\epsilon<1$, there is a quantum algorithm that produces a quantum state $|\tilde{\pi}_{r}\>$ such that $\||\tilde{\pi}_{r}\>-|\pi_{r}\>\|\leq\epsilon$, using $\tilde{O}(r/(p\sqrt{\Delta}))$ steps of the quantum walk operators $W_{1},\ldots,W_{r}$, where the $\tilde{O}$ omits poly-logarithmic terms in $r$, $1/\epsilon$, and $1/p\sqrt{\Delta}$.\footnote{Note that this is quadratically worse in $1/p$ than the Grover's algorithm~\cite{grover1997quantum} with complexity $O(1/\sqrt{p})$. This is because we use a simple fixed-point quantum search algorithm~\cite{grover2005different} that does not require knowing $p$ in advance. Notice that there exist fixed-point quantum search algorithms that preserve the $O(1/\sqrt{p})$ speedup (e.g.,~\cite{YLC2014}, \cite[Chapter 6]{Wang2018}), but in our quantum algorithm, the simpler algorithm suffices as $p = \Theta(1)$ (see \lem{inner-product}).}
\end{theorem}

Their quantum algorithm produces the states $|\pi_{1}\>,\ldots,|\pi_{r}\>$ sequentially, and can do so rapidly if consecutive states have significant overlap and the walks mix rapidly. Intuitively, this is achieved by amplitude amplification. However, to avoid overshooting, the paper uses a variant of standard amplitude amplification, known as \emph{$\pi/3$-amplitude amplification}~\cite{grover2005different}, that we now review.

Given two states $|\psi\>$ and $|\phi\>$, we let $\Pi_{\psi}:=|\psi\>\<\psi|$, $\Pi_{\psi}^{\perp}:=I- \Pi_{\psi}$, $\Pi_{\phi}:=|\phi\>\<\phi|$, and $\Pi_{\phi}^{\perp}:=I- \Pi_{\phi}$. Define the unitaries
\begin{align}\label{eqn:pi/3-AM-AM-Ri}
R_{\psi}:=\omega\Pi_{\psi} + \Pi_{\psi}^{\perp},\quad R_{\phi}:=\omega\Pi_{\phi} + \Pi_{\phi}^{\perp}\qquad\text{where}\quad\omega=e^{i\frac{\pi}{3}}.
\end{align}
Given $|\<\psi|\phi\>|^{2}\geq p$, it can be shown that $|\<\phi|R_{\psi}R_{\phi}|\psi\>|^{2}\geq 1-(1-p)^{3}$. Recursively, one can establish the following:

\begin{lemma}[{\cite[Lemma 1]{wocjan2008speedup}}]\label{lem:pi3-amplification}
Let $|\psi\>$ and $|\phi\>$ be two quantum states with $|\<\psi|\phi\>|^2 \ge p$ for some $0 < p \leq 1$. Define the unitaries $R_{\psi},R_{\phi}$ as in \eqn{pi/3-AM-AM-Ri} and the unitaries $U_{m}$ recursively as follows:
\begin{align}\label{eqn:pi/3-Ui}
    U_{0} = I,\qquad U_{m+1} = U_{m}\, R_{\psi} \, U_{m}^\dagger \, R_{\phi} \, U_{m}.
\end{align}
Then we have
 \begin{align}\label{eqn:pi/3-AM-AM}
    |\<\phi|U_{m}|\psi\>|^2 \ge 1 - (1-p)^{3^m},
 \end{align}
and the unitaries in $\{R_{\psi}, R_{\psi}^\dagger, R_{\phi}, R_{\phi}^\dagger\}$ are used at most $3^m$ times in $U_{m}$.
\end{lemma}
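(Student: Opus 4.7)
My plan is to prove both the amplification bound and the query count by a single induction on $m$, with essentially all the content packed into a one-step identity. The base case $m=0$ is immediate: $U_0 = I$, so $|\<\phi|U_0|\psi\>|^2 = |\<\phi|\psi\>|^2 \ge p = 1 - (1-p)^{3^0}$, and zero $\le 3^0$ reflections are used. For the inductive step it suffices to establish the single-step identity that, for \emph{any} unitary $V$,
\begin{equation*}
|\<\phi|V R_\psi V^\dagger R_\phi V|\psi\>|^2 \;=\; 1 - \bigl(1 - |\<\phi|V|\psi\>|^2\bigr)^3.
\end{equation*}
Applying this with $V = U_m$ and combining with the inductive hypothesis $1 - |\<\phi|U_m|\psi\>|^2 \le (1-p)^{3^m}$ (using monotonicity of $q\mapsto q^3$ on $[0,1]$) then yields $1 - |\<\phi|U_{m+1}|\psi\>|^2 \le (1-p)^{3^{m+1}}$. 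The query count follows from the recursion $c_{m+1} = 3c_m + 2$ read off from \eqn{pi/3-Ui}, with $c_0 = 0$; this solves to $c_m = 3^m - 1 \le 3^m$.

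To prove the single-step identity, I would rewrite the reflections as $R_\phi = I + (\omega-1)\Pi_\phi$ and $R_\psi = I + (\omega-1)\Pi_\psi$ (using $\Pi + \Pi^\perp = I$), set $s := \<\phi|V|\psi\>$, and propagate $V|\psi\>$ through the operator step by step:
\begin{align*}
R_\phi V|\psi\> &= V|\psi\> + (\omega-1)s\,|\phi\>, \\
V^\dagger R_\phi V|\psi\> &= |\psi\> + (\omega-1)s\, V^\dagger|\phi\>, \\
R_\psi V^\dagger R_\phi V|\psi\> &= \bigl[\omega + (\omega-1)^2 |s|^2\bigr]|\psi\> + (\omega-1)s\, V^\dagger|\phi\>,
\end{align*}
where the last line uses $\<\psi|V^\dagger R_\phi V|\psi\> = 1 + (\omega-1)|s|^2$. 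Applying $V$ once more and taking the inner product with $\<\phi|$ gives
\begin{equation*}
\<\phi|V R_\psi V^\dagger R_\phi V|\psi\> \;=\; s\,\bigl[(2\omega-1) + (\omega-1)^2 |s|^2\bigr].
\end{equation*}

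The only remaining content is two algebraic facts specific to $\omega = e^{i\pi/3}$: namely $2\omega - 1 = i\sqrt{3}$ and $(\omega-1)^2 = -\omega$ (the latter because $\omega - 1 = e^{i2\pi/3}$). Substituting and writing $t := |s|^2$, the magnitude squared becomes $t\,|i\sqrt{3} - \omega t|^2 = t\bigl(t^2/4 + 3(2-t)^2/4\bigr) = t(t^2 - 3t + 3) = 1 - (1-t)^3$, which is the claimed identity. The main ``obstacle'' is thus minor: a careful but short symbolic expansion, combined with the algebraic coincidence that $\pi/3$ is precisely the phase that makes the per-step amplification $p \mapsto 1-(1-p)^3$ a perfect cube. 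Once that identity is in hand, the induction closes trivially and the query bound is a separate one-line recursion.
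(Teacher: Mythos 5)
Your proof is correct, and it follows exactly the route the paper itself sketches (the paper only cites \cite[Lemma 1]{wocjan2008speedup} and notes that $|\<\phi|R_{\psi}R_{\phi}|\psi\>|^{2}\geq 1-(1-p)^{3}$ followed by recursion): the single-step identity $|\<\phi|VR_{\psi}V^{\dagger}R_{\phi}V|\psi\>|^{2}=1-(1-|\<\phi|V|\psi\>|^{2})^{3}$, whose algebra you verify correctly using $2\omega-1=i\sqrt{3}$ and $(\omega-1)^{2}=-\omega$, plus the induction and the recursion $c_{m+1}=3c_{m}+2$. Nothing is missing.
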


Taking $m=\lceil\log_{3}(\ln(1/\epsilon)/p)\rceil$, the inner product between $|\phi\>$ and $U_{m}|\psi\>$ in \eqn{pi/3-AM-AM} is at least $1-\epsilon$, and we use $3^{m}=O(\log(1/\epsilon)/p)$ unitaries from the set $\{R_{\psi}, R_{\psi}^\dagger, R_{\phi}, R_{\phi}^\dagger\}$.

To establish \thm{Wocjan-Abeyesinghe} by \lem{pi3-amplification}, it remains to construct the unitaries $R_{i}:=\omega|\pi_i\>\<\pi_i| + (I - |\pi_i\>\<\pi_i|)$. In~\cite{wocjan2008speedup}, this is achieved by phase estimation of the quantum walk operator $W_{i}$ with precision $\sqrt{\Delta}/2$. Recall that if a classical Markov chain has spectral gap $\delta$, then the corresponding quantum walk operator has phase gap of at least $2\sqrt{\delta}$ (see \sec{quantum-walk-prelim}). Therefore, phase estimation with precision $\sqrt{\Delta}/2$ suffices to distinguish between $|\pi_{i}\>$ and other eigenvectors of $W_{i}$. As a result, we can take
\begin{align}\label{eqn:Wocjan-walk-PhaseEst}
R_i=\textsf{PhaseEst}(W_{i})^{\dagger}\bigl(I\otimes\big(\omega|0\>\<0|+(I-|0\>\<0|)\big)\bigr)\textsf{PhaseEst}(W_{i}).
\end{align}

%=========================================================
\subsection{Quantum Chebyshev inequality}\label{sec:quantum-Chebyshev}

Assume we are given a unitary $U$ such that
\begin{align}\label{eq:AmpEst-def}
U|0\>|0\>=\sqrt{p}|0\>|\phi\>+|0^{\perp}\>,
\end{align}
where $|\phi\>$ is a normalized pure state and $(\<0|\otimes I)|0^{\perp}\>=0$. If we measure the output state, we get $0$ in the first register with probability $p$; by the Chernoff bound, it takes $\Theta(1/\epsilon^{2})$ samples to estimate $p$ within $\epsilon$ with high success probability. However, there is a more efficient quantum algorithm, called \emph{amplitude estimation}~\cite{brassard2002amplitude}, that estimates the value of $p$ using only $O(1/\epsilon)$ calls to $U$:

\begin{theorem}[{\cite[Theorem 12]{brassard2002amplitude}}]\label{thm:AmpEst}
Given $U$ satisfying \eq{AmpEst-def}, the amplitude estimation algorithm in \fig{AmpEst} outputs an angle $\tilde{\theta}_{p}\in [-\pi,\pi]$ such that $\tilde{p}:=\sin^{2}(\tilde{\theta}_{p})$ satisfies
\begin{align}\label{eq:AmpEst-1}
|\tilde{p}-p|\leq \frac{2\pi\sqrt{p(1-p)}}{M}+\frac{\pi^{2}}{M^{2}}
\end{align}
with success probability at least $8/\pi^{2}$, using $M$ calls to $U$ and $U^{\dagger}$.
\end{theorem}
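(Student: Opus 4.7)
The plan is to reduce the estimation of $p$ to phase estimation of a Grover-style amplification operator whose eigenphases encode $\theta_p$, where $\sqrt{p}=\sin\theta_p$ with $\theta_p\in[0,\pi/2]$.

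First, I would set up a two-dimensional invariant subspace. Writing $|\Psi_g\>:=|0\>|\phi\>$ and $|\Psi_b\>:=|0^\perp\>/\sqrt{1-p}$ (both normalized), the hypothesis \eq{AmpEst-def} gives the orthogonal decomposition $U|0\>|0\>=\sin\theta_p\,|\Psi_g\>+\cos\theta_p\,|\Psi_b\>$. Define the reflections $S_\chi:=I-2(|0\>\<0|\otimes I)$, which is implementable with no calls to the oracle since it only inspects the first register, and $S_0:=I-2|0,0\>\<0,0|$, and set $Q:=-U S_0 U^\dagger S_\chi$. A direct check shows that $Q$ preserves $\spn\{|\Psi_g\>,|\Psi_b\>\}$ and, in that basis, acts as a rotation by $2\theta_p$. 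Its restriction therefore has eigenvalues $e^{\pm 2i\theta_p}$ with eigenvectors $|\Psi_\pm\>:=\tfrac{1}{\sqrt{2}}(|\Psi_g\>\pm i|\Psi_b\>)$, and a short calculation gives the balanced decomposition $U|0\>|0\>=\tfrac{-i}{\sqrt{2}}(e^{i\theta_p}|\Psi_+\>-e^{-i\theta_p}|\Psi_-\>)$.

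Second, I would apply standard phase estimation with $M$ controlled applications of $Q$ on the initial state $U|0\>|0\>$; since each invocation of $Q$ uses one call to each of $U$ and $U^\dagger$, the total query cost is $M$ as claimed. By the textbook phase-estimation tail bound (a Fej\'er-kernel style geometric-sum estimate), the measurement outputs an integer $y\in\{0,1,\ldots,M-1\}$ whose rescaling $\tilde\phi:=y/M$ lies within $1/M$ of one of the two true phases $\pm\theta_p/\pi\pmod{1}$ with probability at least $8/\pi^2$; either eigenvector contributes equally to $U|0\>|0\>$, so both sign outcomes are acceptable. Setting $\tilde\theta_p:=\pi\tilde\phi\in[-\pi,\pi]$ gives $|\tilde\theta_p\mp\theta_p|\le\pi/M$ with the stated probability, and the sign ambiguity is harmless because $\sin^2$ is even.

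Third, I convert angular accuracy to amplitude accuracy via Taylor's theorem for $f(\theta)=\sin^2\theta$. Using $f'(\theta)=\sin(2\theta)$, $|f''|\le 2$, and $|\sin(2\theta_p)|=2|\sin\theta_p\cos\theta_p|=2\sqrt{p(1-p)}$, I obtain
\begin{align*}
|\tilde p-p|\;\le\;|\sin(2\theta_p)|\cdot|\tilde\theta_p-\theta_p|+|\tilde\theta_p-\theta_p|^2\;\le\;\frac{2\pi\sqrt{p(1-p)}}{M}+\frac{\pi^2}{M^2},
\end{align*}
matching \eq{AmpEst-1}. The main technical step is the phase-estimation success-probability bound of $8/\pi^2$: because $U|0\>|0\>$ is an equal-weight superposition of $|\Psi_+\>$ and $|\Psi_-\>$, one must check that the standard single-eigenvector tail bound carries over unchanged (the cross terms cancel in probability, so the $8/\pi^2$ lower bound still applies to the event of landing within $1/M$ of either $+\theta_p/\pi$ or $-\theta_p/\pi$). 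Aside from this combinatorial bookkeeping, the remainder is routine.
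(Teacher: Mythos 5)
Your proposal is correct and follows the same route as the source the paper cites for this statement (the paper does not reprove Theorem~12 of Brassard--H{\o}yer--Mosca--Tapp; it only quotes it and, in \eqn{AmpEst-two-angles}, the equal-weight eigenvector decomposition you derive). Your handling of the two eigenphases is the right one --- since $|\Psi_+\>$ and $|\Psi_-\>$ are orthogonal the outcome distribution is the incoherent mixture $\tfrac12 p_+(y)+\tfrac12 p_-(y)$, and each half contributes at least $\tfrac12\cdot\tfrac{8}{\pi^2}$ to the event of landing near its own phase --- and the Taylor-expansion step reproducing $|\tilde p-p|\le 2\pi\sqrt{p(1-p)}/M+\pi^2/M^2$ is exactly Lemma~7 of that reference.
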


\vspace{-6mm}
\begin{figure}[H]
\begin{align*}
\Qcircuit @C=1.2em @R=1.2em {
   |0\> & & \multigate{2}{\textsf{QFT}} & \ctrl{0} {\ar @{-}+<0em,-0.8em>} & \multigate{2}{\textsf{QFT}^{\dagger}} & \qw \\
   \raisebox{6pt}{\vdots} & & & \raisebox{6pt}{\vdots} & & & |\tilde{\theta}_{p}\> \\
   |0\> & & \ghost{\textsf{QFT}} & \ctrl{1}{\ar @{-}+<0em,0.8em>} & \ghost{\textsf{QFT}^{\dagger}} & \qw \\
   & & \qw & \multigate{2}{\mathcal{Q}} & \qw & \qw \\
   U|0\> & \ \ \ \raisebox{6pt}{\vdots} & & & & \raisebox{6pt}{\vdots} \\
   & & \qw & \ghost{\mathcal{Q}} & \qw & \qw
}
\end{align*}
\caption{The quantum circuit for amplitude estimation.}
\label{fig:AmpEst}
\end{figure}
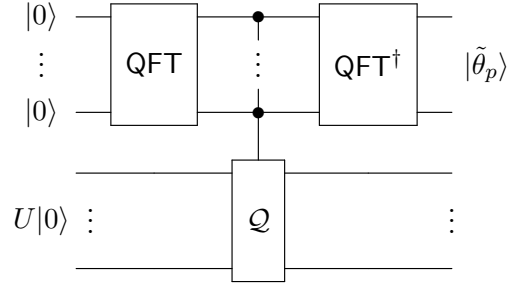

\noindent
Here $\textsf{QFT}$ denotes the quantum Fourier transform over $\Z_M$ and $\mathcal{Q}:=-US_{0}U^{\dagger}S_{1}$ where $S_{0}$ and $S_{1}$ are reflections about $|0\>$ and the target state, respectively, following the pattern of Grover search. The controlled-$\mathcal{Q}$ gate denotes the operation $\sum_{j=0}^{M-1}|j\>\<j|\otimes \mathcal{Q}^{j}$. In fact, it was shown in the proof of \cite[Theorem 12]{brassard2002amplitude} that the state after applying the circuit in \fig{AmpEst} is
\begin{align}\label{eqn:AmpEst-two-angles}
  \frac{e^{i\theta_{p}}}{\sqrt{2}}|\tilde{\theta}_{p}\>|\Psi_{+}\>-\frac{e^{-i\theta_{p}}}{\sqrt{2}}|-\tilde{\theta}_{p}\>|\Psi_{-}\>
\end{align}
where $\theta_{p}\in [0,\pi]$ such that $p=\sin^{2}(\theta_{p})$, and $\tilde{\theta}_{p}\in [0,\pi]$ is a random variable such that $\tilde{p}=\sin^{2}(\tilde{\theta}_{p})$, and $|\Psi_{\pm}\>$ are two eigenvectors of $\mathcal{Q}$. Measuring the first register either gives $\tilde{\theta}_{p}$ or $-\tilde{\theta}_{p}$ with probability $1/2$, but since $\sin^{2}(\tilde{\theta}_{p})=\sin^{2}(-\tilde{\theta}_{p})=\tilde{p}$, this does not influence the success of \thm{AmpEst}.

In \eq{AmpEst-1}, if we take $M=\bigl\lceil 2\pi\bigl(\frac{2\sqrt{p}}{\epsilon}+\frac{1}{\sqrt{\epsilon}}\bigr)\bigr\rceil=O(1/\epsilon)$, we get
\begin{align}
|\tilde{p}-p|\leq\frac{2\pi\sqrt{p(1-p)}}{2\pi}\epsilon+\frac{\pi^{2}}{4\pi^{2}}\epsilon^{2}\leq\frac{\epsilon}{2}+\frac{\epsilon}{4}\leq\epsilon.
\end{align}
Furthermore, the success probability $8/\pi^{2}$ can be boosted to $1-\nu$ by executing the algorithm $\Theta(\log 1/\nu)$ times and taking the median of the estimates.

Amplitude estimation can be generalized from estimating a single probability $p\in[0,1]$ to estimating the expectation of a random variable. Assume that $U$ is a unitary acting on $\C^{S}\otimes\C^{|\Omega|}$ such that
\begin{align}\label{eqn:quantum-sampler-defn}
U|0\>|0\>=\sum_{x\in\Omega}\sqrt{p_{x}}|\psi_x\>|x\>
\end{align}
where $S\in\N$ and $\{|\psi_x\> : x \in \Omega\}$ are unit vectors in $\C^{S}$. Let
\begin{align}
\mu_{U}:=\sum_{x\in\Omega}p_{x}x,\qquad\sigma_{U}^{2}:=\sum_{x\in\Omega}p_{x}(x-\mu_{U})^{2}
\end{align}
denote the expectation and variance of the random variable, respectively. Several quantum algorithms have given speedups for estimating $\mu_{U}$. Specifically, Ref.~\cite{montanaro2015quantum} showed how to estimate $\mu_{U}$ within additive error $\epsilon$ by $\tilde{O}(\sigma_{U}/\epsilon)$ calls to $U$ and $U^{\dagger}$. Given an upper bound $H$ and a lower bound $L>0$ on the random variable, Ref.~\cite{li2019entropy} showed how to estimate $\mu_{U}$ with multiplicative error $\epsilon$ using $\tilde{O}(\sigma_{U}/\epsilon\mu_{U}\cdot H/L)$ calls to $U$ and $U^{\dagger}$. More recently, Ref.~\cite{hamoudi2019Chebyshev} mutually generalized these results and proposed a significantly better quantum algorithm:
\begin{theorem}[{\cite[Theorem 3.5]{hamoudi2019Chebyshev}}]\label{thm:quantum-Chebyshev}
There is a quantum algorithm that, given a quantum sampler $U$ as in \eqn{quantum-sampler-defn}, an integer $\Delta_{U}$, a value $H>0$, and two reals $\epsilon,\delta \in (0,1)$, outputs an estimate $\tilde{\mu}_{U}$. If $\Delta_{U}\geq\sqrt{\sigma_{U}^{2}+\mu_{U}^{2}}/\mu_{U}$ and $H>\mu_{U}$, then $|\tilde{\mu}_{U}-\mu_{U}|\leq\epsilon\mu_{U}$ with probability at least $1-\delta$, and the algorithm uses $\tilde{O}(\Delta_{U}/\epsilon\cdot\log^{3}(H/\mu_{U})\log(1/\delta))$ calls to $U$ and $U^{\dagger}$.
\end{theorem}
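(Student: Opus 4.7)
The plan is to reduce quantum mean estimation to a dyadic sequence of amplitude-estimation calls via \thm{AmpEst}, converting the classical Chebyshev sample complexity $T = \Omega(\Delta_U^2/\epsilon^2)$ into the quantum bound $T = \tilde{O}(\Delta_U/\epsilon)$. For each integer $k$ introduce the dyadic bucket $B_k := \{x \in \Omega : 2^{k-1} < x/\mu_U \leq 2^k\}$, so that $\mu_U = \sum_k \E[X\,\mathbf{1}_{B_k}]$ with $\E[X\,\mathbf{1}_{B_k}] \leq 2^k \mu_U \cdot q_k$ where $q_k := \Pr[x \in B_k]$. Chebyshev's inequality gives $q_k \leq \sigma_U^2 / ((2^{k-1}-1)\mu_U)^2 = O(\Delta_U^2 / 4^k)$ for $k \geq 2$, so buckets with $k > K := \lceil \log_2(\Delta_U^2/\epsilon) \rceil$ jointly contribute at most $\sum_{k > K} O(\Delta_U^2 \mu_U / 2^k) = O(\epsilon \mu_U)$. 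A symmetric truncation handles values much smaller than $\mu_U$, so it suffices to accurately estimate only $O(\log(H/\mu_U))$ buckets.

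For each retained bucket $B_k$, I would build a unitary $U_k$ that flags membership in $B_k$ using one call each to $U$ and $U^\dagger$ together with a comparison circuit on the value register, then invoke \thm{AmpEst} with precision $\eta_k$ to estimate $q_k$. By \thm{AmpEst} this costs $\tilde{O}(\sqrt{q_k}/\eta_k + 1/\sqrt{\eta_k})$ queries. Choosing $\eta_k$ so that $\eta_k \cdot 2^k \mu_U \leq \epsilon \mu_U / K$ and substituting the Chebyshev bound $q_k = O(\Delta_U^2 / 4^k)$ makes the per-bucket cost balance at $\tilde{O}(\Delta_U/\epsilon)$. Taking the median of $O(\log(K/\delta))$ independent amplitude estimations per bucket boosts the joint success probability to $1 - \delta$, contributing the $\log(1/\delta)$ factor and one factor of $\log(H/\mu_U)$. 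Summing $\tilde q_k \cdot 2^k \mu_U$ over the retained buckets and applying the triangle inequality yields $|\tilde\mu_U - \mu_U| \leq \epsilon \mu_U$.

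The main obstacle is that the bucket boundaries depend on the unknown $\mu_U$. The standard fix is a bootstrapping phase that tries candidate scales $\hat\mu \in \{H, H/2, H/4, \ldots\}$ in decreasing order, runs the bucketed procedure at each scale with moderate precision, and accepts the smallest $\hat\mu$ for which the returned estimate stabilizes near $\hat\mu$. This bootstrap contributes the remaining $\log(H/\mu_U)$ factor, producing the $\log^3(H/\mu_U)$ dependence overall. The delicate analysis is showing that the procedure degrades gracefully at incorrect scale guesses: when $\hat\mu \gg \mu_U$, tail-bucket amplitude estimations (where $\sqrt{q_k}$ in \thm{AmpEst} is tiny) must not propagate noise that exceeds the $\epsilon \mu_U$ budget, while when $\hat\mu \ll \mu_U$ the stabilization test must reliably trigger a rescaling. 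Carrying this out requires coupling the Chebyshev tail decay $q_k \leq O(\Delta_U^2/4^k)$ to the $\sqrt{q_k(1-q_k)}/M$ error term in \thm{AmpEst} in a way that remains tight across all bucket scales simultaneously—this coupled bookkeeping across heavy and tail buckets is where the bulk of the technical work lies.
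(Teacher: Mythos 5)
First, a framing remark: the paper does not prove this statement---it is imported from \cite[Theorem 3.5]{hamoudi2019Chebyshev}, and \sec{quantum-Chebyshev} only sketches the algorithm. That sketch follows a different decomposition from yours: rather than estimating bucket probabilities, it encodes the (truncated, rescaled) value $x/H$ directly as an amplitude via the controlled rotation $R_{L,H}$ in \algo{BasicEst}, so a single amplitude estimation returns $H^{-1}\mu_{<H}$, and it then searches over truncation thresholds $b$ using the fact that $\sqrt{b/\mu_{<b}}$ crosses $\Delta_U$ near $b\approx\mu_U\Delta_U^2$. Your route is instead a dyadic-interval decomposition in the style of Montanaro's mean estimator \cite{montanaro2015quantum}, which is a legitimate alternative skeleton.

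Within that skeleton, however, there is a genuine gap in the reconstruction step. Your final estimator is $\sum_k \tilde q_k\cdot 2^k\mu_U$, where $\tilde q_k$ estimates the bucket probability $q_k=\Pr[x\in B_k]$. Even with $\tilde q_k=q_k$ exactly, this equals $\sum_k 2^k\mu_U\, q_k$, which can exceed $\mu_U=\sum_k\E[X\,\mathbf{1}_{B_k}]$ by a factor approaching $2$, because every value in $B_k$ has been replaced by the bucket's upper endpoint. That is a systematic bias of order $\mu_U$, not $\epsilon\mu_U$, and no choice of the per-bucket precisions $\eta_k$ repairs it; since the theorem demands relative error $\epsilon<1$, the estimator fails as written. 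The standard fixes are (i) to amplitude-encode the value \emph{within} each bucket, i.e., rotate an ancilla by an angle with $\sin^2$ equal to $x/(2^k\mu_U)$ conditioned on $x\in B_k$, exactly as $R_{L,H}$ does in \algo{BasicEst}, so that amplitude estimation returns $\E[X\,\mathbf{1}_{B_k}]/(2^k\mu_U)$ rather than $q_k$; or (ii) to use buckets of ratio $1+\Theta(\epsilon)$, which inflates the bucket count to $O(\epsilon^{-1}\log(\Delta_U/\epsilon))$ and forces a redo of your cost balancing. Separately, the scale bootstrap---the acceptance test for $\hat\mu$ and the proof that scales with $\hat\mu\gg\mu_U$ are reliably rejected while the correct scale is reliably accepted---is the step you explicitly defer, and it is precisely where the hypothesis $\Delta_U\geq\sqrt{\sigma_U^2+\mu_U^2}/\mu_U$ is consumed in \cite{hamoudi2019Chebyshev}; without it, neither the stated success probability nor the $\log^3(H/\mu_U)$ overhead is established.
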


The quantum algorithm works as follows. First, assume $\Omega\subseteq[L,H]$ for given real numbers $L,H\geq 0$, there is a basic estimation algorithm (denoted \textsf{BasicEst}) that estimates $H^{-1}\mu_{U}$ up to $\epsilon$-multiplicative error:

\begin{algorithm}[htbp]
\KwInput{A quantum sampler $U$ acting on $\C^{S}\otimes\C^{|\Omega|}$, interval $[L,H]$, precision parameter $\epsilon \in (0,1)$, failure parameter $\delta \in (0,1)$.}
\KwOutput{$\epsilon$-multiplicative approximation of $H^{-1}\mu_{U}$.}
Use controlled rotation to implement a unitary $R_{L,H}$ acting on $\C^{|\Omega|}\otimes\C^{2}$ such that for all $x\in\Omega$, $R_{L,H}|x\>|0\>=
    \begin{cases}
      |x\>(\sqrt{1-\frac{x}{H}}|0\>+\sqrt{\frac{x}{H}}|1\>) & \text{if } L\leq x<H \\
      |x\>|0\> & \text{otherwise}
    \end{cases}$\;
Let $V=(I_{S}\otimes R_{L,H})(U\otimes I_{2})$ and $\Pi=I_{S}\otimes I_{\Omega}\otimes |1\>\<1|$\;
\For{$i=1,\ldots,\Theta(\log(1/\delta))$}{Compute $\tilde{p}_{i}$ by \thm{AmpEst} with $U\leftarrow V$, $S_{1}\leftarrow 2\Pi-I$, and $M\leftarrow\Theta(1/(\epsilon\sqrt{H^{-1}\mu_{U}}))$\;}
Return $\tilde{p}=\median\{\tilde{p}_{1},\ldots,\tilde{p}_{\Theta(\log(1/\delta))}\}$.
\caption{\textsf{BasicEst:} the basic estimation algorithm.}
\label{algo:BasicEst}
\end{algorithm}

However, usually the bounds $L$ and $H$ are not explicitly given. In this case, Ref.~\cite{hamoudi2019Chebyshev} considered the truncated mean $\mu_{<b}$ defined by replacing the outcomes larger than $b$ with 0. The paper then runs \algo{BasicEst} (\textsf{BasicEst}) to estimate $\mu_{<b}/b$. A crucial observation is that $\sqrt{b/\mu_{<b}}$ is smaller than $\Delta_{U}$ for large values of $b$, and it becomes larger than $\Delta_{U}$ when $b\approx\mu_{U}\Delta_{U}^{2}$. As a result, by repeatedly running \textsf{BasicEst} with  $\Delta_{U}$ quantum samples, and applying $O(\log(H/L))$ steps of a binary search on the values of $b$, the first non-zero value is obtained when $b/\Delta_{U}^{2}\approx\mu_{U}$. In~\cite{hamoudi2019Chebyshev}, more precise truncation means are used to improve the precision of the result to $\tilde{O}(1/\epsilon)$ and remove the dependence on $L$.

Note that the quantum algorithm for \thm{quantum-Chebyshev} only relies on \textsf{BasicEst}. This is crucial when we estimate the mean of our simulated annealing algorithm in different iterations \emph{nondestructively} (see \sec{quantum-volume-estimation} for more details).

%=========================================================
\subsection{Hit-and-run walk}\label{sec:hit-and-run}
As introduced in \sec{classical-volume}, there are various random walks that mix fast in a convex body $\K$, such as the grid walk~\cite{dyer1991random} and the ball walk~\cite{lovasz1993random,cousins2015bypassing}. In this paper, we mainly use the \emph{hit-and-run walk}~\cite{smith1984efficient,Lovasz99,LV06}. It is defined as follows:
\begin{enumerate}
\item Pick a uniformly distributed random line $\ell$ through the current point $p$.
\item Move to a uniform random point along the chord $\ell\cap\K$.
\end{enumerate}
For any two points $p,q\in\K$, we let $\ell(p,q)$ denote the length of the chord in $\K$ through $p$ and $q$. Then the transition probability of the hit-and-run walk is determined by the following lemma:
\begin{lemma}[{\cite[Lemma 3]{Lovasz99}}]\label{lem:Lovasz99-transition-prob}
If the current point of the hit-and-run walk is $u$, then the density function of the distribution of the next point $x \in \rmk$ is
\begin{align}
p_{u}(x)=\frac{2}{nv_{n}}\cdot\frac{1}{\ell(u,x)|x-u|^{n-1}},
\end{align}
where $v_{n}:=\pi^{\frac{n}{2}}/\Gamma(1+\frac{n}{2})$ is the volume of the $n$-dimensional unit ball. In other words, the probability that the next point is in a (measurable) set $\rma\subseteq\rmk$ is
\begin{align}
P_{u}(\rma)=\int_{\rma}\frac{2}{nv_{n}}\cdot\frac{1}{\ell(u,x)|x-u|^{n-1}}\,\d x.
\end{align}
\end{lemma}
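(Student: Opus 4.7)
The plan is to compute the density of the next point directly by decomposing the hit-and-run step into its two random choices—the line through $u$ and the position along the chord—and then performing a change of variables into spherical coordinates centered at $u$.

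I parameterize lines through $u$ by unit vectors $\theta \in S^{n-1}$, noting that $\theta$ and $-\theta$ determine the same line. Sampling the line uniformly is equivalent to sampling $\theta$ uniformly on $S^{n-1}$: the two parameterizations agree because, once a line is fixed, the subsequent choice of a uniform point on its chord does not depend on the sign of $\theta$. Since the surface area of $S^{n-1}$ is $n v_n$, the density of $\theta$ with respect to the surface measure $d\sigma$ is $1/(n v_n)$. Writing $d_+(\theta)$ and $d_-(\theta)$ for the distances from $u$ to the boundary of $\rmk$ in the directions $+\theta$ and $-\theta$, the chord through $u$ in direction $\theta$ has length $L(\theta) := d_+(\theta) + d_-(\theta)$, and the uniform density in the chord parameter $t \in [-d_-(\theta), d_+(\theta)]$ is $1/L(\theta)$. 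Thus the joint density of $(\theta, t)$ is $\frac{1}{n v_n} \cdot \frac{1}{L(\theta)}$, with the next point given by $u + t\theta$.

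Next I change variables from $(\theta, t)$ to $x = u + t\theta$. Using spherical coordinates about $u$, set $r = |x-u|$ and $\hat\theta = (x-u)/r$, with volume element $dx = r^{n-1}\, dr\, d\sigma(\hat\theta)$. For a target point $x \in \rmk\setminus\{u\}$, the preimage under the map $(\theta, t) \mapsto u + t\theta$ consists of exactly two pairs: $(\hat\theta, r)$ and $(-\hat\theta, -r)$. In both cases the underlying line is the same, so the chord length equals $\ell(u,x)$, and the local Jacobian contributes a factor $r^{-(n-1)}$. Summing the two contributions yields
\[
p_u(x) = 2 \cdot \frac{1}{n v_n} \cdot \frac{1}{\ell(u,x)} \cdot \frac{1}{r^{n-1}} = \frac{2}{n v_n} \cdot \frac{1}{\ell(u,x)\,|x-u|^{n-1}},
\]
and the integral expression for $P_u(\rma)$ follows by integrating $p_u$ over $\rma$.

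The computation is essentially mechanical once spherical coordinates are introduced; the only small subtleties are correctly accounting for the factor of $2$ arising from the two-to-one correspondence between unit vectors on $S^{n-1}$ and lines through $u$, and verifying that the \emph{full} chord length $L(\theta)$—rather than one of the half-lengths $d_\pm(\theta)$—appears in the denominator, a consequence of the fact that the uniform distribution on the chord has density $1/L(\theta)$ regardless of which side of $u$ the point $x$ lies on.
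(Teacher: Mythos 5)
Your derivation is correct. The paper itself does not prove this lemma---it is imported verbatim as \cite[Lemma 3]{Lovasz99}---and your argument (uniform direction $\theta$ on $S^{n-1}$ with density $1/(nv_n)$, uniform density $1/L(\theta)$ along the chord, polar change of variables $\d x = r^{n-1}\,\d r\,\d\sigma(\hat\theta)$, and the factor of $2$ from the antipodal pair $(\hat\theta,r)$, $(-\hat\theta,-r)$ mapping to the same line) is exactly the standard computation behind Lov\'asz's original proof, with the two genuine subtleties (the factor of $2$ and the appearance of the full chord length $\ell(u,x)$ rather than a half-length) correctly handled.
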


In general, we can also define a hit-and-run walk with a given density. Let $f$ be a density function in $\R^{n}$. For any points $u,v\in\R^{n}$, we let
\begin{align}\label{eqn:mu-f-defn-continuous}
\mu_f(u, v):=\int_{0}^{1}f((1-t)u + tv)\,\d t.
\end{align}
For any line $\ell$, let $\ell^{+}$ and $\ell^{-}$ be the endpoints of the chord $\ell\cap\K$ (with $+$ and $-$ assigned arbitrarily). The density $f$ specifies the following hit-and-run walk:
\begin{enumerate}
\item Pick a uniformly distributed random line $\ell$ through the current point $p$.
\item Move to a random point $x$ along the chord $\ell\cap\K$ with density $\frac{f(x)}{\mu_f(\ell^{-},\ell^{+})}$.
\end{enumerate}

Let $\pi_{\K}$ denote the uniform distribution over $\K$. Smith~\cite{smith1984efficient} proved that the stationary distribution of the hit-and-run walk with uniform density is $\pi_{\K}$. Furthermore, Lov{\'{a}}sz and Vempala~\cite{LV06} proved that the hit-and-run walk mixes rapidly from any initial distribution:
\begin{theorem}[{\cite[Theorem 1.1]{LV06}}]\label{thm:hit-and-run-LV06}
Let $\K$ be a convex body that satisfies \eqn{ball-relationship}: $B_2(0,r) \subseteq \rmk \subseteq B_2(0,R)$. Let $\sigma$ be a starting distribution and let $\sigma^{(m)}$ be the distribution of the current point after $m$ steps of the hit-and-run walk in $\K$. Let $\epsilon>0$, and suppose that the density function $\d\sigma/\d\pi_{\K}$ is upper bounded by $M$ except on a set $\rms$ with $\sigma(\rms)\leq\epsilon/2$. Then for any
\begin{align}
m>10^{10}\frac{n^{2}R^{2}}{r^{2}}\ln\frac{M}{\epsilon},
\end{align}
the total variation distance between $\sigma^{(m)}$ and $\pi_{\K}$ is less than $\epsilon$.
\end{theorem}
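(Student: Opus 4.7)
The plan is to follow the standard conductance-based framework for bounding mixing times of Markov chains on continuous state spaces. By the Lovász--Simonovits conductance inequality, if the hit-and-run walk has (continuous) conductance $\Phi$ (the natural extension of \eqn{conductance-defn-discrete} to measurable subsets of $\K$ against $\pi_{\K}$), and $\sigma$ is an $M$-warm start in the sense that $\d\sigma/\d\pi_{\K}\leq M$ everywhere, then $\|\sigma^{(m)}-\pi_{\K}\|_{TV}\leq \sqrt{M}\,(1-\Phi^{2}/2)^{m}$. To accommodate the weaker hypothesis that this bound fails only on a bad set $\rms$ with $\sigma(\rms)\leq\epsilon/2$, I would split $\sigma=\sigma|_{\K\setminus\rms}+\sigma|_{\rms}$, apply the warm-start bound to the genuinely $M$-warm component with target error $\epsilon/2$, and absorb the mass of $\sigma|_{\rms}$ into the final $\epsilon/2$ slack (using that any coupling of chains started at different points contributes at most this mass to the TV distance). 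Setting $m\geq (2/\Phi^{2})\log(\sqrt{M}/\epsilon)$ then suffices, so it remains to establish $\Phi=\Omega(r/(nR))$, which substituted into $1/\Phi^{2}$ yields exactly the claimed $O(n^{2}R^{2}/r^{2}\cdot\log(M/\epsilon))$.

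The conductance bound is the core of the proof, and I would obtain it via the usual coupling-plus-isoperimetry scheme. First, equip $\K$ with the \emph{cross-ratio distance} $d_{\K}$, in which two points $u,v$ are close when $|u-v|$ is small relative to the length of the chord of $\K$ through $u$ and $v$; this is the natural metric for hit-and-run, since it reflects that a step from $u$ can plausibly land near $v$ only when the pair is not separated by a narrow neck of $\K$. Second, I would prove a \emph{smoothness lemma}: if $d_{\K}(u,v)\leq\tau$ for a sufficiently small universal constant $\tau$, then $\|P_{u}-P_{v}\|_{TV}\leq 1-c$ for a universal $c>0$. Using the explicit kernel of \lem{Lovasz99-transition-prob}, this reduces to showing that for most directions sampled at $u$, the induced one-dimensional chord and density along that direction from $v$ overlap substantially with those from $u$; closeness in cross-ratio distance controls both the angular perturbation seen by the uniform direction distribution and the length-ratio distortion along the chord.

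Third, I would invoke the Lovász--Simonovits isoperimetric inequality in the cross-ratio metric: for any measurable partition $\K=\rms_{1}\cup\rms_{2}\cup\rms_{3}$ with $d_{\K}(\rms_{1},\rms_{2})\geq\tau$, one has $\pi_{\K}(\rms_{3})\geq c'\tau\min\{\pi_{\K}(\rms_{1}),\pi_{\K}(\rms_{2})\}$. This is the crucial dimension-free inequality, proved via the localization lemma that reduces high-dimensional convex isoperimetry to one-dimensional log-concave estimates. Combining the smoothness lemma with this isoperimetric bound by the standard averaging argument (partition $\K$ at any $A$ with $\pi_{\K}(A)\leq 1/2$ into a ``deep'' part of $A$, a ``deep'' part of $\K\setminus A$, and a bridging layer of cross-ratio width $\tau$, using smoothness inside the bridging layer and isoperimetry elsewhere) yields $\int_{A}P_{u}(\K\setminus A)\,\d\pi_{\K}(u)\geq \Omega(\Phi)\,\pi_{\K}(A)$ with $\Phi=\Omega(r/(nR))$, the $r/R$ coming from converting cross-ratio distance to Euclidean distance in an $R/r$-unrounded body and the $1/n$ from the concentration of the random chord direction and random chord position in high dimension.

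The main obstacle is Step three: obtaining an isoperimetric inequality tight enough to give only $n^{2}$ in the final mixing time, rather than a weaker polynomial. Bare Euclidean isoperimetry for convex bodies would lose additional powers of $n$ (and of $R/r$), so the dimension-free cross-ratio version via localization is genuinely needed; likewise the smoothness lemma must be proved with a constant (not $1/\poly(n)$) TV gap, which requires a careful analysis of the hit-and-run kernel along random directions. A secondary subtlety is the passage from a true warm start to the $M$-outside-$\rms$ hypothesis: one must ensure that the mass escaping into $\rms$ does not pollute the conductance-based contraction argument, which is precisely what the $\epsilon/2$ slack in the hypothesis is used to absorb. These two ingredients are intricate enough that the LV06 proof spans many pages, but the blueprint above is the one I would follow.
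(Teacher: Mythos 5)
Your proposal follows exactly the route the paper takes for this cited result: combine the Lov\'asz--Simonovits contraction bound (\prop{Lovasz-Simonovits}) with the $\Phi=\Omega(r/(nR))$ conductance lower bound for hit-and-run (\prop{conductance-uniform}), handling the bad set $\rms$ by splitting off its $\epsilon/2$ mass. Your additional sketch of how the conductance bound itself is established (cross-ratio distance, one-step smoothness, and the localization-based isoperimetric inequality) correctly reflects the internals of the LV06 argument that the paper simply cites.
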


\thm{hit-and-run-LV06} can also be generalized to exponential distributions on $\K$:
\begin{theorem}[{\cite[Theorem 1.3]{LV06}}]
  \label{thm:exp-hit-run-mixes}
    Let $\K \subset \R^n$ be a convex body and let $f$ be a density supported on $\K$ that is proportional to $e^{-a^T x}$ for some vector $a \in \R^n$. Assume that the level set of $f$ of probability $1/8$ contains a ball of radius $r$, and $\E_{f}(|x - z_f|^2) \le R^2$, where $z_f$ is the centroid of $f$. Let $\sigma$ be a starting distribution and let $\sigma^m$ be the distribution for the current point after $m$ steps of the hit-and-run walk applied to $f$. Let $\epsilon > 0$, and suppose that the density function $\frac{\d\sigma}{\d\pi_f}$ is upper bounded by $M$ except on a set $S$ with $\sigma(S) \le \frac{\epsilon}{2}$. Then for
    \begin{align*}
      m > 10^{30}\frac{n^2R^2}{r^2}\ln^5{\frac{MnR}{r\epsilon}},
    \end{align*}
    the total variation distance between $\sigma^m$ and $\pi_f$ is less than $\epsilon$.
  \end{theorem}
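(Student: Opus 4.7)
The plan is to follow the standard conductance-based mixing argument for hit-and-run walks, adapted from the uniform case (\thm{hit-and-run-LV06}) to the exponential density $f \propto e^{-a^T x}$. First I would define the reversible Markov chain with stationary distribution $\pi_f$ induced by the density $f$, noting that reversibility follows because the transition density can be written symmetrically in $u$ and $v$ using $\mu_f(u,v)$ from \eqn{mu-f-defn-continuous}. The ultimate goal is to lower bound the conductance $\Phi$ of this chain in the appropriate continuous sense and invoke a Lovász--Simonovits-type theorem that turns conductance plus an $M$-warm start into a mixing time of order $\Phi^{-2}\,\log(M/\epsilon)$.

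The core technical step is a two-point ``smoothness'' lemma: for any $u,v\in\K$ with $\|u-v\|$ small relative to the geometric scale $r$ and with $f(u)/f(v)$ close to $1$, the one-step distributions $P_u$ and $P_v$ of the hit-and-run walk satisfy $\|P_u - P_v\|_{\textsf{TV}} \le 1 - c$ for some absolute constant $c$. The proof of this proceeds by expressing the transition densities via random lines and integrating along chords, using \lem{Lovasz99-transition-prob} to compare the densities pointwise. Two complications appear relative to the uniform case: (i) the chord integrals now carry $\mu_f(\ell^-,\ell^+)$ factors instead of chord lengths, so one must argue that $\mu_f$ is stable along lines where $f$ does not change much, and (ii) ``bad'' points near the boundary, where the chord can be very short, must be excluded. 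The hypothesis that the $1/8$-level set of $f$ contains a ball of radius $r$ combined with $\E_f[\|x-z_f\|^2]\le R^2$ is exactly what allows one to define a ``good'' set of $\pi_f$-measure $\ge 1-o(1)$ on which the chord-length estimates hold with high probability, following the same template as in the uniform analysis.

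Next I would invoke the isoperimetric inequality for log-concave densities of Lovász--Simonovits (and its refinements): for any partition $\K = S_1 \cup S_2 \cup S_3$ where $d(S_1,S_2) \ge t$, one has $\pi_f(S_3) \ge \frac{t}{R}\min\{\pi_f(S_1), \pi_f(S_2)\}\cdot \log^{-O(1)}(\ldots)$. Combining with the smoothness lemma via the standard trick (any subset $A$ with small $\pi_f$-boundary must have a large ``interior'' in the Markov-chain sense; but then nearly every pair $(u,v)$ with $u\in A$, $v\notin A$ must be far, violating isoperimetry), yields a conductance bound of the form $\Phi = \Omega\bigl(\frac{r}{nR}\cdot \log^{-O(1)}(nR/r)\bigr)$. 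Squaring and taking logarithms then produces the advertised mixing bound $m \gtrsim n^2R^2/r^2 \cdot \ln^5(MnR/(r\epsilon))$.

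The hardest step will be the smoothness lemma under the exponential weight, since one must carefully separate the geometric contribution (chord lengths, angles between lines through $u$ and $v$) from the analytic contribution of the density ratio along chords, and then control both simultaneously on a set of large $\pi_f$-measure. A secondary subtlety is handling the $M$-warm start: rather than iterating a fixed contraction, one must use the Lovász--Simonovits ``$h$-function'' method, where one tracks $\sup_A(\sigma^{(m)}(A)-\pi_f(A)-c\sqrt{\pi_f(A)})$ and shows that conductance causes this to shrink geometrically; the factor $\ln M$ in the mixing time enters through the initial size of the $h$-function. Once these ingredients are in place, the conclusion follows by essentially the same bookkeeping as in \thm{hit-and-run-LV06}.
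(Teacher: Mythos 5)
Your proposal follows essentially the same route the paper takes: it cites this result from Lov\'asz--Vempala and sketches its proof as the combination of a conductance lower bound for the exponential hit-and-run walk (\prop{LV06-Theorem6.9}, proved in the source via exactly the two-point smoothness lemma and log-concave isoperimetric inequality you describe) with the Lov\'asz--Simonovits conductance-to-mixing machinery for $M$-warm starts (\prop{Lovasz-Simonovits} and \prop{mixing-phip}). Your outline, including the handling of points near the boundary, the $p$-dependent conductance, and the $h$-function argument for the warm start, is consistent with that proof.
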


Roughly speaking, the proofs of \thm{hit-and-run-LV06} and \thm{exp-hit-run-mixes} have two steps. First, for any random walk on a continuous domain $\Omega$ with transition probability $p$ and stationary distribution $\pi$, we define its conductance (which generalizes the discrete case in Eq. \eqn{conductance-defn-discrete}) as
\begin{align}\label{eqn:conductance-defn-hit-and-run}
\Phi:=\inf_{\S\subseteq\Omega}\frac{\int_{\S}\int_{\Omega/\S}\d x\,\d y\,\pi_{x}p_{x\to y}}{\min\{\int_{\S}\d x\,\pi_{x},\int_{\Omega/\S}\d x\,\pi_{x}\}}.
\end{align}
It is well-known that the mixing time of this random walk is proportional to $1/\Phi^{2}$ up to logarithmic factors. This is captured by the following proposition:
\begin{proposition}[{\cite[Corollary 1.5]{lovasz1993random}}]\label{prop:Lovasz-Simonovits}
Let $M:=\sup_{\S\subseteq\Omega}\frac{\sigma(\S)}{\pi(\S)}$ where $\sigma$ is the initial distribution. Then for every $\S\subseteq\Omega$,
\begin{align}
\big|\sigma^{(k)}(\S)-\pi(\S)\big|\leq\sqrt{M}\Big(1-\frac{1}{2}\Phi^{2}\Big)^{k}.
\end{align}
\end{proposition}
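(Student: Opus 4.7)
The plan is to follow the Lovász--Simonovits concave-envelope machinery. The central object is
\[
g_k(t) \;:=\; \sup\{\sigma^{(k)}(A) : A \subseteq \Omega,\ \pi(A) = t\},\qquad t \in [0,1],
\]
together with its least concave majorant $\bar g_k \ge g_k$. We have $g_k(0) = 0$, $g_k(1) = 1$, and $\sigma^{(k)}(\S) \le g_k(\pi(\S)) \le \bar g_k(\pi(\S))$, so the goal reduces to bounding $\bar g_k(t) - t$ uniformly in $t$.

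The first technical step is a one-step smoothing inequality. For any measurable $A$ with $\pi(A) = t$ that nearly attains $g_{k+1}(t)$, decompose $A = A^- \sqcup A^+$, where $A^-$ consists of the points whose one-step distribution puts mass at most $1/2$ on $A^c$, and decompose $A^c$ symmetrically. The conductance lower bound \eqn{conductance-defn-hit-and-run} together with reversibility forces the total flow crossing this partition to exceed $\Phi \min(\pi(A),\pi(A^c))$. Writing $\sigma^{(k+1)}(A) = \int p_x(A)\,\sigma^{(k)}(\d x)$ and regrouping contributions around the midpoint of $A^-$ and $(A^c)^+$ yields the master inequality
\[
\bar g_{k+1}(t) \;\le\; \tfrac{1}{2}\bar g_k(t - 2\varphi) + \tfrac{1}{2}\bar g_k(t + 2\varphi),\qquad \varphi := \Phi\,\min(t, 1-t).
\]

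The second step is a scalar induction on the ansatz $\bar g_k(t) - t \le \sqrt{M\,t(1-t)}\,\bigl(1 - \tfrac{1}{2}\Phi^2\bigr)^k$. For the base case, $\sigma(A) \le M\pi(A)$ combined with $\sigma(A)\le 1$ gives $g_0(t)\le \min(Mt,1)$, whose concave hull obeys $\bar g_0(t) - t \le \sqrt{M\,t(1-t)}$. For the inductive step, the elementary estimate $\sqrt{1 - x^2} \le 1 - \tfrac{1}{2}x^2$ applied with $x = \Phi$ gives
\[
\tfrac{1}{2}\sqrt{(t - 2\varphi)(1 - t + 2\varphi)} + \tfrac{1}{2}\sqrt{(t + 2\varphi)(1 - t - 2\varphi)} \;\le\; \sqrt{t(1-t)}\,\bigl(1 - \tfrac{1}{2}\Phi^2\bigr),
\]
so the ansatz is preserved by the master inequality. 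Evaluating at $t = \pi(\S)$ and using $\sqrt{t(1-t)} \le 1$ yields $\sigma^{(k)}(\S) - \pi(\S) \le \sqrt{M}\,(1 - \tfrac{1}{2}\Phi^2)^k$, and the same argument applied to $\Omega \setminus \S$ delivers the corresponding lower tail.

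The main obstacle is the master inequality: extracting the displacement $2\varphi = 2\Phi\min(t,1-t)$ requires simultaneous accounting of outgoing mass from $A^-$ and incoming mass into $(A^c)^-$, not merely the one-sided bound available directly from the conductance. Once the smoothing step is in place, the remainder is a routine scalar calculus verification along the concave envelope, together with the standard concavification argument supplying the $k=0$ bound.
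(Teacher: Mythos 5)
The paper does not actually prove \prop{Lovasz-Simonovits}; it imports it verbatim as Corollary~1.5 of \cite{lovasz1993random}, so the only meaningful comparison is with the original Lov\'asz--Simonovits argument. Your outline is exactly that argument: the level-set functional $g_k$, the bisection (``master'') inequality with displacement $2\Phi\min(t,1-t)$, and the square-root ansatz closed by induction. I checked your calculus step: writing $a=t(1-t)-4\varphi^2$ and $b=2\varphi(2t-1)$, the left-hand side squared equals $\tfrac12\bigl(a+\sqrt{a^2-b^2}\bigr)$ with $a^2-b^2=t^2(1-4\Phi^2)\bigl((1-t)^2-4\Phi^2t^2\bigr)$ for $t\le 1/2$, and the bound $\sqrt{1-4\Phi^2}\le 1-2\Phi^2$ then gives the claimed $\sqrt{t(1-t)}\,(1-\Phi^2/2)$ with room to spare. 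Your symmetric ansatz $\sqrt{t(1-t)}$ is a small departure from the usual $\sqrt{\min(t,1-t)}$ and has the pleasant side effect of avoiding the case split at $t=1/2$; the base case $\min(Mt,1)-t\le\sqrt{Mt(1-t)}$ also checks out using $M\ge 1$. Two caveats: the argument needs $\Phi\le 1/2$ so that $t-2\varphi\ge 0$ (harmless --- replace $\Phi$ by $\min(\Phi,1/2)$, which only weakens the conclusion), and the inequality you quote as ``$\sqrt{1-x^2}\le 1-\tfrac12 x^2$ with $x=\Phi$'' is really applied with $x=2\Phi$.

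The one place where your sketch elides a genuine difficulty is precisely the one you flag: the master inequality is asserted for the least concave majorant $\bar g_{k+1}$, but the decomposition of a near-extremal set $A$ into $A^-\sqcup A^+$ only controls $g_{k+1}(t)$ itself, and $\sup_{\pi(A)=t}\sigma^{(k+1)}(A)$ need not be concave, nor does bounding $g_{k+1}$ pointwise by a combination of $\bar g_k$ values automatically transfer to $\bar g_{k+1}$ without re-concavifying (which is fine) \emph{and} knowing the right-hand side is already concave in $t$ (which requires an argument, since $\varphi$ depends on $t$). Lov\'asz and Simonovits sidestep this by defining $g_k$ as a supremum over fractional sets $h\colon\Omega\to[0,1]$ with $\int h\,\d\pi=t$; that functional is automatically concave, coincides with the concave majorant of the set version, and its extremizers have the explicit two-level structure that makes the regrouping around $A^-$ and $(A^c)^+$ yield the exact two-point average at displacement $2\varphi$. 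Without that device, the step from the conductance bound \eqn{conductance-defn-hit-and-run} to the displayed recursion is the entire content of the lemma and is not yet proved in your write-up. Since this is the acknowledged core of \cite{lovasz1993random} rather than an oversight, I would call your proposal a faithful reconstruction of the standard proof with its hardest lemma left at the level of a correct plan.
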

\noindent
Furthermore, the conductance in \prop{Lovasz-Simonovits} can be relaxed to that of sets with a fixed small probability $p$:
\begin{proposition}[{\cite[Corollary 1.6]{lovasz1993random}}]\footnote{Note that in the original statement (\cite[Corollary 1.6]{lovasz1993random}), the conditions are given in a slightly different formulation, but it is not hard to obtain the conditions in the original formulation from the conditions of this proposition.}
  \label{prop:mixing-phip}
  Let $M:=\sup_{\S\subseteq\Omega}\frac{\sigma(\S)}{\pi(\S)}$. If the conductance for all connected, measurable set $\rma \subseteq \Omega$ such that $\pi(\rma) = p \leq 1/2$ is at least $\Phi_p$, then for all $\rms \subseteq \Omega$, we have
  \begin{align}
    \bigl|\sigma^{(k)}(\rms) - \pi(\rms)\bigr| \leq 2Mp + 2M\left(1-\frac{1}{2}\Phi_p^2\right)^k.
  \end{align}
\end{proposition}

Second, Ref.~\cite{LV06} proved a lower bound on the conductance of the hit-and-run walk with exponential density:
\begin{proposition}[{\cite[Theorem 6.9]{LV06}}]\label{prop:LV06-Theorem6.9}
Let $f$ be a density in $\R^{n}$ proportional to $e^{-a^T x}$ whose support is a convex body $\rmk$ of diameter $d$. Assume that $\rmb_{2}(0,r)\subseteq \rmk$. Then for any subset $\rms$ with $\pi_f(\rms) = p \leq 1/2$, the conductance of the hit-and-run walk satisfies
\begin{align}
\phi(S)\geq\frac{r}{10^{13}nd\ln(nd/rp)}.
\end{align}
\end{proposition}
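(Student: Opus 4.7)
I would follow the three-step template Lovász and Vempala use to lower-bound conductance for hit-and-run, with all distances measured in the Hilbert cross-ratio metric
\[
d_{\rmk}(u,v) := \ln\frac{|p-v|\cdot|q-u|}{|p-u|\cdot|q-v|},
\]
where $p,q$ are the endpoints of the chord of $\rmk$ through $u,v$ ordered $p,u,v,q$. This is the natural metric for hit-and-run because by \lem{Lovasz99-transition-prob} the one-step density involves the chord length $\ell(u,x)$ and $|x-u|^{n-1}$, both of which behave well under cross-ratio comparisons.

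\emph{Step 1: One-step overlap.} Prove that there are absolute constants $c_{0},\alpha>0$ such that whenever $u,v\in\rmk$ satisfy $|u-v|\le c_{0}r/n$ and $d_{\rmk}(u,v)\le c_{0}/\sqrt{n}$, the total variation distance $\|P_{u}-P_{v}\|_{\mathrm{TV}}\le 1-\alpha$. The argument picks a random line $\ell$ through $u$; with constant probability over $\ell$ the nearly parallel line through $v$ produces a chord of $\rmk$ whose endpoints are close in cross-ratio, and along each such chord the one-dimensional log-concave density induced by $e^{-a^{T}x}$ satisfies a quantitative ratio bound. Coupling line directions and then coupling points along paired chords yields the overlap.

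\emph{Step 2: Isoperimetric inequality.} For any partition $\rmk=\rms_{1}\sqcup\rms_{2}\sqcup\rms_{3}$, establish
\[
\pi_{f}(\rms_{3})\ge \frac{d_{\rmk}(\rms_{1},\rms_{2})}{C\,\ln\!\bigl(nd/(r\,\min\{\pi_{f}(\rms_{1}),\pi_{f}(\rms_{2})\})\bigr)}\,\min\{\pi_{f}(\rms_{1}),\pi_{f}(\rms_{2})\}
\]
for an absolute constant $C$. This is the log-concave isoperimetric inequality in the cross-ratio metric; the logarithmic correction accounts for how an exponential tilt can concentrate mass near the boundary. It is proved via localization: reduce to one-dimensional log-concave densities on chords and invoke the one-dimensional log-concave isoperimetric inequality weighted by the chord length.

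\emph{Step 3: Conductance decomposition.} Fix $\rms\subseteq\rmk$ with $\pi_{f}(\rms)=p\le 1/2$ and set
\[
\rms_{1}:=\{u\in\rms:P_{u}(\rmk\setminus\rms)<\alpha/4\},\qquad \rms_{2}:=\{u\in\rmk\setminus\rms:P_{u}(\rms)<\alpha/4\}.
\]
If $\pi_{f}(\rms_{1})\le\pi_{f}(\rms)/2$, then directly
\[
\int_{\rms}P_{u}(\rmk\setminus\rms)\,\d\pi_{f}(u)\ge\tfrac{\alpha}{8}\,\pi_{f}(\rms),
\]
which gives a conductance far exceeding what is claimed; symmetrically if $\pi_{f}(\rms_{2})\le\pi_{f}(\rmk\setminus\rms)/2$. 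Otherwise, by Step 1 any $u\in\rms_{1}$, $v\in\rms_{2}$ must satisfy $|u-v|>c_{0}r/n$ or $d_{\rmk}(u,v)>c_{0}/\sqrt{n}$; combined with $|u-v|\le d$ and the Euclidean-to-cross-ratio conversion coming from $\rmb_{2}(0,r)\subseteq\rmk$, this forces $d_{\rmk}(\rms_{1},\rms_{2})\ge\Omega(r/(nd))$. Applying Step 2 with $\rms_{3}:=\rmk\setminus(\rms_{1}\cup\rms_{2})$ yields $\pi_{f}(\rms_{3})\ge \Omega\!\bigl(rp/(nd\ln(nd/(rp)))\bigr)$. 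Since each $u\in\rms_{3}\cap\rms$ contributes at least $\alpha/4$ to $P_{u}(\rmk\setminus\rms)$ (and analogously from $\rms_{3}\setminus\rms$ via reversibility), we conclude
\[
\int_{\rms}P_{u}(\rmk\setminus\rms)\,\d\pi_{f}(u)\ge \frac{\alpha}{8}\,\pi_{f}(\rms_{3})\ge \frac{r\,p}{10^{13}\,n\,d\,\ln(nd/(rp))},
\]
giving the stated conductance after dividing by $\min\{\pi_{f}(\rms),\pi_{f}(\rmk\setminus\rms)\}=p$.

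\emph{Main obstacle.} The delicate step is Step 1: the transition density $P_{u}$ depends on $u$ through the chord endpoints and the factor $|x-u|^{n-1}$, so a naive Euclidean-closeness hypothesis is insufficient---near the boundary, moving $u$ by $o(r)$ can change $P_{u}$ drastically. The Hilbert metric is engineered to absorb exactly those boundary effects, but producing a quantitative $1-\Omega(1)$ overlap requires coupling random directions together with a Markov-type argument to discard the measure-zero set of directions on which the paired chords degenerate. The additional exponential tilt $e^{-a^{T}x}$ must be controlled by the log-concavity of the resulting one-dimensional marginal along each chord, which is where the $\ln(nd/(rp))$ factor ultimately enters.
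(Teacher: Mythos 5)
The paper does not prove this proposition: it is imported verbatim as \cite[Theorem 6.9]{LV06}, and the text explicitly defers the proof to that reference (``complete proofs are given in~\cite{LV06}''). So there is no in-paper proof to compare against; the relevant comparison is with the cited source, and your three-step template (one-step overlap in the cross-ratio metric, a weighted isoperimetric inequality for the exponential density, and the standard $\rms_1,\rms_2,\rms_3$ decomposition) is exactly the architecture Lov\'asz and Vempala use. Your Step 3 is carried out correctly: the reversibility averaging giving $\int_{\rms}P_u(\rmk\setminus\rms)\,\d\pi_f \ge \tfrac{\alpha}{8}\pi_f(\rms_3)$, the conversion $d_{\rmk}(u,v)\ge \ln(1+|u-v|/|p-u|)+\ln(1+|u-v|/|q-v|)\gtrsim |u-v|/d$ to get $d_{\rmk}(\rms_1,\rms_2)=\Omega(r/(nd))$, and the final division by $p$ all check out and reproduce the claimed bound up to constants.

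That said, as written this is an outline rather than a proof: Steps 1 and 2 carry essentially all of the technical content and are only asserted. Step 1 is \cite[Lemma~6.7]{LV06} (the overlap of $P_u$ and $P_v$ under a combined Euclidean and cross-ratio closeness hypothesis, with the exponential tilt controlled along each chord), and Step 2 is \cite[Theorem~6.8]{LV06} (the isoperimetric inequality with the $\ln(nd/(r\min_i\pi_f(\rms_i)))$ weight, proved by localization to one-dimensional log-concave needles). You correctly identify where the difficulty sits --- in particular that Euclidean closeness alone cannot control $P_u$ near the boundary, which is why the Hilbert metric enters, and that the logarithmic factor originates in how the exponential density interacts with short chords near $\partial\rmk$ --- but neither lemma is established in your write-up. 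If the intent is to cite these two ingredients from \cite{LV06}, the argument is complete and matches the source; if the intent is a self-contained proof, the two lemmas still need to be proved, and they are where all the work is.
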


\prop{Lovasz-Simonovits} and \prop{LV06-Theorem6.9} imply \thm{hit-and-run-LV06} and \thm{exp-hit-run-mixes}; complete proofs are given in~\cite{LV06}.

For the conductance of the hit-and-run walk with a uniform distribution, Ref.~\cite{LV06} established a stronger lower bound that is independent of $p$:
\begin{proposition}[{\cite[Theorem 4.2]{LV06}}]
  \label{prop:conductance-uniform}
  Assume that $\rmk$ has diameter $d$ and contains a unit ball. Then the conductance of the hit-and-run in $\rmk$ with uniform distribution is at least $\frac{1}{2^{24}nd}$.
\end{proposition}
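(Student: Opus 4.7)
The plan is to prove this conductance lower bound by the standard two-ingredient argument for geometric walks on convex bodies: combine a \emph{local step lemma} for the hit-and-run transition kernel with the Lov\'asz--Simonovits isoperimetric inequality. Fix a measurable $\S\subseteq\rmk$ with $\pi(\S)\leq 1/2$, where $\pi$ is the uniform distribution on $\rmk$; the task is to lower-bound the Cheeger ratio $\phi(\S) = \pi(\S)^{-1}\int_{\S} P_u(\rmk\setminus\S)\,\d\pi(u)$. Suppose for contradiction that $\phi(\S)$ is smaller than the claimed bound, and introduce the ``low-flow'' sets
\[
\S_1 = \{u \in \S : P_u(\rmk \setminus \S) < 2\phi(\S)\}, \qquad \S_2 = \{v \in \rmk \setminus \S : P_v(\S) < 2\phi(\S)\},
\]
with complement $\S_3 = \rmk\setminus(\S_1\cup\S_2)$. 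Markov's inequality (applied by reversibility to the flow out of both $\S$ and $\rmk\setminus\S$) yields $\pi(\S_1)\geq\pi(\S)/2$ and $\pi(\S_2)\geq\pi(\rmk\setminus\S)/2$.

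\textbf{Local step lemma (main obstacle).} The principal technical ingredient, and where the real work lies, is the following: there exist absolute constants $c_0,c_1>0$ such that, subject to an ``interior'' condition on $u,v$ (away from a thin boundary zone of $\rmk$), whenever $|u-v|\leq c_0/n$, the one-step hit-and-run distributions satisfy $d_{TV}(P_u,P_v)\leq 1-c_1$. This lemma is delicate because the hit-and-run density $p_u(x)=\frac{2}{nv_n\ell(u,x)|x-u|^{n-1}}$ has an $|x-u|^{-(n-1)}$ singularity, so the exponent $n-1$ amplifies small displacements between $u$ and $v$. My approach would be to analyze the pointwise ratio
\[
\frac{p_u(x)}{p_v(x)} = \frac{\ell(v,x)}{\ell(u,x)}\cdot\left(\frac{|x-v|}{|x-u|}\right)^{n-1},
\]
and to bound the TV distance by excising two ``bad'' regions: first, the small-radius region near $u$ or $v$ where the $(n-1)$-th power is unstable (this region carries a controllably small mass thanks to the unit-ball containment, which fixes the normalization $v_n$ in a useful regime); and second, the ``grazing'' region where one of $\ell(u,x),\ell(v,x)$ degenerates near $\partial\rmk$ (handled via the diameter upper bound $d$ together with the interior condition on $u,v$).

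\textbf{Combination and conclusion.} Given the local step lemma, for any $u\in\S_1$ and $v\in\S_2$ the test set $\S$ gives $d_{TV}(P_u,P_v)\geq P_u(\S) - P_v(\S)\geq 1-4\phi(\S)>1-c_1$ whenever $\phi(\S)$ is small. Thus $|u-v|>c_0/n$ for any such pair satisfying the interior condition, i.e., the interior parts of $\S_1$ and $\S_2$ are separated by at least $c_0/n$. The mass of the boundary zone excised by the interior condition is small under the unit-ball assumption and can be absorbed into the constants. Now apply the Lov\'asz--Simonovits isoperimetric inequality~\cite{lovasz1993random} for convex bodies: for any partition $\rmk=\S_1\cup\S_2\cup\S_3$ with $\mathrm{dist}(\S_1,\S_2)\geq t$, one has $\pi(\S_3)\geq(2t/d)\min(\pi(\S_1),\pi(\S_2))$. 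Plugging $t=c_0/n$ together with $\min(\pi(\S_1),\pi(\S_2))\geq\tfrac{1}{2}\min(\pi(\S),\pi(\rmk\setminus\S))$ into this inequality, and comparing against the direct upper bound on $\pi(\S_3)$ coming from the flow accounting ($\pi(\S\cap\S_3)\leq\pi(\S)/2$ and likewise for $\rmk\setminus\S$), forces $\phi(\S)\geq 2^{-24}/(nd)$ once all absolute constants are tracked. The main difficulty is the local step lemma; the isoperimetric step and constant chasing are standard but tedious.
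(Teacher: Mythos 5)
The paper does not prove this proposition; it is imported verbatim from \cite[Theorem 4.2]{LV06}, so your sketch has to be measured against that proof. Your architecture (low-flow sets, a one-step overlap lemma, an isoperimetric inequality) is indeed the architecture used there, but two of your specific choices leave genuine gaps. The first is in the combination step. With the thresholds $2\phi(\S)$, Markov's inequality gives $\pi(\S\setminus\S_1)\le\pi(\S)/2$ and, by reversibility of the flow, $\pi((\rmk\setminus\S)\setminus\S_2)\le\pi(\S)/2$, so your flow-accounting upper bound is $\pi(\S_3)\le\pi(\S)$; meanwhile the isoperimetric lower bound is $\tfrac{2t}{d}\min(\pi(\S_1),\pi(\S_2))\le\tfrac{2t}{d}\pi(\S)$. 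Since $t=c_0/n\ll d$, these two bounds are compatible for \emph{every} value of $\phi(\S)$, so no contradiction is reached and no lower bound on $\phi(\S)$ follows --- the comparison is vacuous. The threshold must be an absolute constant $h$: then either the high-flow part of $\S$ already has measure at least $\pi(\S)/2$, giving $\phi(\S)\ge h/2$ outright, or else $\pi(\S_3)\le 2\phi(\S)\pi(\S)/h$ and the isoperimetric inequality yields $\phi(\S)\gtrsim ht/d$.

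The second and more serious gap is the form of your local step lemma. The entire content of this proposition --- and the reason the paper records it separately from \prop{LV06-Theorem6.9} --- is that the bound is independent of $p=\pi(\S)$. Your interior condition excises a boundary layer whose measure is $\Theta(n\epsilon)$ in general; for the isoperimetric step to survive, that layer must be negligible compared with $\min(\pi(\S_1),\pi(\S_2))$, which can be arbitrarily small (take $\S$ a thin cap at $\partial\rmk$), so it cannot be ``absorbed into the constants,'' and letting $\epsilon$ scale with $\pi(\S)$ reintroduces exactly the $p$-dependence you must avoid. Relatedly, a uniform Euclidean radius $c_0/n$ is not the right closeness hypothesis: the mass $P_u(\rmb_2(u,\delta))$ and the stability of the grazing chords are governed by the local chord structure at $u$ and degenerate as $u\to\partial\rmk$. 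What \cite{LV06} (following \cite{Lovasz99}) actually proves is an overlap lemma whose hypothesis combines the cross-ratio distance $d_{\rmk}(u,v)$ with the Euclidean distance measured in units of a local step-size function $F(u)$ (roughly the median one-step displacement from $u$), paired with a correspondingly weighted isoperimetric inequality; near the boundary $F(u)$ shrinks but $d_{\rmk}$ blows up, and it is this trade-off, not an interior/boundary dichotomy, that produces the $p$-free bound $1/(2^{24}nd)$. Your pointwise analysis of $p_u/p_v$ is the right starting point for the overlap lemma, but with the hypotheses as you state them the argument cannot close.
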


%%%%%%%%%%%%%%%%%%%%%%%%%%%%%%%%%%%%%%%%%%%%%%%%%%%%%%%%%%%%%%%%%%%%%%%%%%%%%%

\section{Theory of continuous-space quantum walks}\label{sec:continuous-q-walk-theory}

In this section, we develop the theory of continuous-space, discrete-time quantum walks.

Specifically, we generalize the discrete-time quantum walk of Szegedy~\cite{szegedy2004quantum} to continuous space. Let $n\in\N$ and suppose $\Omega$ is a continuous\footnote{We say that $\Omega$ is continuous if for any $x,y\in\Omega$ there is a continuous function $f_{x,y}\colon [0,1]\rightarrow\Omega$ such that $f_{x,y}(0)=x$ and $f_{x,y}(1)=y$.} subset of $\R^{n}$. A probability transition density $p$ on $\Omega$ is a continuous function $p\colon\Omega\times\Omega\to[0,+\infty)$ such that
\begin{align}\label{eqn:density-defn}
\int_{\Omega}\d y\,p(x,y)=1\qquad\forall\,x\in\Omega.
\end{align}
We also write $p_{x\to y} := p(x,y)$ for the transition density from $x$ to $y$. Together, $\Omega$ and $p$ specify a continuous-space Markov chain that we denote $(\Omega,p)$ throughout the paper.

For background on the mathematical foundations of quantum mechanics over continuous state spaces, see \cite[Chapter 1]{sakurai2014modern}. In this section, we treat quantum states as square integrable functions $f\colon\Omega\to\R$ in $L^{2}(\Omega)$ if $\int_{\Omega}\d x\,|f(x)|^{2}<\infty$. The inner product $\<\cdot,\cdot\>$ on $L^{2}(\Omega)$ is defined by
\begin{align}
\<f,g\>:=\int_{\Omega}\d x\,f(x)g(x)\qquad\forall\,f,g\in L^{2}(\Omega).
\label{eq:L2ip}
\end{align}
Note that by the Cauchy-Schwarz inequality, we have
\begin{align}
|\<f,g\>|^{2}\leq\left(\int_{\Omega}\d x\,|f(x)|^{2}\right)\left(\int_{\Omega}\d x\,|g(x)|^{2}\right)<\infty;
 \end{align}
 the norm of an $f\in L^{2}(\Omega)$ is subsequently defined as $\|f\|:=\sqrt{\<f,f\>}$. The pure states in $\Omega$ correspond to functions in the set
\begin{align}
\st(\Omega):=\Bigl\{f\colon\Omega\to\R \,\Bigm\vert \int_{\Omega}\d x\,|f(x)|^{2}=1\Bigr\}.
\end{align}

The computational basis elements $|y\>$ for all $y \in \Omega$ correspond to Dirac delta functions $\delta(x-y)$ centered at $y$, where $\delta(x)=0$ for all $x\neq 0$, and $\int_{\R^{n}}\delta(x)\,\d x=1$. Delta functions are not members of the Hilbert space $L^{2}(\Omega)$, however we interpret them in the following sense: for any $y \in \mathrm{int}_{\epsilon}(\Omega)$ we consider a normalized Gaussian with width $\epsilon$, given by $\delta_{y,\epsilon}(x) \propto \frac{1}{(2\pi \epsilon^{{2}})^{n/2}}{e^{-\lVert x - y\rVert^{2}/2\epsilon^{2}}}$ for $x \in \Omega$ and $0$ for $x \not\in \Omega$. It is clear that $\delta_{y,\epsilon} \in \L_{2}(\Omega)$ and its behavior approaches that of the delta function. In the remainder of the section statements such as $A = B$ are to be interpreted as $\lim_{\epsilon \to 0}|A_{\epsilon} - B_{\epsilon}| = 0$ where $A_{\epsilon},B_{\epsilon}$ are obtained from $A,B$ by replacing every occurence of a computational basis vector $|y\rangle$ by $\delta_{y,\epsilon}$.\footnote{As in most treatments of continous quantum mechanics, we shall not be fully rigorous with respect to operations such as interchanging orders of limits. We have two reasons to believe that pathological cases do not occur: (1) The states that occur during the algorithm are mostly well-behaved and correspond to probability distributions that are obtained during classical geometric random walks. (2) We later show \sec{discretization-hit-and-run} that our algorithm can also be executed discretely, and we work in the continuous setting for ease of analysis.} Integrals over $L_{2}(\Omega)$ are computed pointwise. It can be verified that
\begin{align}\label{eqn:identity}
  \int_{\Omega} \d x\, |x \>\< x| = I
\end{align}
and
\begin{align}
  \<x|x'\> = \delta(x - x'), \quad \forall x,x' \in \Omega.
\end{align}

%=========================================================
\subsection{Continuous-space quantum walk}\label{sec:continuous-space-qwalk-defn}
Given a transition density function $p$, the quantum walk is characterized by the following states:
\begin{align}\label{eqn:quantum-walk-phi-defn}
|\phi_{x}\>:=|x\>\otimes\int_{\Omega}\d y\,\sqrt{p_{x\to y}}|y\>\qquad\forall x\in\R^{n}.
\end{align}
Since $p_{x \to y}$ is a normalized probability density function, $|\phi_{x}\> \in L_{2}(\Omega)$.
Now, denote
\begin{align}\label{eqn:quantum-walk-defn}
U:=\int_{\Omega}\d x\,|\phi_{x}\>(\<x|\otimes\<0|),\ \Pi:=\int_{\Omega}\d x\,|\phi_{x}\>\<\phi_{x}|,\ S:=\int_{\Omega}\int_{\Omega}\d x\,\d y\,|x,y\>\<y,x|.
\end{align}
Notice that $\Pi$ is the projection onto $\spn\{|\phi_{x}\>\}_{x\in\R^{n}}$ because
\begin{align}
\Pi^{2}=\int_{\Omega}\int_{\Omega}\d x\,\d x'\,|\phi_{x}\>\<\phi_{x}|\phi_{x'}\>\<\phi_{x'}|=\int_{\Omega}\int_{\Omega}\d x\,\d x'\,\delta(x-x')|\phi_{x}\>\<\phi_{x'}|=\Pi,
\end{align}
and $S$ is the swap operator for the two registers. A single step of the quantum walk is defined as the unitary operator
\begin{align}\label{eqn:walk-operator}
W:=S(2\Pi-I).
\end{align}

The first main result of this subsection is the following theorem:
\begin{theorem}\label{thm:quantum-walk-main}
Let
\begin{align}\label{eqn:discriminant}
D:=\int_{\Omega}\int_{\Omega}\d x\,\d y\,\sqrt{p_{x\to y}p_{y\to x}}|x\>\<y|
\end{align}
denote the \emph{discriminant operator} of $p$. Let $\Lambda$ be the set of eigenvalues of $D$, so that $D=\int_{\Lambda}\d\lambda\,\lambda|\lambda\>\<\lambda|$. Then the eigenvalues of the quantum walk operator $W$ in \eqn{walk-operator} are $\pm 1$ and $\lambda\pm i\sqrt{1-\lambda^{2}}$ for all $\lambda\in\Lambda$.
\end{theorem}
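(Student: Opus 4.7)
The plan is to lift Szegedy's spectral argument to the continuous setting via the isometry trick. First, define the isometry $A\colon L^{2}(\Omega) \to L^{2}(\Omega)\otimes L^{2}(\Omega)$ by $A|x\> := |\phi_{x}\>$, extended linearly using \eqn{identity}. Three algebraic identities do all the work: (i) $A^{\dagger}A = I$, which follows from $\<\phi_{x'}|\phi_{x}\> = \delta(x'-x)\int_{\Omega}\d y\,p_{x\to y} = \delta(x'-x)$; (ii) $AA^{\dagger} = \Pi$, directly from the definition of $\Pi$ in \eqn{quantum-walk-defn}; and (iii) $A^{\dagger} S A = D$, obtained by first computing $A^{\dagger}|y,x\> = \sqrt{p_{y\to x}}|y\>$ from $\<\phi_{x'}|y,x\> = \delta(x'-y)\sqrt{p_{x'\to x}}$, and then applying it to $S A|x\> = \int_{\Omega}\d y\,\sqrt{p_{x\to y}}|y,x\>$.

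Next I would analyze $W$ on the range $\mathcal{A}$ of $A$. For each $\lambda \in \Lambda$ with (generalized) eigenvector $|\lambda\>$ of $D$, set $|\mu_{\lambda}\> := A|\lambda\>$. Since $|\mu_{\lambda}\>$ lies in the range of $\Pi$ we have $\Pi|\mu_{\lambda}\> = |\mu_{\lambda}\>$, and using identity (iii),
\begin{align*}
\Pi S|\mu_{\lambda}\> = A A^{\dagger} S A|\lambda\> = A D|\lambda\> = \lambda|\mu_{\lambda}\>.
\end{align*}
Hence $W|\mu_{\lambda}\> = S|\mu_{\lambda}\>$ and $W S|\mu_{\lambda}\> = 2\lambda S|\mu_{\lambda}\> - |\mu_{\lambda}\>$, so $V_{\lambda} := \spn\{|\mu_{\lambda}\>, S|\mu_{\lambda}\>\}$ is $W$-invariant. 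In this (generically non-orthonormal) basis the matrix of $W|_{V_{\lambda}}$ has characteristic polynomial $\mu^{2} - 2\lambda\mu + 1 = 0$, giving the eigenvalues $\lambda \pm i\sqrt{1-\lambda^{2}}$; these collapse to $\pm 1$ at the boundary $\lambda = \pm 1$.

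Finally I would handle the orthogonal complement of $\mathcal{A}+S\mathcal{A}$. Any $|\psi\>$ in it satisfies both $\Pi|\psi\> = 0$ and $\Pi S|\psi\> = 0$, so $W|\psi\> = -S|\psi\>$ and $W S|\psi\> = -|\psi\>$; hence $W$ restricts to $-S$ on this subspace, and the self-adjoint unitarity of $S$ forces its spectrum to lie in $\{\pm 1\}$. Together with the $V_{\lambda}$ decomposition this exhausts the full Hilbert space and yields the claimed list $\{\pm 1\} \cup \{\lambda \pm i\sqrt{1-\lambda^{2}} : \lambda \in \Lambda\}$.

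The main obstacle I expect is not algebraic but functional-analytic: the spectrum $\Lambda$ of $D$ is in general continuous, so the ``eigenvectors'' $|\lambda\>$ are generalized vectors and the family $\{V_{\lambda}\}$ should be interpreted, via the spectral theorem for the self-adjoint operator $D$, as a direct-integral decomposition of $W$ over $\Lambda$ rather than a literal direct sum. Within the rigged-Hilbert-space formalism of \cite[Chapter 1]{sakurai2014modern}, the three identities above become identities of operator-valued spectral measures and transfer fibrewise; the remaining bookkeeping is standard and requires no new ideas beyond those above.
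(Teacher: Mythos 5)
Your proposal is correct and follows essentially the same route as the paper: your isometry $A$ is the paper's $T=\int_{\Omega}\d x\,|\phi_{x}\>\<x|$, your three identities are exactly $T^{\dagger}T=I$, $TT^{\dagger}=\Pi$, and $T^{\dagger}ST=D$, and the invariant-subspace analysis of $\spn\{T|\lambda\>,ST|\lambda\>\}$ together with the orthogonal complement on which $W$ acts as $-S$ is the paper's argument verbatim, the only cosmetic difference being that you read the eigenvalues off the characteristic polynomial $\mu^{2}-2\lambda\mu+1$ while the paper exhibits the explicit eigenvectors $\bigl(I-(\lambda\pm i\sqrt{1-\lambda^{2}})S\bigr)T|\lambda\>$. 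The one ingredient you use implicitly but do not prove is $|\lambda|\leq 1$, needed to write the roots as $\lambda\pm i\sqrt{1-\lambda^{2}}$; the paper establishes this separately as \lem{lambda-bound} via Cauchy--Schwarz, and adding that one-line bound (or deducing it from unitarity of $W$ restricted to the invariant subspace) completes your argument without changing its structure.
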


To prove \thm{quantum-walk-main}, we first prove the following lemma:
\begin{lemma}\label{lem:lambda-bound}
For any $\lambda\in\Lambda$, we have $|\lambda|\leq 1$.
\end{lemma}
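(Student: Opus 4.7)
My plan is to mirror Szegedy's discrete argument by exhibiting $D$ as the compression of the swap $S$ to the range of an isometry built from the walk vectors $|\phi_x\>$. Since $S$ is unitary and conjugating by an isometry cannot increase the operator norm, this immediately yields $\|D\|\le 1$ and hence $|\lambda|\le 1$ for every eigenvalue.

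Concretely, I would first define the linear map $T\colon L^{2}(\Omega)\to L^{2}(\Omega)\otimes L^{2}(\Omega)$ by $T|x\>=|\phi_{x}\>$, extended by linearity to $T=\int_{\Omega}\d x\,|\phi_{x}\>\<x|$. Using \eqn{quantum-walk-phi-defn} and the normalization \eqn{density-defn}, I would verify that $T$ is an isometry:
\begin{align}
T^{\dagger}T
=\int_{\Omega}\int_{\Omega}\d x\,\d x'\,|x\>\<\phi_{x}|\phi_{x'}\>\<x'|
=\int_{\Omega}\int_{\Omega}\d x\,\d x'\,\delta(x-x')\Bigl(\int_{\Omega}\d y\,p_{x\to y}\Bigr)|x\>\<x'|=I,
\end{align}
and that $TT^{\dagger}=\Pi$ from \eqn{quantum-walk-defn}.

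Next, I would compute $T^{\dagger}ST$ directly. Using $S|y,x\>=|x,y\>$ and the definition of $|\phi_{x}\>$,
\begin{align}
\<\phi_{x}|S|\phi_{y}\>
=\int_{\Omega}\int_{\Omega}\d u\,\d v\,\sqrt{p_{x\to u}\,p_{y\to v}}\,\<x,u|v,y\>
=\sqrt{p_{x\to y}\,p_{y\to x}},
\end{align}
so that $T^{\dagger}ST=D$ by \eqn{discriminant}. Since $S$ is a unitary (in fact self-adjoint) operator on $L^{2}(\Omega)\otimes L^{2}(\Omega)$, we have $\|S\|=1$, and therefore $\|D\|=\|T^{\dagger}ST\|\le\|T^\dagger\|\,\|S\|\,\|T\|\le 1$. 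Any eigenvalue $\lambda$ of $D$ satisfies $|\lambda|\le\|D\|\le 1$, finishing the proof.

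The main thing to be careful about is the continuous-space bookkeeping: the operator $T$ is defined via an integral of rank-one operators involving the generalized kets $|x\>$, and one has to check that $T$ indeed extends to a bounded operator on $L^{2}(\Omega)$ (so that $T^\dagger S T$ can be identified with the bounded operator $D$ on $L^2(\Omega)$ via the pairing \eq{L2ip}). The calculations above collapse the two delta functions arising from $\<x,u|v,y\>$ against the absolutely integrable densities $p_{x\to\cdot}$, so nothing singular survives; but it is worth remarking that, once the normalization \eqn{density-defn} and continuity of $p$ are in hand, the $\delta$-function manipulations are exactly analogous to the discrete case treated in \cite{szegedy2004quantum}. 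As a sanity check (and a self-contained alternative not using $T$), one can also verify $|\lambda|\le 1$ directly from the eigenvalue equation $\lambda f(y)=\int_{\Omega}\d x\,\sqrt{p_{x\to y}p_{y\to x}}\,f(x)$ by Cauchy--Schwarz: bounding $|\lambda f(y)|^{2}\le\bigl(\int\d x\,p_{y\to x}\bigr)\bigl(\int\d x\,p_{x\to y}|f(x)|^{2}\bigr)=\int\d x\,p_{x\to y}|f(x)|^{2}$, then integrating in $y$ and using \eqn{density-defn} yields $|\lambda|^{2}\|f\|^{2}\le\|f\|^{2}$.
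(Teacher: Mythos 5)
Your proposal is correct, and it takes a genuinely different (and arguably cleaner) route than the paper's proof of this lemma. The paper argues directly on a (generalized) eigenvector: it writes $|\lambda|\delta(0)=|\<\lambda|D|\lambda\>|$, applies Cauchy--Schwarz together with the normalization \eqn{density-defn}, and concludes $|\lambda|\delta(0)\leq\delta(0)$ --- a manipulation the authors themselves flag as non-rigorous (since $\delta(0)$ is ill-defined) and patch with a footnoted limiting argument using Gaussian approximations to the $\delta$-function. You instead establish $T^{\dagger}T=I$, $TT^{\dagger}=\Pi$, and $T^{\dagger}ST=D$ and read off $\|D\|\leq\|T^{\dagger}\|\,\|S\|\,\|T\|\leq 1$. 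These three identities are exactly the ones the paper derives anyway in its proof of \thm{quantum-walk-main} (where the lemma is invoked), so your reordering costs nothing; what it buys is that the bound applies to the entire spectrum of the self-adjoint operator $D$ (including any continuous part, which is the relevant regime given that the paper writes $D=\int_{\Lambda}\d\lambda\,\lambda|\lambda\>\<\lambda|$ with $\<\lambda|\lambda\>=\delta(0)$) without ever evaluating $\delta(0)$. Your closing ``sanity check'' is essentially the paper's own Cauchy--Schwarz argument, but executed for a genuine $L^{2}$ eigenfunction $f$ and then integrated in $y$, which again avoids the singular normalization; the only caveat there is that it presupposes a normalizable eigenfunction, which is why the operator-norm argument is the better primary route.
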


\begin{proof}
Since $\lambda$ is an eigenvalue of $D$, we have $D|\lambda\>=\lambda|\lambda\>$. As a result, we have
\begin{align}
|\lambda|\delta(0)&=|\lambda|\<\lambda|\lambda\> \label{eqn:lem-spectrum-D-1} \\
&=|\<\lambda|D|\lambda\>| \\
&=\Big|\int_{\Omega}\int_{\Omega}\d x\,\d y\,\sqrt{p_{y\to x}p_{x\to y}}\<\lambda|x\>\<y|\lambda\>\Big| \\
\text{(by Cauchy-Schwarz)}\quad&\leq\sqrt{\Big(\int_{\Omega}\int_{\Omega}\d x\,\d y\,p_{y\to x}|\<y|\lambda\>|^{2}\Big)\Big(\int_{\Omega}\int_{\Omega}\d x\,\d y\,p_{x\to y}|\<\lambda|x\>|^{2}\Big)} \nonumber \\
&=\sqrt{\Big(\int_{\Omega}\d y\,|\<y|\lambda\>|^{2}\Big)\Big(\int_{\Omega}\d x\,|\<\lambda|x\>|^{2}\Big)}\quad\text{(by $\int_{\Omega}\d y\,p_{x\to y}=1$)} \\
&=\int_{\Omega}\d x\,\<\lambda|x\>\<x|\lambda\> \\
&=\<\lambda|\Big(\int_{\Omega}\d x\,|x\>\<x|\Big)|\lambda\>\quad\text{(by \eqn{identity})} \\
&=\delta(0). \label{eqn:lem-spectrum-D-2}
\end{align}
Hence the result follows.
\end{proof}

\begin{proof}[Proof of {\thm{quantum-walk-main}}]
Define an isometry
\begin{align}
T:=\int_{\Omega}\d x\,|\phi_{x}\>\<x|=\int_{\Omega}\int_{\Omega}\d x\,\d y\,\sqrt{p_{x\to y}}|x,y\>\<x|.
\end{align}
Then
\begin{align}
TT^{\dagger}=\int_{\Omega}\int_{\Omega}\d x\,\d y\,|\phi_{x}\>\<x|y\>\<\phi_{y}|=\int_{\Omega}\d x\,|\phi_{x}\>\<\phi_{x}|=\Pi,
\end{align}
and
\begin{align}
T^{\dagger}T&=\int_{\Omega}\int_{\Omega}\d x\,\d y\,|x\>\<\phi_{x}|\phi_{y}\>\<y| \\
&=\int_{\Omega}\int_{\Omega}\int_{\Omega}\int_{\Omega}\d x\,\d y\,\d a\,\d b\,\<x|y\>\<a|b\>\sqrt{p_{x\to a}p_{y\to b}}|x\>\<y| \\
&=\int_{\Omega}\int_{\Omega}\d x\,\d a\,p_{x\to a}|x\>\<x| \\
&=\int_{\Omega}\d x\,|x\>\<x| \\
&=I.
\end{align}
Furthermore,
\begin{align}
T^{\dagger}ST&=\int_{\Omega}\int_{\Omega}\d x\,\d y\,|x\>\<\phi_{x}|S|\phi_{y}\>\<y| \\
&=\int_{\Omega}\int_{\Omega}\int_{\Omega}\int_{\Omega}\d x\,\d y\,d a\,\d b\,\<x,a|S|y,b\>\sqrt{p_{x\to a}p_{y\to b}}|x\>\<y| \\
&=\int_{\Omega}\int_{\Omega}\d x\,\d a\,\sqrt{p_{x\to a}p_{a\to x}}|x\>\<a| \\
&=D.
\end{align}
As a result, for any $\lambda\in\Lambda$ we have
\begin{align}
WT|\lambda\>=S(2\Pi-I)T|\lambda\>=(2STT^{\dagger}T-ST)|\lambda\>=(2ST-ST)|\lambda\>=ST|\lambda\>.
\end{align}
Similarly, we have
\begin{align}
WST|\lambda\>&=S(2\Pi-I)ST|\lambda\> \\
&=(2STT^{\dagger}ST-S^{2}T)|\lambda\>=(2STD-T)|\lambda\>=(2\lambda S-I)T|\lambda\>. \nonumber
\end{align}
By \lem{lambda-bound}, $|\lambda|\leq 1$. As a result, we have
\begin{align}
W\big(I-(\lambda+i\sqrt{1-\lambda^{2}})S\big)T|\lambda\>&=WT|\lambda\>-(\lambda+i\sqrt{1-\lambda^{2}})WST|\lambda\> \\
&=ST|\lambda\>-(\lambda+i\sqrt{1-\lambda^{2}})(2\lambda S-I)T|\lambda\> \\
&=\big(S-(\lambda+i\sqrt{1-\lambda^{2}})(2\lambda S-I)\big)T|\lambda\> \\
&=(\lambda+i\sqrt{1-\lambda^{2}})\big(I-(\lambda+i\sqrt{1-\lambda^{2}})S\big)T|\lambda\>;
\end{align}
in other words, $\lambda+i\sqrt{1-\lambda^{2}}$ is an eigenvalue of $W$ with eigenvector $\big(I-(\lambda+i\sqrt{1-\lambda^{2}})S\big)T|\lambda\>$. Similarly, we have
\begin{align}
W\big(I-(\lambda-i\sqrt{1-\lambda^{2}})S\big)T|\lambda\>=(\lambda-i\sqrt{1-\lambda^{2}})\big(I-(\lambda-i\sqrt{1-\lambda^{2}})S\big)T|\lambda\>,
\end{align}
i.e., $\lambda-i\sqrt{1-\lambda^{2}}$ is an eigenvalue of $W$ with eigenvector $\big(I-(\lambda-i\sqrt{1-\lambda^{2}})S\big)T|\lambda\>$.

Finally, note that for any vector $|u\>$ in the orthogonal complement of the space $\spn_{\lambda\in\Lambda}\{T|\lambda\>,ST|\lambda\>\}$, $W$ simply acts as $-S$ since
\begin{align}
\Pi=TT^{\dagger}=\int_{\Lambda}\d\lambda\,T|\lambda\>\<\lambda|T^{\dagger},
\end{align}
which projects onto $\spn_{\lambda\in\Lambda}\{T|\lambda\>\}$. In this orthogonal complement subspace, the eigenvalues are $\pm 1$ because $S^{2}=I$.
\end{proof}

%=========================================================
\subsection{Stationary distribution}\label{sec:continuous-space-qwalk-stationary}

Classically, the density $\pi=(\pi_{x})_{x\in\Omega}$ corresponding to the stationary distribution of a Markov chain $(\Omega,p)$ satisfies
\begin{align}\label{eqn:stationary-defn}
\int_{\Omega}\d x\,\pi_{x}=1;\qquad\int_{\Omega}\d y\,p_{y\to x}\pi_{y}=\pi_{x}\qquad\forall\,x\in\Omega.
\end{align}
In other words, we can naturally define a transition operator as
\begin{align}\label{eqn:transition-P}
P:=\int_{\Omega}\int_{\Omega}\d x\,\d y\,p_{y\to x}|x\>\<y|,
\end{align}
and the stationary density $\pi$ satisfies $P\pi=\pi$. The Markov chain $(\Omega,p)$ is \emph{reversible} if there exists a classical density $\sigma=(\sigma_{x})_{x\in\Omega}$ such that
\begin{align}\label{eqn:detailed-balance}
p_{y\to x}\sigma_{y}=p_{x\to y}\sigma_{x}\qquad\forall\,x,y\in\Omega.
\end{align}
(This is called the \emph{detailed balance condition}.) Notice that for all $x\in\Omega$,
\begin{align}
\int_{\Omega}\d y\,p_{y\to x}\sigma_{y}=\int_{\Omega}\d y\,p_{x\to y}\sigma_{x}=\sigma_{x}\int_{\Omega}\d y\,p_{x\to y}=\sigma_{x};
\end{align}
therefore, we must have $P\sigma=\sigma$, i.e., $\sigma$ is a stationary density of $P$. In this paper, we focus on Markov chains $(\Omega,p)$ that are reversible and have a unique stationary distribution (i.e., $\sigma=\pi$). Such assumptions are natural for Markov chains in practice, including the Metropolis-Hastings algorithm, simple random walks on graphs, etc.

We point out that if $\pi$ is the classical stationary density of a reversible Markov chain $(\Omega,p)$, then
\begin{align}\label{eqn:quantum-stationary-defn}
|\pi_{W}\>:=\int_{\Omega}\d x\,\sqrt{\pi_{x}}|\phi_{x}\>
\end{align}
is the unique eigenvalue-$1$ eigenstate of the quantum walk operator $W$ restricted to the subspace $\spn_{\lambda\in\Lambda}\{T|\lambda\>,ST|\lambda\>\}$. First, a simple calculation shows that
\begin{align}
W|\pi_{W}\>&=S(2\Pi-I)|\pi_{W}\> \\
&=S|\pi_{W}\> \label{eqn:quantum-stationary-proof-1} \\
&=\Big(\int_{\Omega}\int_{\Omega}\d x\,\d y\,|x,y\>\<y,x|\Big)\Big(\int_{\Omega}\int_{\Omega}\d x\,\d y\,\sqrt{\pi_{y}p_{y\to x}}|y,x\>\Big) \label{eqn:quantum-stationary-proof-2} \\
&=\int_{\Omega}\int_{\Omega}\d x\,\d y\,\sqrt{\pi_{y}p_{y\to x}}|x,y\> \\
&=\int_{\Omega}\int_{\Omega}\d x\,\d y\,\sqrt{\pi_{x}p_{x\to y}}|x\>|y\> \label{eqn:quantum-stationary-proof-3} \\
&=\int_{\Omega}\d x\,\sqrt{\pi_{x}}|x\>\Big(\int_{\Omega}\d y\,\sqrt{p_{x\to y}}|y\>\Big) \\
&=\int_{\Omega}\d x\,\sqrt{\pi_{x}}|\phi_{x}\> \label{eqn:quantum-stationary-proof-4} \\
&=|\pi_{W}\>,
\end{align}
where \eqn{quantum-stationary-proof-1} follows from $|\pi_{W}\>\in\spn_{x\in\Omega}\{|\phi_{x}\>\}$, \eqn{quantum-stationary-proof-2} follows from the definition of $S$ in \eqn{quantum-walk-defn}, \eqn{quantum-stationary-proof-3} follows from \eqn{detailed-balance}, and \eqn{quantum-stationary-proof-3} follows from the definition of $|\phi_{x}\>$ in \eqn{quantum-walk-phi-defn}. Thus $|\pi_W\>$ is an eigenvector of $W$ with eigenvalue $1$. On the other hand, since $(\Omega,p)$ is reversible, $P$ is similar to $D$: if we denote $D_{\pi}:=\int_{\Omega}\d x\,\sqrt{\pi_{x}}|x\>\<x|$, then
\begin{align}
D_{\pi}DD_{\pi}^{-1}&=\Big(\int_{\Omega}\d x\sqrt{\pi_{x}}|x\>\<x|\Big)\Big(\int_{\Omega}\int_{\Omega}\d x\,\d y\sqrt{p_{x\to y}p_{y\to x}}|x\>\<y|\Big)\Big(\int_{\Omega}\d y\sqrt{\pi_{y}^{-1}}|y\>\<y|\Big) \nonumber\\
&=\int_{\Omega}\int_{\Omega}\d x\,\d y\,\sqrt{\pi_{x}\pi_{y}^{-1}p_{x\to y}p_{y\to x}}|x\>\<y| \\
&=\int_{\Omega}\int_{\Omega}\d x\,\d y\,p_{y\to x}|x\>\<y|\quad\text{(by \eqn{detailed-balance})} \\
&=P.\label{eq:DPsimilar}
\end{align}
As a result, $D$ and $P$ have the same set of eigenvalues. Furthermore, \lem{lambda-bound} implies that all eigenvalues of $P$ have norm at most 1, and the proof of \thm{quantum-walk-main} shows that $|\pi_{W}\>$ is the unique eigenvector with this eigenvalue within $\spn_{\lambda\in\Lambda}\{T|\lambda\>,ST|\lambda\>\}$.

The state
\begin{align}\label{eqn:quantum-stationary}
|\pi\>:=\int_{\Omega}\d x\,\sqrt{\pi_{x}}|x\>
\end{align}
represents a quantum sample from the density $\pi$; in particular, measuring $|\pi\>$ in the computational basis gives a classical sample from $\pi$. Furthermore, the unitary operator in \eqn{quantum-walk-defn} satisfies
\begin{align}
U^{\dagger}|\pi_{W}\>=\Big(\int_{\Omega}\d x\,|x\>|0\>\<\phi_{x}|\Big)\Big(\int_{\Omega}\d x\,\sqrt{\pi_{x}}|\phi_{x}\>\Big)=\int_{\Omega}\d x\,\sqrt{\pi_{x}}|x\>|0\>=|\pi\>|0\>,
\end{align}
so we have $U|\pi\>|0\> = |\pi_{W}\>$.

%%%%%%%%%%%%%%%%%%%%%%%%%%%%%%%%%%%%%%%%%%%%%%%%%%%%%%%%%%%%%%%%%%%%%%%%%%%%%%

\section{Quantum speedup for volume estimation}\label{sec:quantum-volume}
In this section, we present and analyze our quantum algorithm for volume estimation. First, we review the classical state-of-the-art volume estimation algorithm in \sec{volume-estimation-review}. We then describe our quantum algorithm for estimating the volume of well-rounded convex bodies (i.e., $R/r=O(\sqrt{n})$) with query complexity $\tilde{O}(n^{2.5}/\epsilon)$ in \sec{quantum-volume-estimation}, with detailed proofs given in \sec{quantum-volume-estimation-proof}. Finally, we remove the well-rounded condition by giving a quantum algorithm with interlaced rounding and volume estimation with additional cost $\tilde{O}(n^{2.5})$ in each iteration in \sec{round-quantum}.

%=========================================================
\subsection{Review of classical algorithms for volume estimation}\label{sec:volume-estimation-review}
The state-of-the-art classical algorithm for volume estimation uses $\tilde{O}(n^{4}+n^{3}/\epsilon^{2})$ classical queries, where $\tilde{O}(n^{4})$ queries are used to construct the affine transformation that makes convex body well-rounded~\cite{lovasz2006simulated} and $\tilde{O}(n^{3}/\epsilon^{2})$ queries are used to estimate the volume of the well-rounded convex body (after the affine transformation)~\cite{cousins2015bypassing}.

We now review the algorithm of \cite{lovasz2006simulated} for estimating volumes of well-rounded convex bodies. This algorithm estimates the volume of a convex body obtained by the following \emph{pencil construction}. Define the cone
\begin{align}
\Cone:=\Big\{x\in\R^{n+1}: x_{0}\geq 0,\sum_{i=1}^{n}x_{i}^{2}\leq x_{0}^{2}\Big\}.
\end{align}
Let $\K'$ be the intersection of the cone $\Cone$ and a cylinder $[0,2D]\times \K$, i.e.,
\begin{align}\label{eqn:pencil-defn}
\K':=([0,2D]\times \K)\cap \Cone
\end{align}
(recall $D=R/r$). Without loss of generality we renormalize to $r=1$, so that $\B_{2}(0,1)\subseteq\K\subseteq \B_{2}(0,D)$. Since $D\vol(\K)\leq \vol(\K')\leq 2D\vol(\K)$, with the knowledge of $\vol(\K')$ we can estimate $\vol(\K)$ with multiplicative error $\epsilon$ by generating $O(1/\epsilon^{2})$ sample points from the uniform distribution on $[0,2D]\times \K$ and then counting how many of them fall into $\K'$. Such an approximation succeeds with high probability by a Chernoff-type argument (see \sec{pencil-construction-analysis} for a formal proof).

Lov{\'a}sz and Vempala \cite{lovasz2006simulated} considers simulated annealing under the pencil construction. For any $a>0$, define
\begin{align}
Z(a):=\int_{\K'}e^{-ax_{0}}\,\d x.
\end{align}
It can be shown that for any $a\leq\epsilon/D$,
\begin{align}\label{eqn:LV06-SA-small}
(1-\epsilon)\vol(\K')\leq Z(a)\leq\vol(\K').
\end{align}
On the other hand, it is shown in \cite[Section 2.3]{lovasz2006simulated} that for any $a\geq 2n$ and $\epsilon>(3/4)^{n}$,
\begin{align}\label{eqn:LV06-SA-large}
(1-\epsilon)\int_{\Cone}e^{-ax_{0}}\,\d x\leq Z(a)\leq\int_{\Cone}e^{-ax_{0}}\,\d x.
\end{align}
This suggests using a simulated annealing procedure for estimating $\vol(\K')$. Specifically, if we select a sequence $a_{0}>a_{1}>\cdots>a_{m}$ for which $a_{0}=2n$ and $a_{m}\leq\epsilon/D$, then we can estimate $\vol(\K')$ by
\begin{align}
Z(a_{m})=Z(a_{0})\prod_{i=0}^{m-1}\frac{Z(a_{i+1})}{Z(a_{i})}.
\end{align}
(Note that this procedure uses an increasing sequence of temperatures $1/a_i$, unlike standard simulated annealing in which temperature is decreased.)

Let $\pi_{i}$ be the probability distribution over $\K'$ with density proportional to $e^{-a_{i}x_{0}}$, i.e., $\d\pi_{i}(x)=\frac{e^{-a_{i}x_{0}}}{Z(a_{i})}\d x$. Let $X_{i}$ be a random sample from $\pi_{i}$, and let $(X_{i})_{0}$ be its first coordinate; define $V_{i}:=e^{(a_{i}-a_{i+1})(X_{i})_{0}}$. We have
\begin{align}\label{eqn:Vi-expectation}
\E_{\pi_{i}}[V_{i}]=\int_{\K'}e^{(a_{i}-a_{i+1})x_{0}}\,\d\pi_{i}(x)=\int_{\K'}e^{(a_{i}-a_{i+1})x_{0}}\frac{e^{-a_{i}x_{0}}}{Z(a_{i})}\,\d x=\frac{Z(a_{i+1})}{Z(a_{i})}.
\end{align}
Furthermore, if the simulated annealing schedule satisfies $a_{i+1}\geq(1-\frac{1}{\sqrt{n}})a_{i}$, then $V_{i}$ satisfies (see~\cite[Lemma 4.1]{lovasz2006simulated})
\begin{align}\label{eqn:LV06-Lemma4.1}
\frac{\E_{\pi_{i}}[V_{i}^{2}]}{\E_{\pi_{i}}[V_{i}]^{2}}\leq\Big(\frac{a_{i+1}^{2}}{a_{i}(2a_{i+1}-a_{i})}\Big)^{n+1}<8\qquad\forall\,i\in\range{m},
\end{align}
i.e., the variance of $V_{i}$ is bounded by a constant multiple of the square of its expectation. Thus, this simulated annealing procedure constitutes \emph{Chebyshev cooling} (see also \sec{nondestructive}), ensuring its correctness (see \prop{Chebyshev-classical}).

\algo{LV06-volume} presents this approach in detail. Note that sampling from $\pi_{0}$ in \lin{LV06-initial-sample} is straightforward: select a random positive real number $x_{0}$ from the distribution with density $e^{-2nx}$ and choose a uniformly random point $(v_{1},\ldots,v_{n})$ from the unit ball. If $X=(x_{0},x_{0}v_{1},\ldots,x_{0}v_{n})\notin\K'$, try again; else return $X$. Equation \eqn{LV06-SA-large} ensures that we succeed with probability at least $1-\epsilon$ for each sample.

\begin{algorithm}[htbp]
\KwInput{Membership oracle $O_{\K}$ of $\K$; $R$ such that $\B_{2}(0,1)\subseteq\K\subseteq \B_{2}(0,R)$; $R=O(\sqrt{n})$, i.e., $\K$ is well-rounded.}
\KwOutput{$\epsilon$-multiplicative approximation of $\vol(\K)$.}
Set $m=2\lceil\sqrt{n}\ln(n/\epsilon)\rceil$, $k=\frac{512}{\epsilon^{2}}\sqrt{n}\ln(n/\epsilon)$, $\delta=\epsilon^{2}n^{-10}$, and $a_{i}=2n(1-\frac{1}{\sqrt{n}})^{i}$ for $i\in\range{m}$\;
Take $k$ samples $X_{0}^{(1)},\ldots,X_{0}^{(k)}$ from $\pi_{0}$\; \label{lin:LV06-initial-sample}
\For{$i\in\range{m}$}{
	Take $k$ samples from $\pi_{i}$ with error parameter $\delta$ and starting points $X_{i-1}^{(1)},\ldots,X_{i-1}^{(k)}$, giving points $X_{i}^{(1)},\ldots,X_{i}^{(k)}$\;\label{lin:LV06-sample-SA}
	Compute $V_{i}=\frac{1}{k}\sum_{j=1}^{k}e^{(a_{i}-a_{i+1})(X_{i}^{(j)})_{0}}$\;\label{lin:LV06-sample-average}
}
Return $n!v_{n}(2n)^{-(n+1)}V_{1}\cdots V_{m}$ as the estimate of the volume of $\K'$, where $v_{n}:=\pi^{\frac{n}{2}}/\Gamma(1+\frac{n}{2})$ is the volume of the $n$-dimensional unit ball\;
\caption{Volume estimation of well-rounded $\K$ with $\tilde{O}(n^{4}/\epsilon^{2})$ classical queries~\cite{lovasz2006simulated}.}
\label{algo:LV06-volume}
\end{algorithm}

%=========================================================
\subsection{Quantum algorithm for volume estimation}\label{sec:quantum-volume-estimation}

As introduced in \sec{techniques}, our quantum algorithm has four main improvements that contribute to the quantum speedup of \algo{LV06-volume}:
\begin{enumerate}[leftmargin=*]
\item We replace the classical hit-and-run walk in \sec{hit-and-run} by a quantum hit-and-run walk, defined using the framework of \sec{continuous-q-walk-theory}. Classically, the hit-and-run walk mixes in $\tilde{O}(n^{3})$ steps in a well-rounded convex body given a warm start (see \thm{hit-and-run-LV06}). Quantumly, we can use the quantum hit-and-run walk operator to prepare its stationary state given a warm start state using only $\tilde{O}(n^{1.5})$ queries to the membership oracle for the well-rounded convex body.

\item We replace the simulated annealing framework in \algo{LV06-volume} by the quantum MCMC framework described in \sec{quantum-MCMC}. Classically, we sample from $\pi_i$ in the $i^{\text{th}}$ iteration by running the classical hit-and-run walk starting from the samples taken in the $(i-1)^{\text{st}}$ iteration. Quantumly, we prepare the quantum sample $|\pi_i\>$ in the $i^{\text{th}}$ iteration by applying $\pi/3$-amplitude amplification to a quantum sample produced in the $(i-1)^{\text{st}}$ iteration, where the unitaries in the $\pi/3$-amplitude amplification are implemented by phase estimation of the quantum hit-and-run walk operators as in \eqn{Wocjan-walk-PhaseEst}.

\item We use the quantum Chebyshev inequality (see \sec{quantum-Chebyshev}) to give a quadratic quantum speedup in $\epsilon^{-1}$ when taking the average $e^{(a_{i}-a_{i+1})(\bar{X}_{i})_{0}}$ in \lin{LV06-sample-average} of \algo{LV06-volume}. However, we must be cautious because the resulting points $X_{i}^{(1)},\ldots,X_{i}^{(k)}$ in \lin{LV06-sample-SA} follow the distribution $\pi_{i}$, which varies in different iterations of simulated annealing. Instead, our quantum algorithm must be \emph{nondestructive}: it must still have a copy of $|\pi_{i}\>$ after estimating the average $e^{(a_{i}-a_{i+1})(\bar{X}_{i})_{0}}$, so that we can map this state to $|\pi_{i+1}\>$ by $\pi/3$-amplitude amplification for the next iteration. This is achieved in \sec{nondestructive}.

\item In \sec{round-quantum}, we show how the densities can be transformed to be well-rounded by an affine transformation at each stage of the algorithm. This is to ensure that the hit-and-run walk mixes fast assuming the densities $\pi_i$ to be sampled from are well-rounded (see \thm{exp-hit-run-mixes}). The high-level idea is to sample points from density $\pi_{i}$ and compute an affine transformation $S_{i+1}$ that rounds $\pi_{i}$ and the next density $\pi_{i+1}$ (see \lem{affine-transform-isotropic}). To sample these points, we use $\pi/3$-amplitude amplification to map the states corresponding to the uniform distributions for one stage to those for the next. The affine transformation can be computed coherently using nondestructive mean estimation, with $\tilde{O}(n^{2.5})$ quantum queries in each iteration.
\end{enumerate}

\algo{quantum-volume} is our quantum volume estimation algorithm that satisfies our main theorem:

\maintheorem*

\begin{algorithm}[ht]
\KwInput{Membership oracle $O_{\K}$ for $\K$; $R = O(\sqrt n)$ such that $\B_{2}(0,1)\subseteq\K\subseteq \B_{2}(0,R)$.}
\KwOutput{$\epsilon$-multiplicative approximation of $\vol(\K)$.}
Set $m=\Theta(\sqrt{n}\log(n/\epsilon))$ to be the number of iterations of simulated annealing and $a_{i}=2n(1-\frac{1}{\sqrt{n}})^{i}$ for $i\in\range{m}$. Let $\pi_{i}$ be the probability distribution over $\K'$ with density proportional to $e^{-a_{i}x_{0}}$\; \nonl
Set error parameters $\delta,\epsilon' = \Theta(\epsilon/m^2), \epsilon_1 = \epsilon/2m$; let $k=\tilde{\Theta}(\sqrt{n}/\epsilon)$ be the number of copies of stationary states for applying the quantum Chebyshev inequality; let $l = \tilde{\Theta}(n)$ be the number of copies of stationary states needed to obtain the affine transformation $S_i$\; \nonl
Prepare $k+l$ (approximate) copies of $|\pi_{0}\>$, denoted $|\tilde\pi_{0}^{(1)}\>,\ldots,|\tilde\pi_{0}^{(k+l)}\>$ (\lem{mean-estimation-error})\;\label{lin:quantum-initial-sample}
\For{$i\in\range{m}$}{
  Use the quantum Chebyshev inequality on the $k$ copies of the state $|\tilde{\pi}_{i-1}\>$ with parameters $\epsilon_1,\delta$ to estimate the expectation $\E_{\pi_{i}}[V_i]$ (in Eq.~\eqn{Vi-expectation}) as $\tilde{V_i}$ (\lem{err-non-destructive} and \fig{SA-block-nondes}). The post-measurement states are denoted $|\hat\pi_{i-1}^{(1)}\>,\ldots,|\hat\pi_{i-1}^{(k)}\>$\;\label{lin:quantum-sample-SA}
  Use the $l$ copies of the state $|\pi_{i-1}\>$ to nondestructively obtain the affine transformation $S_{i}$
  that rounds $\pi_{i-1}$ and $\pi_{i}$ (\sec{round-quantum}). The post-measurement states are denoted $|\hat\pi_{i-1}^{(k+1)}\>,\ldots,|\hat\pi_{i-1}^{(k+l)}\>$\;\label{lin:get-affine}
  Apply $\pi/3$-amplitude amplification with error $\epsilon'$ (\sec{quantum-MCMC} and \lem{pi3-error}) and affine transformation $S_i$ to map $|S_i\hat\pi_{i-1}^{(1)}\>,\ldots,|S_i\hat\pi_{i-1}^{(k+l)}\>$ to $|S_i\tilde\pi_{i}^{(1)}\>,\ldots,|S_i\tilde\pi_{i}^{(k+l)}\>$, using the quantum hit-and-run walk\;\label{lin:quantum-pi/3-AM-AM}
  Invert $S_i$ to get $k + l$ (approximate) copies of the stationary distribution $|\pi_{i}\>$ for use in the next iteration\;
}
Compute an estimate $\widetilde{\text{Vol}(\K')} = n!v_{n}(2n)^{-(n+1)}\tilde{V}_{1}\cdots \tilde{V}_{m}$ of the volume of $\K'$, where $v_{n}$ is the volume of the $n$-dimensional unit ball\;
Use $\widetilde{\text{Vol}(\K')}$ to estimate the volume of $\K$ as $\widetilde{\text{Vol}(\K)}$ (\sec{pencil-construction-analysis}).
\caption{Volume estimation of convex $\K$ with $\tilde{O}(n^{3} + n^{2.5}/\epsilon)$ quantum queries.}
\label{algo:quantum-volume}
\end{algorithm}

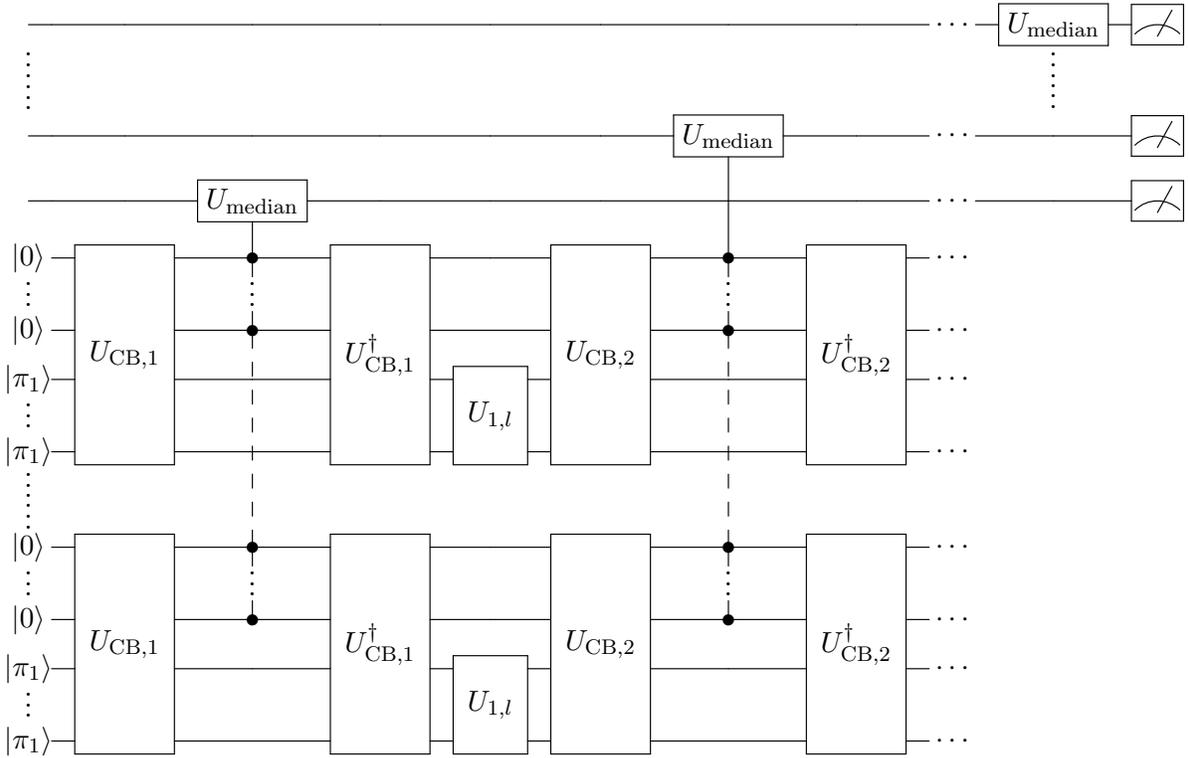
\begin{figure}[ht]
\resizebox{1\linewidth}{!}{
\Qcircuit @C=0.9em @R=0.9em {
   & \qw & \qw & \qw & \qw & \qw & \qw & \qw & \qw & \qw & \cdots & & \gate{U_{\median}} & \meter \\
   \raisebox{9pt}{\vdots} & & & & & & & & & & & & \raisebox{9pt}{\vdots} \\
   \raisebox{2pt}{\vdots} & & & & & & & & & & & & \raisebox{3pt}{\vdots} \\
   & \qw & \qw & \qw & \qw & \qw & \qw & \gate{U_{\median}} & \qw & \qw & \cdots & & \qw & \meter \\
   & \qw & \qw & \gate{U_{\median}} & \qw & \qw & \qw & \qw & \qw & \qw & \cdots & & \qw & \meter \\
   |0\> & & \multigate{5}{U_{\Cheby,1}} & \ctrl{-1} {\ar @{-}+<0em,-0.65em>} & \multigate{5}{U_{\Cheby,1}^{\dagger}} & \qw & \multigate{5}{U_{\Cheby,2}} & \ctrl{-2} {\ar @{-}+<0em,-0.65em>} & \multigate{5}{U_{\Cheby,2}^{\dagger}} & \qw & \cdots & \\
   \raisebox{6pt}{\vdots} & & & \raisebox{6pt}{\vdots} & & & & \raisebox{6pt}{\vdots} & & & \\
   |0\> & & \ghost{U_{\Cheby,1}} & \dctrl{6}{\ar @{-}+<0em,0.65em>} & \ghost{U_{\Cheby,1}^{\dagger}} & \qw & \ghost{U_{\Cheby,2}} & \dctrl{6}{\ar @{-}+<0em,0.65em>} & \ghost{U_{\Cheby,2}^{\dagger}} & \qw & \cdots \\
   |\pi_{1}\> & & \ghost{U_{\Cheby,1}} & \qw & \ghost{U_{\Cheby,1}^{\dagger}} & \multigate{2}{U_{1,l}} & \ghost{U_{\Cheby,2}} & \qw & \ghost{U_{\Cheby,2}^{\dagger}} & \qw & \cdots \\
   \raisebox{6pt}{\vdots} & & & & & & & & & & \\
   |\pi_{1}\> & & \ghost{U_{\Cheby,1}} & \qw & \ghost{U_{\Cheby,1}^{\dagger}} & \ghost{U_{1,l}} & \ghost{U_{\Cheby,2}} & \qw & \ghost{U_{\Cheby,2}^{\dagger}} & \qw & \cdots \\
      \raisebox{8pt}{\vdots} & & & & & & & & & & \\
      \raisebox{2pt}{\vdots} & & & & & & & & & & \\
|0\> & & \multigate{5}{U_{\Cheby,1}} & \ctrl{0} {\ar @{-}+<0em,-0.65em>} & \multigate{5}{U_{\Cheby,1}^{\dagger}} & \qw & \multigate{5}{U_{\Cheby,2}} & \ctrl{0} {\ar @{-}+<0em,-0.65em>} & \multigate{5}{U_{\Cheby,2}^{\dagger}} & \qw & \cdots \\
   \raisebox{6pt}{\vdots} & & & \raisebox{6pt}{\vdots} & & & & \raisebox{6pt}{\vdots} & & & \\
   |0\> & & \ghost{U_{\Cheby,1}} & \ctrl{0}{\ar @{-}+<0em,0.65em>} & \ghost{U_{\Cheby,1}^{\dagger}} & \qw & \ghost{U_{\Cheby,2}} & \ctrl{0}{\ar @{-}+<0em,0.65em>} & \ghost{U_{\Cheby,2}^{\dagger}} & \qw & \cdots \\
   |\pi_{1}\> & & \ghost{U_{\Cheby,1}} & \qw & \ghost{U_{\Cheby,1}^{\dagger}} & \multigate{2}{U_{1,l}} & \ghost{U_{\Cheby,2}} & \qw & \ghost{U_{\Cheby,2}^{\dagger}} & \qw & \cdots \\
   \raisebox{6pt}{\vdots} & & & & & & & & & & \\
   |\pi_{1}\> & & \ghost{U_{\Cheby,1}} & \qw & \ghost{U_{\Cheby,1}^{\dagger}} & \ghost{U_{1,l}} & \ghost{U_{\Cheby,2}} & \qw & \ghost{U_{\Cheby,2}^{\dagger}} & \qw & \cdots
}
}
\caption{The quantum circuit for \algo{quantum-volume} (assuming well-roundedness). Here $U_{\Cheby,i}$ is the circuit of the quantum Chebyshev inequality (\thm{quantum-Chebyshev}) in the $i^{\text{th}}$ iteration and $U_{i,l}$ is $\pi/3$-amplitude amplification from $|\pi_{i}\>$ to $|\pi_{i+1}\>$.}
\label{fig:quantum-volume}
\end{figure}

More generally, our framework paves the way of combining several different ingredients in quantum computing, and it could be used to provide quantum speedup for classical simulated annealing algorithms based on Chebyshev cooling, i.e., those with the property that the random variable in each iteration has bounded ratio between its variance and the square of its expectation. This might be of independent interest.

The proof of \thm{main} is organized as follows. We first assume that in each iteration, $S_{i+1}$ puts $\pi_{i+1}$ in isotropic position, i.e., the densities are promised to be well-rounded. The rest of this subsection presents an overview of the proof of \thm{main} (including a quantum circuit in \fig{quantum-volume}), and proofs details are given in \sec{quantum-volume-estimation-proof}. In \sec{round-quantum}, we show how the well-roundedness be maintained at an additional cost of $\tilde{O}(n^{2.5})$ quantum queries in each iteration.

Following the discussion in \sec{techniques}, our proof can be described at three levels:

\paragraph{\textsf{High level} (the simulated annealing framework)} In \sec{pencil-construction-analysis}, we show how to estimate $\vol(\K)$ given an estimate of the volume of the pencil construction, $\vol(\K')$:

\begin{restatable}{lemma}{pencil}\label{lem:pencil-to-original}
If we have access to $\widetilde{\vol(\K')}$ such that
\begin{align}\label{eq:pencil-to-original-approx}
\frac{1}{1+\epsilon/2}\vol(\K')\leq\widetilde{\vol(\K')}\leq(1+\epsilon/2)\vol(\K')
\end{align}
with probability at least 0.7, then we can return a value $\widetilde{\vol(\K)}$ such that
\begin{align}
\frac{1}{1+\epsilon}\vol(\K)\leq\widetilde{\vol(\K)}\leq(1+\epsilon)\vol(\K)
\end{align}
holds with probability at least $2/3$, using $\tilde{O}(n^{2.5}/\epsilon)$ quantum queries to the membership oracle $O_{\K}$.
\end{restatable}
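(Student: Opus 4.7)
The plan is to exploit the identity $\vol(\K) = \vol(\K')/(2D\cdot p)$, where $p := \vol(\K')/\vol([0,2D]\times\K) \in [1/2,1]$; the bounds on $p$ follow from $D\vol(\K)\le\vol(\K')\le 2D\vol(\K)$ recorded in \sec{volume-estimation-review}. A $(1+\epsilon/3)$-multiplicative estimate $\tilde p$ of $p$, combined with the given $(1+\epsilon/2)$-multiplicative estimate $\widetilde{\vol(\K')}$, immediately yields $\widetilde{\vol(\K)} := \widetilde{\vol(\K')}/(2D\tilde p)$ as a $(1+\epsilon)$-multiplicative approximation of $\vol(\K)$ (for $\epsilon<1/2$).

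The quantum estimation of $p$ would be by amplitude estimation (\thm{AmpEst}). I would build a unitary $U$ that prepares (an approximation of) the uniform quantum state on $[0,2D]\times\K$; this factorizes into the cheap uniform superposition on $[0,2D]$ (no queries needed) and a uniform quantum sample $|\pi_\K\>$ on $\K$, the latter produced by simulated annealing using the quantum hit-and-run walk of \sec{continuous-q-walk-theory}. Concretely, I would run the Chebyshev cooling schedule of \algo{quantum-volume} but applied directly to $\K$ (not the pencil $\K'$), transitioning between consecutive stationary states by $\pi/3$-amplitude amplification (\sec{quantum-MCMC}). Under the well-rounded regime covered by this subsection, each transition consumes $\tilde{O}(n^{1.5})$ queries to $O_\K$, and the $\tilde{O}(\sqrt{n})$ transitions give an overall preparation cost of $\tilde{O}(n^2)$ per application of $U$. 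Verifying $x\in\K'$ is essentially free: one query to $O_\K$ plus an explicit check of the cone condition $\sum_{i=1}^{n} x_i^2\le x_0^2$. Since $p\ge 1/2$, a $(1+\epsilon/3)$-multiplicative estimate of $p$ uses $O(1/\epsilon)$ invocations of $U$ and $U^\dagger$; boosting the $8/\pi^2$ success probability of amplitude estimation to any constant arbitrarily close to $1$ via the standard median trick costs only $O(1)$ repetitions, and a union bound with the $0.7$ success assumed for $\widetilde{\vol(\K')}$ yields overall success at least $2/3$. The total query cost is $\tilde{O}(n^{2.5}/\epsilon)$, matching the lemma.

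The main obstacle I anticipate is error propagation rather than conceptual difficulty: the prepared state $U|0\>$ only approximates the ideal uniform state on $[0,2D]\times\K$, so amplitude estimation actually recovers a perturbed quantity $\tilde p + \xi$ rather than $\tilde p$. Bounding $|\xi|$ requires propagating the $\ell_2$-error of the approximate stationary state through the amplitude-estimation circuit. For this I would use the perturbation analysis standard for quantum simulated annealing, together with the conductance-based mixing bounds in \prop{Lovasz-Simonovits} and \prop{mixing-phip}, and choose the walk-mixing precision and Chebyshev-cooling error to be polynomially small in $\epsilon/n$ so that $|\xi|\le \epsilon p/6$; this slack is absorbed into the gap between $(1+\epsilon/3)(1+\epsilon/2)$ and $(1+\epsilon)$ without affecting the $\tilde{O}(n^{2.5}/\epsilon)$ query bound.
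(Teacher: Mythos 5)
Your overall architecture is the same as the paper's: both proofs use the identity $\vol(\K)=\vol(\K')/(2D\xi_{\K})$ with $\xi_{\K}=\vol(\K')/(2D\vol(\K))\in[1/2,1]$, estimate $\xi_{\K}$ to multiplicative error $\Theta(\epsilon)$ by amplitude estimation on a unitary that prepares a near-uniform quantum sample over the cylinder $[0,2D]\times\K$, and combine the two multiplicative errors with a union bound over the failure probabilities. Your use of plain amplitude estimation (exploiting $\xi_\K\geq 1/2$) in place of the paper's appeal to the quantum Chebyshev inequality is a harmless simplification, and your error-propagation plan for the imperfect state preparation is the right kind of bookkeeping.

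The one step that would fail as written is the state-preparation subroutine. You propose to prepare the uniform sample on $\K$ by ``running the Chebyshev cooling schedule of \algo{quantum-volume} applied directly to $\K$'' in $\tilde{O}(\sqrt{n})$ phases of cost $\tilde{O}(n^{1.5})$ each. But that schedule is defined on the pencil $\K'$ with densities proportional to $e^{-a_i x_0}$ and terminates at the uniform distribution on $\K'$, not on $\K$ or on the cylinder; and transplanting it to $\K$ with densities $e^{-a\|x\|_2}$ loses exactly the structure the analysis depends on: the hit-and-run mixing bound (\thm{exp-hit-run-mixes}) is stated for densities with a \emph{linear} exponent $e^{-a^{T}x}$, and the constant-overlap guarantee (\lem{inner-product}) is proved via log-concavity of $a^{n+1}Z(a)$ for the pencil. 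This is precisely why the pencil construction exists. The paper's proof of this lemma instead prepares the uniform sample by a Dyer--Frieze--Kannan-style sequence of $m=\lceil n\log_2(2D)\rceil$ \emph{uniform} distributions on the bodies $\hat{\rmk}_i=2^{i/n}\rmb_n\cap\rmk'$, where consecutive overlaps are constant simply because $\vol(\hat{\rmk}_{i+1})\leq 2\vol(\hat{\rmk}_i)$, giving $\tilde{O}(n)\cdot\tilde{O}(n^{1.5})=\tilde{O}(n^{2.5})$ queries per preparation and hence $\tilde{O}(n^{2.5}/\epsilon)$ in total. Since the lemma only claims $\tilde{O}(n^{2.5}/\epsilon)$, you lose nothing by substituting this $\tilde{O}(n)$-phase uniform annealing for your unsupported $\tilde{O}(\sqrt{n})$-phase shortcut; with that replacement your argument goes through.
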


In \sec{proof-inner-product}, we prove that the inner product between stationary states of consecutive simulated annealing steps is at least a constant:

\begin{restatable}{lemma}{innerproduct}\label{lem:inner-product}
Let $\ket{\pi_i}$ be the stationary distribution state of the quantum walk $W_i$ for $i \in [m]$ defined in \eqn{quantum-stationary}. For $n \geq 2$, we have $\<\pi_i|\pi_{i+1}\> > 1/3$ for $i \in [m-1]$.
\end{restatable}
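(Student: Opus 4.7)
The plan is to express the inner product as a Bhattacharyya-type ratio of partition functions and then bound that ratio by a midpoint substitution in \cite[Lemma~4.1]{lovasz2006simulated}. Since $\pi_i(x) = e^{-a_i x_0}/Z(a_i)$ on $\K'$, a direct computation gives
\begin{align*}
  \<\pi_i|\pi_{i+1}\>
  = \int_{\K'} \sqrt{\pi_i(x)\,\pi_{i+1}(x)}\,dx
  = \frac{Z\bigl((a_i+a_{i+1})/2\bigr)}{\sqrt{Z(a_i)\,Z(a_{i+1})}},
\end{align*}
so the lemma reduces to proving $R_i := Z(a_i)Z(a_{i+1})/Z\bigl((a_i+a_{i+1})/2\bigr)^2 < 9$.

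The driving observation is that $R_i$ coincides \emph{exactly} with the variance ratio $\E_{\pi_j}[V_j^2]/\E_{\pi_j}[V_j]^2$ controlled by \eqn{LV06-Lemma4.1}, provided one applies that lemma not to the step $a_i \to a_{i+1}$ but instead to the ``half-step'' $a_i \to b$ with $b := (a_i+a_{i+1})/2$. Indeed, the substitution $(a_j,a_{j+1}) \mapsto (a_i,b)$ yields $2a_{j+1}-a_j = a_{i+1}$ and turns the Lov\'asz--Vempala ratio into $Z(a_i)Z(a_{i+1})/Z(b)^2 = R_i$. Since $b/a_i = 1 - 1/(2\sqrt n) \ge 1 - 1/\sqrt n$, the slowly-varying hypothesis of their lemma still holds, producing
\begin{align*}
  R_i \le \Big(\frac{b^2}{a_i\,a_{i+1}}\Big)^{n+1}
  = \Big(\frac{(1-1/(2\sqrt n))^2}{1-1/\sqrt n}\Big)^{n+1}.
\end{align*}

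To finish, I would verify this bound is below $9$ for every $n \ge 2$. Writing $\alpha := 1/\sqrt n$ and Taylor expanding $\ln\bigl((1-\alpha/2)^2/(1-\alpha)\bigr) = \alpha^2/4 + O(\alpha^3)$ gives $\ln R_i = 1/4 + O(1/\sqrt n)$, so $R_i \to e^{1/4} \approx 1.28$ as $n \to \infty$; the worst case is $n = 2$, where a direct numerical check gives $R_i \le (1.4268)^3 < 2.91$. Hence $R_i < 3$ for every $n \ge 2$, and $\<\pi_i|\pi_{i+1}\> \ge 1/\sqrt{R_i} > 1/\sqrt{3} > 1/3$, as required. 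I expect the main obstacle to be spotting the midpoint substitution: \lem{inner-product} concerns the Bhattacharyya coefficient at a \emph{geometric-mean} density, whereas \eqn{LV06-Lemma4.1} is nominally a second-moment bound for a single cooling step, and the coincidence that running the Lov\'asz--Vempala estimate at half the spacing lands precisely on the Bhattacharyya ratio (and still lives inside the slowly-varying regime) is the one non-obvious step.
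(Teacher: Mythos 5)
Your proposal is correct and follows essentially the same route as the paper: both reduce $\<\pi_i|\pi_{i+1}\>$ to $Z\bigl((a_i+a_{i+1})/2\bigr)/\sqrt{Z(a_i)Z(a_{i+1})}$ and bound it by $\bigl(2\sqrt{a_ia_{i+1}}/(a_i+a_{i+1})\bigr)^{n+1}$, the paper by invoking log-concavity of $a^{n+1}Z(a)$ at the midpoint and you by invoking \eqn{LV06-Lemma4.1} at the half-step $(a_i,(a_i+a_{i+1})/2)$, which is the same inequality in different packaging. Your final numerical check even gives the slightly stronger bound $1/\sqrt{3}$, so the argument goes through.
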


This allows $\pi/3$-amplitude amplification to transform the stationary state of one Markov chain to that of the next. The total number of iterations of $\pi/3$-amplitude amplification is thus $\tilde{O}(\sqrt{n})$, which equals to the number of iterations of the classical volume estimation algorithm by~\cite{lovasz2006simulated}.

\paragraph{\textsf{Middle level} (each telescoping ratio)} In \sec{nondestructive}, we describe how we apply the quantum Chebyshev inequality (\thm{quantum-Chebyshev}) to the Chebyshev cooling schedule.

\begin{restatable}{lemma}{Chebyshev}\label{lem:Chebyshev-lemma}

  Given $\tilde{O}(\log(1/\delta)/\epsilon)$ copies of the quantum states $|\pi_{i-1}\>$, there exists a quantum algorithm that outputs an estimate of  $\E_{\pi_{i}}[V_i]$ (in Eq.~\eqn{Vi-expectation}) with relative error less than $\epsilon$ with probability at least $1 - O(\delta)$ using $\tilde{O}(\mathcal{C}\log(1/\delta)/\epsilon)$ oracle calls, where $\mathcal{C}$ oracle calls are required to implement a sampler for $|\pi_i\>$. Moreover, this quantum algorithm is nondestructive, i.e., the initial copies of quantum states $|\pi_{i-1}\>$ are restored after the computation with probability at least $1 - O(\delta)$.
\end{restatable}

Because the relative error in each iteration for estimating the volume via Chebyshev cooling is $\Theta(\epsilon/m)=\tilde{\Theta}(\epsilon/\sqrt{n})$, \lem{Chebyshev-lemma} implies that $O(\log(1/\delta)/(\epsilon/\sqrt{n}))=\tilde{O}(\sqrt{n}/\epsilon)$ copies of the stationary state suffice for the simulated annealing framework.\footnote{\label{fnote:copies}Notice that in the quantum Chebyshev inequality by Hamoudi and Magniez, they did not distinguish the number of copies of the initial state from the number of quantum samples~\cite[Theorem 5.3]{hamoudi2019Chebyshev}. In fact, their proof uses only $O(\log(1/\delta))$ copies of the initial state $|\pi_{i-1}\>$ in \lem{Chebyshev-lemma}, which reduces the total number of copies of the stationary states in the simulated annealing framework to $O(\log(1/\delta))$. Nevertheless, this does not change the total query and time complexities of our quantum algorithms because the total number of calls to the quantum sampler remains the same.}

\paragraph{\textsf{Low level} (the quantum hit-and-run walk)} In \sec{proof-error-analysis}, we give a careful analysis of the errors coming from the quantum Chebyshev inequality as well as the $\pi/3$-amplitude amplification:

\begin{restatable}{lemma}{error}\label{lem:mean-estimation-error}
For $\epsilon_{1} < 1$, given $\tilde{O}(\log(1/\delta)/\epsilon_1)$ copies of a state $|\tilde\pi_{i-1}\>$ such that $\lVert |\tilde\pi_{i-1}\> - |\pi_{i-1}\> \rVert \le \epsilon_1$,
there exists a quantum procedure (using $\pi/3$-amplitude amplification and the quantum Chebyshev inequality) that outputs a $\tilde{V}_{i}$ such that $|\tilde{V}_{i} - \E_{\pi_{i}}[V_{i}]| \le \epsilon_1\E_{\pi_{i}}[V_{i}]$ (where $\E_{\pi_{i}}[V_{i}]$ is defined in Eq.~\eqn{Vi-expectation}) with success probability $1-\delta^4$ using $\tilde{O}(n^{3/2}\log(1/\delta)/\epsilon_1 + n^{3/2}\log(1/\epsilon'))$ calls to the membership oracle and returns $\tilde{O}(\log(1/\delta)/\epsilon_1)$ copies of final states $|\tilde{\pi}_{i}\>$ such that $\lVert|\tilde\pi_{i}\> - |\pi_{i}\>\rVert = O(\epsilon_1 + \delta + \epsilon')$.
\end{restatable}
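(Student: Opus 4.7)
The plan is to chain two ingredients in sequence: first use $\pi/3$-amplitude amplification (\lem{pi3-amplification}) to advance each input copy from (an approximation of) $|\pi_{i-1}\>$ to an approximation of $|\pi_i\>$, and then feed those copies into the nondestructive quantum Chebyshev estimator of \lem{Chebyshev-lemma} to output $\tilde V_i$ while restoring them. By composing the two steps correctly and invoking the appropriate perturbation bounds, both the error and query promises of the statement will follow.

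For the $\pi/3$-AA step, \lem{inner-product} gives $|\<\pi_{i-1}|\pi_i\>|^2>1/9$, so $m=O(\log\log(1/\epsilon'))$ recursion levels in \eqn{pi/3-Ui} produce an $\epsilon'$-close output from the exact $|\pi_{i-1}\>$, using $3^m=O(\log(1/\epsilon'))$ reflector calls. The reflectors $R_{\pi_{i-1}}$ and $R_{\pi_i}$ are implemented as in \eqn{Wocjan-walk-PhaseEst} by phase estimation of the quantum hit-and-run walk at precision $\sqrt{\Delta}/2$; since the classical hit-and-run on a well-rounded body mixes in $\tilde O(n^3)$ steps (\thm{exp-hit-run-mixes}), the walk phase gap is $\tilde\Omega(n^{-3/2})$, making each phase estimation cost $\tilde O(n^{3/2})$ membership queries. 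Since the amplification circuit is unitary, the initial approximation error $\epsilon_1$ propagates additively, so the post-amplification state $|\hat\pi_i\>$ satisfies $\||\hat\pi_i\>-|\pi_i\>\|\le\epsilon_1+\epsilon'$.

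For the Chebyshev step, I invoke the estimator of \thm{quantum-Chebyshev}, which is built from $\tilde O(\log(1/\delta))$ parallel calls to \algo{BasicEst}, i.e., amplitude estimations of a rotated sampler that now uses the $\pi/3$-AA pipeline as its state preparation unitary. I render the cascade nondestructive via the compute--uncompute pattern of \sec{techniques}: each amplitude estimation writes $\tilde\theta_p$ into an ancilla; a $\sinsq$ in superposition on that ancilla symmetrises $\pm\tilde\theta_p$ (resolving the sign ambiguity of \eqn{AmpEst-two-angles}); the median is coherently written into a fresh output register; finally the median, the $\sinsq$s, the amplitude estimations, and the $\pi/3$-AAs are uncomputed. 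On the success branch (probability at least $1-\delta^4$ after boosting by median of $O(\log(1/\delta))$ parallel estimates), the uncomputation restores every copy exactly, and the triangle inequality combined with the amplification error yields $\||\tilde\pi_i\>-|\pi_i\>\|=O(\epsilon_1+\epsilon'+\delta)$ as required.

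The hard part I expect is the perturbation argument showing that running the amplitude-estimation pipeline on the imperfect $|\hat\pi_i\>$ rather than on $|\pi_i\>$ still returns a $(1+O(\epsilon_1))$-multiplicative estimate of $\E_{\pi_i}[V_i]$, and that the uncomputation continues to restore the input up to the claimed $O(\delta)$. This should follow from two observations: every subcircuit is unitary, so the $\ell_2$ deviation from the ideal trajectory is preserved throughout; and the amplitude-estimation output $\sin^2\tilde\theta_p$ is Lipschitz in the sampler amplitude $\sqrt p$, so an $O(\epsilon_1+\epsilon')$ input perturbation produces an $O(\epsilon_1+\epsilon')$ perturbation of the estimate, which can be absorbed into the stated relative error after constant rescaling. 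A union bound over the $O(\log(1/\delta))$ parallel runs accounts for the $1-\delta^4$ success probability. Collecting the query cost, the $M=\tilde O(\log(1/\delta)/\epsilon_1)$ copies each carry one round of $\pi/3$-AA of $\tilde O(n^{3/2}\log(1/\epsilon'))$ queries, plus their share of the Chebyshev amplitude estimations at $\tilde O(n^{3/2})$ walk steps per call, summing (after hiding polylogs in $\tilde O$) to the advertised $\tilde O(n^{3/2}\log(1/\delta)/\epsilon_1+n^{3/2}\log(1/\epsilon'))$.
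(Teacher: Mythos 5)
Your proposal is correct and follows essentially the same route as the paper: the paper's proof simply composes the nondestructive quantum Chebyshev estimator (\lem{err-non-destructive}, which already uses the $\pi/3$-amplitude-amplified walk as its state-preparation sampler) with a $\pi/3$-amplitude amplification step costing $\tilde{O}(n^{3/2}\log(1/\epsilon'))$, tracking the errors additively by unitarity and the triangle inequality exactly as you describe. The only difference is that you apply the amplification to the copies before the mean-estimation block rather than after it, which is immaterial to both the error and the query accounting.
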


Having the four lemmas above from all the three levels, we establish \thm{main} as follows.

\begin{proof}[Proof of {\thm{main}}]
We prove the correctness and analyze the cost separately.

\hd{Correctness}
By \lem{pencil-to-original}, it suffices to compute the volume of the pencil construction $\K'$ to relative error $\epsilon/2$ with probability at least $0.7$ in order to compute the volume of the well-rounded convex body $\K$. This is computed as a telescoping sum of $m = O(\sqrt{n} \log{n/\epsilon})$ products of the form $Z(a_{i+1})/Z(a_{i})$. The random variable $V_i$ is an unbiased estimator for $Z(a_{i+1})/Z(a_{i})$, i.e., $\E_{\pi_i}[V_i] = Z(a_{i+1})/Z(a_{i})$. Consider applying \lem{mean-estimation-error} $m$ times with sufficiently small $\delta,\epsilon'\leq \epsilon/12m^2$ and $\epsilon_1 = \epsilon/3m$. This will promise that $\epsilon_1 + \delta + \epsilon'\leq \epsilon/2m$. At each iteration $i$ we  have a state $|\tilde\pi_{i-1}\>$ such that $\lVert|\tilde\pi_{i-1}\> - |\pi_{i-1}\>\rVert \le O(\epsilon/4m)$. Thus each telescoping sum can be computed with a relative error of $\epsilon/2m$, resulting in a relative error of less than $\epsilon/2$ for the final volume. The probability of success for each iteration is at least $1 - \delta^4 = 1 - \Theta(\epsilon^4/4m^8)$. Thus the probability of success for the whole algorithm is at least $1 - \Theta(\epsilon^4/4m^7)=1-\tilde{O}(\epsilon^{11}/n^{3.5})$, which is greater than $0.7$ for a large enough $n$.

\hd{Cost}
Ignoring the cost of obtaining the affine transformation to round the logconcave densities to be sampled (assuming that all the relevant densities are well rounded), we have from
\lem{mean-estimation-error}, the number of calls to the membership oracle in each iteration of \algo{quantum-volume} is $\tilde{O}(n^{3/2}\log(1/\delta)/\epsilon_1 + n^{3/2}\log(1/\epsilon')) = \tilde{O}(n^2/\epsilon)$. The total number of oracle calls is thus $\tilde{O}(n^{2.5}/\epsilon)$. The argument for correctness above applies for well-rounded logconcave densities. This is ensured by maintaining $\tilde{\Theta}(n)$ states that are used to round the densities in each iteration (\algo{quantum-interlacing}). By \prop{rounding-interlacing}, this procedure uses $\tilde{O}(n^{2.5})$ calls to the membership oracle in each iteration, resulting in a final query complexity of $\tilde{O}(n^{3} + n^{2.5}/\epsilon)$. Since the affine transformation is an $n$-dimensional matrix-vector product, the number of additional arithmetic operations is hence $O(n^{2})\cdot\tilde{O}(n^{3} + n^{2.5}/\epsilon)=\tilde{O}(n^{5} + n^{4.5}/\epsilon)$.
\end{proof}

%------------------------------------------------------------------------
\subsection{Proofs of lemmas in \sec{quantum-volume-estimation}}\label{sec:quantum-volume-estimation-proof}

We now prove the lemmas in \sec{quantum-volume-estimation} that establish \thm{main}.

\subsubsection{From the pencil construction to the original convex body}\label{sec:pencil-construction-analysis}

Here we prove

\pencil*

\begin{proof}
We follow the same notation in \sec{volume-estimation-review}, i.e., without loss of generality we assume that $r=1$ and denote $D=R/r=R$. Since $R$ and $r$ are both given, $D$ is a known value. The pencil construction is
\begin{align}
\K':=([0,2D]\times \K)\cap\Big\{x\in\R^{n+1}: x_{0}\geq 0,\sum_{i=1}^{n}x_{i}^{2}\leq x_{0}^{2}\Big\}.
\end{align}
By the definition of $D$, for any $(x_{1},\ldots,x_{n})\in\K$ we have $\sum_{i=1}^{n}x_{i}^{2}\leq D^{2}$, so $[D,2D]\times\K\subseteq\K'$. This implies that $D\vol(\K)\leq \vol(\K')\leq 2D\vol(\K)$. In other words, letting $\xi_{\K}:=\frac{\vol(\K')}{2D\vol(\K)}$, we have $0.5 \leq\xi_{\K}\leq 1$.

Classically, we consider a Monte Carlo approach to approximating $\vol(\K)$: we take $k$ (approximately) uniform samples $x_{1},\ldots,x_{k}$ from $[0,2D]\times \K$, and if $k'$ of them are in $\K'$, we return $\frac{k'}{k}\cdot\widetilde{\vol(\K')}$. For each $i\in\range{k}$, $\delta[x_{i}\in\K']$ is a boolean random variable with expectation $\xi_{\K}=\Theta(1)$. Any boolean random variable has variance $O(1)$. Therefore, by Chebyshev's inequality, taking $k=O(1/\epsilon^{2})$ suffices to ensure that
\begin{align}
\Pr\Big[\Big|\frac{k'}{k}-\xi_{\K}\Big|\leq\frac{\epsilon\xi_{\K}}{2}\Big]\geq 0.99.
\end{align}

Quantumly, we adopt the same Monte Carlo approach but we implement two steps using quantum techniques:
\begin{itemize}[leftmargin=*]
  \item We take an approximately uniform sample from $K' = [0,2D]\times \K$ via the quantum hit-and-run walk. To obtain a quantum stationary state, we use a similar idea as in~\cite{dyer1991random} to construct a sequence of $m = \lceil n \log_2(2D)\rceil$ convex bodies. Let $\hat{\rmk}_0 := \rmb_{2}^{n+1}(0,1)$ and $\hat{\rmk}_i := 2^{i/n}\rmb_{2}^{n+1}(0,1) \cap \rmk'$ for $i \in [m]$. As the length of the pencil is $2D$, $\hat{\rmk}_m = \rmk'$. The state $\ket{\pi_0}$ corresponding to $\hat{\rmk}_0$ is easy to prepare. It is straightforward to verify that $\<\pi_i|\pi_{i+1}\> \geq c$ for some constant $c$, as $\vol(\hat{\rmk}_{i+1}) \leq 2 \vol(\hat{\rmk}_{i})$. To utilize the quantum speedup for MCMC framework (\thm{Wocjan-Abeyesinghe}), it remains to lower bound the phase gap of the quantum walk operator for $\hat{\rmk}_i$. It can be shown that the mixing property of the hit-and-run walk in \thm{exp-hit-run-mixes} implies that the phase gap of the quantum walk operator is $\tilde{\Omega}(n^{-1.5})$; see the proof of \lem{pi3-error}. Thus, by \thm{Wocjan-Abeyesinghe}, $\ket{\pi_m}$ can be prepared using $\tilde{O}(n) \cdot \tilde{O}(n^{1.5}) = \tilde{O}(n^{2.5})$ quantum queries to $O_{\K}$.

\item We estimate $\xi_{\K}$ with multiplicative error $\epsilon/2$ using the quantum Chebyshev inequality (\thm{quantum-Chebyshev}) instead of its classical counterpart. This means that $O(1/\epsilon)$ executions of quantum sampling in the first step suffice.
\end{itemize}

Overall, $\tilde{O}(n^{2.5}/\epsilon)$ quantum queries to $O_{\K}$ suffice to ensure that we obtain an estimate of $\xi_{\K}$ within multiplicative error $\epsilon/2$ with success probability at least 0.99. Since \eq{pencil-to-original-approx} ensures that $\widetilde{\vol(\K')}$ estimates $\vol(\K')$ up to multiplicative error $\epsilon/2$ with probability at least 0.7, $\frac{\widetilde{\vol(\K')}}{2D\xi_{\K}}$ estimates $\vol(\K)$ up to multiplicative error $\epsilon/2+\epsilon/2=\epsilon$ with success probability $0.99\cdot 0.7>2/3$.
\end{proof}

%------------------------------------------------------------------------
\subsubsection{Inner product between stationary states of consecutive steps}\label{sec:proof-inner-product}

We now show that the inner product between stationary states of consecutive steps is at least a constant. More precisely, we have the following:

\innerproduct*

\begin{proof}
  Recall that the stationary distribution $\pi_i$ of step $i$ has density proportional to $e^{-a_i x_0}$ as discussed in \sec{volume-estimation-review}. The corresponding stationary distribution state is $\ket{\pi_i} = \int_{\rmk'}\d x \sqrt{\frac{e^{-a_i x_0}}{Z(a_i)}} \ket{x}$. Lov{\'a}sz and Vempala \cite[Lemma 3.2]{lovasz2006simulated} proved that $a^{n+1}Z(a)$ is log-concave (noting that the dimension of $\rmk'$ is $n+1$). This implies that
  \begin{align}
    \label{eq:sqrt-z}
    \sqrt{a_i^{n+1}Z(a_i)}\sqrt{a_{i+1}^{n+1}Z(a_{i+1})} \leq \left(\frac{a_i + a_{i+1}}{2}\right)^{n+1}Z\left(\frac{a_i + a_{i+1}}{2}\right).
  \end{align}
  Now, we have
  \begin{align}
   \<\pi_i|\pi_{i+1}\> &= \int_{\rmk'}\d x \frac{\exp(-\frac{a_i + a_{i+1}}{2}x_0)}{\sqrt{Z(a_i)}\sqrt{Z(a_{i+1})}} \\
                              &\geq \left(\frac{2\sqrt{a_i}\sqrt{a_{i+1}}}{a_i + a_{i+1}}\right)^{n+1} \frac{\int_{\rmk'}\d x \exp(-\frac{a_i+a_{i+1}}{2}x_0)}{Z\left(\frac{a_i + a_{i+1}}{2}\right)} \\
                              &= \left(\frac{2\sqrt{a_i}\sqrt{a_{i+1}}}{a_i + a_{i+1}}\right)^{n+1} \\
                              &= \left(\frac{2\sqrt{a_i}\sqrt{a_i(1-\frac{1}{\sqrt{n}})}}{a_i + a_i(1-\frac{1}{\sqrt{n}})}\right)^{n+1}= \left(\frac{2\sqrt{1-\frac{1}{\sqrt{n}}}}{2-\frac{1}{\sqrt{n}}}\right)^{n+1},
  \end{align}
  where the inequality follows from~\eq{sqrt-z}. When $n=2$ or $n=3$, the above inequality holds. When $n\geq 4$, to lower bound the above quantity, we use the fact that $\sqrt{1-1/\sqrt{n}} \geq 1-\frac{1}{2\sqrt{n}}-\frac{1}{2n}$. Hence, for $n \geq 2$ we have
  \begin{align*}
     \<\pi_i|\pi_{i+1}\> &\geq \left(\frac{2-\frac{1}{\sqrt{n}} - \frac{1}{n}}{2-\frac{1}{\sqrt{n}}}\right)^{n+1}=\left(1-\frac{\frac{1}{n}}{2-\frac{1}{\sqrt{n}}}\right)^{n+1}\geq \left(1-\frac{1}{(2-\frac{1}{\sqrt{2}})n}\right)^{n+1}>\frac{1}{3}
  \end{align*}
  as claimed.
\end{proof}

%------------------------------------------------------------------------
\subsubsection{Chebyshev cooling and nondestructive mean estimation}\label{sec:nondestructive}

Now we briefly review the classical framework for Chebyshev cooling and discuss how to adapt it to quantum algorithms. Suppose we want to compute the expectation of a product
\begin{align}
  V = \prod_{i=1}^{m} V_i
\end{align}
of independent random variables. The following theorem of Dyer and Frieze~\cite{dyer1991computing} upper bounds the number of samples from the $V_i$ that suffices to estimate $\E[V]$ with bounded relative error.

\begin{proposition}[{\cite[Section 4.1]{dyer1991computing}}]\label{prop:Chebyshev-classical}
  Let $V_1, \ldots, V_m$ be independent random variables such that $\frac{\E[V_i^2]}{\E[V_i]^2} \leq B$ for all $i \in [m]$. Let $X_j^{(1)}, \ldots, X_j^{(k)}$ be $k$ samples of $V_j$ for $j \in [m]$, and define $\overline{X}_j=\frac{1}{k}\sum_{\ell=1}^{k}X_j^{(\ell)}$. Let $V=\prod_{j=1}^{m}V_j$ and $X=\prod_{j=1}^m\overline{X}_j$. Then, taking $k = 16Bm/\epsilon^2$ ensures that
  \begin{align}
    \Pr\big[(1-\epsilon)\E[V] \leq X \leq (1+\epsilon)\E[V]\big] \geq \frac{3}{4}.
  \end{align}
\end{proposition}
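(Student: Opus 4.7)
The plan is to apply Chebyshev's inequality directly to the product estimator $X=\prod_{j=1}^m \overline{X}_j$. Since the $V_j$ are independent, the sample means $\overline{X}_j$ are independent as well, so both $\E[X]$ and $\E[X^2]$ factor across $j$. This reduces the task to controlling the per-coordinate ratio $\E[\overline{X}_j^2]/\E[\overline{X}_j]^2$ and checking that the product of these $m$ ratios stays close to $1$ under the chosen sample size $k$.

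First I would compute $\E[\overline{X}_j]=\E[V_j]$ and $\Var[\overline{X}_j]=\Var[V_j]/k$. The hypothesis $\E[V_j^2]/\E[V_j]^2 \le B$ is equivalent to $\Var[V_j]\le (B-1)\E[V_j]^2$, so
\begin{align}
\frac{\E[\overline{X}_j^2]}{\E[\overline{X}_j]^2}=1+\frac{\Var[V_j]}{k\,\E[V_j]^2}\le 1+\frac{B}{k}.
\end{align}
By independence of the $\overline{X}_j$ and of the $V_j$, it then follows that $\E[X]=\prod_j \E[V_j]=\E[V]$ and
\begin{align}
\frac{\E[X^2]}{\E[X]^2}=\prod_{j=1}^{m}\frac{\E[\overline{X}_j^2]}{\E[\overline{X}_j]^2}\le \Bigl(1+\frac{B}{k}\Bigr)^{m}.
\end{align}

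With the choice $k=16Bm/\epsilon^{2}$, we have $B/k=\epsilon^{2}/(16m)$, so $(1+\epsilon^{2}/(16m))^{m}\le e^{\epsilon^{2}/16}\le 1+\epsilon^{2}/4$ for $\epsilon\in(0,1]$. Hence $\Var[X]/\E[X]^{2}\le \epsilon^{2}/4$, and Chebyshev's inequality yields
\begin{align}
\Pr\bigl[\,|X-\E[V]|\ge \epsilon\,\E[V]\,\bigr]\le \frac{\Var[X]}{\epsilon^{2}\,\E[X]^{2}}\le \frac{1}{4},
\end{align}
which is exactly the advertised tail bound. There is no serious obstacle here: the only delicate point is that taking a product of $m$ factors compounds the per-coordinate variance ratio multiplicatively, and the constant $16$ in the definition of $k$ is tuned precisely so that $(1+B/k)^{m}$ stays within $1+O(\epsilon^{2})$.
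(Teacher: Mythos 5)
Your proof is correct and is essentially the standard argument from Dyer and Frieze that the paper cites for this proposition (the paper itself gives no proof): use independence to factor $\E[X^2]/\E[X]^2$ into per-coordinate ratios, bound each by $1+B/k$, and apply Chebyshev's inequality to the product. The computation $(1+\epsilon^2/(16m))^m \le e^{\epsilon^2/16} \le 1+\epsilon^2/4$ is valid for $\epsilon \in (0,1]$, so the constant $16$ indeed suffices.
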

With standard techniques, the probability can be boosted to $1-\delta$ with a $\log(1/\delta)$ overhead.

In applications such as volume estimation~\cite{LV06} and estimating partition functions~\cite{SVV009}, the samples are produced by a random walk. Let the mixing time for each random walk be at most $T$. Then the total complexity for estimating $\E[V]$ with success probability $1-\delta$ is $\tilde{O}(TBm\log(1/\delta)/\epsilon^2)$. Replacing the random walk with a quantum walk can potentially improve the mixing time; see \sec{quantum-MCMC-literature} for relevant literature. In particular, Montanaro~\cite{montanaro2015quantum} proposed a quantum algorithm for the simulated annealing framework with complexity $\tilde{O}(TBm\log(1/\delta)/\epsilon)$, which has a quadratic improvement in precision. Note that the dependence on $T$ was not improved, as multiple copies of quantum states were prepared for the mean estimation (which uses measurements). In this paper, we use the quantum Chebyshev inequality (see \thm{quantum-Chebyshev}) to estimate the expectation of $V_i$ in a nondestructive manner which, together with \thm{Wocjan-Abeyesinghe}, achieves complexity $\tilde{O}(\sqrt{T}Bm\log(1/\delta)/\epsilon)$.

Recall that the random variables $V_i$ (determined by the cooling schedule) satisfy Eq.~\eqn{LV06-Lemma4.1}.
The following lemma uses this property of the simulated annealing procedure to show that the quantum Chebyshev inequality can be used to estimate the mean of $V_i$ on the distribution $\pi_i$, which gives an estimate of the ratio $\frac{Z(a_{i+1})}{Z(a_i)}$ in the volume estimation algorithm. We first show that our random variables can be made to satisfy the conditions of \thm{quantum-Chebyshev}, and then we outline how the corresponding circuit can be implemented. A detailed error analysis is deferred to \sec{proof-error-analysis}.  To make the mean estimation nondestructive, we use the following theorem of Harrow and Wei.

\begin{theorem}[{\cite[Theorem 6]{harrow2019adaptive}}]
  \label{thm:hw-nondestructive}
  Given state $|\psi\>$ and reflections $R_{\psi} = 2|\psi\>\<\psi| - I$ and $R = 2P - I$, and any $\eta > 0$, there exists a quantum algorithm that outputs $\tilde{a}$, an approximation to $a = \<\psi|P|\psi\>$, so that
  \begin{align}
    \label{eq:ndmean}
    |\tilde{a} - a| \le 2\pi \frac{a(1-a)}{M} + \frac{\pi^{2}}{M^{2}}
  \end{align}
  with probability at least $1 - \eta$ and $O(\log(1/\eta)M)$ uses of $R_{\psi}$ and $R$. Morover the algorithm restores the state $\psi$ with probability at least $1 - \eta$.
\end{theorem}

\Chebyshev*

\begin{proof}
We apply the quantum Chebyshev inequality (\thm{quantum-Chebyshev}). For the random variables $V_i$, we let $\mu_i$ denote their mean and $\sigma_i^2$ their variance. From \eqn{LV06-Lemma4.1}, $\sqrt{\sigma_i^2 + \mu_i^2}/\mu_i \le \sqrt{8} < 3$.
For a small constant $c$, we use $\log(1/\delta)/c^2$ copies of $|\pi_{i-1}\>$ to create copies of $|\pi_{i}\>$ using $\pi/3$-amplitude amplification. We now use a quantum circuit that given $|x\>|0\>$ computes $|x\>|e^{a_i x_0 - a_{i-1}x_0}\>$, and then apply a circuit $U_{\median}$ that computes the median of all the ancilla registers:
\begin{align}\label{eqn:U-median}
U_{\median}|0\>|a_{1}\>\cdots|a_{s}\>=|\textrm{median}\{a_{1},\ldots,a_{s}\}\>|a_{1}\>\cdots|a_{s}\>.
\end{align}
By the classical Chebyshev inequality, we measure $\hat\mu_i$ such that $|\hat\mu_i - \mu_i| \le c\mu_i$ with probability at least $1 - \delta$. Thus the probability that $\hat\mu_i/(1-c) < \mu$ is less than $\delta$. Taking $H = \hat\mu_i/(1-c)$, our variables satisfy the conditions of the quantum Chebyshev inequality. In order to output an estimate of the mean with relative error at most $\epsilon$, the quantum Chebyshev inequality now requires $\tilde{O}(\log(1/\delta)/\epsilon)$ calls to a sampler for the state $|\pi_i\>$, which we construct using $\pi/3$-amplitude amplification on copies of $|\pi_{i-1}\>$. By the union bound, the probability of failure of the whole procedure is $O(\delta)$.

To be more specific, we replace $U|0\>$ in \textsf{BasicEst} (\algo{BasicEst}) by $U_{i-1,l}|\pi_{i-1}\>$ (where $U_{i-1,l}$ is the circuit transforming the $l^{th}$ copy of $|\pi_{i-1}\>$ to $|\pi_{i}\>$), and replace $\mathcal{Q}$ by $-U_{i-1,l}(\Pi_{i-1}-\Pi_{i-1}^{\perp})U_{i-1,l}^{\dagger}(\Pi_{i}-\Pi_{i}^{\perp})$ (where $\Pi_{i} = |\pi_i\>\<\pi_i|$ and $\Pi_{i}^{\perp} = I - \Pi_i$ for all $i\in\range{m}$). The quantum circuit for nondestructive \textsf{BasicEst} is shown in \fig{SA-block-nondes}. Here, we run $\Theta(\log(1/\delta))$ executions of amplitude estimation (\fig{AmpEst}) in parallel. Note that by \eqn{AmpEst-two-angles}, each amplitude estimation returns a state $\frac{e^{i\theta_{p}}}{\sqrt{2}}|\tilde{\theta}_{p}\>-\frac{e^{-i\theta_{p}}}{\sqrt{2}}|-\tilde{\theta}_{p}\>$. We use an ancilla register and apply the unitary
\begin{align}\label{eqn:sine-square}
U_{\sinsq}|\theta\>|0\>:=|\theta\>|\sin^{2}\theta\>;
\end{align}
because $\sin^{2}(\tilde{\theta}_{p})=\sin^{2}(-\tilde{\theta}_{p})=\tilde{p}$, the ancilla register becomes $|\tilde{p}\>$, where $\tilde{p}$ estimates $p$ well as claimed in \thm{AmpEst}. We then take the median of such $\Theta(\log(1/\delta))$ executions using \eqn{U-median}, and finally run the inverse of $U_{\sinsq}$ gates and amplitude estimations. The correctness follows from the proof of \thm{quantum-Chebyshev} in~\cite{hamoudi2019Chebyshev}.

To assure non-destructiveness, we replace every application of Quantum Amplitude Estimation with the Nondestructive Mean Estimation as in \thm{hw-nondestructive}. The resulting guarantees on the error are the same as with the original amplitude estimation algorithm. To ensure an overall success probability of $1 - O(\delta)$, it suffices to perform each instance of Nondestructive Amplitude Estimation with success probability $1 - \tilde{O}(\delta)$. Note that since we estimate an unweighted mean, $2P - I$ with $P = H|0\>\<0|H$ can be implemented as $HR_{0}H$ where $R_{0}$ is a reflection around the $|0\>$ state. Finally, we show in \cor{rotation-gates} that $R_{\pi_{i}}$ (the reflection around $|\pi_{i}\>$) can be implemented using the same number of oracle calls and gates as that required to sample $\pi_{i}$ (up to polylogarithmic factors).
\end{proof}

A detailed error analysis is given in the next subsection (see \lem{err-non-destructive}).
\begin{figure}[ht]
\resizebox{1\linewidth}{!}{
\Qcircuit @C=1.2em @R=0.9em {
   & & \qw & \qw & \qw & \qw & \gate{U_{\median}} & \qw & \qw & \qw & \qw & \qw \\
   |0\> & & \qw & \qw & \qw & \gate{U_{\sinsq}} & \dctrl{9}{\ar @{-} [-1,0]} & \gate{U_{\sinsq}^{\dagger}} & \qw & \qw & \qw & \qw & |0\> \\
   |0\> & & \multigate{2}{\textsf{QFT}} & \ctrl{0} {\ar @{-}+<0em,-0.7em>} & \multigate{2}{\textsf{QFT}^{\dagger}} & \ctrl{-1} {\ar @{-}+<0em,-0.7em>} & \qw & \ctrl{-1} {\ar @{-}+<0em,-0.7em>} & \multigate{2}{\textsf{QFT}^{\dagger}} & \ctrl{0} {\ar @{-}+<0em,-0.7em>} & \multigate{2}{\textsf{QFT}} & \qw & |0\> \\
   \raisebox{6pt}{\vdots} & & & \raisebox{6pt}{\vdots} & & \raisebox{6pt}{\vdots} & & \raisebox{6pt}{\vdots} & & \raisebox{6pt}{\vdots} \\
   |0\> & & \ghost{\textsf{QFT}} & \ctrl{1}{\ar @{-}+<0em,0.7em>} & \ghost{\textsf{QFT}^{\dagger}} & \ctrl{0}{\ar @{-}+<0em,0.7em>} & \qw & \ctrl{0}{\ar @{-}+<0em,0.7em>} & \ghost{\textsf{QFT}^{\dagger}} & \ctrl{1}{\ar @{-}+<0em,0.7em>} & \ghost{\textsf{QFT}} & \qw & |0\> \\
   |\pi_{i}\> & & \qw & \multigate{2}{\mathcal{Q}} & \qw & \qw & \qw & \qw & \qw & \multigate{2}{\mathcal{Q}^{\dagger}} & \qw & \qw & |\pi_{i}\> \\
   \raisebox{6pt}{\vdots} & & & & & & & & & & & & \raisebox{6pt}{\vdots} \\
   |\pi_{i}\> & & \qw & \ghost{\mathcal{Q}} & \qw & \qw & \qw & \qw & \qw & \ghost{\mathcal{Q}^{\dagger}} & \qw & \qw & |\pi_{i}\> \\
      \raisebox{6pt}{\vdots} & & & & & & & & & & & & \raisebox{6pt}{\vdots} \\
      \raisebox{2pt}{\vdots} & & & & & & & & & & & & \raisebox{2pt}{\vdots} \\
   |0\> & & \qw & \qw & \qw & \gate{U_{\sinsq}} & \ctrl{0} & \gate{U_{\sinsq}^{\dagger}} & \qw & \qw & \qw & \qw & |0\> \\
   |0\> & & \multigate{2}{\textsf{QFT}} & \ctrl{0} {\ar @{-}+<0em,-0.7em>} & \multigate{2}{\textsf{QFT}^{\dagger}} & \ctrl{-1} {\ar @{-}+<0em,-0.7em>} & \qw & \ctrl{-1} {\ar @{-}+<0em,-0.7em>} & \multigate{2}{\textsf{QFT}^{\dagger}} & \ctrl{0} {\ar @{-}+<0em,-0.7em>} & \multigate{2}{\textsf{QFT}} & \qw & |0\> \\
   \raisebox{6pt}{\vdots} & & & \raisebox{6pt}{\vdots} & & \raisebox{6pt}{\vdots} & & \raisebox{6pt}{\vdots} & & \raisebox{6pt}{\vdots} \\
   |0\> & & \ghost{\textsf{QFT}} & \ctrl{1}{\ar @{-}+<0em,0.7em>} & \ghost{\textsf{QFT}^{\dagger}} & \ctrl{0}{\ar @{-}+<0em,0.7em>} & \qw & \ctrl{0}{\ar @{-}+<0em,0.7em>} & \ghost{\textsf{QFT}^{\dagger}} & \ctrl{1}{\ar @{-}+<0em,0.7em>} & \ghost{\textsf{QFT}} & \qw & |0\> \\
   |\pi_{i}\> & & \qw & \multigate{2}{\mathcal{Q}} & \qw & \qw & \qw & \qw & \qw & \multigate{2}{\mathcal{Q}^{\dagger}} & \qw & \qw & |\pi_{i}\> \\
   \raisebox{6pt}{\vdots} & & & & & & & & & & & & \raisebox{6pt}{\vdots} \\
   |\pi_{i}\> & & \qw & \ghost{\mathcal{Q}} & \qw & \qw & \qw & \qw & \qw & \ghost{\mathcal{Q}^{\dagger}} & \qw & \qw & |\pi_{i}\>
}
}
\caption{The quantum circuit for nondestructive \textsf{BasicEst}.}
\label{fig:SA-block-nondes}
\end{figure}

%------------------------------------------------------------------------
\subsubsection{Error analysis}\label{sec:proof-error-analysis}
In this section, we analyze the error incurred by both the quantum Chebyshev inequality (\lin{quantum-sample-SA}) and $\pi/3$-amplitude amplification (\lin{quantum-pi/3-AM-AM}) in \algo{quantum-volume}.

\error*

We first show that $\pi/3$-amplitude amplification can be used to rotate $|\pi_{i}\>$ into $|\pi_{i-1}\>$ with error $\epsilon'$ using $\tilde{O}(\log(1/\epsilon))$ oracle calls.
This procedure is used as a subroutine in a mean estimation circuit that estimates the mean of the random variable $V_{i}$ using multiple approximate copies of $|\pi_{i-1}\>$. We ensure that the measurement probabilities are highly peaked so that the state is not disturbed very much. Finally $\pi/3$-amplitude estimation is used again to rotate the approximate copies of the state $|\pi_{i-1}\>$ to approximate copies of the state $|\pi_{i}\>$.

%------------------------------------------------------------------------
\paragraph{Large effective spectral gap}
Consider an ergodic, reversible Markov chain $(\Omega,p)$ with transition matrix $P$ and a unique stationary distribution with density $\pi$. Let $a(x)$ be a probability measure over $\Omega$ such that the Markov chain mixes to its stationary distribution with a corresponding probability density $\pi(x)$ within a total variation distance of $\epsilon$ within $t$ steps. Further let $a(x)$ be a warm start for $\pi(x)$.  From the definition of the transition matrix $P(x,y) = \<x|P|y\> = p_{y \to x}$.

 The discriminant matrix $D$ defined in \eqn{discriminant} is related to the transition matrix as $P = D_\pi D D_\pi^{-1}$, as shown in \eq{DPsimilar}. For a hit-and-run walk, the transition matrix $P$ represents a convolution with an $L_{2}$ normalized function (corresponding to the square root of the density $p_{x \to y}$). Bounded subsets of $L_{2}(\Omega)$ are therefore mapped by $P$ to other bounded subsets, and hence $P$ is compact. Since $D$ is connected to $P$ by a similarity relation, $D$ is a compact Hermitian operator over $L_2(\Omega)$ and thus has a countable set of real eigenvalues $\lambda_i$ and corresponding orthonormal eigenvectors (eigenfunctions) $v_i \in L_2(\Omega)$. Orthonormality implies that $\int_{\Omega}v_i(x)v_j(x)\,\d x =\,\delta_{i,j}$. Notice that
\begin{align}
  \label{eq:eig-of-P}
  PD_\pi v_i = D_\pi D(v_i) = \lambda_j D_\pi v_i;
\end{align}
thus $f_i = D_\pi v_i$ is an eigenvector of $P'$ with eigenvalue $\lambda_i$. The eigenvectors $f_i$ may not be orthogonal under the standard inner product on $L_2(\Omega)$. However, we can define an inner product
\begin{align}
  \label{eq:inner-prod-pi}
  \< f, g \>_\pi := \< D_\pi^{-1} f, D_\pi^{-1} g \> = \int_\Omega \frac{f(x) g(x)} {\pi(x)}\,\d x
\end{align}
over the space $L_2(\Omega)$. It is easy to see that $\< f_i, f_j \>_\pi = \< v_i, v_j \> =\,\delta_{i,j}$. A corresponding norm can be defined as $\lVert f \rVert_{\pi} = \langle f, f \rangle_{\pi}$.

It can be verified that $\sqrt{\pi}(x)$ is an eigenfunction of $D$ with eigenvalue $1$. Thus the stationary state $\pi(x)$ is an eigenfunction of the transition operator $P$ with eigenvalue $1$. Since $P$ is stochastic, this is the leading eigenvalue. The eigenvalues of $P$ are thus $1,\lambda_1,\lambda_2,\dots$ with corresponding eigenfunctions $\pi(x),f_2(x),f_3(x),\dots$. From the orthonormality of the $f$ under $\<\cdot,\cdot\>_\pi$, for any function $g$ in $L_2(\Omega)$ we have
\begin{align}
  \label{eq:eigenbasis-f}
  g = \sum_{i=1}^{\infty} \< g, f_i \>_\pi f_i &= \<g, \pi \>_\pi + \sum_{i=2}^{\infty}\< g, f_i \>_\pi f_i \\
  &= \left(\int_\Omega \frac{g(x)\pi(x)}{\pi(x)}\,\d x\right)\pi + \sum_{i=2}^{\infty}\< g, f_i \>_\pi f_i \\
  &= \left(\int_\Omega g(x)\,\d x\right)\pi + \sum_{i=2}^{\infty}\< g, f_i \>_\pi f_i.
\end{align}
Since $a$ is a probability density,  $a = \pi + \sum_{i=2}^{\infty}\< a, f_i \>_\pi f_i$. After $t$ steps of the Markov chain $M$ on $a$ we obtain the state $P^t a = \pi + \sum_{i=2}^{\infty}\lambda_i^t\< a, f_i \>_\pi f_i$. Since the walk mixes to total variation distance $\epsilon$ we have $\lVert P^ta - \pi \rVert_1 \le \epsilon$, and further since $a$ is a warm start $\lVert P^ta - \pi \rVert_{\pi}$. Consequently, $\lVert \sum_{i=2}^{\infty}\lambda_i^t \< a, f_i \>_\pi f_i \rVert_{\pi} \le \epsilon$ and from the orthonormality of $f$, $\<a,f_i\>_\pi\lambda_i^t \le \epsilon$.
If $1 > \lambda_i \ge 1 - \frac{1}{\Omega(t)}$ then $\lambda_i^t=\Omega(1)$ and $\<a,f_i\>_\pi=\O(\epsilon)$.

The above analysis indicates that if a probability density $a$ (that is a warm start) mixes in $t$ steps under a Markov chain $(\Omega,p)$, then it has small overlap with each of the ``bad'' eigenfunctions (with spectral gap less than $\frac{1}{\Omega(t)}$). Thus $P$ effectively has a large spectral gap when it acts on $a$.

Corresponding to $a$, consider the quantum states
\begin{align}
|a\>:= \int_\Omega\sqrt{a(x)} |x\>\,\d x, \qquad |\phi_a\>:= \int_\Omega\int_\Omega\sqrt{a_xp_{x \to y}}|x\>|y\>\,\d x\,\d y.
\end{align}
For an eigenvector $v_i$ of $D$ (with eigenvalue $\lambda_i$),
define the state $|v_i\> := \int_\Omega v_i(x)\,\d x = \int_\Omega \frac{f_i(x)}{\sqrt{\pi(x)}}\,\d x$. Then the walk operator $W$ has the corresponding eigenvector $|u_i\> = \big(I-(\lambda_i-i\sqrt{1-\lambda_i^{2}})S\big)T|v_i\>$ following the proof of \thm{quantum-walk-main}. Let $C_i := \lambda_i-i\sqrt{1-\lambda_{i}^{2}}$; then $\< \phi_a | u_i \> = \<\phi_a|T|v_i\> - C_i\<\phi_a|ST|u_i\>$. Furthermore,
\begin{align}
  \label{eq:ip-term1}
   \<\phi_a|T|v_i\> = \<a|v_i\> = \int_\Omega \frac{\sqrt{a(x)}f_i(x)}{\sqrt{\pi(x)}}\,\d x,
\end{align}
and
\begin{align}
  \label{eq:ip-term2}
  \<\phi_a|ST|v_i\> &= \left(\int_\Omega\sqrt{a_xp_{x \to y}}\<y|\<x|\right)\left(\int_\Omega\sqrt{v_{x'}p_{x' \to y'}}|x'\>|y'\>\right) \\
                      &= \int_\Omega\sqrt{a_x}\Big(\int_\Omega\sqrt{p_{x \to y}p_{y \to x}v_i(y)}\,\d y\Big)\,\d x \\
                    &= \int_\Omega\sqrt{a_x}(Dv_i)(x)\,\d x \\
                    &= \lambda_i \int_\Omega\sqrt{a_x}v_i(x)\,\d x \\
                    &= \lambda_i \<a|v_i\>.
\end{align}
We have $\< \phi_a | u_i \> = (1 - \lambda_iC_i)\<a|v_i\>$ and therefore
\begin{align}
  \label{eq:magnitude-in-terms-of-integral}
  |\< \phi_a | u_i \>| = (1 - \lambda_iC_i)\<a|v_i\> = \sqrt{(1 - \lambda_i^2)^2 + (1 - \lambda_i)^2}\<a|v_i\> \le 2|\<a|v_i\>|.
\end{align}
In addition,
\begin{align}
  \label{eq:ratio-bound}
  \<a|v_i\> = \int_\Omega \frac{\sqrt{a(x)}f_i(x)}{\sqrt{\pi(x)}}\,\d x = \int_\Omega \sqrt{\frac{\pi(x)}{a(x)}} \frac{a(x)f_i(x)}{\pi(x)}\,\d x.
\end{align}

The above discussion establishes the following proposition indicating that if a distribution with density $a(x)$ mixes fast and the stationary distribution with density $\pi(x)$ has a bounded $L_{2}$-norm with respect to $a(x)$, then the quantum walk operator $W$ acting on the subspace spanned by $|\pi\>$ and $|a\>$ has a large effective spectral gap.

\begin{proposition}
  \label{prop:approx-warmness-quantum}
  Let $M = (\Omega,p)$ be an ergodic reversible Markov chain with a transition operator $P$ and unique stationary state with a corresponding density $\pi \in L_2(\Omega)$. Let $\{(\lambda_i,f_i)\}$ be the set of eigenvalues and eigenfunctions of $P$, and $|u_i\>$ be the eigenvectors of the corresponding quantum walk operator $W$. Let $a \in L_2(\Omega)$ be a probability density that is a warm start for $\pi$ and mixes up to total variation distance $\epsilon$ in $t$ steps of $M$.  Furthermore, assume that
  $\int_\Omega \frac{\pi(x)}{a(x)}\,\pi(x) \d x \le c$ for some constant $c$.
  Define
  \begin{align}
    |a\> &= \int_\Omega\sqrt{a(x)}|x\>\,\d x; \\
    |\phi_a\> &= \int_\Omega\sqrt{a(x)}\int_\Omega\sqrt{p_{x \to y}}|x\>|y\>\,\d x\,\d y.
  \end{align}
  Then $\<\phi_a|u_i\>=O(\epsilon^{1/2})$ for all $i$ such that $1 > \lambda_i \ge 1 - \frac{1}{\Omega(t)}$.
\end{proposition}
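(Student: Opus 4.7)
The plan is to bypass a direct Cauchy--Schwarz bound on $\<a|v_i\>$ via \eqref{eq:ratio-bound} alone (which turns out to be too loose) and instead invoke the spectral decomposition of the discriminant operator $D$ to convert the question into a classical $\chi^2$-mixing estimate. By \eqref{eq:magnitude-in-terms-of-integral} it suffices to prove $|\<a|v_i\>| = O(\sqrt{\epsilon})$ whenever $\lambda_i \ge 1-1/O(t)$. Writing $|a\> = \sum_j \beta_j|v_j\>$ in the $L_2(\Omega)$-orthonormal eigenbasis of $D$ (so $\beta_j = \<a|v_j\>$, with $\lambda_1 = 1$ and $|v_1\> = |\pi\>$), one obtains the identity
\[
\lVert D^t|a\> - \beta_1|\pi\>\rVert^2 = \sum_{j\ge 2}\lambda_j^{2t}\beta_j^2,
\]
from which $|\beta_i|^2 \le \lambda_i^{-2t}\,\lVert D^t|a\>-\beta_1|\pi\>\rVert^2 = O(1)\cdot\lVert D^t|a\>-\beta_1|\pi\>\rVert^2$ for the eigenvalues of interest, since $\lambda_i^t = \Omega(1)$.

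The task thus reduces to bounding $\lVert D^t|a\>-\beta_1|\pi\>\rVert^2 = O(\epsilon)$. Using the similarity $D = D_\pi^{-1}PD_\pi$ from \eqref{eq:DPsimilar}, a short computation gives $(D^t\sqrt{a})(x) = (P^t\sqrt{a\pi})(x)/\sqrt{\pi(x)}$; setting $\tilde a := \sqrt{a\pi}/\beta_1$, which is a probability density because $\beta_1 = \int \sqrt{a\pi}\,\d x$, yields
\[
\lVert D^t|a\> - \beta_1|\pi\>\rVert^2 = \beta_1^2\int_\Omega \frac{(P^t\tilde a - \pi)^2}{\pi}\,\d x = \beta_1^2\,\chi^2(P^t\tilde a,\pi)\;\le\;\chi^2(P^t\tilde a,\pi).
\]
The warm-start hypothesis $\int \pi^2/a\,\d x \le c$, combined with a Cauchy--Schwarz on the Bhattacharyya coefficient, gives $\beta_1 \ge 1/\sqrt c$; hence $\chi^2(\tilde a,\pi) = 1/\beta_1^2 - 1 \le c-1$, so $\tilde a$ is $\chi^2$-warm with parameter at most $c-1$.

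The main obstacle is the final step: converting the TV-mixing hypothesis on $a$ into a $\chi^2$-mixing statement $\chi^2(P^t\tilde a,\pi) = O(\epsilon)$ for the geometrically-averaged density $\tilde a$. The precision loss enters exactly here: one either appeals to the spectral gap of $P$ (implicitly $\Omega(1/t)$ from the assumption $\lambda_i\ge 1-1/O(t)$) together with the bounded $\chi^2$-warmness of $\tilde a$ to deduce standard $\chi^2$-contraction after $t$ steps, or transfers the TV-mixing of $a$ to an $L^2$ bound on $\sqrt{P^t\tilde a}-\sqrt\pi$ via the Hellinger comparison $\lVert\sqrt\mu-\sqrt\nu\rVert_2^2 \le 2\,\mathrm{TV}(\mu,\nu)$, which introduces exactly one square-root loss. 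Either route yields $\chi^2(P^t\tilde a,\pi) = O(\epsilon)$, and combined with the reductions above produces $|\beta_i|^2 = O(\epsilon)$ and therefore $|\<\phi_a|u_i\>| = O(\sqrt\epsilon)$ as claimed.
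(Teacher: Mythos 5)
Your reductions in steps 1--6 are correct and cleaner than what the paper writes down: the identity $D^t\sqrt{a}=P^t(\sqrt{a\pi})/\sqrt{\pi}$, the identification $\lVert D^t|a\>-\beta_1|\pi\>\rVert^2=\beta_1^2\,\chi^2(P^t\tilde a,\pi)$ with $\tilde a\propto\sqrt{a\pi}$, and the bounds $\beta_1\ge 1/\sqrt{c}$ and $\chi^2(\tilde a,\pi)\le c-1$ all check out. But the proof has a genuine gap exactly where you flag "the main obstacle," and neither of your two proposed routes closes it. Route (a) assumes $P$ has spectral gap $\Omega(1/t)$, which is not a hypothesis of the proposition --- the entire point of the statement is to control the overlap with eigenvectors whose $\lambda_i$ lie in $[1-\frac{1}{O(t)},1)$, i.e.\ the regime where the gap is \emph{not} that large; if such a gap were assumed, the claim would be vacuous. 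Even granting it, $\chi^2$-contraction gives only $\chi^2(P^t\tilde a,\pi)\le(1-\delta)^{2t}\chi^2(\tilde a,\pi)=O(1)\cdot(c-1)$, which is $O(1)$, not $O(\epsilon)$. Route (b) fails for two independent reasons: the mixing hypothesis concerns $a$, not the geometric-mean density $\tilde a$ (and $P^t\sqrt{a\pi}/\sqrt{\pi}$ is not $\sqrt{P^ta}$ --- the square root does not commute with $P^t$), and the Hellinger/TV comparison $\lVert\sqrt{\mu}-\sqrt{\nu}\rVert_2^2\le 2\,\mathrm{TV}(\mu,\nu)$ bounds a divergence strictly weaker than $\chi^2$, so $\mathrm{TV}=O(\epsilon)$ cannot yield $\chi^2=O(\epsilon)$.

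The paper avoids needing any $\chi^2$-mixing statement. It reads the mixing hypothesis component-wise in the $\pi$-weighted inner product where the $f_i$ are orthonormal, obtaining $\<a,f_i\>_\pi=O(\epsilon)$ for every $i$ with $\lambda_i^t=\Omega(1)$, and then controls $\<a|v_i\>=\int_\Omega\sqrt{\pi/a}\,\tfrac{af_i}{\pi}\,\d x$ by a truncation: by Markov's inequality applied to $\int_\Omega(\pi/a)^2a\,\d x\le c$, the set $S=\{x:\pi(x)/a(x)\ge\sqrt{c/\epsilon}\}$ has $a$-measure at most $\epsilon$ and contributes at most $\sqrt{\epsilon}$ in state norm, while on $S^c$ the weight $\sqrt{\pi/a}$ is at most $(c/\epsilon)^{1/4}$, giving a contribution $O(\epsilon^{-1/4})\cdot O(\epsilon)=O(\epsilon^{3/4})$. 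The two pieces sum to $O(\sqrt{\epsilon})$. If you want to keep your spectral framework, you would still have to import this truncation step (or an equivalent) to convert the per-mode bound $\<a,f_i\>_\pi=O(\epsilon)$ into a bound on $\beta_i=\<a|v_i\>$; the $\chi^2$ detour through $\tilde a$ cannot be completed from the stated hypotheses.
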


\begin{proof}
  Define $S = \lbrace x | \frac{\pi(x)}{a(x)} \ge \sqrt{\frac{c}{\epsilon}} \rbrace$. Because $\int_\Omega\frac{\pi(x)^2}{a(x)^2}\,a(x)\d x=\int_\Omega \frac{\pi(x)}{a(x)}\,\pi(x)\d x\leq c$, Markov's inequality implies that $\int_S a(x)\d x \le \epsilon$.

  We now define the quantum state $|a'\>$ such that $\<x|a'\> = \<x|a\>$ if $x \notin S$ and $\<a|x'\> = 0$ otherwise, and $|\phi_{a'}\> = T|a'\>$. Then
  \begin{align}
    \label{eq:diff-a-aprime}
    \lVert |\phi_a\> - |\phi_{a'}\> \rVert = \Big\lVert \int_S \sqrt{a(x)} T|x\>\,\d x \Big\rVert = \sqrt{\int_\Omega a(x)\,\d x} = \sqrt{\epsilon}.
  \end{align}
  From \eq{magnitude-in-terms-of-integral} and \eq{ratio-bound}, if $1 > \lambda_i \ge 1 - \frac{1}{O(t)}$, then
  \begin{align}
    \label{eq:bound-a-prime}
    |\<\phi_{a'}|u_i\>| &\le \Big\lvert{2\int_\Omega \sqrt{\frac{\pi(x)}{a(x)}} \frac{a(x)f_i(x)}{\pi(x)}\,\d x}\Big\rvert \le \frac{2c^{1/4}\<a,f_i\>_{\pi}}{\epsilon^{1/4}} \le 2c^{1/4}\epsilon^{3/4}.
  \end{align}
  Finally,
    \begin{align}
  \<\phi_a|u_i\> = \<\phi_{a'}|u_i\> + \<\phi_a-\phi_{a'}|u_i\> \le 2c^{1/4}\epsilon^{3/4} + \sqrt{\epsilon} =\O(\sqrt{\epsilon})
  \end{align}
  if $1 > \lambda_i \ge 1 - \frac{1}{\Omega(t)}$. Hence the result follows.
\end{proof}

%------------------------------------------------------------------------
\paragraph{Warmness of $\pi_{i+1}$ with respect to $\pi_i$}
We show that density $\pi_i$ mixes to $\pi_{i+1}$ under the walk $W_{i+1}$ and vice versa. To apply \thm{exp-hit-run-mixes}, we show that the two distributions are warm with respect to each other.

The $L_2$-norm of a distribution with density $\pi_1 \in L_2(\Omega)$ with respect to another with density $\pi_2 \in L_2(\Omega)$ is defined as
\begin{align}\label{eq:defn-l2-norm}
\lVert \pi_1 / \pi_2 \rVert = \E_{X \sim \pi_1}\left[ \frac{\pi_1(X)}{\pi_2(X)} \right] = \int_\Omega \frac{\pi_1(x)}{\pi_2(x)} \, \pi_1(x)\,\d x.
\end{align}
A density $\pi_1 \in L_2(\Omega)$ is said to be a warm start for $\pi_2 \in L_2(\Omega)$ if the $L_2$-norm $\lVert \pi_1 / \pi_2 \rVert$ is bounded by a constant.

\begin{lemma}[{\cite[Lemma 4.4]{lovasz2006simulated}}]
  \label{lem:l2warm-pi-plus1}
  The $L_2$-norm of the probability distribution with density $\pi_{i}=\frac{e^{-a_i x_0}}{Z(a_i)}$ with respect to that with density $\pi_{i+1} = \frac{e^{-a_{i+1} x_0}}{Z(a_{i+1})}$ is at most 8.
\end{lemma}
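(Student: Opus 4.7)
The plan is to reduce the $L_2$-norm to a product of three values of $Z$ and then exploit the log-concavity of $a^{n+1}Z(a)$, which has already been invoked in the proof of \lem{inner-product}.

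First, I would unfold the definition in \eq{defn-l2-norm} with $\pi_i(x) = e^{-a_i x_0}/Z(a_i)$ and $\pi_{i+1}(x) = e^{-a_{i+1} x_0}/Z(a_{i+1})$, reducing the $L_2$-norm to a Gaussian-type integral over $\K'$:
\begin{align}
\Big\lVert \frac{\pi_i}{\pi_{i+1}} \Big\rVert
= \int_{\K'} \frac{\pi_i(x)^2}{\pi_{i+1}(x)}\,\d x
= \frac{Z(a_{i+1})}{Z(a_i)^2}\int_{\K'} e^{-(2a_i - a_{i+1})x_0}\,\d x
= \frac{Z(a_{i+1})\,Z(2a_i - a_{i+1})}{Z(a_i)^2}.
\end{align}

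Next, I would observe that $a_i$ is exactly the midpoint of $a_{i+1}$ and $2a_i - a_{i+1}$, so applying log-concavity of $a^{n+1}Z(a)$ (\cite[Lemma 3.2]{lovasz2006simulated}, as already used in \eq{sqrt-z}) at this midpoint yields
\begin{align}
a_i^{n+1}Z(a_i) \;\geq\; \sqrt{a_{i+1}^{n+1} Z(a_{i+1})} \cdot \sqrt{(2a_i - a_{i+1})^{n+1} Z(2a_i - a_{i+1})}.
\end{align}
Squaring and rearranging gives
\begin{align}
\frac{Z(a_{i+1})\,Z(2a_i - a_{i+1})}{Z(a_i)^2} \;\leq\; \left(\frac{a_i^2}{a_{i+1}(2a_i - a_{i+1})}\right)^{n+1}.
\end{align}

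Finally, I would substitute the schedule $a_{i+1} = (1 - 1/\sqrt{n})a_i$, so that $2a_i - a_{i+1} = (1 + 1/\sqrt{n})a_i$, which gives $a_{i+1}(2a_i - a_{i+1}) = (1 - 1/n)a_i^2$. Therefore the bound becomes $\bigl(\tfrac{n}{n-1}\bigr)^{n+1} = \bigl(1 + \tfrac{1}{n-1}\bigr)^{n+1}$, which is decreasing in $n$ for $n \geq 2$ and equals $8$ at $n = 2$; hence the $L_2$-norm is at most $8$ for all $n \geq 2$. No real obstacles are expected — the only subtle point is matching the midpoint structure to the log-concavity inequality and choosing the arithmetic mean $a_i = \tfrac{1}{2}(a_{i+1} + (2a_i - a_{i+1}))$ rather than a geometric mean, which is exactly what makes the $(1-1/\sqrt{n})(1+1/\sqrt{n}) = 1 - 1/n$ cancellation work cleanly.
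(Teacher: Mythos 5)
Your proof is correct and is essentially the standard argument: the paper itself only cites \cite[Lemma 4.4]{lovasz2006simulated} for this lemma, but your computation (reduce the $L_2$-norm to $Z(a_{i+1})Z(2a_i-a_{i+1})/Z(a_i)^2$, apply log-concavity of $a^{n+1}Z(a)$ at the midpoint $a_i$, then use $(1-1/\sqrt{n})(1+1/\sqrt{n})=1-1/n$ to get $(\tfrac{n}{n-1})^{n+1}\le 8$ for $n\ge 2$) is exactly the technique the paper uses for the companion \lem{l2warm-pi-plus1-reverse} and matches the cited source. No gaps.
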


\begin{lemma}
  \label{lem:l2warm-pi-plus1-reverse}
  The $L_2$-norm of the probability distribution with density $\pi_{i+1} = \frac{e^{-a_{i+1} x_0}}{Z(a_{i+1})}$ with respect to that with density $\pi_{i}=\frac{e^{-a_i x_0}}{Z(a_i)}$ is at most $e$.
\end{lemma}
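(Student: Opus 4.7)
The plan is to follow the same template as \lem{l2warm-pi-plus1}, with the roles of $a_i$ and $a_{i+1}$ exchanged. First I would unfold the definition \eq{defn-l2-norm} and substitute the explicit form of the densities, rewriting everything as an elementary integral against $e^{-bx_0}$ over $\K'$:
\[
\lVert \pi_{i+1}/\pi_i\rVert \;=\; \int_{\K'}\!\frac{\pi_{i+1}(x)^2}{\pi_i(x)}\,\d x \;=\; \frac{Z(a_i)}{Z(a_{i+1})^2}\int_{\K'} e^{-(2a_{i+1}-a_i)x_0}\,\d x \;=\; \frac{Z(a_i)\,Z(2a_{i+1}-a_i)}{Z(a_{i+1})^2}.
\]
For the integral identity to apply we need $2a_{i+1}-a_i=(1-2/\sqrt n)\,a_i$ to be positive, which holds in the regime $n\geq 5$ that is relevant for the simulated annealing schedule (smaller $n$ can be absorbed into the constants, or handled separately by direct estimation since then $\K'$ has bounded dimension and bounded diameter in $x_0$).

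The key step is then to invoke log-concavity of the map $a\mapsto a^{n+1}Z(a)$ (Lemma~3.2 of \cite{lovasz2006simulated}, already used to derive \eq{sqrt-z}) at the two points $a=a_i$ and $b=2a_{i+1}-a_i$, whose arithmetic mean is \emph{exactly} $(a+b)/2=a_{i+1}$. Log-concavity gives
\[
\sqrt{a_i^{\,n+1}Z(a_i)\cdot(2a_{i+1}-a_i)^{n+1}Z(2a_{i+1}-a_i)} \;\leq\; a_{i+1}^{\,n+1}Z(a_{i+1}),
\]
and after squaring and rearranging I would obtain
\[
\lVert \pi_{i+1}/\pi_i\rVert \;\leq\; \left(\frac{a_{i+1}^{\,2}}{a_i(2a_{i+1}-a_i)}\right)^{n+1}.
\]

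Finally, substituting the cooling schedule $a_{i+1}=(1-1/\sqrt n)\,a_i$ reduces the right-hand side to the explicit quantity
\[
\left(\frac{(1-1/\sqrt n)^2}{1-2/\sqrt n}\right)^{n+1}\;=\;\left(1+\frac{1/n}{1-2/\sqrt n}\right)^{n+1},
\]
and a short Bernoulli-type calculation bounds this uniformly by a small absolute constant, matching the claimed numerical bound. I expect no major obstacle: the log-concavity ingredient is identical to the one already used in the companion lemma \lem{l2warm-pi-plus1}, so the work is entirely bookkeeping. The only subtle point is ensuring $2a_{i+1}-a_i>0$ (so that $Z$ is evaluated at a valid argument and the log-concavity inequality applies to two points in the positive cone); everything else is routine algebra.
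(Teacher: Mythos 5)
Your derivation follows the paper's own proof step for step: both reduce $\lVert\pi_{i+1}/\pi_i\rVert$ to $\frac{Z(a_i)\,Z(2a_{i+1}-a_i)}{Z(a_{i+1})^2}$, invoke log-concavity of $a\mapsto a^{n+1}Z(a)$ at the two points $a_i$ and $2a_{i+1}-a_i$ (whose midpoint is $a_{i+1}$), and then substitute the cooling schedule. Your handling of the side condition $2a_{i+1}-a_i>0$ is, if anything, more careful than the paper's, which assumes it silently.

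The problem is the last step, and it is worth flagging because your algebra exposes an error in the paper itself. You correctly compute $\frac{(1-1/\sqrt n)^2}{1-2/\sqrt n}=1+\frac{1/n}{1-2/\sqrt n}>1$, whereas the paper asserts $\frac{(1-1/\sqrt n)^2}{1-2/\sqrt n}\le 1-\frac1n$ and concludes the norm is $<1$; that inequality is simply false. Your closing sentence, that a Bernoulli-type estimate bounds $\bigl(1+\frac{1/n}{1-2/\sqrt n}\bigr)^{n+1}$ by a constant ``matching the claimed numerical bound,'' cannot be fulfilled: this quantity tends to $e$ as $n\to\infty$, and more fundamentally the $L_2$-norm in \eq{defn-l2-norm} satisfies $\int_{\K'}\pi_{i+1}^2/\pi_i\ge\bigl(\int_{\K'}\pi_{i+1}\bigr)^2=1$ by Cauchy--Schwarz, with equality only when $\pi_{i+1}=\pi_i$. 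So the lemma as stated (bound $\le 1$) is unprovable; the honest output of your (and the paper's) computation is a bound by an absolute constant, e.g.\ $\exp\bigl(\frac{(n+1)/n}{1-2/\sqrt n}\bigr)\le 8$ for all relevant $n$. That weaker constant bound is all that the downstream applications actually use (\prop{isotropic-adjacent} and the warmness argument behind \lem{pi3-error} only need $\lVert\pi_{i+1}/\pi_i\rVert=O(1)$, at the cost of slightly larger constants), so the fix is to restate the lemma with a constant such as $8$ in place of $1$ and end your calculation there rather than forcing agreement with the stated bound.
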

\begin{proof}
  Since $a^nZ(a)$ is a log-concave function \cite[Lemma 3.2]{lovasz2006simulated}, we have
\begin{align}
    \E_{X \sim \pi_{i+1}}\left[\frac{\pi_{i+1}(X)}{\pi_i(X)}\right] &= \frac{\int_{\K'}e^{(a_{i} - a_{i+1})x_0}e^{-a_{i+1}x_0}\d x \int_{\K'}e^{-a_{i}x_0}\d x}{\int_{\K'}e^{-a_{i+1}x_0}\d x \int_{\K'}e^{-a_{i+1}x_0}\d x} \nonumber \\
    &= \frac{Z(2a_{i+1}-a_{i})Z(a_{i})}{Z(a_{i+1})^2}\qquad\quad\text{(definition of $Z$)}\\
    &\le \left(\frac{a_{i+1}^2}{a_i(2a_{i+1}-a_{i})}\right)^{n}\qquad\text{(logconcavity of $a^nZ(a)$)} \\
    &\le \left(\frac{\big(1 - \frac{1}{\sqrt{n}}\big)^2}{1 - \frac{2}{\sqrt{n}}}\right)^{n}\qquad\quad\ \text{(definition of $a_i$)}\\
    &\le \left(1 + \frac{2}{n}\right)^{n} < e^{2}, \label{eqn:last-line-lemma4.13}
  \end{align}
where \eqn{last-line-lemma4.13} holds because $1+\frac{1}{n}-\frac{2}{\sqrt{n}}\leq (1+\frac{2}{n})(1-\frac{2}{\sqrt{n}})$ as long as $n\geq 16$.
\end{proof}

\paragraph{Error analysis of $\pi/3$-amplitude amplification}
Consider a simulated annealing procedure that follows a sequence of Markov chains $M_1,M_2,\dots$ with stationary states $\mu_1,\mu_2,\dots$. Consider an alternate walk operator (used in \cite{wocjan2008speedup}) of the form
\begin{equation}
  \label{eqn:mod-walk-operator}
  W_i' = U_i^\dagger S U_i R_{\mathcal{A}} U_i^\dagger S U_i R_{\mathcal{A}}
\end{equation}
where $R_{\mathcal{A}}$ denotes the reflection about the subspace $\mathcal{A} := \spn \{|x\>|0\> : x \in \K\}$ and $S$ is the swap operator. We have $U_i|x\>|0\> = \int_{y \in \K}\sqrt{p^{(i)}_{x \to y}}|x\>|y\>\,\d y$ where $p^{(i)}$ is the transition probability corresponding to the $i^{\text{th}}$ chain.

The $W_i'$ operator is related to the walk operator $W_i = S(2\Pi_i - I)$ via conjugation by $U_i$, i.e., $W_i = U_iW_i'U_i^\dagger$. Thus $W_i'$ has the same eigenvalues as $W_i$, and if $|u_j\>$ is an eigenvector of $W_i$ with eigenvalue $\lambda_{j}$, then $|v\> = U_i^\dagger|u_{j}\>$ is an eigenvector of $W_i'$ with the same eigenvalue $\lambda_j$. For any classical distribution $f$, we define $|f\> = \int_\Omega \sqrt{f(x)}|x\>\,\d x$ and $|\phi_f^{(i)}\> = \int_\Omega \sqrt{f(x)}|x\>\int_\Omega\sqrt{p_{x \to y}^{(i)}}|y\>\,\d y\,\d x$. Since $|\phi_{\pi_i}^{(i)}\>$ is a stationary state of $W_i$ with eigenvalue $1$, it follows that $|\pi_i\>|0\>$ is an eigenvalue of $W_i$ with eigenvalue $1$.

In each stage of the volume estimation algorithm, we sample from a state with density $\pi_i(x) = \frac{e^{- a_i x_0}}{Z(a_i)}$ . Each such distribution is the stationary state of a hit-and-run walk with the corresponding target density. Thus the corresponding state $|\pi_i\>$ is the stationary state of the corresponding walk operators $W_i$ and $W_i'$. Both $W_i$ and $W_{i'}$ can be implemented using a constant number of $U_i$ gates.

From \lem{inner-product}, we know that the inner product $\<\pi_{i}|\pi_{i+1}\>$ between the states at any stage of the algorithm is at least $\frac{1}{3}$. This implies that the inner product between $|\pi_i\>|0\>$ and $|\pi_{i+1}\>|0\>$ is also at least $\frac{1}{3}$. In the following we abuse notation by sometimes writing only $|\pi_i\>$ to denote $|\pi_i\>|0\>$, as it is easy to tell from context whether the ancilla register should be present.

\lem{pi3-amplification} in \sec{quantum-MCMC} indicates that $\pi/3$-amplitude amplification can be used to rotate the state $|\pi_i\>$ to $|\pi_{i+1}\>$ if we can implement the rotation unitaries
\begin{align*}
  R_i = \omega|\pi_i\>\<\pi_i| + \left(I - |\pi_i\>\<\pi_i|\right)\quad\text{and}\quad R_{i+1} = \omega|\pi_{1+1}\>\<\pi_{i+1}| + \left(I - |\pi_{i+1}\>\<\pi_{i+1}|\right).
\end{align*}
To implement these rotations we use the fact that $\pi_{i}$ and $\pi_{i+1}$ are the eigenvectors of the operators $W'_i$ and $W'_{i+1}$ with eigenvalue 1, respectively. We show the following lemmas which are adapted variants of Lemma 2 and Corollary 2 in \cite{wocjan2008speedup}:

\begin{lemma}\label{lem:phase-estimation}
  Let $W$ be a unitary operator with a unique leading eigenvector $|\psi_0\>$ with eigenvalue $1$. Denote the remaining eigenvectors by $|\psi_j\>$ with corresponding eigenvalues $e^{2\pi i \xi_j}$. For any $\Delta \in (0,1]$ and $\epsilon_2 < 1/2$, define $a := \log(1/\Delta)$ and $c := \log(1/\sqrt{\epsilon_{2}})$. There exists a quantum circuit $V$ that uses $ac$ ancilla qubits and invokes the controlled-$W$ gate $2^{a} c$ times such that
  \begin{align}
    V|\psi_0\>|0\>^{\otimes ac} = |\psi_0\>|0\>^{\otimes ac}
  \end{align}
  and
  \begin{align}
    V|\psi_j\>|0\>^{\otimes ac} = \sqrt{1 - \epsilon_2(j)}|\psi_j\>|\chi_j\> + \sqrt{\epsilon_2(j)}|\psi_j\>|0\>^{\otimes ac}
  \end{align}
  where $|\chi_j\>$ is orthogonal to $|0\>^{\otimes ac}$ for all $|\psi_j\>$ such that $\xi_j \ge\,\Delta$, and $\epsilon_2(j)\leq \epsilon_2$ for all $j$.
\end{lemma}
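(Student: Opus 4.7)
The plan is to assemble $V$ from $c$ independent rounds of standard phase estimation, each using $a$ ancilla qubits and $2^a$ controlled-$W$ invocations, taking advantage of the fact that phase estimation leaves eigenvectors of $W$ intact in the target register while placing information about the phase in the ancilla. This way the target is returned exactly on $|\psi_0\>$ (whose phase is zero) and the ancillas of a random non-stationary eigenvector land overwhelmingly in a state orthogonal to $|0\>^{ac}$, giving a ``good-versus-bad'' flag without disturbing the target.

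First I would analyze one round with $a=\log(1/\Delta)$ qubits of precision. Because $W|\psi_0\>=|\psi_0\>$, the standard phase-estimation identity leaves the ancilla in $|0\>^a$ with amplitude exactly $1$, so that round is the identity on $|\psi_0\>|0\>^a$. For an eigenvector $|\psi_j\>$ of phase $\xi_j$, the amplitude on ancilla $|0\>^a$ after the round equals
\begin{align*}
\frac{1}{2^a}\sum_{k=0}^{2^a-1}e^{2\pi i k\xi_j}=\frac{1}{2^a}\cdot\frac{1-e^{2\pi i 2^a\xi_j}}{1-e^{2\pi i\xi_j}},
\end{align*}
whose magnitude is $|\sin(\pi 2^a\xi_j)|/(2^a|\sin(\pi\xi_j)|)$. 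Using $|\sin(\pi\xi_j)|\ge 2\xi_j$ for $\xi_j\in[0,1/2]$ (and the symmetric bound $|\sin(\pi\xi_j)|\ge 2(1-\xi_j)$ for $\xi_j\in[1/2,1)$) together with $\xi_j\ge\Delta=2^{-a}$, this magnitude is at most $1/2$.

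Next I would define $V$ as the sequential composition of $c$ such phase-estimation rounds, all acting on the same target register but each with its own fresh ancilla block of $a$ qubits, for a total of $ac$ ancilla qubits and $2^a c$ controlled-$W$ gates. Since each round preserves eigenvectors of $W$ in the target register and acts on a disjoint ancilla block, the amplitude on $|0\>^{ac}$ after all $c$ rounds is the $c$-th power of the single-round amplitude on $|0\>^a$. For $|\psi_0\>$ this equals $1$, giving $V|\psi_0\>|0\>^{ac}=|\psi_0\>|0\>^{ac}$ exactly. For $|\psi_j\>$ with $\xi_j\ge\Delta$, the magnitude is at most $(1/2)^c$, so choosing $c=\log(1/\sqrt{\epsilon_2})$ makes it at most $\sqrt{\epsilon_2}$. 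Writing the resulting state as $\sqrt{\epsilon_2}|\psi_j\>|0\>^{ac}+\sqrt{1-\epsilon_2}|\psi_j\>|\chi_j\>$, where $|\chi_j\>$ is the normalized projection onto the orthogonal complement of $|0\>^{ac}$ (and a global phase is absorbed into $|\chi_j\>$ so the $|0\>^{ac}$ coefficient becomes real and nonnegative), matches the form in the statement.

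The analysis is largely self-contained once the standard phase-estimation amplitude bound is in hand; there is no deep obstacle. The points that deserve a little care are (i) reading the hypothesis $\xi_j\ge\Delta$ as ``distance from $0$ on the unit circle at least $\Delta$,'' so that the $|\sin(\pi\xi_j)|$ lower bound applies uniformly, and (ii) the phase bookkeeping when combining the $c$ ancilla blocks and massaging the final state into the canonical form above. If one wanted a cleaner bound on the single-round amplitude (say replacing $1/2$ by a smaller constant), one could use a slightly larger $a$ such as $\log(1/\Delta)+O(1)$; this only changes the constant in front of $c$ and does not affect the asymptotic gate count.
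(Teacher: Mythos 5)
Your proposal is correct and follows essentially the same route as the paper: $c$ parallel/sequential copies of standard $a$-qubit phase estimation, the exact amplitude $\frac{1}{2^a}\sum_{m}e^{2\pi i m\xi_j}$ on the all-zeros ancilla state, the lower bound on $|1-e^{2\pi i\xi_j}|$ (equivalently $|\sin(\pi\xi_j)|\ge 2|\xi_j|$) giving a single-round amplitude at most $1/2$ when $\xi_j\ge\Delta$, and the $c$-th power yielding $2^{-c}=\sqrt{\epsilon_2}$. The only (shared) cosmetic caveat is that both arguments establish the $|0\>^{\otimes ac}$ amplitude is at most $\sqrt{\epsilon_2}$ rather than exactly that value, which is how the lemma is actually used downstream.
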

\begin{proof}
  Consider a quantum phase estimation circuit $U$ with $a$ ancilla qubits that invokes the controlled-$W$ gate $2^a$ times (see \fig{quantum-phase}). The phase estimation circuit first creates an equal superposition over $a$ ancilla qubits using Hadamard gates. For $k = 0,\dots,a-1$ we apply a controlled-$W^k$ operator to the input register, controlled by the $(a-k)^{\text{th}}$ register. Finally the inverse quantum Fourier transform is applied on the ancilla registers. Then
  \begin{align}
    U|\psi_j\>|0\>^{\otimes a} &= |\psi_j\> \otimes \textsf{QFT}^\dagger\left(\frac{1}{\sqrt{2^a}}\sum_{m=0}^{2^a -1}e^{2\pi i m \xi_j}|m\>\right) \\
    &= |\psi_j\> \otimes \frac{1}{2^a}\sum_{m,m'=0}^{2^a-1}e^{2\pi i m \left(\xi_j - m'/2^a\right)}|m'\>.
  \end{align}
  The amplitude corresponding to $|0\>$ on the ancilla registers is
  \begin{align}
    a_{j,0} := \frac{1}{2^a}\sum_{m=0}^{2^a-1}e^{2\pi i m \xi_j} = \frac{1 - e^{2\pi i 2^a\xi_j}}{2^a(1 - e^{2\pi i \xi_j})}
  \end{align}
for $j \ne 0$, and $a_{0,0}=1$. If $j \ne 0$ then
  \begin{align}
    |a_{j,0}| = \Big\lvert\frac{1 - e^{2\pi i 2^a\xi_j}}{2^a(1 - e^{2\pi i \xi_j})} \Big\rvert \le  \Big\lvert\frac{1}{2^{a-1}(1 - e^{2\pi i \xi_j})} \Big\rvert \le \frac{1}{2^{a+1}|\xi_j|}.
  \end{align}
  Thus $|a_{j,0}| \le \frac{1}{2}$ if $\xi_j \ge\,\Delta$. Using $c$ copies of the circuit (resulting in $ac$ ancilla registers and $2^{a}c$ controlled-$W$ gates), the amplitude for $0$ in all the ancilla registers if $\xi_j \ge\,\Delta$ is at most $\frac{1}{2^c} = \sqrt{\epsilon}$.
\end{proof}

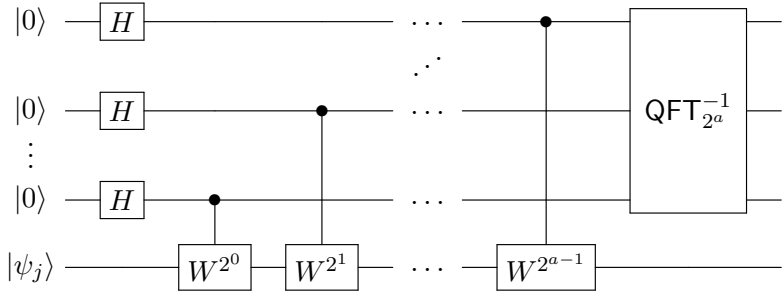
\begin{figure}[ht]
\begin{align*}
\Qcircuit @C=1.2em @R=0.9em {
  |0\> && \gate{H} & \qw & \qw & \qw& \dots & & \ctrl{5} & \multigate{4}{\textsf{QFT}_{2^a}^{-1}} & \qw\\
  &&&&&& \Ddots\\
  |0\> && \gate{H} & \qw & \ctrl{3} & \qw& \dots & & \qw & \ghost{\textsf{QFT}_{2^a}^{-1}} & \qw\\
  \raisebox{6pt}{\vdots} \\
  |0\> && \gate{H} & \ctrl{1} & \qw & \qw & \dots & & \qw & \ghost{\textsf{QFT}_{2^a}^{-1}} & \qw\\
  |\psi_j\> && \qw & \gate{W^{2^0}} & \gate{W^{2^1}} & \qw & \dots & & \gate{W^{2^{a-1}}} & \qw & \qw
}
\end{align*}
\caption{The quantum phase estimation circuit. Here $W$ is a unitary operator with eigenvector $|\psi_j\>$; in $\pi/3$-amplitude estimation it is the quantum walk operator $W'_i$ in \eqn{mod-walk-operator}.}
\label{fig:quantum-phase}
\end{figure}

\begin{corollary}
  \label{cor:rotation-gates}
  Let $W$ be a unitary operator with a unique leading eigenvector $|\psi_0\>$ with eigenvalue $1$. Denote the remaining eigenvectors by $|\psi_j\>$ with corresponding eigenvalues $e^{2\pi i \xi_j}$.
  For any $\Delta \in (0,1]$ and $\epsilon_2 < 1/2$, define $a := \log(1/\Delta)$ and $c := \log(1/\sqrt{\epsilon_2})$. For any constant $\alpha \in \C$, there exists a quantum circuit $\tilde{R}$ that uses $ac$ ancilla qubits and invokes the controlled-$W$ gate $2^{a+1}c$ times such that
  \begin{align}
    \tilde{R}|\psi_0\>|0\>^{\otimes ac} &= (R|\psi_0\>)|0\>^{\otimes ac}
  \end{align}
  (where $R = \alpha |\psi_0\>\<\psi_0| - (I - |\psi_0\>\<\psi_0|)$)
  and
  \begin{align}
    \lVert\tilde{R}|\psi_j\>|0\>^{\otimes ac} - (R|\psi_j\>)|0\>^{\otimes ac}\rVert &\le \sqrt{\epsilon_2}
  \end{align}
  for $j \ne 0$ such that $\xi_j \ge\,\Delta$.
\end{corollary}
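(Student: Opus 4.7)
The plan is to build $\tilde R$ by the standard compute--phase--uncompute pattern around the phase-estimation-style circuit $V$ from \lem{phase-estimation}. Specifically, I would set $\tilde R := V^{\dagger}\bigl(I \otimes P_\omega\bigr) V$, where $P_\omega$ acts on the $ac$-qubit ancilla register as $P_\omega := \omega\,|0\>^{\otimes ac}\<0|^{\otimes ac} + \bigl(I - |0\>^{\otimes ac}\<0|^{\otimes ac}\bigr)$, i.e., it multiplies the amplitude of the all-zeros ancilla by $\omega$ and is the identity on the orthogonal complement. The intuition is that $V$ ``tags'' the leading eigenvector $|\psi_0\>$ with an all-zeros ancilla while pushing every $|\psi_j\>$ with $\xi_j \geq \Delta$ almost entirely away from that ancilla state, so conjugating a conditional phase by $V$ realizes a good approximation of the desired reflection $R$ on the input register.

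The first check is the exact equality on $|\psi_0\>$: \lem{phase-estimation} gives $V|\psi_0\>|0\>^{\otimes ac}=|\psi_0\>|0\>^{\otimes ac}$, so $P_\omega$ multiplies this by $\omega$ and $V^{\dagger}$ restores the ancilla, yielding $\omega|\psi_0\>|0\>^{\otimes ac} = (R|\psi_0\>)|0\>^{\otimes ac}$. For an eigenvector $|\psi_j\>$ with $\xi_j \geq \Delta$, \lem{phase-estimation} gives $V|\psi_j\>|0\>^{\otimes ac} = \sqrt{1-\epsilon_2}\,|\psi_j\>|\chi_j\> + \sqrt{\epsilon_2}\,|\psi_j\>|0\>^{\otimes ac}$ with $|\chi_j\>$ orthogonal to $|0\>^{\otimes ac}$; applying $P_\omega$ perturbs only the $\sqrt{\epsilon_2}$-weighted branch, producing $V|\psi_j\>|0\>^{\otimes ac} + (\omega-1)\sqrt{\epsilon_2}\,|\psi_j\>|0\>^{\otimes ac}$. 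Applying $V^{\dagger}$ then gives $|\psi_j\>|0\>^{\otimes ac} + (\omega-1)\sqrt{\epsilon_2}\,V^{\dagger}|\psi_j\>|0\>^{\otimes ac}$, while the ideal $R|\psi_j\>|0\>^{\otimes ac} = |\psi_j\>|0\>^{\otimes ac}$ since $|\psi_j\> \perp |\psi_0\>$. Unitarity of $V^{\dagger}$ and $|\omega-1| \leq 2$ then bound the deviation in norm by $2\sqrt{\epsilon_2}$, matching the claim.

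The resource count is immediate: $V$ and $V^{\dagger}$ each invoke controlled-$W$ exactly $2^a c$ times on the $ac$ ancilla qubits, and $P_\omega$ is a single diagonal phase gate on those ancillas requiring no further queries to $W$, so $\tilde R$ uses $2^{a+1}c$ controlled-$W$ gates and $ac$ ancillas in total. The only point that requires care is ensuring that the tagged part of $V|\psi_j\>$ lies precisely along $|\psi_j\>|0\>^{\otimes ac}$ rather than some nontrivial ancilla state --- this is exactly what \lem{phase-estimation} guarantees and is what causes the error to scale as $\sqrt{\epsilon_2}$ rather than as a constant. Beyond that bookkeeping there is no substantive obstacle; all the heavy lifting has already been done inside \lem{phase-estimation}.
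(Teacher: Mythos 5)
Your construction $\tilde R = V^{\dagger}(I\otimes P_\omega)V$ with the conditional phase on the all-zeros ancilla is exactly the paper's proof, and your verification on $|\psi_0\>$ and on the tagged eigenvectors, including the $|\omega-1|\le 2$ bound and the $2^{a+1}c$ query count, matches it step for step. (Both you and the paper implicitly read $R$ as $\omega\Pi_{\psi_0}+\Pi_{\psi_0}^{\perp}$, consistent with \eqn{pi/3-AM-AM-Ri}; the minus sign in the corollary statement appears to be a typo, since with it the claimed bound could not hold.)
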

\begin{proof}
  Let $\tilde{R} := V^\dagger(I \otimes Q)V$ where $V$ is the quantum circuit in \lem{phase-estimation} and $Q := \alpha|0\>\<0|^{\otimes ac} + (I - |0\>\<0|^{\otimes ac})$. Then we have
  \begin{align}
    \tilde{R}|\psi_0\>|0\>^{\otimes ac} = V^\dagger(I \otimes Q)|\psi_0\>|0\>^{\otimes ac} 
        = \alpha|\psi_0\>|0\>^{\otimes ac} = R|\psi_0\>|0\>^{\otimes ac}.
  \end{align}
  For $j \ne 0$ such that $\xi_j \ge\,\Delta$,
  \begin{align}
    \label{eq:16}
    \tilde{R}|\psi_j\>|0\>^{\otimes ac} &= V^\dagger(I \otimes Q)(\sqrt{1 - \epsilon_2}|\psi_j\>|\chi_j\> + \sqrt{\epsilon_2}|\psi_j\>|0\>^{\otimes ac})\\
                                        &=V^\dagger(\sqrt{1-\epsilon_2}|\psi_j\>|\chi_j\> + \sqrt{\epsilon_2}\alpha|\psi_j\>|0\>^{\otimes ac}) \\
                                        &=V^\dagger(|\psi_j\> \otimes (\sqrt{1-\epsilon_2}|\chi_j\> + \sqrt{\epsilon_2}|0\>^{\otimes ac}) + \sqrt{\epsilon_2}(\alpha - 1)|\psi_j\>|0\>^{\otimes ac})\\
                                        &=|\psi_j\>|0_j\> + V^\dagger\sqrt{\epsilon_2}(\alpha - 1)|\psi_j\>|0\>^{\otimes ac}.
  \end{align}
  Thus $\lVert\tilde{R}|\psi_j\>|0\>^{\otimes ac} - (R|\psi_j\>)|0\>^{\otimes ac}\rVert \le \lVert V^\dagger\sqrt{\epsilon_2}(\alpha - 1)|\psi_j\>|0\>^{\otimes ac} \rVert \le \sqrt{\epsilon_2}$.
\end{proof}

Finally, we prove the following lemma for analyzing the error incurred by $\pi/3$-amplitude amplification in our quantum volume estimation algorithm:
\begin{lemma}
  \label{lem:pi3-error}
  Starting from $|\pi_i\>$, we can obtain a state $|\tilde\pi_{i+1}\>$ such that  $\lVert |\pi_{i+1}\> - |\tilde\pi_{i+1}\> \rVert \le \epsilon$ using $\tilde{O}(n^{3/2}\log(1/\epsilon))$ calls to the controlled walk operators $W'_i,W'_{i+1}$. This results in $\tilde{O}(n^{3/2}\log(1/\epsilon))$ calls to the membership oracle $O_\K$.
\end{lemma}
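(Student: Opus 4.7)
The plan is to apply $\pi/3$-amplitude amplification (\lem{pi3-amplification}) to rotate $|\pi_i\>$ to a state close to $|\pi_{i+1}\>$, using the approximate reflections $R_i, R_{i+1}$ built via \cor{rotation-gates}. The analysis proceeds at three levels: counting the recursion depth needed in $\pi/3$-amplification, determining the precision $\Delta$ at which phase estimation must distinguish the stationary eigenvectors from the rest, and bounding the error propagation from using approximate reflections.

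First, by \lem{inner-product} we have $|\<\pi_i|\pi_{i+1}\>|^2 \ge 1/9$, so taking $p = 1/9$ and $m = \lceil \log_3(\ln(2/\epsilon)/p)\rceil = O(\log\log(1/\epsilon))$ in \lem{pi3-amplification} gives $|\<\pi_{i+1}|U_m|\pi_i\>|^2 \ge 1-\epsilon^2/4$, hence $\|U_m|\pi_i\> - |\pi_{i+1}\>\| \le \epsilon$ (up to a global phase), and $U_m$ uses at most $3^m = O(\log(1/\epsilon))$ copies of the reflections $R_i, R_{i+1}$ from \eqn{pi/3-reflection-defn}.

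Next I need to implement each reflection approximately using \cor{rotation-gates} applied to $W'_i$ (respectively $W'_{i+1}$). The key is that the relevant subspace is spanned by a small number of states derived from $|\pi_i\>$ and $|\pi_{i+1}\>|0\>$, and by \lem{l2warm-pi-plus1} and \lem{l2warm-pi-plus1-reverse} both distributions are mutual warm starts with constant $L_2$-norm. Combining this with \thm{exp-hit-run-mixes}, which gives a mixing time $t = \tilde{O}(n^2 R^2/r^2) = \tilde{O}(n^3)$ in the well-rounded regime, \prop{approx-warmness-quantum} implies that the overlap of $|\phi_{\pi_{i+1}}^{(i)}\>$ (and similarly the states obtained recursively in $U_m$) with eigenvectors of $W'_i$ whose phase is smaller than $\Delta = \tilde{\Theta}(1/n^{3/2})$ is negligible. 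Hence setting the phase-estimation precision at $\Delta = \tilde{\Theta}(n^{-3/2})$ is sufficient to separate the stationary eigenvector from the rest on the subspace we actually traverse.

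Finally I will propagate the errors. By \cor{rotation-gates}, one approximate reflection costs $2^{a+1}c = \tilde{O}(n^{3/2}\log(1/\sqrt{\epsilon_2}))$ invocations of the controlled $W'_i$ (or $W'_{i+1}$) and introduces error at most $2\sqrt{\epsilon_2}$ per ``bad'' component. Each $W'_i$ in turn uses $O(1)$ calls to $U_i, U_i^\dagger$ and hence $O(1)$ calls to the membership oracle $O_\K$. Over the $3^m = O(\log(1/\epsilon))$ reflections composing $U_m$, the total error accumulates (by the triangle inequality and unitarity) as $O(3^m\sqrt{\epsilon_2})$, so choosing $\epsilon_2 = \Theta(\epsilon^2/3^{2m}) = \tilde{\Theta}(\epsilon^2)$ keeps the overall deviation below $\epsilon$ while inflating the per-reflection cost only by a $\log(1/\epsilon)$ factor. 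Multiplying, the total number of calls to the walk operators (and to $O_\K$) is
\begin{align}
3^m \cdot \tilde{O}\bigl(n^{3/2}\log(1/\sqrt{\epsilon_2})\bigr) = \tilde{O}\bigl(n^{3/2}\log(1/\epsilon)\bigr),
\end{align}
as claimed. The main obstacle in this argument is justifying that the phase-estimation precision $\Delta$ only needs to resolve the \emph{effective} spectral gap on the subspace actually visited by the $\pi/3$-amplification recursion, rather than the true (possibly much smaller) spectral gap of $W'_i$; this requires checking that every intermediate state $U_k|\pi_i\>$ in the recursion lies in the span of mutually warm pairs so that \prop{approx-warmness-quantum} continues to apply, which follows from the explicit form of $U_m$ in \eqn{pi/3-Ui} together with the $L_2$-warmness bounds of \lem{l2warm-pi-plus1} and \lem{l2warm-pi-plus1-reverse}.
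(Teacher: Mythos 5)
Your proposal is correct and follows essentially the same route as the paper's proof: it invokes \lem{inner-product} with \lem{pi3-amplification} to bound the number of reflections by $O(\log(1/\epsilon))$, uses \lem{l2warm-pi-plus1}, \lem{l2warm-pi-plus1-reverse}, \thm{exp-hit-run-mixes}, and \prop{approx-warmness-quantum} to justify setting the phase-estimation precision at $\Delta=\tilde{\Theta}(n^{-3/2})$ for the effective gap on the subspace spanned by $|\pi_i\>$ and $|\pi_{i+1}\>$, and then propagates the per-reflection error from \cor{rotation-gates} exactly as the paper does (differing only in the cosmetic choice of how the error budget is split between the amplification residual and the approximate reflections). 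The ``main obstacle'' you flag at the end is precisely the point the paper handles by noting that every intermediate state of the recursion is a linear combination of $|\pi_i\>$ and $|\pi_{i+1}\>$, so your argument is complete.
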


\begin{proof}
  From \thm{exp-hit-run-mixes}, \lem{l2warm-pi-plus1}, and \lem{l2warm-pi-plus1-reverse}, we find that
  \begin{itemize}
  \item $\pi_i(x)$ mixes up to total variation distance $\epsilon_1$ in $O\big(n^3\log^5 \frac{n}{\epsilon_1}\big)$ steps of the Markov chain $M_{i+1}$, and
  \item $\pi_{i+1}(x)$ mixes up to total variation distance $\epsilon_1$ in $O\big(n^3\log^5 \frac{n}{\epsilon_1}\big)$ steps of the Markov chain $M_{i}$.
  \end{itemize}
  From \prop{approx-warmness-quantum}, we find the following:
  \begin{itemize}
  \item $|\pi_i\> = |\pi'_i\> + |e_1\>$ where $|\pi'_i\>$ lies in the space of eigenvectors $|v^{(i+1)}_j\>$ of $W'_{i+1}$ such that $\lambda^{(i+1)}_j = 1$ or $\lambda^{(i+1)}_j \le 1 - \frac{1}{O(n^3\log^5(n/\epsilon_1))}$, and $\lVert |e_1\> \rVert \le \epsilon_1$; and
  \item $|\pi_{i+1}\> = |\pi'_{i+1}\> + |e_2\>$ where $|\pi'_{i+1}\>$ lies in the space of eigenvectors $|v^{(i)}_j\>$ of $W'_{i}$ such that $\lambda^{(i)}_j = 1$ or $\lambda^{(i)}_j \le 1 - \frac{1}{O(n^3\log^5(n/\epsilon_1))}$, and $\lVert |e_2\> \rVert \le \epsilon_1$.
  \end{itemize}

  Note that $|\pi_i\>$ and $|\pi_{i+1}\>$ are simply the leading eigenvectors of $W_i$ and $W_{i+1}$, respectively. Thus both $|\pi_i\>$ and $|\pi_{i+1}\>$ lie $\epsilon_1$ close to the ``good'' subspaces corresponding to $W'_i$ (respectively $W'_{i+1}$) which are spanned by eigenvectors $|v_{j}^{(i)}\>$ (respectively $|v_{j}^{(i+1)}\>$) with eigenvalues $e^{2\pi i {\xi^{(i)}_j}}$  (respectively $e^{2\pi i \xi_{j}^{(i+1)}}$) such that $\xi_j^{(i)} = 0$ or $\xi_j^{(i)} \ge \frac{1}{O(n^{3/2}\log^{5/2}(n/\epsilon_1))}$. Each state that occurs during $\pi/3$-amplitude amplification to rotate $|\pi_i\>$ to $|\pi_{i+1}\>$ or vice versa is a linear combination of $|\pi_i\>$ and $|\pi_{i+1}\>$ and is thus also close to the good subspaces of $W'_{i}$ and $W'_{i+1}$.

  Applying \cor{rotation-gates} with $\Delta = \frac{1}{n^{3/2}\ln^{5/2}(n/\epsilon_1)}$ and $\epsilon_2 = \epsilon_1^2$, we can implement a quantum operators $\tilde{R}_{i},\tilde{R}_{i+1}$ such that $\lVert R_{i} - \tilde{R}_{i} \rVert \le 2\epsilon_1$  and $\lVert R_{i+1} - \tilde{R}_{i+1} \rVert \le 2\epsilon_1$, using $O(n^{3/2}\log^{5/2}(n/\epsilon_1)\log(1/\epsilon_1))$ calls to the controlled-$W'_{i}$  and controlled-$W'_{i+1}$ operators, respectively.

  The above shows how to approximately implement $R_i$ and $R_{i+1}$. If these operators could be implemented perfectly, \lem{pi3-amplification} and \lem{inner-product} show that we can prepare a state $|\tilde\pi_{i+i}\>$ such that $\<\pi_{i+1}|\tilde\pi_{i+1}\> \le 1 - (2/3)^{3^m}$ by applying $m$ recursive levels of $\pi/3$-amplitude amplification to $|\pi_i\>$, using $3^m$ calls to $R_i,R_i^\dagger,R_{i+1},R_{i+1}^\dagger$. Since $\lVert \pi_{i+1} - \tilde\pi_{i+1} \rVert = \sqrt{2(1 - \<\pi_{i+1}|\tilde\pi_{i+1}\>)}$, after $O(\log(1/\epsilon_2))$ calls to the rotation gates we obtain a final state with error $\epsilon_2$. However, each rotation gate can cause an error of $\epsilon_1$ by itself. By making $O(n^{3/2}\log^{5/2}(n/\epsilon_1)\log(1/\epsilon_1)\log(1/\epsilon_2))$ calls to controlled-$W'_i$ and controlled-$W'_{i+1}$ operators, we obtain a final error of $O(\epsilon_1\log(1/\epsilon_2) + \epsilon_2)$. Choosing $\epsilon_2 = \epsilon/2$ and $\epsilon_1 = \epsilon/(2\ln(2/\epsilon))$ gives the result.
\end{proof}

%------------------------------------------------------------------------
\paragraph{Error analysis for the quantum Chebyshev inequality}
We also analyze the error from the quantum Chebyshev inequality (\thm{quantum-Chebyshev}), giving a robust version of \lem{Chebyshev-lemma}.

\begin{lemma}\label{lem:err-non-destructive}
Suppose we have $\tilde{O}(\log(1/\delta)/\epsilon)$ copies of a state $|\tilde\pi_{i-1}\>$ such that $\lVert |\tilde\pi_{i-1}\> - |\pi_{i-1}\> \rVert \le \epsilon$. Then the quantum Chebyshev inequality can be used to output $\tilde{V}_{i}$ such that $|\tilde{V}_{i} - \E_{\pi_{i}}[V_{i}]| \le O(\epsilon)\E_{\pi_{i}}[V_{i}]$ with success probability $1-\delta^4$ using $\tilde{O}(n^{3/2}\log(1/\delta)/\epsilon)$ calls to the membership oracle. The output state $|\hat\pi_{i-1}\>$ satisfies $\lVert|\hat\pi_{i-1}\> - |\pi_{i-1}\>\rVert = O(\epsilon + \delta)$.
\end{lemma}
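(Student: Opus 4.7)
The plan is to combine the unitary rotation of Lemma \ref{lem:pi3-error} with the nondestructive compute-uncompute variant of Theorem \ref{thm:quantum-Chebyshev} depicted in \fig{SA-block-nondes}. First I would check that the hypotheses of Theorem \ref{thm:quantum-Chebyshev} hold for the random variable $V_i$ under the distribution $\pi_i$: by Proposition \ref{prop:chebyshev} (Chebyshev cooling) we have $\sqrt{\sigma_i^2+\mu_i^2}/\mu_i\leq 3$, so it suffices to take $\Delta_U=O(1)$, and a valid upper bound $H$ with $H/\mu_i=O(1)$ can be computed from a constant-precision preliminary estimate exactly as in the proof of \lem{Chebyshev-lemma}. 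Consequently, Algorithm \textsf{BasicEst} with $\Theta(\log(1/\delta))$ parallel copies and precision $M=\Theta(1/\epsilon)$ outputs a median estimate $\tilde V_i$ satisfying $|\tilde V_i-\mu_i|\leq O(\epsilon)\mu_i$ with success probability at least $1-\delta^{4}$, using $\tilde O(\log(1/\delta)/\epsilon)$ calls to a unitary sampler $U_i$ preparing $|\pi_i\>$.

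To build such a sampler using only copies of $|\tilde\pi_{i-1}\>$, I would take $U_i$ to be the $\pi/3$-amplitude amplification unitary $U_{i-1,l}$ from \lem{pi3-error} with target error $\epsilon'=\Theta(\epsilon)$, which uses $\tilde O(n^{3/2}\log(1/\epsilon))$ queries to $O_{\K}$ per invocation. Since the input is $|\tilde\pi_{i-1}\>$ rather than the exact state $|\pi_{i-1}\>$, the resulting prepared state satisfies $\lVert U_{i-1,l}|\tilde\pi_{i-1}\>-|\pi_i\>\rVert\leq \epsilon+\epsilon'=O(\epsilon)$, and so the effective sampler feeds an $O(\epsilon)$-perturbed version of $|\pi_i\>$ into \textsf{BasicEst}. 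Multiplying the $\tilde O(\log(1/\delta)/\epsilon)$ sampler invocations by $\tilde O(n^{3/2})$ queries each gives the claimed total cost $\tilde O(n^{3/2}\log(1/\delta)/\epsilon)$ to the membership oracle.

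For the nondestructive aspect, the procedure in \fig{SA-block-nondes} is: run the $\Theta(\log(1/\delta))$ amplitude-estimation circuits in parallel, coherently apply $U_{\sin^2}$ in \eqn{sine-square} to fold the $\pm\tilde\theta_p$ branches in \eqn{AmpEst-two-angles} into a single value, apply $U_{\median}$ in \eqn{U-median} to write the median into a fresh register, measure only that register, and then invert the sine-square gates and the amplitude-estimation circuits. Apart from the single measurement of the median, the entire block is unitary; in particular, the $\pi/3$-rotation at the front of the circuit is unitary and is undone by the inverse amplitude-estimation circuits at the end, returning the work registers to states close to $|\tilde\pi_{i-1}\>$. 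The measurement outcome concentrates: since each individual Chebyshev run succeeds with probability $\geq 8/\pi^2$, the median lies in the correct window except with probability $\delta^4$ by a Chernoff bound. The gentle measurement lemma then implies that the post-measurement state $|\hat\pi_{i-1}\>$ satisfies $\lVert|\hat\pi_{i-1}\>-|\tilde\pi_{i-1}\>\rVert= O(\delta^2)$, and combining with the input perturbation gives $\lVert|\hat\pi_{i-1}\>-|\pi_{i-1}\>\rVert=O(\epsilon+\delta)$.

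The main obstacle I anticipate is the bookkeeping of how the input error $\epsilon$ on $|\tilde\pi_{i-1}\>$ propagates through the $\pi/3$-amplification and then through \textsf{BasicEst}: one needs to verify that the perturbation only shifts the estimated amplitude by $O(\epsilon)$ (so the output estimate is still within relative error $O(\epsilon)$ of $\mu_i$), and simultaneously that the measurement statistics of the median are still peaked enough for the gentle measurement argument to yield the stated $O(\epsilon+\delta)$ state-error bound. This essentially reduces to a triangle-inequality argument using operator-norm bounds on each subcircuit, together with the fact that all the rotation and uncomputation gates are unitary so the $\epsilon$ perturbation is preserved in norm throughout.
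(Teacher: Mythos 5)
Your proposal is correct and follows essentially the same route as the paper's proof: $\pi/3$-amplitude amplification as the approximate sampler $U$ inside \textsf{BasicEst}, median-of-$\Theta(\log(1/\delta))$ amplitude estimations with the compute--uncompute trick of \fig{SA-block-nondes}, and a gentle-measurement argument for the posterior state (your $O(\delta^2)$ bound from $\sqrt{2(1-\sqrt{1-\delta^4})}$ is in fact slightly tighter than the paper's stated $O(\delta_1^{1/4})=O(\delta)$). The only item the paper makes explicit that you leave implicit in your ``operator-norm bounds on each subcircuit'' remark is that the reflections $\Pi_{i-1}-\Pi_{i-1}^{\perp}$ and $\Pi_i-\Pi_i^{\perp}$ in the Grover iterate are themselves only approximately implemented via phase estimation of the walk operators, contributing an additional $O(n^{3/2}\log(1/\epsilon_3))$ queries and an $\epsilon_3$ error per block, which does not change the final bounds.
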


\begin{proof}
  The error-free version of this lemma was proven in \lem{Chebyshev-lemma}. Here we focus on the error analysis. The quantum Chebyshev inequality uses an implementation of $US_0U^\dagger S_i$ where $U$ is a unitary operator satisfying $U|\pi_{i-1}\> = |\pi_i\>$. From \lem{pi3-error}, using $\log (1/\epsilon_2)$ iterations of $\pi/3$-amplitude amplification ($U_{\log 1/\epsilon_2}$ in \eqn{pi/3-Ui}) instead of $U$ induces an error of $\epsilon_2$ and uses $O(n^{3/2}\log(1/\epsilon_2))$ oracle calls. Using approximate phase estimation as in \cor{rotation-gates} and \lem{pi3-error}, $\Pi_{i-1}$ and $\Pi_{i}$ can be implemented up to error $\epsilon_3$ using $O(n^{3/2}\log(1/\epsilon_3))$ oracle calls. Thus each block corresponding to \thm{AmpEst} induces an error of $O(\epsilon_2 + \epsilon_3)$, and the final state before the median is measured has an error of $O(\epsilon + \epsilon_2 + \epsilon_3)$. Therefore, using $O(\log(1/\delta_1)/\epsilon)$ copies of $|\tilde\pi_{i-1}\>$ returns a sample $\tilde{V}_{i}$ such that $|\tilde{V}_{i} - \E_{\pi_{i}}[V_{i}]| \le O(\epsilon_2+\epsilon_3+\epsilon) \E_{\pi_{i}}[V_{i}]$ with success probability $1 - \delta_1$. Performing a measurement with success probability $1 - \delta_1$ implies that the posterior state has an overlap $\sqrt{1-\delta_1}$ with the initial state. This induces an error of magnitude at most $\sqrt{2(1 - \sqrt{1-\delta_1})}=O(\delta_1^{1/4})$.

The measurement on the $\log(1/\delta)/c$ copies of $|\tilde\pi_{i-1}\>$ used to estimate $\hat\mu$ has relative error at most $c$ with probability $1-\delta$. This causes an error $O(\delta_1^{1/4})$ in addition to the error $\epsilon_2$ from $\pi/3$-amplitude amplification.

Finally, note that the basic amplitude estimation circuit (analyzed in \thm{AmpEst}) is a subroutine of the quantum Chebyshev inequality (\thm{quantum-Chebyshev}), and uncomputing the block corresponding to \thm{AmpEst} induces an error of $O(\epsilon_2 + \epsilon_3)$, giving an overall error of $O(\epsilon_2 + \epsilon_3 + \epsilon +\,\delta^{1/4})$. The result follows by taking $\epsilon_2 = \epsilon_3 = \epsilon$ and $\delta_1 =\,\delta^4$.
\end{proof}

We finally prove \lem{mean-estimation-error} here.
\begin{proof}
  \lem{err-non-destructive} is used to estimate the mean with $\epsilon = \epsilon_1$ and leaves a posterior state $|\hat\pi_{i-1}\>$ such that $\lVert|\hat\pi_{i-1}\> - |\pi_{i-1}\>\rVert=O(\epsilon_1 +\,\delta)$. We can then use $\pi/3$-amplitude amplification to rotate this state into $|\tilde\pi_{i}\>$, adding error $O(\epsilon')$ at the cost of $O(n^{3/2}\log(1/\epsilon'))$. This completes the proof.
\end{proof}

%=========================================================
\subsection{Quantum algorithms for rounding logconcave densities}\label{sec:round-quantum}
We first define roundedness of logconcave density functions as follows:
\begin{definition}\label{defn:well-roundedness-logconcave}
A logconcave density function $f$ is said to be $c$-rounded if
\begin{enumerate}[ref={condition~\arabic*}]
\item The level set of $f$ of probability $1/8$ contains a ball of radius $r$; \label{condition1}
\item $\E_{f}\left(\lvert x - z_{f}\rvert\right) \le R^2$, where $z_{f}$ is the centroid of $f$, i.e., $z_{f}:=\E_{f}(x)$; \label{condition2}
\end{enumerate}
and $R/r \le c\sqrt{n}$.
\end{definition}
In the previous section we assumed that the distributions $\pi_i$ sampled during the hit-and-run walk are $O(1)$-rounded (i.e., well-rounded). From \thm{exp-hit-run-mixes}, this implies that the hit-and-run walk for the distribution $\pi_{i}$ mixes from a warm start in time $\tilde{O}(n^3)$. In this subsection we show how the distributions $\pi_{i}$ can be transformed to satisfy this condition.

Following the classical discussion in \cite{lovasz2006fast}, we actually show a stronger condition: the distributions are transformed to be in ``near-isotropic'' position. A density function $f$ is said to be in \emph{isotropic} position if
\begin{align}\label{eq:isotropic-defn}
\E_{f}[x]=0\quad\text{and}\quad \E_{f}[xx^{T}]=I.
\end{align}
The latter equation is equivalent to $\int_{\R^n}(u^{T}x)^{2}f(x)\,\d x=|u|^{2}$ for every vector $u\in\R^{n}$. We say that $\K$ is \emph{near-isotropic} up to a factor of $c$ if
\begin{align}
\frac{1}{c}\leq\int_{\R^n}(u^{T}(x-z_f))^{2}f(x)\,\d x\leq c
\end{align}
for every unit vector $u \in \R^n$.

The following lemma shows that logconcave density functions in isotropic position are also $O(1)$-rounded:
\begin{lemma}[{\cite[Lemma 5.13]{lovasz2007geometry}}]
  \label{lem:isotropic-is-rounded}
  Every isotropic logconcave density is $(1/e)$-rounded.
\end{lemma}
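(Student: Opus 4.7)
The plan is to verify both conditions of \defn{well-roundedness-logconcave} directly from the isotropy assumption, then bound the ratio $R/r$.

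The second condition is immediate. By the isotropic assumption \eq{isotropic-defn}, the centroid $z_f = \E_f[x] = 0$ and $\E_f[xx^T] = I$, so
\begin{align*}
\E_f\bigl[|x-z_f|^2\bigr] = \E_f\bigl[\mathrm{tr}(xx^T)\bigr] = \mathrm{tr}\bigl(\E_f[xx^T]\bigr) = \mathrm{tr}(I) = n.
\end{align*}
Thus \ref{condition2} holds with $R = \sqrt{n}$.

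The real work is in \ref{condition1}: exhibiting a ball of some radius $r \ge e$ inside the level set $L_\alpha := \{x : f(x) \ge \alpha\}$, where $\alpha$ is chosen so that $\int_{L_\alpha} f\,\d x = 1/8$. Note $L_\alpha$ is automatically convex by logconcavity of $f$, so it suffices to certify a single radius around the origin. My approach would be to invoke the standard structural properties of isotropic logconcave densities (as developed in~\cite{lovasz2007geometry}): (i) a dimension-free pointwise lower bound $f(0) \ge c_1^n$ for some absolute constant, and a matching upper bound $\max f \le c_2^n$; (ii) the tail estimate $\Pr_f[|x| \ge t\sqrt{n}] \le e^{-t+1}$; and (iii) for any line $\ell$ through the origin, the one-dimensional marginal of $f$ along $\ell$ is isotropic and logconcave, so by Grünbaum-type arguments its density near the origin is comparable to its maximum on $\ell$. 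Combining (iii) with (i) in each coordinate direction shows that on a ball of radius $e$ around the origin, $f$ drops by at most a factor depending only on $n$ in a controlled way, so that the superlevel set at threshold $f(0)/e$ contains such a ball. Using (i) and (ii) to evaluate the probability mass of this superlevel set, I would show it is at least $1/8$, hence $L_\alpha \supseteq \rmb_2(0, e)$.

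Putting the two pieces together, we obtain $R/r \le \sqrt{n}/e = (1/e)\sqrt{n}$, which is exactly the $(1/e)$-rounded condition. The main obstacle I anticipate is nailing down the precise constant $1/e$ in the ball radius: the pointwise density bounds, tail bounds, and one-dimensional restriction arguments are individually clean, but carefully tracking the interplay to yield the clean constant $e$ (rather than a messier absolute constant) requires invoking the sharp versions of these estimates from~\cite{lovasz2007geometry} rather than the more common asymptotic forms. Once the ball radius is established, the rest of the argument is bookkeeping.
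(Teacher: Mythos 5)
First, a point of reference: the paper does not prove this lemma at all --- it is imported verbatim from \cite[Lemma 5.13]{lovasz2007geometry} --- so there is no in-paper argument to compare against, and your attempt has to stand on its own. Your treatment of \ref{condition2} is correct and is the easy half: isotropy gives $\E_f\bigl[|x-z_f|^2\bigr]=\Tr\bigl(\E_f[xx^T]\bigr)=n$, hence $R=\sqrt{n}$.

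The treatment of \ref{condition1} has two genuine gaps. (a) It is not enough to ``certify a single radius around the origin'': convexity of the level set does not let you recenter its inscribed ball at the centroid, and in fact the centroid need not lie in the level set of probability $1/8$ at all. Take the one-dimensional isotropic logconcave density $f(x)=e^{-(x+1)}$ on $[-1,\infty)$: its level set of probability $1/8$ is $[-1,\ln\tfrac{8}{7}-1]\approx[-1,-0.87]$, which excludes $0$. (b) The target $r\ge e$ --- which you are forced into by reading $R/r\le\frac{1}{e}\sqrt{n}$ literally together with $R=\sqrt{n}$ --- is unattainable: the same example has a level set of probability $1/8$ of diameter $\ln\tfrac{8}{7}<1/7$, so it contains no ball of radius $1/e$, let alone $e$. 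This means the obstacle is not, as you anticipate, ``nailing down the precise constant'' via sharper versions of the pointwise and tail estimates; the inequality simply points the wrong way, and no choice of constants rescues it. What the cited lemma actually supplies is a ball of \emph{constant} radius (a level set of probability $t$ contains a ball of radius $t/e$, not centered at the centroid), giving $r=\Theta(1)$ and $R/r=\Theta(\sqrt{n})$; the ``$(1/e)$'' in the lemma statement refers to that radius constant and does not parse consistently against \defn{well-roundedness-logconcave}, but only $R/r=O(\sqrt{n})$ is ever used downstream. Finally, even for the attainable target, your argument for \ref{condition1} is a plan rather than a proof: the ingredients you list ($c_1^n\le f(0)$, $\max f\le c_2^n$, tail bounds, one-dimensional restrictions) do not visibly assemble into a dimension-free inradius for a level set of a \emph{fixed probability}, since ``$f$ drops by at most a controlled factor on a fixed ball'' says nothing about where the probability-$1/8$ threshold sits relative to $f(0)$.
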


The following lemma shows that any logconcave density function can be put into isotropic position by applying an affine transformation, generalizing the same result for uniform distributions by Rudelson~\cite{rudelson99}:
\begin{lemma}[{\cite[Lemma 2.2]{lovasz2006fast}}]
  \label{lem:affine-transform-isotropic}
  Let $f$ be a logconcave function in $\R^n$ such that there is no linear subspace $\mathcal{S} \subseteq \R^{n}$ such that $\int_{\mathcal{S}}f(x) \,dx > 1/2$, and let $X^1,\dots,X^k$ be independent random points from the corresponding distribution. There is a constant $C_0$ such that if $k > C_0t^3 \ln n$, then the transformation $g(x) = T^{-1/2}x$ where
  \begin{align}
    \label{eq:affine-trans}
    \bar{X} = \frac{1}{k}\sum_{i=1}^{k}X^{i}, \qquad T = \frac{1}{k}\sum_{i=1}^{k}(X^{i} - \bar{X})(X^{i} - \bar{X})^T
  \end{align}
  puts $f$ in $2$-isotropic position with probability at least $1 - 1/2^t$.
\end{lemma}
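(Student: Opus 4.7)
The plan is to reduce to the already-isotropic case and then apply a concentration inequality for the empirical covariance matrix of an isotropic logconcave distribution. By the affine invariance of the construction, I would assume without loss of generality that $f$ is itself in isotropic position, so $\E_f[X]=0$ and $\E_f[XX^T]=I$. Indeed, if the true covariance is $A\succ 0$, then the samples $A^{-1/2}X^i$ are drawn from an isotropic logconcave density, and the empirical covariance in \eqref{eq:affine-trans} transforms equivariantly under $A^{-1/2}$; so it suffices to show that when the true covariance is $I$, the empirical covariance $T$ satisfies $\tfrac{1}{2}I \preceq T \preceq 2I$ with probability at least $1-2^{-t}$. Granted this, the pushforward of $f$ under $g(x)=T^{-1/2}x$ has covariance $T^{-1/2}\,I\,T^{-1/2}=T^{-1}$, whose eigenvalues lie in $[2/3,2]\subset[1/2,2]$, giving $2$-isotropy.

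Next, I would split
\[
T = S_k - \bar{X}\bar{X}^T, \qquad S_k := \frac{1}{k}\sum_{i=1}^{k} X^i (X^i)^T,
\]
and treat the two terms separately. The centering correction $\bar{X}\bar{X}^T$ has operator norm $\|\bar{X}\|^2$; by a Paouris-type tail bound applied to the average of isotropic logconcave vectors, $\|\bar{X}\|=O(\sqrt{n/k})$ with probability at least $1-2^{-t}$, so once $k\gg n$ this term contributes at most $o(1)$ to $\|T-I\|_{\mathrm{op}}$. The main technical content is a Rudelson-type matrix concentration inequality that gives $\|S_k-I\|_{\mathrm{op}}\le 1/4$ with probability at least $1-2^{-t}$ whenever $k>C_0 t^3\ln n$. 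I would derive this by truncating each $X^i$ to the ball of radius $\Theta(\sqrt{n\ln n})$ (whose failure probability is exponentially small by Paouris's theorem), and applying a non-commutative Khintchine or matrix Bernstein bound to the truncated rank-one tensors $X^i(X^i)^T$. The hypothesis of logconcavity enters through the sub-exponential ($\psi_1$) behavior of the linear marginals $\langle u, X\rangle^2$ for unit $u$, which is what makes the tensors concentrate with only a single $\ln n$ polylog factor rather than the $\ln^3 n$ of the generic Rudelson estimate.

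The final step is the algebraic translation: combining $\|S_k-I\|_{\mathrm{op}}\le 1/4$ with $\|\bar{X}\bar{X}^T\|_{\mathrm{op}}\le 1/4$ yields $\|T-I\|_{\mathrm{op}}\le 1/2$, hence $T^{-1}$ has spectrum in $[2/3,2]$, and the pushforward under $g$ satisfies \eqref{eq:isotropic-defn} to within a factor of $2$ for every unit test vector $u$.

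The main obstacle I expect is precisely the Rudelson-type bound with the sharp $t^3\ln n$ sample complexity. Standard matrix concentration for heavy-tailed vectors gives either worse polylog factors or worse dependence on the confidence parameter, and it is only by exploiting isotropic logconcavity---specifically thin-shell concentration and the sub-exponential tails of linear marginals---that one can achieve the stated scaling, with the cube in $t$ arising from converting sub-exponential Orlicz-norm control of $\langle u,X\rangle^2$ into a uniform-over-the-sphere matrix tail estimate via a net argument. Once this concentration inequality is in hand, the remaining pieces of the argument (reduction to isotropy, control of $\bar X$, and the eigenvalue bookkeeping) are routine.
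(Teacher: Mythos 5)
The paper does not prove this lemma at all: it is imported verbatim as \cite[Lemma 2.2]{lovasz2006fast}, where it is itself established by extending Rudelson's theorem on empirical covariance matrices of isotropic convex bodies to logconcave densities. So there is no in-paper argument to compare against; your proposal has to stand on its own as a proof, and as such it has a genuine gap. Your reduction to the isotropic case, the split $T=S_k-\bar X\bar X^T$, and the eigenvalue bookkeeping (spectrum of $T$ in $[1/2,3/2]$ implies spectrum of $T^{-1}$ in $[2/3,2]$) are all correct and are exactly the standard scaffolding. But the entire content of the lemma is the matrix-concentration estimate $\lVert S_k-I\rVert_{\mathrm{op}}\le 1/4$ with failure probability $2^{-t}$ at the stated sample complexity, and you assert this rather than prove it. The sentence ``I would derive this by truncating \dots and applying a non-commutative Khintchine or matrix Bernstein bound'' names the right family of tools, but the specific $t^3$ dependence (rather than the $\log(1/\delta)$ or $\mathrm{poly}(t)$ one gets generically from matrix Bernstein with sub-exponential entries) is precisely the nontrivial part of Rudelson's argument and of its logconcave extension; it does not fall out of a routine net-plus-Bernstein computation, and you acknowledge as much in your final paragraph. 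A proof that defers its only hard step is a plan, not a proof.

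One further point you should have caught: the bound $k>C_0t^3\ln n$ as printed cannot be literally correct, because for $k<n$ the matrix $T$ is a sum of $k$ rank-one terms and is therefore singular, so $T^{-1/2}$ does not exist. The statement in \cite{lovasz2006fast} has $k>C_0t^3\,n\ln n$, and your own argument silently assumes this regime --- you invoke $\lVert\bar X\rVert^2=O(n/k)=o(1)$ and a Rudelson bound that needs $k\gtrsim n\ln n$ samples. Flagging that the hypothesis as stated is missing a factor of $n$, and that your proof only works for the corrected hypothesis, would have been an important part of a careful verification.
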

From \lem{affine-transform-isotropic}, $k = \lceil C_0n\ln^5 n \rceil = \tilde{\Theta}(n)$ samples from a logconcave density $f$ suffice to put it into near-isotropic position. However, efficiently obtaining samples from a density $\pi_{i}$ requires it to be well-rounded to start with. To overcome this difficulty, we interlace the rounding with the stages of the volume estimation algorithm where in each stage, we obtain an affine transformation that puts the density to be sampled in the next stage into isotropic position. The density $\pi_0$ is very close to an exponential distribution  (since it is concentrated inside the convex body) and can hence be sampled without resorting to a random walk.

To show that samples from $\pi_{i}$ can be used to transform $\pi_{i+1}$ into isotropic position, we use the following lemma:
\begin{lemma}[{\cite[Lemma 4.3]{kalai2006simulated}}]
  \label{lem:variance-relation}
  Let $f$ and $g$ be logconcave densities over $K$ with centroids $z_f$ and $z_g$ respectively. Then for any $u \in \R^n$,
  \begin{align}
    \label{eq:2}
    \E_{f}[(u \cdot (x-z_f))^2] \le 16\E_f\left[\frac{f}{g}\right]\E_{g}[(u \cdot (x - z_g))^2].
  \end{align}
\end{lemma}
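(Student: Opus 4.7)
My plan is to reduce the inequality to a comparison of second moments of the one-dimensional projection $h(x) := u \cdot (x - z_g)$, then bridge from $f$-expectations to $g$-expectations by Cauchy--Schwarz combined with a fourth-moment bound for one-dimensional logconcave marginals. Without loss of generality I take $u$ to be a unit vector. Since the variance of $u\cdot x$ under $f$ is minimized at $u\cdot z_f$,
\begin{align*}
\E_f[(u\cdot(x-z_f))^2] \;\le\; \E_f[(u\cdot(x-z_g))^2] \;=\; \E_f[h^2],
\end{align*}
so it suffices to prove $\E_f[h^2] \le 16\,\E_f[f/g]\,\E_g[h^2]$.

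For the bridging step, I write $h^2 f = (h^2\sqrt{g})(f/\sqrt{g})$ and apply the Cauchy--Schwarz inequality to separate the two densities:
\begin{align*}
\E_f[h^2]^2 \;=\; \Bigl(\int_K h^2 f\,\d x\Bigr)^2 \;\le\; \Bigl(\int_K h^4 g\,\d x\Bigr)\Bigl(\int_K \frac{f^2}{g}\,\d x\Bigr) \;=\; \E_g[h^4]\cdot \E_f[f/g].
\end{align*}
Now $h$ is a linear function of $x$, so by Pr\'ekopa--Leindler the marginal of $g$ along $u$ is a one-dimensional logconcave density, and its mean equals $u \cdot z_g$; hence $\E_g[h] = 0$. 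Borell's $L^p$--$L^q$ moment comparison for centered one-dimensional logconcave random variables (or a direct computation using their exponential tails) then yields an absolute constant $C$ such that $\E_g[h^4] \le C\,\E_g[h^2]^2$; any $C \le 256$ suffices for this lemma (the sharp value is in fact at most $9$, attained by the exponential distribution).

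Substituting gives $\E_f[h^2] \le \sqrt{C\,\E_f[f/g]}\,\E_g[h^2]$. The inequality $1 = \bigl(\int f\bigr)^2 \le \bigl(\int f^2/g\bigr)\bigl(\int g\bigr) = \E_f[f/g]$, another application of Cauchy--Schwarz, shows $\E_f[f/g] \ge 1$, so $\sqrt{\E_f[f/g]} \le \E_f[f/g]$. Chaining this with the variance-minimization step yields the lemma with constant $\sqrt{C} \le 16$. The only real obstacle is the fourth-moment step: everything else is routine, but one must know (or verify directly) the universal bound on $\E_g[h^4]/\E_g[h^2]^2$ for centered one-dimensional logconcave marginals. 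Nothing in the argument depends on the dimension $n$ beyond the use of Pr\'ekopa--Leindler to reduce to this 1D moment inequality, so the factor $16$ is absolute.
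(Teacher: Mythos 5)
This lemma is imported verbatim from \cite[Lemma 4.3]{kalai2006simulated}; the paper gives no proof of its own, so there is nothing internal to compare against. Your argument is correct and is essentially the standard one: reduce to the second moment about $z_g$ (since the second moment of $u\cdot x$ under $f$ is minimized at its mean $u\cdot z_f$), apply Cauchy--Schwarz in the form $\bigl(\int h^2 f\bigr)^2\le\bigl(\int h^4 g\bigr)\bigl(\int f^2/g\bigr)$ to trade $f$ for $g$ at the price of the $L_2$-norm factor, and finish with a fourth-versus-second moment comparison for the one-dimensional logconcave marginal of $g$ along $u$. The one ingredient you flag as the ``only real obstacle'' does hold with room to spare: the restriction of a $1$D logconcave density to each half-line is logconcave, the $\Gamma$-normalized moment monotonicity gives $\E[Y^4]\le 6\,\E[Y^2]^2$ for nonnegative logconcave $Y$, and combining the two halves using $\Pr_g[h\ge 0],\Pr_g[h\le 0]\ge 1/e$ (Gr\"unbaum-type bound for the centered marginal) yields $\E_g[h^4]\le 6e\,\E_g[h^2]^2$, so any constant $C\le 256$ is comfortably available and your final step $\sqrt{\E_f[f/g]}\le\E_f[f/g]$ (valid since $\E_f[f/g]\ge 1$ by Cauchy--Schwarz) delivers the stated factor of $16$.
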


We now have the following proposition:
\begin{proposition}
  \label{prop:isotropic-adjacent}
  If affine transformation $S_i$ puts $\pi_{i}$ in near-isotropic position then it also puts $\pi_{i+1}$ in near-isotropic position.
\end{proposition}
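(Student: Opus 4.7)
}

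The plan is to use \lem{variance-relation} (Kalai-Vempala) applied in both directions between $\pi_i$ and $\pi_{i+1}$, combined with the two-sided $L_2$-warmness bounds established earlier (\lem{l2warm-pi-plus1} and \lem{l2warm-pi-plus1-reverse}). First, I would note that affine transformations preserve the relevant quantities: since $S_i$ is invertible, the density ratio of $S_i \pi_{i+1}$ to $S_i \pi_i$ evaluated at a point equals the ratio of $\pi_{i+1}$ to $\pi_i$ at the pre-image (the Jacobians cancel), so $\|S_i\pi_{i+1}/S_i\pi_i\|=\|\pi_{i+1}/\pi_i\|\le 1$ and similarly $\|S_i\pi_i/S_i\pi_{i+1}\|\le 8$. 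Likewise, centroids and second moments transform equivariantly, so the near-isotropy condition for $\pi_{i+1}$ under $S_i$ is exactly a statement about the directional variances $\E_{\pi_{i+1}}[(u^{T}(x-z_{\pi_{i+1}}))^{2}]$ (in the original coordinates) relative to $\E_{\pi_i}[(u^{T}(x-z_{\pi_i}))^{2}]$ for every unit vector $u$ in the transformed coordinates.

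Next, for any unit vector $u$, I would apply \lem{variance-relation} twice. Taking $f=\pi_{i+1}$ and $g=\pi_i$ together with \lem{l2warm-pi-plus1-reverse} gives the upper bound
\begin{align*}
\E_{\pi_{i+1}}[(u^{T}(x-z_{\pi_{i+1}}))^{2}]\le 16\,\|\pi_{i+1}/\pi_i\|\,\E_{\pi_i}[(u^{T}(x-z_{\pi_i}))^{2}]\le 16\,\E_{\pi_i}[(u^{T}(x-z_{\pi_i}))^{2}].
\end{align*}
Swapping the roles of $f$ and $g$ and invoking \lem{l2warm-pi-plus1} yields the lower bound
\begin{align*}
\E_{\pi_i}[(u^{T}(x-z_{\pi_i}))^{2}]\le 16\,\|\pi_i/\pi_{i+1}\|\,\E_{\pi_{i+1}}[(u^{T}(x-z_{\pi_{i+1}}))^{2}]\le 128\,\E_{\pi_{i+1}}[(u^{T}(x-z_{\pi_{i+1}}))^{2}].
\end{align*}
Thus every directional variance of $\pi_{i+1}$ lies within a fixed multiplicative window of the corresponding directional variance of $\pi_i$.

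To conclude, by hypothesis $S_i$ puts $\pi_i$ in $c$-near-isotropic position, i.e.\ the directional variances of $S_i\pi_i$ are pinched into $[1/c,c]$ (by \lem{affine-transform-isotropic} we can take $c=2$). Combining with the two displayed inequalities (transferred through $S_i$ via the equivariance observation above), the directional variances of $S_i\pi_{i+1}$ are sandwiched in $[1/(128c),\,16c]$, so $S_i$ puts $\pi_{i+1}$ in $128c$-near-isotropic (hence $O(1)$-near-isotropic) position. By \lem{isotropic-is-rounded}, this in turn implies $\pi_{i+1}$ is $O(1)$-rounded after $S_i$, which is exactly what the subsequent mixing arguments (via \thm{exp-hit-run-mixes}) require.

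I expect the main obstacle to be bookkeeping rather than mathematics: one must verify carefully that the $L_2$-warmness numbers and the near-isotropy constants transform correctly under the affine map $S_i$ (so that the hypothesis on $S_i\pi_i$ can legitimately be combined with warmness bounds proved for $\pi_i$ and $\pi_{i+1}$ in the original coordinates), and that the centroid shift in $S_i$ does not spoil the variance comparison. Once this equivariance is spelled out, the rest is a two-line application of \lem{variance-relation}.
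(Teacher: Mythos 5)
Your proposal is correct and follows essentially the same route as the paper: apply \lem{variance-relation} in both directions with the warmness bounds from \lem{l2warm-pi-plus1} and \lem{l2warm-pi-plus1-reverse} to sandwich the directional variances of $\pi_{i+1}$ against those of $\pi_i$, then combine with the $2$-isotropy of $S_i\pi_i$. Your extra care about equivariance of the warmness and variance quantities under $S_i$ is a sensible addition that the paper leaves implicit, and your constant $128$ versus the paper's $144$ is an immaterial bookkeeping difference.
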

\begin{proof}
  Let $S_i$ put $\pi_{i}$ in $2$-isotropic position.
  Applying \lem{variance-relation} with $f = \pi_{i+1}, g = \pi_{i}$, we have that for any unit vector $u \in \R^n$,
  \begin{align}
    \E_{\pi_{i+1}}[(u \cdot (x - z_{\pi_{i+1}}))^2] &\le 16\E_{\pi_{i+1}}\left[\frac{\pi_{i+1}}{\pi_{i}}\right]\E_{\pi_{i}}[(u \cdot (x - z_{\pi_{i}}))^2] \le 32e^{2}
  \end{align}
  since $\E_{\pi_{i+1}}\left[\frac{\pi_{i+1}}{\pi_{i}}\right] \le e^{2}$ from \lem{l2warm-pi-plus1-reverse}. Again applying \lem{variance-relation}
  \begin{align}
    \frac{1}{2} \le \E_{\pi_{i}}[(u \cdot (x - z_{\pi_{i}}))^2] &\le \E_{\pi_{i}}\left[\frac{\pi_{i}}{\pi_{i+1}}\right]\E_{\pi_{i+1}}[(u \cdot (x - z_{\pi_{i+1}}))^2]
  \end{align}
  $\E_{\pi_{i}}\left[\frac{\pi_{i}}{\pi_{i+1}}\right] \le 8$ from \lem{l2warm-pi-plus1}. Therefore,
  \begin{align}
    \frac{1}{2} \le 128e^{2}\E_{\pi_{i+1}}[(u \cdot (x - z_{\pi_{i+1}}))^2]
  \end{align}
  Thus $E_{\pi_{i+1}}$ is also put in near-isotropic position.
\end{proof}

We finally have the main result of this section:
\begin{proposition}\label{prop:rounding-interlacing}
At each stage $i$ of \algo{quantum-interlacing}, the affine transformation puts the distribution $\pi_{i+1}$ in near-isotropic position using an additional $\tilde{O}(n^{2.5})$ quantum queries to $O_K$.
\end{proposition}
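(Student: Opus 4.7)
The plan is to combine \lem{affine-transform-isotropic} with \prop{isotropic-adjacent} to establish the rounding guarantee, and then charge the cost to the $\pi/3$-amplitude amplification needed to maintain $l = \tilde{\Theta}(n)$ ``rounding copies'' of $|\pi_{i}\>$ through the stages. First, note that at stage $i$, given $l = \lceil C_{0} n \ln^{5} n \rceil = \tilde{\Theta}(n)$ samples from $\pi_{i}$ (drawn after the accumulated affine transformation up to stage $i$), \lem{affine-transform-isotropic} produces an affine transformation that puts $\pi_{i}$ in $2$-isotropic position with probability $1-1/\poly(n)$. \prop{isotropic-adjacent} then implies that the same transformation puts $\pi_{i+1}$ in near-isotropic position, and combining this with \lem{isotropic-is-rounded} shows that $\pi_{i+1}$ becomes $O(1)$-rounded in the sense of \defn{well-roundedness-logconcave}, which is exactly the conclusion of the proposition.

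The samples must be extracted nondestructively, since the same $l$ copies of $|\pi_{i}\>$ will subsequently be $\pi/3$-amplitude-amplified to $|\pi_{i+1}\>$ for the next iteration. Computing $\bar X$ and $T$ reduces to estimating $O(n^{2})$ expectations of bounded random variables under $\pi_{i}$, each only to constant relative accuracy. Each such expectation is handled by the compute-uncompute wrapping of the quantum Chebyshev inequality developed in \sec{nondestructive}, as in \lem{err-non-destructive}; the post-measurement states remain $O(\epsilon_{1}+\delta)$-close to $|\pi_{i}\>$ for errors $\epsilon_{1},\delta$ chosen inverse-polynomially in $n$, ensuring the cumulative perturbation stays below the precision budget of \thm{main}. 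For the cost, the inductive hypothesis gives $\pi_{i}$ an $\tilde{O}(n^{3})$-mixing hit-and-run walk by \thm{exp-hit-run-mixes}; by \lem{pi3-error}, each copy of $|\pi_{i}\>$ is produced from a copy of $|\pi_{i-1}\>$ using $\tilde{O}(n^{3/2})$ queries to $O_{\K}$, and summing over the $l = \tilde{\Theta}(n)$ copies yields the claimed $\tilde{O}(n^{5/2})$ quantum queries per stage; the nondestructive extraction of the transformation itself adds only polylogarithmic overhead since only constant precision is needed per coordinate.

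The main obstacle is expected to be the tight coupling between rounding and volume estimation: the nondestructive extraction must perturb each of the $l$ copies of $|\pi_{i}\>$ by $o(1/m)$ so that after $m = \tilde{O}(\sqrt{n})$ stages the accumulated error still falls below the target, while simultaneously supplying enough resolution in $\bar X$ and $T$ for \lem{affine-transform-isotropic} to apply. A secondary subtlety is the base case $i = 0$, but since $\pi_{0}$ is very close to an exponential distribution on the cone it can be sampled directly without any prior rounding (as in \lin{quantum-initial-sample} of \algo{quantum-volume}), so the induction launches correctly.
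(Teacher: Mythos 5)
Your proof follows essentially the same route as the paper: an induction over stages in which \lem{affine-transform-isotropic} applied to $l=\tilde{\Theta}(n)$ nondestructively obtained samples of $\pi_{i}$ yields a $2$-isotropic transformation, \prop{isotropic-adjacent} transfers near-isotropy to $\pi_{i+1}$, and the cost is $\tilde{O}(n)\cdot\tilde{O}(n^{1.5})=\tilde{O}(n^{2.5})$ queries per stage, with the base case handled by directly sampling the near-exponential $\pi_{0}$. The only caveat is your reframing of computing $\bar{X}$ and $T$ as ``estimating $O(n^{2})$ expectations,'' which sits in mild tension with \lem{affine-transform-isotropic}'s requirement of an empirical covariance of i.i.d.\ samples; the paper instead nondestructively extracts the samples themselves and forms $T$ from them, but this does not change the substance of the argument.
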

\begin{proof}
  Since $\pi_0$ is nearly an exponential distribution, it can be sampled without using a random walk and thus the proposition is true for $i=0$. Assume that the proposition is true for $1,2,\dots,k$. Then an affine transformation can be found to put $\pi_{k}$ in near-isotropic position. Thus a classical hit-and-run walk starting from $\pi_{k-1}$ converges to $\pi_{k}$ in $\tilde{O}(n^3)$ steps. By the analysis in \sec{proof-error-analysis}, a quantum sample $|\pi_{k-1}\>$ can be rotated to $|\pi_{k}\>$ using $\tilde{O}(n^{1.5})$ quantum queries. $\tilde{O}(n)$ such samples suffice to compute the covariance matrix $T$ in \eq{affine-trans}, which puts $\pi_k$ in $2$-isotropic position. By \prop{isotropic-adjacent}, this also puts $\pi_{k+1}$ in near-isotropic position. This concludes the proof.
\end{proof}

\begin{algorithm}[ht]
\KwInput{Membership oracle $O_{\K}$ for $\K$.}
\KwOutput{$\epsilon$-multiplicative approximation of $\vol(\K)$.}
Set $m=\Theta(\sqrt{n}\log(n/\epsilon))$ to be the number of iterations of simulated annealing and $a_{i}=2n(1-\frac{1}{\sqrt{n}})^{i}$ for $i\in\range{m}$. Let $\pi_{i}$ be the probability distribution over $\K'$ with density proportional to $e^{-a_{i}x_{0}}$\; \nonl
Set error parameters $\delta,\epsilon' = \Theta(\epsilon/m^2), \epsilon_1 = \epsilon/2m$; let $k=\tilde{\Theta}(\sqrt{n}/\epsilon)$ be the number of copies of stationary states for applying the quantum Chebyshev inequality; let $l = \tilde{\Theta}(n)$ be the number of copies of stationary states needed to obtain the affine transformation $S_i$; \nonl
Prepare $k+l$ (approximate) copies of $|\pi_{0}\>$, denoted $|\tilde\pi_{0}^{(1)}\>,\ldots,|\tilde\pi_{0}^{(k+l)}\>$\;
\For{$i\in\range{m}$}{
  Use the quantum Chebyshev inequality on the $k$ copies of the state $|\tilde{\pi}_{i-1}\>$ with parameters $\epsilon_1,\delta$ to estimate the expectation $\E_{\pi_{i}}[V_i]$ (in Eq.~\eqn{Vi-expectation}) as $\tilde{V_i}$ (\lem{err-non-destructive} and \fig{SA-block-nondes}). The post-measurement states are denoted $|\hat\pi_{i-1}^{(1)}\>,\ldots,|\hat\pi_{i-1}^{(k)}\>$\;
  Use the $l$ copies of the state $|\pi_{i-1}\>$ to nondestructively\footnotemark{} obtain the affine transformation $S_{i} = T = \frac{1}{l}\sum_{q=1}^{l}(X^{q} - \bar{X})(X^{q} - \bar{X})^T$ where the $X_q$ are samples from the density $\pi_{i-1}$ and $\bar{X} = \frac{1}{l}\sum_{q=1}^l X^q$. The post-measurement states are denoted $|\hat\pi_{i-1}^{(k+1)}\>,\ldots,|\hat\pi_{i-1}^{(k+l)}\>$\;
  Apply $\pi/3$-amplitude amplification with error $\epsilon'$ (\sec{quantum-MCMC} and \lem{pi3-error}) and affine transformation $S_i$ to map $|S_i\hat\pi_{i-1}^{(1)}\>,\ldots,|S_i\hat\pi_{i-1}^{(k+l)}\>$ to $|S_i\tilde\pi_{i}^{(1)}\>,\ldots,|S_i\tilde\pi_{i}^{(k+l)}\>$, using the quantum hit-and-run walk\;
  Invert $S_i$ to get $k + l$ (approximate) copies of the stationary distribution $|\pi_{i}\>$ for use in the next iteration\;
}
Compute an estimate $\widetilde{\text{Vol}(\K')} = n!v_{n}(2n)^{-(n+1)}\tilde{V}_{1}\cdots \tilde{V}_{m}$ of the volume of $\K'$, where $v_{n}$ is the volume of the $n$-dimensional unit ball\;
Use $\widetilde{\text{Vol}(\K')}$ to estimate the volume of $\K$ as $\widetilde{\text{Vol}(\K)}$ (\sec{pencil-construction-analysis}).
\caption{Volume estimation of convex $\K$ with interlaced rounding.}
\label{algo:quantum-interlacing}
\end{algorithm}

\paragraph{Rounding the convex body as a preprocessing step}
Consider applying only the rounding part of \algo{quantum-interlacing}. By \prop{rounding-interlacing}, the final affine transformation puts the density $\pi_m \propto e^{-a_m x_0}$ in near-isotropic position. Since $a_m \le \epsilon^2/n$, we have
\begin{align}
  \label{eq:4}
  (1 - \epsilon^2)\E_{\K'}[|X - \bar{X}|]^2 \le \int_{\K'}\frac{e^{-a_m x_0}|x - \bar{x}|^2}{Z(a_m)}\,\d x \le 2n;
\end{align}
thus $\E_{\K'}[|X - \bar{X}|]^2 \le 2n/(1 - \epsilon^2)$. From \cite[Lemma 3.3]{lovasz2006fast}, all but an $\epsilon$-fraction of the body is contained inside a ball of radius $O(\sqrt{n})$. Combined with our assumption that $B_2(0,1) \subseteq \K'$, this shows that $S_{m+1}$ puts the convex body $\K'$ in well-rounded position.

\footnotetext{Similar to \lem{Chebyshev-lemma}, we do not directly measure the states; instead we use a quantum circuit to (classically) compute the affine transformation $S_{i}$ and apply it to the convex body coherently for the next iteration. Note that the quantum register holding the affine transformation will be in some superposition, but by using $O(\log n)$ copies and taking the mean (as in \lem{Chebyshev-lemma}), the amplitude of the correct affine transformation will be arbitrarily close to 1.}

%%%%%%%%%%%%%%%%%%%%%%%%%%%%%%%%%%%%%%%%%%%%%%%%%%%%%%%%%%%%%%%%%%%%%%%%%%%%%%

\section{Implementation of the quantum hit-and-run walk}\label{sec:implement-hit-and-run}
Due to the precision of representing real numbers, the implementation of volume estimation algorithms in practice requires to walk in a discrete domain that is a subset of $\R^n$. It is known that walks only taking local steps within a short distance (such as the grid walk and the ball walk) can be discretized with good approximation by dividing $\R^{n}$ into small hypercubes and walking on their centers (see e.g.~\cite{frieze1999log}), but such error analysis does not automatically apply to hit-and-run walks for which we did not find existing classical discretizations. We emphasize the discretization in contrast to most classical treatments for two reasons: (1) Quantum algorithms are typically presented in a circuit model, in contrast to the RAM model used by classical algorithms. Continuous variables in the circuit model correspond to registers of infinite size, preventing a clear analysis of the resources of the algorithm in terms of gate count. Specifically to obtain the performance of the algorithm in reality, we must show that $\poly(\log(1/\epsilon))$ bit registers suffice. (2) Standard methods of preparing walk operators corresponding to classical Markov Chains (see for eg. \cite{szegedy2004quantum}) rely on the sparsity of the transition matrix. In the case of geometric random walks sparsity is not well-defined in the continuous case and may not hold even for discretizations (for example, the hit-and-run walk has a non-zero transition density to any point in the convex body.) The efficient preparation of quantum states corresponding to classical distributions is not always a trivial operation, and there has been research~\cite{holmes2020efficient} about preparing common distributions for quantum Monte-Carlo methods. Most existing general procedures come without provable guarantees on the resources required for sufficiently accurate samples; we provide here a simple analysis for the cost of implementing the hit-and-run walk via the Grover-Rudolph method~\cite{grover2002creating}.

In this section, we introduce a discretized quantum hit-and-run walk and give an explicit analysis of its implementation. The basic idea of the discretization is to represent the coordinates with rational numbers. We approximate $\rmk$ by a set of discretized points in $\rmk$ and define a Markov chain on these points (see \sec{discretization-hit-and-run}). We use a two-level discretization: the hit-and-run process is performed with a coarser discretization and then a point in a finer discretization of the coarse grid is chosen uniformly at random as the actual point to jump to. This ensures that the starting and ending points (in the coarser discretization) of one jump are far from the boundary so that a small perturbation does not change the length of the chord induced by the two points significantly. Then in \sec{discretization-conductance}, the discrete conductance can be bounded by bounding the distance between the discrete and continuous transition probabilities as well as the distance between the discrete and continuous subset measures. In \sec{impl-quant-walk}, we prove that the quantum gate complexity of implementing the discretized quantum hit-and-run walk is $\tilde{O}(n)$, the same overhead as for implementing classical hit-and-run walks.

%=================================================================
\subsection{Discretization of the hit-and-run walk}\label{sec:discretization-hit-and-run}
For a convex body $\rmk \subseteq \bbr^n$, we let $\rmke$ denote the set of vectors in $\rmk$ whose coordinates can be represented by some fixed-point representation using $\log(1/\epsilon)$ bits\footnote{Note that this $\epsilon$ is different from the multiplicative error in the problem definition. However, this $\epsilon$ is not the dominating error and the overhead is only logarithmic.} We can use the . We call $\rmke$ an \emph{$\epsilon$-discretization} of $\rmk$. The finite set $\rmke$ provides an $\epsilon$-net for $\rmk$. We also define $(\bbr^n)_{\epsilon}$ as a $\epsilon$-discretization of $\bbr^n$.

We consider a Markov chain whose states are the points in $\rmke$.
For any $v \in \bbr^n$, we define the \emph{$\epsilon$-box} $b_{\epsilon}(v) := \{x \in \bbr^n: x(i) \in [v(i)-\epsilon/2, v(i)+\epsilon/2],\, \forall i \in \range{n}\}$. Let $\overline{\rmk}_{\epsilon}$ be the continuous set formed by the $\epsilon$-boxes of the points in $\rmke$, i.e., $\overline{\rmk}_{\epsilon} = \bigcup_{x \in \rmke}b_{\epsilon}(x)$. For two distinct points $u, v \in \bbr^n$, we denote by $\ell_{uv}$ the line through them. For a line $\ell \subseteq \bbr^n$, let $\ell(\overline{\rmk}_{\epsilon})$ be the segment of $\ell$ contained in $\overline{\rmk}_{\epsilon}$, i.e., $\ell(\overline{\rmk}_{\epsilon}) = \{x \in \ell: x \in \overline{\rmk}_{\epsilon}\}$. In addition, for $u \in \ell$, we define $\ell(\overline{\rmk}_{\epsilon}, u, \epsilon')$ as the $\epsilon'$-discretization of $\ell(\overline{\rmk}_{\epsilon})$ starting from $u$, i.e., $\ell(\overline{\rmk}_{\epsilon}, u, \epsilon') = \{x \in \ell(\overline{\rmk}_{\epsilon}): |x-u| = k\epsilon' \text{ for some } k \in \{0, 1, \ldots\}\}$. Analogous to the distribution $\pi_f$ for the continuous-space case, we define its corresponding discrete distribution $\hat{\pi}_f$ with $\hat{\pi}_f(\rms) = \sum_{x \in \rms}f(x)/\sum_{x \in (\bbr^n)_{\epsilon}}f(x)$.

To implement the hit-and-run walk (see \sec{hit-and-run}), we sample a uniformly random direction from a point $u$. We achieve this by sampling $n$ coordinates according to the standard normal distribution from the corresponding coordinate of $u$ and normalizing the new point to have unit length; the uniformity of such sampling is well known (see for example \cite{muller1959note,marsaglia1972choosing}).\footnote{A one-line proof is that this distribution is invariant under orthogonal transformations, but the uniform distribution on the $n$-dimensional unit sphere $\rmb_{n}$ is the unique distribution that satisfies this property. Although the Gaussian distributions we sample from are discretized, the invariance under orthogonal transformations holds approximately, so we have approximate uniformity.} Let this normalized point be $v$, so that the sampled direction is $\ell_{u v}$. Note that the coordinate we sample from is discrete. The directions we can sample form a discrete set denoted $\rml(u, \epsilon')$, where $\epsilon'$ is the precision for sampling directions.

Now we compute the probability that a specific direction is sampled. After normalization, the point will ``snap'' to a point in $(\bbr^n)_{\epsilon}$. Consider $\bigcup_{v:\, b_{\epsilon'}(v)\cap\rmb_n \neq \emptyset}b_{\epsilon}(v)$, where $\rmb_n$ is the $n$-dimensional unit sphere. We use the $(n-1)$-dimensional volume (surface area) of this body to approximate that of $\rmb_n$, with up to a $\sqrt{2}$ enlargement factor due to the fact that $\epsilon$-boxes have sharp corners. Thus, the number of points that $v$ can snap to is in the range $[n\vol(\rmb_n)/\epsilon^{n-1}, \sqrt{2}n\vol(\rmb_n)/\epsilon^{n-1}]$, which is also the range of $|\rml(u, \epsilon')|$. To make the lines in $\rml(u, \epsilon')$ cover every $\epsilon$-box on the boundary of $\sqrt{n}\rmb_n$ (so that it is possible to sample all the points in $\rmke$), we need $\epsilon' \leq \epsilon/\sqrt{n}$.

Let $L := |\rml(u, \epsilon')|$. We label the lines in $\rml(u, \epsilon')$ as $\{\ell_1, \ldots, \ell_L\}$ (ordered arbitrarily). For each $i \in [L]$, let $v_i$ be the point after normalization. Intuitively, $v_i$ approximates a point on the ``surface'' of the unit ball around $u$ (see~\fig{hyperpyramid}). There are \emph{hyperfaces} of $b_{\epsilon'}(v_i)$ that are out-facing and not adjacent to any $\epsilon'$-box in $\bigcup_{v:\, b_{\epsilon'}(v)\cap\rmb_n \neq \emptyset}b_{\epsilon}(v)$ (as an illustration, see dashed edges in~\fig{hyperpyramid}). For all points $v''$ in these hyperfaces, the line segments from $u$ through $v''$ of length $\sqrt{n}$ form a set, which we refer to as a \emph{hyperpyramid}, denoted by $\rmp_i$. The apex of each hyperpyramid is $u$, and the base of each hyperpyramid is a subset of the hyperspherical surface. Intuitively, the bases of $\rmp_1, \ldots, \rmp_L$ form a partition of the ``surface'' of the ball of radius $\sqrt{n}$ around $u$, and therefore $\{\rmp_1, \ldots, \rmp_L\}$ forms a partition of the ball of radius $\sqrt{n}$ around $u$.

\begin{figure}[h]
  \centering
  \includegraphics[width=0.7\textwidth]{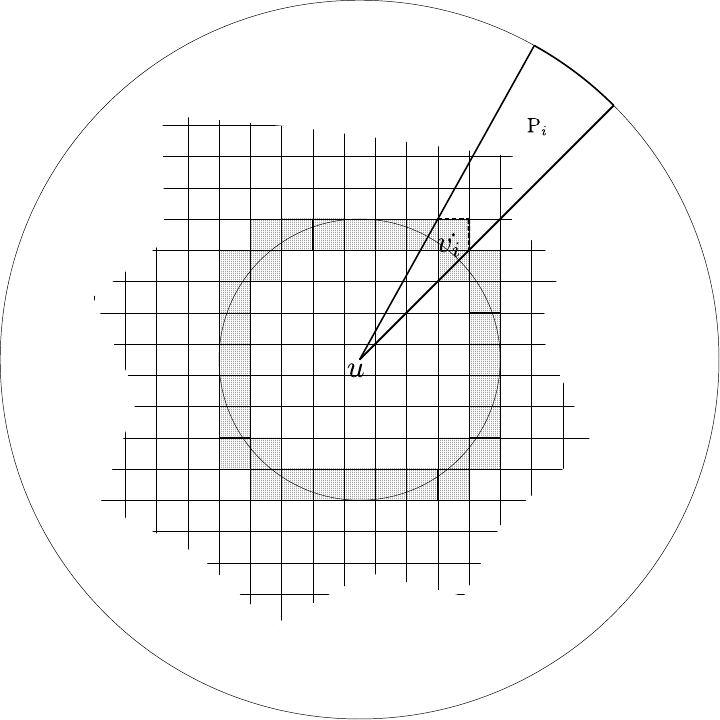}
  \caption{Constructing a hyperpyramid. The inner circle represents the unit ball and the outer circle represents the ball of radius $\sqrt{n}$. The grids represents the $\epsilon'$-discretization of $\bbr^n$; each grid is an $\epsilon'$-box. The shaded boxes are points where a direction ``snaps'' to after normalization, and the dashed edges of $b_{\epsilon'}(v_i)$ is its ``outer face.'' The hyperpyramid $\rmp_i$ is represented by a circular sector.}
  \label{fig:hyperpyramid}
\end{figure}

%=================================================================
\subsection{Conductance lower bound on the discretized hit-and-run walk}\label{sec:discretization-conductance}

The discretized hit-and-run walk on $\rmke$ described above can be summarized as \algo{hit-and-run-1step}.

\begin{algorithm}[htbp]
  \SetKwInOut{Input}{Input}
  \Input{Current point $u \in \rmke$.}
  Uniformly sample a line $\ell \in \rml(u, \epsilon)$ by independently sampling $n$ coordinates around $u$ according to the standard normal distribution and then normalizing to unit length\;
  Sample a point $v'$ in $\ell(\overline{\rmk}_{\epsilon}, u, \epsilon')$ according to $f$\;
  Let $v'' \in \rmkse$ that is closest to $v'$\;
  Output a uniform sample $v$ in $b_{\sqrt{\epsilon}n^{1/4}}(v'') \cap (\bbr^n)_{\epsilon}$\;
  \caption{One step of the discretized hit-and-run walk.}
  \label{algo:hit-and-run-1step}
\end{algorithm}

Note that we have used a two-level discretization of $\rmk$, as illustrated in \fig{2-level-dis}. The first level is a coarser discretization $\rmk_{\sqrt{\epsilon}n^{1/4}}$ and the second level is a finer discretization $\rmke$. We first choose a temporary point $v''$ in $\rmk_{\sqrt{\epsilon}n^{1/4}}$. Then we choose a point $v$ uniformly at random in $b_{\sqrt{\epsilon}n^{1/4}}(v'')\cap (\bbr^n)_{\epsilon}$ to jump to. The purpose of this two-level discretization is to avoid having a small change of the original point $u$ cause a huge difference in $\ell_{uv}(\overline{\rmk}_{\epsilon})$ (when $u$ is very close to the boundary).
\begin{figure}[h]
  \centering
  \includegraphics[width=0.5\textwidth]{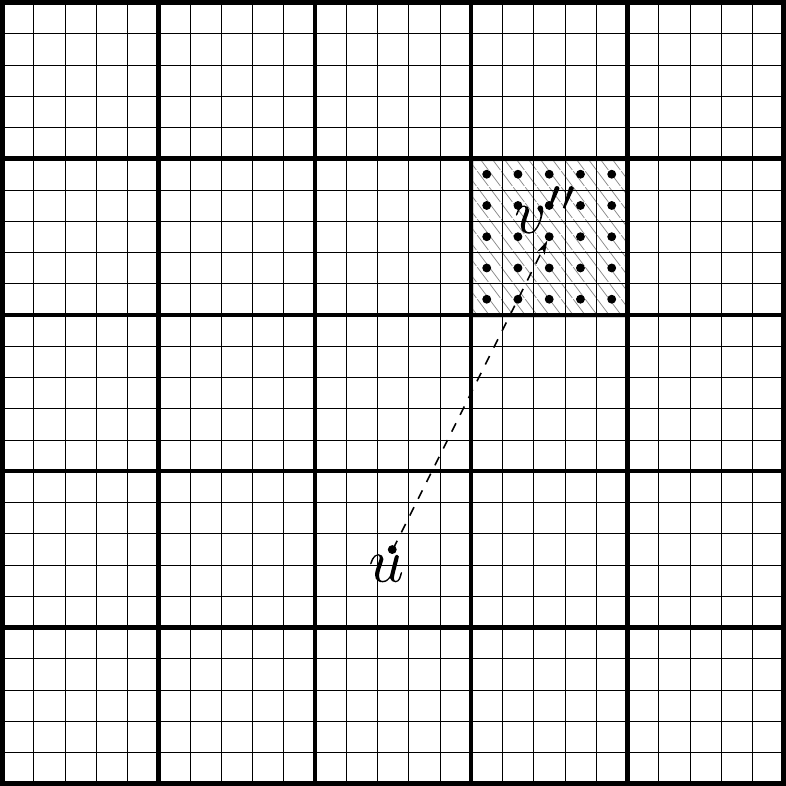}
  \caption{A demonstration of the 2-level discretization of $\rmk$. The thicker grid represents the coarser discretization $\rmk_{\sqrt{\epsilon}n^{1/4}}$ and the thinner grid represents the finer discretization $\rmke$. When $v''$ is chosen from $\rmk_{\sqrt{\epsilon}n^{1/4}}$, an actual point $v$ to jump is chosen uniformly at random in $b_{\sqrt{\epsilon}n^{1/4}}(v'')\cap (\bbr^n)_{\epsilon}$ marked by the points in the shaded region.}
  \label{fig:2-level-dis}
\end{figure}

We first compute the transition probability of the discretized hit-and-run walk.

\begin{lemma}
  The transition probabilities defined by \algo{hit-and-run-1step} satisfy
  \begin{align}
    \label{eq:pxy}
    P_{uv} \geq \sum_{\substack{v' \in \ell(\overline{\rmk}_{\epsilon} u, \epsilon'), \ell \in \rml(x, \epsilon):\\\ell(\overline{\rmk}_{\epsilon}, u, \epsilon') \cap b_{\sqrt{\epsilon}n^{1/4}}(v) \neq \emptyset}}\frac{\epsilon^{n-1}(\sqrt{\epsilon})^nf(v')}{\sqrt{2}n^{1+n/4}\vol(\rmb_n)(\sqrt{n})^{n-1}\hat{\mu}_f(\ell(\overline{\rmk}_{\epsilon}, u, \epsilon'))},
\end{align}
where for any $\rms\subseteq\R^{n}$, we define
\begin{align}\label{eqn:mu-f-defn-discrete}
\hat{\mu}_f(\rms):=\sum_{x\in \rms}f(x).
\end{align}
\end{lemma}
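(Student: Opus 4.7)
The plan is to directly enumerate every execution path of \algo{hit-and-run-1step} that maps $u$ to $v$, and to sum the probability of each such path. A path is determined by the line $\ell\in\rml(u,\epsilon)$ sampled in line~1, the point $v'\in\ell(\overline{\rmk}_\epsilon, u, \epsilon')$ sampled in line~2, the associated snap $v''\in\rmkse$ (which is a deterministic function of $v'$), and the uniform output in $b_{\sqrt{\epsilon}n^{1/4}}(v'')\cap(\bbr^n)_\epsilon$. Such a path produces the output $v$ whenever $v\in b_{\sqrt{\epsilon}n^{1/4}}(v'')$; a sufficient condition (yielding a lower bound) is that the discretized chord intersects $b_{\sqrt{\epsilon}n^{1/4}}(v)$, which matches the indexing in the sum on the right-hand side. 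Writing $P_{uv}$ as this sum of products of conditional probabilities then reduces the lemma to lower bounding each of the three sampling stages separately.

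For the direction-sampling stage, the uniformly random direction from $u$ hits each $\ell\in\rml(u,\epsilon)$ with probability $1/L$, where $L=|\rml(u,\epsilon)|$. Using the hyperpyramid decomposition constructed just above the algorithm, the pyramids $\rmp_1,\dots,\rmp_L$ tile the ball of radius $\sqrt{n}$ around $u$; each base has $(n-1)$-volume at most $\sqrt{2}\,\epsilon^{n-1}$ (with $\sqrt{2}$ absorbing the $\epsilon$-cube corner enlargement), while the total surface area of the sphere of radius $\sqrt{n}$ is $n\vol(\rmb_n)(\sqrt{n})^{n-1}$, giving $L\leq\sqrt{2}\,n\vol(\rmb_n)(\sqrt{n})^{n-1}/\epsilon^{n-1}$ and hence the factor $\epsilon^{n-1}/\bigl[\sqrt{2}\,n\vol(\rmb_n)(\sqrt{n})^{n-1}\bigr]$. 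For the chord-sampling stage, conditional on $\ell$, sampling $v'$ proportionally to $f$ among the discrete points on the chord directly gives $f(v')/\hat{\mu}_f(\ell(\overline{\rmk}_\epsilon, u, \epsilon'))$ by the definition of $\hat{\mu}_f$ in~\eqn{mu-f-defn-discrete}. For the final uniform stage, the coarse box $b_{\sqrt{\epsilon}n^{1/4}}(v'')$ of side length $\sqrt{\epsilon}n^{1/4}$ contains exactly $(\sqrt{\epsilon}n^{1/4}/\epsilon)^n=n^{n/4}/(\sqrt{\epsilon})^n$ points of $(\bbr^n)_\epsilon$, so each is selected with probability $(\sqrt{\epsilon})^n/n^{n/4}$. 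Multiplying the three factors reproduces the displayed expression exactly.

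The main obstacle is the upper bound on $L$, which hides most of the geometry. One must reconcile the unit-sphere picture (where the normalized Gaussian sample lives) with the sphere of radius $\sqrt{n}$ (where the hyperpyramid bases sit); the precision condition $\epsilon'\leq\epsilon/\sqrt{n}$ is precisely what guarantees that every $\epsilon$-box on the boundary of $\sqrt{n}\,\rmb_n$ is covered by exactly one hyperpyramid and therefore corresponds to a distinct element of $\rml(u,\epsilon)$. Once that counting is settled, the rest is routine bookkeeping. As a secondary subtlety, the intersection condition $\ell(\overline{\rmk}_\epsilon, u, \epsilon') \cap b_{\sqrt{\epsilon}n^{1/4}}(v)\neq\emptyset$ is not sharp---it is a sufficient condition on the path, not a characterization---which is exactly why the statement is an inequality rather than an equality.
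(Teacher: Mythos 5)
Your proof follows the same three-factor decomposition as the paper's (probability of the line, times probability of $v'$ along the discretized chord, times probability of the uniform choice in the fine grid of $b_{\sqrt{\epsilon}n^{1/4}}(v'')$) and arrives at the identical bound; the paper's own proof is exactly this terse. One slip in your justification of the direction-sampling factor: since the hyperpyramid bases tile the sphere of radius $\sqrt{n}$ and each line is sampled with probability equal to its base's share of the total surface area, lower bounding that probability (equivalently, upper bounding $L$) requires a \emph{lower} bound on each base's area of order $\epsilon^{n-1}$, whereas an upper bound of $\sqrt{2}\,\epsilon^{n-1}$ on each base would only bound $L$ from \emph{below}. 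The bound on $L$ you end up quoting matches the one asserted in \sec{discretization-hit-and-run}, so this is an inverted inference in the bookkeeping rather than a wrong conclusion, but the direction of the per-base estimate should be fixed.
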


\begin{proof}
  First note that the probability of a line $\ell \in \rml(u, \epsilon)$ being sampled is at least $\frac{\epsilon^{n-1}}{\sqrt{2}n\vol(\rmb_n)(\sqrt{n})^{n-1}}$. Along $\ell$, the probability of sampling $v'$ is $f(v')/\hat{\mu}_f(\ell(\overline{\rmk}_{\epsilon}, u, \epsilon'))$, and the probability of choosing $v$ in $b_{\sqrt{\epsilon}n^{1/4}}(v'')\cap\rmke$ is $(\sqrt{\epsilon})^n/n^{n/4}$.
\end{proof}

According to the definition in \eqn{conductance-defn-discrete}, the conductance of any subset $\rms \subseteq \rmke$ is
\begin{align}
  \phi(\rms) = \frac{\sum_{u \in \rms}\sum_{v \in \rmke\setminus\rms}P_{uv}\hat{\pi}_f(u)}{\min\{\hat{\pi}_f(\rms), \hat{\pi}_f(\rmke\setminus\rms)\}},
\end{align}
where $\hat{\pi}_f$ is defined as $\hat{\pi}_f(\rma) = \sum_{x \in \rma}f(x)$. The conductance of the Markov chain is then
\begin{align}
  \phi = \min_{\rms \subseteq \rmke} \phi(\rms).
\end{align}
Now we prove the main theorem of this section, which shows that the conductance of the discretized hit-and-run walk does not differ significantly from that of the continuous hit-and-run walk.

\begin{theorem}
  \label{thm:conductance}
  Let $\rmke$ be the discretization of convex body $\K$ that contains a unit ball and is contained in a ball with radius $R \leq \sqrt{n}$. Let the density function be $f(x) = e^{-a^Tx}$ having support $\K$ where $a = (1, 0, \ldots, 0)$. Let $\epsilon' \leq \sqrt{\epsilon}n^{-3/4}$. For $\rms \subseteq \rmke$ such that $\hat{\pi}_f(\rms) \leq 1/2$, we have
  \begin{align}
    \phi(\rms) \geq \frac{1}{10^{16}n\sqrt{n}\ln(\frac{2n\sqrt{n}}{\hat{\pi}_f(\rms)})} - \epsilon.
  \end{align}
\end{theorem}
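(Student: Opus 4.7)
The plan is to reduce the discrete conductance lower bound to the continuous bound from \prop{LV06-Theorem6.9} by associating to each discrete subset $\rms \subseteq \rmke$ its continuous counterpart $\overline{\rms} := \bigcup_{u \in \rms} b_\epsilon(u) \subseteq \overline{\rmk}_\epsilon$. Since $\rmk$ contains a unit ball and is contained in a ball of radius $R \leq \sqrt{n}$, the diameter is at most $2\sqrt{n}$, so \prop{LV06-Theorem6.9} gives a continuous conductance for the continuous hit-and-run walk applied to $\overline{\rms}$ of at least $\frac{1}{10^{13}\cdot 2n\sqrt{n}\ln(2n\sqrt{n}/\hat{\pi}_f(\rms))}$. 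It then suffices to show that the discrete conductance $\phi(\rms)$ differs from this continuous quantity by at most a constant multiplicative factor (absorbed into the $10^{16}$ in the statement) plus an additive $\epsilon$ coming from boundary effects.

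For the comparison, I would separately bound (a) the discrepancy between the discrete transition probability $P_{uv}$ given by \eq{pxy} and the integrated continuous transition density $\int_{b_\epsilon(v)} p_u(y)\,\d y$ determined by \lem{Lovasz99-transition-prob}, and (b) the discrepancy between the discrete and continuous measures $\hat{\pi}_f$ and $\pi_f$. For (a), the direction sampling via normalized Gaussian ``snapping'' is uniform on $\rml(u,\epsilon)$ and approximates the uniform direction on the $(n-1)$-sphere up to a $\sqrt{2}$ factor (as observed in \sec{discretization-hit-and-run}); the density-weighted sampling along the coarse chord $\ell(\overline{\rmk}_\epsilon, u, \epsilon')$ with step $\epsilon' \leq \sqrt{\epsilon}n^{-3/4}$ yields a Riemann-sum approximation of $\mu_f$ in \eqn{mu-f-defn-continuous} by $\hat{\mu}_f$ in \eqn{mu-f-defn-discrete} up to small relative error; and the final uniform choice inside $b_{\sqrt{\epsilon}n^{1/4}}(v'') \cap (\bbr^n)_{\epsilon}$ recovers the correct conditional density on $b_\epsilon(v)$. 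For (b), since $f(x)=e^{-x_0}$ is $1$-Lipschitz (as $|a|=1$), $f$ varies by at most a factor $1+O(\epsilon\sqrt{n})$ across any $\epsilon$-box, giving $\hat{\pi}_f(\rms)\epsilon^n = (1\pm O(\epsilon\sqrt{n}))\pi_f(\overline{\rms})$, and similarly for $\rmke\setminus\rms$.

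The main technical obstacle lies in controlling the chord-length factor $\ell(u,x)|x-u|^{n-1}$ appearing in \lem{Lovasz99-transition-prob}: near the boundary of $\rmk$, an $\epsilon$-perturbation of the endpoints can change $\ell(u,x)$ by much more than a multiplicative $\epsilon$, which is precisely what the two-level discretization is designed to prevent. The intermediate point $v''$ lives on the coarser grid $\rmkse$, which forces $v''$ to lie at distance $\Omega(\sqrt{\epsilon}n^{1/4})$ from the boundary whenever the final point $v$ does; thus the discrete chord length $\ell_{uv''}(\overline{\rmk}_\epsilon)$ differs from the true chord length $\ell(u,v'')$ by at most an $O(\epsilon/\sqrt{n})$ multiplicative factor per endpoint. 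I will have to quantify this buffer carefully and verify that $\epsilon' \leq \sqrt{\epsilon}n^{-3/4}$ makes the chord-approximation error, even after being amplified by the $(n-1)$-th power $|x-u|^{n-1}$, contribute at most an additive $\epsilon$ to the final conductance bound once summed over $u\in\rms, v\in\rmke\setminus\rms$ and divided by $\min\{\hat{\pi}_f(\rms),\hat{\pi}_f(\rmke\setminus\rms)\}$. Once these approximations are in place, assembling them with the continuous conductance bound and the $1/2$-probability assumption on $\rms$ yields the claim.
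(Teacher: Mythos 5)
Your proposal follows essentially the same route as the paper's proof: reduce to the continuous conductance bound of \prop{LV06-Theorem6.9} by separately controlling the discrepancy between the discrete transition probabilities $P_{uv}$ and the integrated continuous transition density, and between $\hat{\pi}_f$ and $\pi_f$, with the two-level discretization and the condition $\epsilon' \leq \sqrt{\epsilon}n^{-3/4}$ taming the chord-length factor near the boundary and an additive $\epsilon$ absorbing the residual boundary set $\rmk \setminus \overline{\rmk}_{\epsilon}$. The paper is content with constant multiplicative losses (e.g., chords within a hyperpyramid differ by at most a factor of $2$, and the various relative errors combine to $e^{5+2\epsilon'}$ when $\epsilon \leq 1/n$) rather than the sharper per-endpoint estimates you anticipate needing, but this is a difference of bookkeeping, not of approach.
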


\begin{proof}
  This proof closely follows that of \cite[Theorem 6.9]{LV06}.
  We first consider the transition probability for the continuous hit-and-run walk in $\rmk$. For $u, v \in \rmk$, recall that
  \begin{align}
    P'_u(b_{\epsilon}(v)) = \frac{2}{n\vol(\rmb_n)}\int_{b_{\epsilon}(v)}\frac{f(x)\,\d x}{\mu_f(u, x)|x-u|^{n-1}},
  \end{align}
  where $\mu_f(u,x)$ is a shorthand for $\mu_f(\ell_{ux}(\overline{\rmk}_\epsilon))$. We compare $P'_u(b_{\sqrt{\epsilon}n^{1/4}}(v))$ with $P_{uv}$ for $u \in \rmke$ and $v \in \rmkse$. To this end, we use $\hat{\mu}_f$ to approximate $\mu_f$: for each $\ell$, we have
  \begin{align}
    \label{eq:uf-uhatf}
    \epsilon'\hat{\mu}_f(\ell(\overline{\rmk}_{\epsilon}, u, \epsilon')) \leq e^{\epsilon'}\mu_f(\ell(\overline{\rmk}_{\epsilon})).
  \end{align}
  Consider each hyperpyramid $\rmp_i$ defined in  \sec{discretization-hit-and-run} whose associated line through its apex is $\ell_i$ and $\ell_i(\overline{\rmk}_{\epsilon}, u, \epsilon') \cap b_{\sqrt{\epsilon}n^{1/4}}(v) \neq \emptyset$. Note that the distance between each $u \in \rmke$ and the boundary of $\overline{\rmk}_{\epsilon}$ is at least $\epsilon/2$. Inside each hyperpyramid, the length of the chords through $u$ can differ by a factor at most 2. For each $\ell \subset \rmp_i$, $\hat{\mu}_f(\ell_i(\overline{\rmk}_{\epsilon}, u, \epsilon')) \leq 2e^{\epsilon'}\hat{\mu}_f(\ell(\overline{\rmk}_{\epsilon}, u, \epsilon'))$. Together with \eq{uf-uhatf}, it follows that
  \begin{align}
    \label{eq:mu-muhat}
    \epsilon'\hat{\mu}_f(\ell_i(\overline{\rmk}_{\epsilon}, u, \epsilon')) \leq 2e^{2\epsilon'}\mu_f(\ell(\overline{\rmk}_{\epsilon}))
  \end{align}
  for all $\ell \subset \rmp_i$. Define $c_i := |\ell_i(\overline{\rmk}_{\epsilon}, u, \epsilon') \cap b_{\sqrt{\epsilon}n^{1/4}}(v)|$ (the number of points in this set) and $d_i := |\ell_i(\overline{\rmk}_{\epsilon}) \cap b_{\sqrt{\epsilon}n^{1/4}}(v)|$ (the length of this line). Note that $c_i \leq d_i / \epsilon'$. We further partition $\rmp_i$ into $c_i$ sets $\rmq_{i,1}, \ldots, \rmq_{i, c_i}$ along the direction of $\ell_i$ so that the distance between the hyperplanes that separate adjacent sets is at most $\epsilon'$. For each $j \in [c_i]$, we have
  \begin{align}
    \frac{\epsilon^{n-1}f(v')}{n\vol(\rmb_n)(\sqrt{n})^{n-1}\hat{\mu}_f(\ell_i(\overline{\rmk}_{\epsilon}, u, \epsilon'))}
&= \frac{\epsilon^{n-1}f(v')\epsilon'|v'-u|^{n-1}}{\epsilon'n\vol(\rmb_n)(\sqrt{n})^{n-1}\hat{\mu}_f(\ell_i(\overline{\rmk}_{\epsilon}, u, \epsilon'))|v'-u|^{n-1}} \nonumber \\
&\geq \frac{f(v')\vol(\rmq_{i,j}\cap b_{\sqrt{\epsilon}n^{1/4}}(v))}{2\epsilon'n\vol(\rmb_n)\hat{\mu}_f(\ell_i(\overline{\rmk}_{\epsilon}, u, \epsilon'))|v'-u|^{n-1}},
  \end{align}
  where we have used the fact that the distance between adjacent $\rmq_{i,j}$ and $\rmq_{i, j+1}$ can be bounded from below by $|\rmq_{i,j}\cap\ell_i|/(1+\epsilon'/2) \ge |\rmq_{i, j}\cap\ell_i|/2$.

  Now we consider the integration in $\rmq_{i, j}\cap b_{\sqrt{\epsilon}n^{1/4}}(v)$. We use $f(v)$ to approximate $f(v')$ which causes a relative error at most $e^{\sqrt{\epsilon}n^{1/4}}$, and use $|v'-u|^{n-1}$ to approximate $|x-u|^{n-1}$ for all $x \in \rmq_{i, j}\cap b_{\sqrt{\epsilon}n^{1/4}}(v)$ which causes a relative error at most $e$ provided $\epsilon' \leq \sqrt{\epsilon}n^{-3/4}$ (noting that the distance between $x$ and $u$ is at most $\sqrt{\epsilon}n^{1/4}$). We have
\begin{align}
  &\int_{\rmq_{i,j}\cap b_{\sqrt{\epsilon}n^{1/4}}(v)}\frac{f(x)\,\d x}{n\vol(\rmb_n)\mu_f(u, x)|x-u|^{n-1}} \nonumber \\
  &\qquad\qquad \leq \frac{2e^{\sqrt{\epsilon}n^{1/4}+2\epsilon'+1}f(v')}{\epsilon'n\vol(\rmb_n)\hat{\mu}_f(\ell_i(\overline{\rmk}_{\epsilon}, u, \epsilon'))|v'-u|^{n-1}}\int_{\rmq_{i,j}\cap b_{\sqrt{\epsilon}n^{1/4}}(v)}\,\d x \\
  &\qquad\qquad = \frac{2e^{\sqrt{\epsilon}n^{1/4}+2\epsilon'+1}f(v')\vol(\rmq_{i,j}\cap b_{\sqrt{\epsilon}n^{1/4}}(v))}{\epsilon'n\vol(\rmb_n)\hat{\mu}_f(\ell_i(\overline{\rmk}_{\epsilon}, u, \epsilon'))|v'-u|^{n-1}},
\end{align}
where the inequality follows from \eq{mu-muhat}. Let $i_1, \ldots, i_t$ be the indices such that $\rmp_{i_j} \cap b_{\sqrt{\epsilon}n^{1/4}}(v) \neq \emptyset$ for $j \in [t]$. We use $\bigcup_{j \in [t]}\rmp_{i_j}\cap b_{\sqrt{\epsilon}n^{1/4}}(v)$ as a partition to approximate $b_{\sqrt{\epsilon}n^{1/4}}(v)$, which causes a relative error at most $(1+\epsilon)^{n}$ for $\vol(b_{\sqrt{\epsilon}n^{1/4}}(v))$. We have
\begin{align*}
\int_{b_{\sqrt{\epsilon}n^{1/4}}(v)}\frac{f(x)\,\d x}{\mu_f(u, x)|x-u|^{n-1}} \leq (1+\epsilon)^n\sum_{j \in [t]} \int_{b_{\sqrt{\epsilon}n^{1/4}}(v)\cap \rmp_{i_j}}\frac{f(x)\,\d x}{\mu_f(u, x)|x-u|^{n-1}}.
\end{align*}
Hence,
  \begin{align}
    &\frac{(\sqrt{\epsilon})^n}{n^{n/4}}P'_u(b_{\sqrt{\epsilon}n^{1/4}}(v)) \nonumber \\
    &= \frac{2(\sqrt{\epsilon})^n}{n^{1+n/4}\vol(\rmb_n)}\int_{b_{\sqrt{\epsilon}n^{1/4}}(v)}\frac{f(x)\,\d x}{\mu_f(u, x)|x-u|^{n-1}} \\
                                 &\leq \frac{2(\sqrt{\epsilon})^n(1+\epsilon)^n}{n\vol(\rmb_n)}\sum_{j \in [t]} \int_{b_{\sqrt{\epsilon}n^{1/4}}(v) \cap \rmp_{i_j}}\frac{f(x)\,\d x}{\mu_f(u, x)|x-u|^{n-1}} \\
                                 &= \frac{2(\sqrt{\epsilon})^n(1+\epsilon)^n}{n\vol(\rmb_n)}\sum_{j \in [t]} \sum_{k \in [c_{i_j}]}\int_{b_{\sqrt{\epsilon}n^{1/4}}(v) \cap \rmq_{i_j, k}}\frac{f(x)\,\d x}{\mu_f(u, x)|x-u|^{n-1}} \\
                                 &\leq \sum_{j \in [t]} \sum_{k \in [c_{i_j}]}\frac{4(1+\epsilon)^ne^{\sqrt{\epsilon}n^{1/4}+2\epsilon'+1}(\sqrt{\epsilon})^nf(v')\vol(\rmq_{i,j}\cap b_{\sqrt{\epsilon}n^{1/4}}(v))}{\epsilon'n\vol(\rmb_n)\hat{\mu}_f(\ell_{i_j}(\overline{\rmk}_{\epsilon}, u, \epsilon'))|u-v'|^{n-1}} \\
                                 &\leq 4(1+\epsilon)^ne^{\sqrt{\epsilon}n^{1/4}+2\epsilon'+1}\sum_{j \in [t]} \sum_{k \in [c_{i_j}]}\frac{2\epsilon^{n-1}(\sqrt{\epsilon})^nf(v')}{n\vol(\rmb_n)(\sqrt{n})^{n-1}\hat{\mu}(\ell_{i_j}(\overline{\rmk}_{\epsilon}, u, \epsilon'))} \\
                                 &= 4(1+\epsilon)^ne^{\sqrt{\epsilon}n^{1/4}+2\epsilon'+1}P_{uv} \leq e^{5 + 2\epsilon'}P_{uv},
  \end{align}
  where the last inequality holds when $\epsilon \leq 1/n$.

  For $u \in \rmke$ and $v \in \rmkse$, we approximate $\int_{x \in b_{\epsilon}(u)}P'_u(b_{\sqrt{\epsilon}n^{1/4}}(v))\,\d\pi_f(x)$ by $\epsilon^nP_{uv}$. Note that for all $u' \in b_{\epsilon}(u)$, we have $|u'-v|^n \leq e|u-v|^n$. Also, the lengths of $\ell_{uv}$ and $\ell_{u'v}$ can differ by at most a factor of 2. As a result, $\hat{\pi}_f(\ell_{u'v}(\overline{\rmk}_{\epsilon}, u, \epsilon') \leq 2\hat{\pi}_f(\ell_{uv}(\overline{\rmk}_{\epsilon}, u', \epsilon')$. It follows that $P_{uv} \geq P_{u'v}/(2e)$. Therefore,
  \begin{align}
    \int_{x \in b_{\epsilon}(u)}P'_u(b_{\sqrt{\epsilon}n^{1/4}}(v))\,\d\pi_f(x) &\leq \int_{x \in b_{\epsilon}(u)}\frac{2e^{5+2\epsilon'}n^{n/4}}{(\sqrt{\epsilon})^n}P_{xv}\,\d\pi_f(x) \\
                                                                    &\leq \frac{2e^{5+2\epsilon'+\epsilon}n^{n/4}}{(\sqrt{\epsilon})^n}P_{uv}\hat{\pi}_f(u)\epsilon^n.
  \end{align}

  Next, for the relationship between $\hat{\pi}_f$ and $\pi_f$, we consider the sets $\overline{\rmk}_{\epsilon} \cap \rmk$, $\overline{\rmk}_{\epsilon} \setminus \rmk$, and $\rmk \setminus \overline{\rmk}_{\epsilon}$ separately. Without loss of generality, assume $\hat{\pi}_f(\rms) \leq \hat{\pi}_f(\rmke\setminus\rms)$. We partition $\rms$ as $\rms_1 \cup \rms_2$, where $\rms_1 = \{x \in \rms: b_{\epsilon}(x) \subseteq \rmk\}$ and $\rms_2 = \rms \setminus \rms_1$. We also define $\overline{\rms}_1 := \bigcup_{x \in \rms_1}b_{\epsilon}(x)$ and $\overline{\rms}_2 := \bigcup_{x \in \rms_2}b_{\epsilon}(x)$. For $\overline{\rms}_1$, we use $f(v)$ to approximate $f(x)$ for all $x \in b_{\epsilon}(v)$; it follows that
  \begin{align}
    \hat{\pi}_f(\rms_1) \leq e^{2\epsilon}\pi_f(\overline{\rms}_1)\quad\text{and}\quad\pi_f(\overline{\rms}_1) \leq e^{2\epsilon}\hat{\pi}_f(\rms_1).
  \end{align}
  For $\rms_2$, we have
  \begin{align}
    \hat{\pi}_f(\rms_2) \leq 2e^{2\epsilon}\pi_f(\overline{\rms}_2 \cap \rmk),
  \end{align}
  so
  \begin{align}
    \hat{\pi}_f(\rms) &= \hat{\pi}_f(\rms_1) + \hat{\pi}_f(\rms_2) \\
                                &\leq e^{2\epsilon}\pi_f(\overline{\rms}_1) + 2e^{2\epsilon} \pi_f(\overline{\rms}_2\cap\rmk)\leq 3\pi_f(\rmk\cap\overline{\rms}). \label{eq:rel-pihat-pi}
  \end{align}

  Now we bound the numerator of the conductance: $\sum_{u \in \rms}\sum_{v \in \rmke \setminus \rms} P_{uv}\hat{\pi}_f(u)$. For $u \in \rms$ and $v \in \rmk \setminus \rms$, we consider four cases. First, when $b_{\epsilon}(u), b_{\epsilon}(v) \subseteq \rmk$, we have
  \begin{align}
    P_{uv}\hat{\pi}_f(u) &\geq \frac{(\sqrt{\epsilon})^n}{2e^{5+2\epsilon'+\epsilon}n^{n/4}}\int_{x \in b_{\epsilon}(u)}P'_u(b_{\sqrt{\epsilon}n^{1/4}}(v))\,\d\pi_f(x).
  \end{align}
  Second, when $b_{\epsilon}(u) \subseteq \rmk$ and $b_{\epsilon}(v) \not\subseteq \rmk$, we have
  \begin{align}
    P_{uv}\hat{\pi}_f(u) &\geq \frac{(\sqrt{\epsilon})^n}{2e^{5+2\epsilon'+\epsilon}n^{n/4}}\int_{x \in b_{\epsilon}(u)}P'_u(b_{\sqrt{\epsilon}n^{1/4}}(v))\,\d\pi_f(x) \\
                                   &\geq \frac{(\sqrt{\epsilon})^n}{2e^{5+2\epsilon'+\epsilon}n^{n/4}}\int_{x \in b_{\epsilon}(u)}P'_u(b_{\sqrt{\epsilon}n^{1/4}}(v)\cap\rmk)\,\d\pi_f(x).
  \end{align}
  Third, when $b_{\epsilon}(u) \not\subseteq \rmk$ and $b_{\epsilon}(v) \subseteq \rmk$, we have
  \begin{align}
    P_{uv}\hat{\pi}_f(u) &\geq \frac{(\sqrt{\epsilon})^n}{2e^{5+2\epsilon'+\epsilon}n^{n/4}}\int_{x \in b_{\epsilon}(u)}P'_u(b_{\sqrt{\epsilon}n^{1/4}}(v))\,\d\pi_f(x) \\
                                   &\geq \frac{(\sqrt{\epsilon})^n}{2e^{5+2\epsilon'+\epsilon}n^{n/4}}\int_{x \in b_{\epsilon}(u) \cap \rmk}P'_u(b_{\sqrt{\epsilon}n^{1/4}}(v))\,\d\pi_f(x).
  \end{align}
  Fourth, when $b_{\epsilon}(u) \not\subseteq \rmk$ and $b_{\epsilon}(v) \not\subseteq \rmk$, we have
  \begin{align}
    P_{uv}\hat{\pi}_f(u) &\geq \frac{(\sqrt{\epsilon})^n}{2e^{5+2\epsilon'+\epsilon}n^{n/4}}\int_{x \in b_{\epsilon}(u)}P'_u(b_{\sqrt{\epsilon}n^{1/4}}(v))\,\d\pi_f(x) \\
                                  &\geq \frac{(\sqrt{\epsilon})^n}{2e^{5+2\epsilon'+\epsilon}n^{n/4}}\int_{x \in b_{\epsilon}(u) \cap \rmk}P'_u(b_{\sqrt{\epsilon}n^{1/4}}(v)\cap\rmk)\,\d\pi_f(x).
  \end{align}
  We also need to consider the set $\rmk \setminus \overline{\rmk}_{\epsilon}$. There exists a small subset $\mathrm{E} \subseteq \rmk \setminus \overline{\rmk}_{\epsilon}$ such that $\pi_f(\mathrm{E}) \leq \epsilon \pi_f(\rms)$. We need to consider the transition from $\mathrm{E}$ to $\subseteq \rmk \setminus \overline{\rmk}_{\epsilon} \setminus \mathrm{E}$: we have $\int_{x \in \mathrm{E}} P'_x(\rmk \setminus \overline{\rmk}_{\epsilon}\setminus\mathrm{E})\,\d\pi_f(x) \leq \pi_f(\mathrm{E}) \leq \epsilon\pi_f(\rms)$. Putting everything together, we have
  \begin{align}
   & \sum_{u \in \rms}\sum_{v \in \rmke \setminus \rms} P_{uv} \hat{\pi}_f(u) + \int_{x \in \mathrm{E}\cap\rmk}P'_x(\rmk \setminus \overline{\rmk}_{\epsilon}\setminus\mathrm{E})\,\d\pi_f(x) \nonumber \\
   &\qquad\qquad\qquad \geq \frac{1}{2e^{5+2\epsilon'+\epsilon}} \int_{x \in \overline{\rms} \cap \rmk \cup \mathrm{E}} P'_x(\rmk \setminus (\overline{\rms} \cap \rmk \cup \mathrm{E}))\,\d\pi_f(x),
  \end{align}
  which further implies that
  \begin{align}
    \nonumber \sum_{u\in\rms}\sum_{v \in \rmke\setminus\rms}P_{uv}\hat{\pi}_f(u) &\geq \frac{1}{2e^{5+2\epsilon'+\epsilon}} \int_{x \in \overline{\rms} \cap \rmk \cup \mathrm{E}} P'_x(\rmk \setminus (\overline{\rms} \cap \rmk \cup \mathrm{E}))\,\d\pi_f(x) - \epsilon\pi_f(\rms) \\
                                                                       &\geq \frac{1}{2e^{5+2\epsilon'+\epsilon}} \int_{x \in \overline{\rms} \cap \rmk \cup \mathrm{E}} P'_x(\rmk \setminus (\overline{\rms} \cap \rmk \cup \mathrm{E}))\,\d\pi_f(x) - \epsilon e^{\epsilon}\hat{\pi}_f(\rms).
\end{align}
By \prop{LV06-Theorem6.9}, we have
\begin{align}
    \phi(\rms) &= \frac{\sum_{u \in \rms}\sum_{v \in \rmke\setminus\rms}P_{uv}\hat{\pi}_f(u)}{\hat{\pi}_f(\rms)} \\
               &\geq \frac{1}{2e^{5+2\epsilon'+\epsilon}}\frac{\int_{x \in \overline{\rms} \cap \rmk \cup \mathrm{E}} P'_x(\rmk \setminus (\overline{\rms} \cap \rmk \cup \mathrm{E}))\,\d\pi_f(x)}{\hat{\pi}_f(\rms)} - \frac{\epsilon}{2e^{4+2\epsilon'}} \\
               &\geq \frac{1}{6e^{5+2\epsilon'+\epsilon}}\frac{\int_{x \in \overline{\rms} \cap \rmk \cup \mathrm{E}} P'_x(\rmk \setminus (\overline{\rms} \cap \rmk \cup \mathrm{E}))\,\d\pi_f(x)}{\pi_f(\overline{\rms}\cap\rmk) + \pi_f(\mathrm{E})} - \frac{\epsilon}{2e^{5+2\epsilon'}} \\
               \label{eq:conductance-proof-ineq}
               &\geq \frac{1}{10^{14}e^{5+2\epsilon'+\epsilon}n\sqrt{n}\ln(\frac{n\sqrt{n}}{\pi_f(\overline{\rms}\cap\rmk)})} - \frac{\epsilon}{2e^{5+2\epsilon'}} \\
               &\geq \frac{1}{10^{14}e^{5+2\epsilon'+\epsilon}n\sqrt{n}\ln(\frac{n\sqrt{n}}{(1-e^{-\epsilon}/2)e^{\epsilon}\hat{\pi}_f(\rms)})} - \frac{\epsilon}{2e^{5+2\epsilon'}},
\end{align}
where the third inequality follows from \eq{rel-pihat-pi}. The above inequality can then be simplified to
  \begin{align}
    \phi(\rms) \geq \frac{1}{10^{16}n\sqrt{n}\ln(\frac{2n\sqrt{n}}{\hat{\pi}_f(\rms)})} - \epsilon,
  \end{align}
which is exactly the claim in \thm{conductance}.
\end{proof}

The mixing time for the discrete hit-and-run walk can be bounded by the following corollary.
\begin{corollary}
  Let $\rmke$ be the discretization of convex body $\K$ that contains a unit ball and is contained in a ball with radius $R \leq \sqrt{n}$. Let the density function be $f(x) = e^{-a^Tx}$ having support $\K$ where $a = (1, 0, \ldots, 0)$. Let $\epsilon' \leq \sqrt{\epsilon}n^{-3/4}$. Let the initial distribution be $\sigma$ and the distribution after $m$ steps be $\sigma^m$. If $\sum_{x \in \rmke}\frac{\sigma(x)}{\hat{\pi}_f(x)}\sigma(x) \leq M$ then, after
  \begin{align}
    m \geq 10^{33}n^3\ln^2\frac{Mn\sqrt{n}}{\epsilon}\ln\frac{M}{\epsilon}
  \end{align}
  steps, we have $d_{\mathrm{TV}}(\sigma^m, \hat{\pi}_f) \leq \epsilon$.
\end{corollary}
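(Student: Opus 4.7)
The plan is to combine the conductance lower bound of \thm{conductance} with \prop{mixing-phip}, which bounds the mixing time of a reversible chain in terms of the conductance restricted to small sets. Since \thm{conductance} gives $\phi(\rms) \geq \frac{1}{10^{16}n\sqrt{n}\ln(2n\sqrt{n}/\hat\pi_f(\rms))} - \epsilon$ for all $\rms$ with $\hat\pi_f(\rms) \leq 1/2$, the bound improves as $\hat\pi_f(\rms)$ grows. This means the natural strategy is to apply \prop{mixing-phip} with a suitably chosen threshold probability $p$, so that the $2Mp$ slack term is absorbed into the target error, and the conductance at level $p$ controls the geometric decay.

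Concretely, I would first set $p = \epsilon/(8M)$, so that the additive term $2Mp$ in \prop{mixing-phip} is at most $\epsilon/4$. Plugging this into \thm{conductance} yields
\begin{align}
\Phi_p \;\geq\; \frac{1}{10^{16}n\sqrt{n}\ln(16Mn\sqrt{n}/\epsilon)} - \epsilon.
\end{align}
Provided the discretization parameter is taken small enough (any $\epsilon \leq \frac{1}{2 \cdot 10^{16}n\sqrt{n}\ln(16Mn\sqrt{n}/\epsilon)}$ works, which may be assumed by refining the grid if necessary), the subtracted term is dominated and $\Phi_p \geq \frac{1}{2 \cdot 10^{16}n\sqrt{n}\ln(16Mn\sqrt{n}/\epsilon)}$.

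Next, I would apply \prop{mixing-phip} directly: for any $\rms \subseteq \rmke$,
\begin{align}
|\sigma^{(m)}(\rms) - \hat\pi_f(\rms)| \;\leq\; 2Mp + 2M\Bigl(1-\tfrac{1}{2}\Phi_p^2\Bigr)^m \;\leq\; \tfrac{\epsilon}{4} + 2M e^{-m\Phi_p^2/2}.
\end{align}
To make the second term at most $\epsilon/4$, it suffices to take $m \geq (2/\Phi_p^2)\ln(8M/\epsilon)$. Substituting the lower bound on $\Phi_p$ yields
\begin{align}
m \;\geq\; 8 \cdot (10^{16})^2 \cdot n^3 \ln^2\!\bigl(16Mn\sqrt{n}/\epsilon\bigr)\,\ln(8M/\epsilon),
\end{align}
which is comfortably bounded by $10^{33}\,n^3\ln^2(Mn\sqrt{n}/\epsilon)\ln(M/\epsilon)$. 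Since $d_{\mathrm{TV}}(\sigma^m,\hat\pi_f) = \sup_{\rms}|\sigma^{(m)}(\rms) - \hat\pi_f(\rms)| \leq \epsilon/2 \leq \epsilon$, the conclusion follows.

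The main obstacle is bookkeeping: one has to verify that the conductance bound from \thm{conductance} (which is stated for sets of probability at most $1/2$) is being applied to the correct class of sets invoked by \prop{mixing-phip}, and to track the roles of the discretization precision $\epsilon$ versus the mixing precision $\epsilon$ so that the $-\epsilon$ slack in \thm{conductance} does not swamp $\Phi_p$. Once those are handled carefully (and the constants are loosened to absorb the factor of $2$ from dropping the slack), the final numeric constant $10^{33}$ is easily accommodated.
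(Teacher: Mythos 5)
Your overall strategy---combine the conductance bound of \thm{conductance} with \prop{mixing-phip} at a threshold $p=\Theta(\epsilon/M)$, absorb the $-\epsilon$ slack by assuming the discretization is fine enough, and then solve for $m$---is the same as the paper's. However, there is a genuine gap in how you invoke \prop{mixing-phip}. That proposition requires $M:=\sup_{\rms\subseteq\Omega}\sigma(\rms)/\pi(\rms)$, i.e.\ an $L^{\infty}$ (sup-ratio) warmness bound, whereas the hypothesis of the corollary only gives the $L^{2}$ bound $\sum_{x}\frac{\sigma(x)}{\hat{\pi}_f(x)}\sigma(x)=\E_{\sigma}[\sigma/\hat{\pi}_f]\leq M$. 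These are not equivalent: a distribution can satisfy the $L^{2}$ bound while placing a tiny amount of mass on points where $\sigma(x)/\hat{\pi}_f(x)$ is arbitrarily large, so $\sup_{\rms}\sigma(\rms)/\hat{\pi}_f(\rms)$ need not be at most $M$ (Cauchy--Schwarz only gives $\sigma(\rms)\leq\sqrt{M\,\hat{\pi}_f(\rms)}$). Your display $|\sigma^{(m)}(\rms)-\hat{\pi}_f(\rms)|\leq 2Mp+2M(1-\tfrac12\Phi_p^2)^m$ therefore does not follow from the stated hypothesis.

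The missing step is the standard decomposition the paper uses: by Markov's inequality applied to the $L^{2}$ hypothesis, the set $\rms_0=\{x:\sigma(x)/\hat{\pi}_f(x)>2M/\epsilon\}$ has $\sigma(\rms_0)\leq\epsilon/2$, so $\sigma$ can be written as a mixture, with weight $1-\epsilon/2$, of a distribution $\sigma'$ that is $(2M/\epsilon)$-warm in the sup sense, plus an $\epsilon/2$-weighted remainder. One then applies \prop{mixing-phip} to $\sigma'$ with warmness parameter $2M/\epsilon$ and threshold $p=\epsilon/(2M)$, and accounts separately for the $\epsilon/2$ remainder mass. This replaces your prefactor $2M$ by $O(M/\epsilon)$ in front of the geometric term, so the requirement on $m$ involves $\ln(M/\epsilon^{2})$ rather than $\ln(M/\epsilon)$; the stated constant $10^{33}$ still absorbs this, but the argument as you wrote it does not go through without the decomposition.
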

\begin{proof}
  First note that, since $\sum_{x \in \rmke}\frac{\sigma(x)}{\hat{\pi}_f(x)}\sigma(x) \leq M$, the set $\rms = \{x: \frac{\sigma(x)}{\hat{\pi}_f(X)} > \frac{2M}{\epsilon}\}$ has measure $\sigma(\rms) \leq \epsilon/2$. Then a random point in $\rmke$ can be thought of as being generated with probability $1-\epsilon/2$ from a distribution $\sigma'$ satisfying $\frac{\sigma'(\rms')}{\hat{\pi}_f(\rms')} \leq 2M/\epsilon$ for any subset $\rms' \subseteq \rmke$ and with probability $\epsilon/2$ from some other distribution. As a consequence of~\thm{conductance}, for any such subset $\rms'$ with $\hat{\pi}_f(\rms') = p$, the conductance of $\rms'$ is at least
  \begin{align}
    \Phi_p = \frac{1}{10^{16}n\sqrt{n}\ln(2n\sqrt{n}/p)} - \epsilon.
  \end{align}
  For the purpose of analysis, we use $p = \frac{\epsilon^2}{8M}$. When $\epsilon$ is reasonably small (say, $\epsilon \leq \frac{1}{2\cdot 10^{16}n\sqrt{n}\ln(Mn\sqrt{n}/\epsilon)}$), the $\epsilon$ term in the conductance bound can be ignored with an additional $1/2$ factor. Then we have $\Phi_p \geq \frac{1}{2\cdot10^{16}n\sqrt{n}\ln(2n\sqrt{n}/p)}$. By the condition that $\sigma'(\rms') \leq (2M/\epsilon) \hat{\pi}_f(\rms')$, as well as the way a random point in $\rmke$ is generated, \prop{mixing-phip} implies that
  \begin{align}
    d_{\text{TV}}(\sigma^{(m)}, \hat{\pi}_f) \leq \frac{\epsilon}{2} + \left(1-\frac{\epsilon}{2}\right)\left(\frac{\epsilon}{2} + \frac{4M}{\epsilon}\left(1-\frac{\Phi_p^2}{2}\right)^m\right).
  \end{align}
Therefore, after the claimed number of steps, the total variation distance is at most $\epsilon$.
\end{proof}

As the uniform distribution is a special case of a log-concave distribution, the proof of~\thm{conductance} also applies to this case. More specifically, we use \prop{conductance-uniform} in \eq{conductance-proof-ineq}, which yields the following stronger corollary.
\begin{corollary}\label{cor:conductance-uniform}
  Let $\rmke$ be the discretization of a convex body $\rmk$ that contains a unit ball and is contained in a ball with radius $R \leq \sqrt{n}$. Let $\epsilon' \leq \sqrt{\epsilon}n^{-3/4}$. The conductance of the hit-and-run walk in $\rmke$ with uniform distribution satisfies
  \begin{align}\label{eq:conductance-uniform}
    \phi \geq \frac{1}{2^{26}n\sqrt{n}} - \epsilon.
  \end{align}
\end{corollary}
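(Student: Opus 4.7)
The plan is to run the proof of \thm{conductance} verbatim, substituting the uniform density $f \equiv 1/\vol(\rmk)$ for the exponential $e^{-a^Tx}$. Since the uniform density is a degenerate logconcave density (with tilt $a=0$), every discretization step of that proof applies without modification: the two-level discretization of $\rmk$, the hyperpyramid partition of the unit ball used to derive \eq{pxy}, the case analysis for $u,v$ near or far from the boundary, the comparison \eq{rel-pihat-pi} between $\hat\pi_f$ and $\pi_f$, and the accounting of the boundary set $\mathrm{E}\subseteq\rmk\setminus\overline{\rmk}_\epsilon$ all go through. In fact several error terms simplify, since ratios of $f$ at nearby points that previously contributed factors like $e^{\sqrt{\epsilon}n^{1/4}}$ and $e^{2\epsilon}$ are now identically $1$.

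The one substantive change occurs at the invocation of the continuous conductance bound, i.e.\ in the step corresponding to \eq{conductance-proof-ineq}. There the proof of \thm{conductance} applies \prop{LV06-Theorem6.9}, which carries a logarithmic $\ln(nd/rp)$ dependence on the measure $p$ of the subset (reflecting the exponential tilt). Here we instead invoke \prop{conductance-uniform}, which provides the $p$-independent lower bound $1/(2^{24}nd)$ on the continuous conductance under the uniform distribution. Since the assumption $R\le\sqrt{n}$ gives $d\le 2\sqrt{n}$, this yields continuous conductance at least $1/(2^{25}n\sqrt{n})$ on every subset. Propagating this bound through the remaining multiplicative overheads --- the factor $1/(2e^{5+2\epsilon'+\epsilon})$ relating $P_{uv}$ to $P'_u(b_{\sqrt{\epsilon}n^{1/4}}(v))$, and the factor $1/3$ arising from \eq{rel-pihat-pi} --- together with the simplifications available in the uniform case, produces a discrete conductance bound of the form $C/(n\sqrt{n}) - \epsilon$ valid for \emph{every} subset, with no $\log(1/\hat\pi_f(\rms))$ factor.

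The main (and only) difficulty is bookkeeping: verifying that the accumulated multiplicative overheads multiply out to give the stated constant $C\ge 1/2^{26}$. No new geometric or analytic idea is needed beyond those already present in \thm{conductance}; the improvement over that theorem comes entirely from the $p$-independence of \prop{conductance-uniform} (which removes the logarithmic factor) and from the collapse of the $f$-ratio error factors in the uniform setting. Taking the infimum over $\rms\subseteq\rmke$ then yields \eq{conductance-uniform}.
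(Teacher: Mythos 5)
Your proposal matches the paper's own argument: the paper proves this corollary exactly by noting that the uniform distribution is a special case of a logconcave density, rerunning the proof of \thm{conductance}, and substituting \prop{conductance-uniform} for \prop{LV06-Theorem6.9} at the step corresponding to \eq{conductance-proof-ineq}, which is precisely what removes the $\ln(1/\hat{\pi}_f(\rms))$ factor and makes the bound subset-independent. Your additional observation that the $f$-ratio error factors collapse to $1$ in the uniform case is a correct (if unnecessary) refinement of the same bookkeeping.
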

\noindent
Note that \cor{conductance-uniform} is stronger than \thm{conductance} because \eq{conductance-uniform} is independent of $\rms \subseteq \rmke$. This corollary is informative and is not used in this paper.

%=================================================================
\subsection{Implementing the quantum walk operators}\label{sec:impl-quant-walk}

We now describe how to implement the discretized quantum walk.
Following \eqn{ball-relationship}, consider a convex body $\K$ such that $\rmb_2(0,r) \subseteq \K \subseteq \rmb_2(0,R)$. Each stage of the volume estimation algorithm involves a hit-and-run walk over the convex body with target density $e^{-a x_0}$. In order to use techniques from \cite{wocjan2008speedup} to obtain a speedup in mixing time, we implement the quantum walk operator $W$ corresponding to an $\epsilon$-discretized version of this walk \algo{hit-and-run-1step}.

Let $|x\>$ be the register for the state of the walk, and $U$ be a unitary that satisfies $U|x\>|0\> = |x\>|p_x\>$ for all $|x\>$ (recall that $|p_x\> = \sum_{y \in \K_\epsilon}\sqrt{p_{x \to y}}|y\>$ where $p_{x \to y}$ is the probability of a transition from $x$ to $y$). Since the state of the hit-and-run walk is given by points on an $\epsilon$-grid that can be restricted to $\rmb_2(0,R)$, there are $\left(\frac{2R}{\epsilon}\right)^n$  possible values of $x$ and thus $|x\>$ can be represented using $n \log\left(\frac{2R}{\epsilon}\right)$ qubits. In the rest of the section, we abuse notation by letting $x$ refer to both a point on the grid and its corresponding bit representation. Then the quantum walk operator \cite{wocjan2008speedup} can be realized as
  \begin{equation}
    W' = U^\dagger S U R_{\mathcal{A}} U^\dagger S U R_{\mathcal{A}}
  \end{equation}
  where $R_{\mathcal{A}}$ is the reflection around the subspace $\mathcal{A} = \spn \{|x\>|0\> \mid x \in \K_\epsilon \}$ and $S$ is the swap operator. It thus remains to implement the operator $U$.

\paragraph{Continuous case}
  We first explain a continuous version of the implementation before explaining how it can be discretized.
  Given an input $|x\>$, consider $n$ real ancilla registers, each in the state $\int_0^1 |z\> \,\d z$. Given a pair of uniformly distributed random variables $\xi_1,\xi_2$, the Box-Muller transform
  \begin{align}
    \label{eq:box-muller}
    \phi_1 &= \sqrt{-2\delta^2 \ln{\xi_1}}\cos{2\pi \xi_2} \\
    \phi_2 &= \sqrt{-2\delta^2 \ln{\xi_1}}\sin{2\pi \xi_2}
  \end{align}
  yields two variables $\phi_1,\phi_2$ that are distributed according to a univariate normal distribution with mean $0$ and variance $\delta^2$. Thus applying the unitary mapping
  \begin{align}
    |\xi_1\>|\xi_2\> \mapsto \big|\sqrt{-4 \ln{\xi_1}}\cos{2\pi \xi_2}\big\>\big|\sqrt{-4 \ln{\xi_1}}\sin{2\pi \xi_2}\big\>
  \end{align}
  to $\int_0^1|z\>\,\d z \otimes \int_0^1|z\>\,\d z$ yields the state $\int_\R \frac{1}{\sqrt{4\pi}}e^{-z^2/4}|z\>\,\d z \otimes \int_\R \frac{1}{\sqrt{4\pi}}e^{-z^2/4}|z\>\,\d z$. With $n$ such registers, we have the state
  \begin{align}
    \int_{\R^n} \frac{1}{\sqrt{4\pi}}e^{-(\sum_{i=1}^n z_{i}^{2}/4)}|z\>\,\d z.
  \end{align}
  We now compute the unit vector (direction) corresponding to each $x$ in a different ancilla register, and uncompute the Gaussian registers. Since $\frac{1}{\sqrt{4\pi}}e^{-(\sum_{i=1}^{n}x_{i}^{2}/4)}$ is independent of the direction of the vector $z$, we obtain a uniform distribution over all the directions on the $n$-dimensional sphere $\S^n$ given by
  \begin{align}
    \sqrt{\frac{n\pi^{n/2}}{\Gamma\left(n + \frac{1}{2}\right)}}\int_{\S^n} |u\>\,\d u.
  \end{align}
  Corresponding to each direction $u$, the line $\{x + tu : t \in \R\}$ intersects the convex body $\K$ at two points with parameters $t_1,t_2$. These points as well as the length $l(u) = |t_1 - t_2|$  can be determined within error $\epsilon$ using $O(\log\frac{1}{\epsilon})$ calls to the membership oracle. We must now map each direction $|u\>$ to a superposition proportional to $\int_{t_1}^{t_2} e^{a^T(x + tu)/2} |x + tu\>\,\d t = \int_{t_1}^{t_2} e^{a_0(x_0 + tu_0)/2} |x + tu\>\,\d t$. Since the exponential distribution is efficiently integrable, this can be easily effected by making a variable change starting from the state $\int_0^1 |z\>\,\d z$. The normalization factor is
\begin{align}
A := \sqrt{\frac{a_0u_0}{e^{-a_0x_0}(e^{-a_0t_1}-e^{-a_0t_2})}}.
\end{align}
Consider the variable change $f\colon[0,1]\to[t_1,t_2]$ such that $\frac{\d f^{-1}(t)}{\d t} = Ae^{a_0(x_0 + tu_0)/2}$, $f(0) = t_1$, $f(1) = t_2$. Applying $f$ to $\int_0^1|z\>\,\d z$ produces $\int_{t_1}^{t_2} Ae^{a_0(x_0 + tu_0)/2} |t\>\,\d t$, which can be transformed to $\int_{t_1}^{t_2} e^{a_0(x_0 + tu_0)/2} |x + tu\>\,\d t$ with an operation controlled on the input register $x$. This produces the appropriate superposition over points corresponding to each direction.

  \paragraph{Discrete case}
  The operator $U$ can be implemented in a discrete setting using a similar process to the continuous case with two main changes:
  \begin{itemize}
  \item Instead of a continuous uniform variable $\int_0^1 |z\>\,\d z$ we use a discrete uniform distribution. We can create a uniform distribution on a grid with spacing $\epsilon$ as follows. We take $n$ sets of ancilla registers, each consisting of $\log\left(1/\epsilon\right)$ registers initialized to the state $0$. We apply Hadamard gates to each of these registers, giving the superposition $\bigotimes_{i=1}^n \sqrt{\epsilon} \sum_{z_i=0}^{1/\epsilon -1}|z_i\>$. Each $|z\>$ can be mapped to $|z\epsilon\>$, producing the required uniform distribution over the grid.
  \item Applying a bijective mapping to a discrete uniform distribution simply relabels the states, so the change of variable methods used in the continuous setting cannot be used to construct the Gaussian and exponential superpositions. We use instead the Grover-Rudolph method \cite{grover2002creating} that prepares states with amplitudes corresponding to efficiently integrable probability distributions. Exponential distributions can be analytically integrated, and an $n$-dimensional Gaussian variable is a product of $n$ univariate standard normal distributions, each of which can be efficiently integrated by Monte Carlo methods.
  \end{itemize}
  Given a point $u \in K_\epsilon$ and a line $l(u,\epsilon)$ to be approximately uniformly sampled, we determine the range of points in $l(\overline{\K}_\epsilon,u,\epsilon')$ using binary search with the membership oracle and prepare an exponential superposition as described above. We apply a unitary mapping to compute the closest point $v'' \in \K_{\sqrt{\epsilon}n^{1/4}}$. Finally, corresponding to each point $v''$, we generate a uniform distribution over an $\epsilon$ grid in $b_{\sqrt{\epsilon}n^{1/4}} \cap (\R_n)_\epsilon$ by applying the Hadamard transform to $\log(n^{1/4}/\sqrt{\epsilon})$ qubits.

Overall, this implementation of the discretized quantum hit-and-run walk operator gives the following.
\begin{theorem}\label{thm:quantum-hit-and-run-implementation}
The gate complexity of implementing an operator $\tilde{U}$ such that $\lVert \tilde{U} - U \rVert=O(\epsilon)$ where $U|x\>|0\> = |x\>\sum_{y \in \K_\epsilon}\sqrt{p_{x \to y}}|y\>$ is $\tilde{O}\left(n \log\left(\frac{1}{\epsilon}\right)\right)$. The correspondsing quantum walk operator $W$ can be implemented using a constant number of calls to $U$.
\end{theorem}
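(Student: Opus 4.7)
The plan is to carry out a careful gate count and error analysis of each subroutine in the implementation sketched immediately above, then compose them. I would first fix a working precision $\epsilon_0 = \Theta(\epsilon/\mathrm{poly}(n,\log(1/\epsilon)))$ so that accumulating $\tilde{O}(n)$ sources of truncation error still gives overall operator-norm error $O(\epsilon)$; every subroutine below will be implemented to precision $\epsilon_0$ using standard reversible arithmetic on $O(\log(1/\epsilon_0))$-bit fixed-point numbers, each arithmetic operation taking $\mathrm{polylog}(1/\epsilon)$ gates.

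Next I would analyze the three preparation stages in turn. For the uniform superposition over $n$ discretized copies of $[0,1]$, $n\log(1/\epsilon_0)=\tilde{O}(n)$ Hadamard gates suffice. For the Gaussian superposition I would invoke the Grover--Rudolph construction of \cite{grover2002creating}: the univariate standard normal cdf is efficiently integrable (computable to precision $\epsilon_0$ by numerical quadrature in $\mathrm{polylog}(1/\epsilon_0)$ arithmetic operations), so each of the $n$ coordinates is prepared in $\tilde{O}(\log(1/\epsilon))$ gates, giving $\tilde{O}(n)$ total. Renormalizing to a direction $|u\>$ on the sphere is an $n$-dimensional arithmetic operation requiring $\tilde{O}(n)$ gates, after which the Gaussian registers are uncomputed. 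For the exponential superposition along the chord, binary search with $O_{\K}$ produces $t_1,t_2$ to additive error $\epsilon_0$ in $O(\log(1/\epsilon_0))$ queries, and since the truncated exponential cdf has a closed form, another Grover--Rudolph preparation writes the required amplitudes on $\log(1/\epsilon)$ qubits in $\tilde{O}(\log(1/\epsilon))$ gates. The snap to $\K_{\sqrt\epsilon n^{1/4}}$ is a rounding performed in $\tilde{O}(n)$ gates, and the final uniform distribution over $b_{\sqrt\epsilon n^{1/4}}(v'')\cap(\R^n)_{\epsilon}$ is produced by Hadamard gates on $n\log(n^{1/4}/\sqrt\epsilon)=\tilde{O}(n)$ qubits.

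To combine, I would use the standard sub-multiplicative bound: if $U=U_k\cdots U_1$ and $\|\tilde U_j-U_j\|\le\epsilon_0$ for each $j$, then $\|\tilde U-U\|\le k\epsilon_0$; with $k=\tilde{O}(n)$ and $\epsilon_0$ chosen as above the target $O(\epsilon)$ follows. Summing the individual counts gives gate complexity $\tilde{O}(n\log(1/\epsilon))$. Finally, the walk operator $W'=U^{\dagger}SUR_{\mathcal{A}}U^{\dagger}SUR_{\mathcal{A}}$ uses a constant number of calls to $U$ and $U^{\dagger}$, together with the swap $S$ and the reflection $R_{\mathcal{A}}=I-2(I\otimes|0\>\<0|)$ on $\tilde{O}(n)$ qubits, each implementable in $\tilde{O}(n)$ gates, which yields the second claim.

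The main technical obstacle will be controlling the error in the sphere-uniform stage: the discretized Box--Muller map is nonlinear, and one must verify that tabulating its ingredients ($\sqrt{\cdot}$, $\ln(\cdot)$, $\cos$, $\sin$, and the subsequent normalization) to precision $\epsilon_0$ yields a distribution on $\rml(u,\epsilon)$ that is $O(\epsilon_0)$-close to the intended near-uniform law in the amplitude ($\ell_2$) sense, rather than merely in total variation. I plan to handle this by bounding the Jacobian of the composite map away from $0$ on the $\epsilon_0$-truncated input domain, showing that pointwise $\epsilon_0$-perturbations of the amplitudes translate to an $\ell_2$-error $\tilde{O}(\epsilon_0)$ after aggregating over the $\tilde{O}(1/\epsilon_0^{n})$ grid points, and then absorbing this into the initial choice of $\epsilon_0$. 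A similar but simpler argument handles the exponential stage, where the closed-form cdf makes the Jacobian explicit.
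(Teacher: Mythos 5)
Your construction matches the paper's almost exactly: the same decomposition into Hadamard-prepared uniform grid superpositions, Grover--Rudolph preparation of the Gaussian and exponential amplitude profiles, normalization to a direction, binary search against $O_{\K}$ for the chord endpoints, the two-level snap to $\K_{\sqrt{\epsilon}n^{1/4}}$ followed by a uniform superposition over the fine boxes, and the walk operator $W' = U^{\dagger}SUR_{\mathcal{A}}U^{\dagger}SUR_{\mathcal{A}}$ using $O(1)$ calls to $U$. The paper only sketches this construction without an explicit error-composition argument, so your added analysis --- in particular the observation that pointwise perturbations of the nonlinear Box--Muller map must be converted into multiplicative amplitude perturbations (via a Jacobian lower bound) to avoid an $\ell_2$ blow-up over the exponentially many grid points --- is more careful than what appears in the text and addresses a real gap the paper leaves implicit.
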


%%%%%%%%%%%%%%%%%%%%%%%%%%%%%%%%%%%%%%%%%%%%%%%%%%%%%%%%%%%%%%%%%%%%%%%%%%%%%%

\section{Quantum lower bounds for volume estimation}\label{sec:quantum-lower}
\subsection{A quantum lower bound in $n$}\label{sec:quantum-lower-n}
In this subsection, we prove the following quantum query lower bound in $n$ for volume estimation:
\begin{theorem}\label{thm:lower-bound}
Suppose $0<\epsilon<\sqrt{2}-1$. Estimating the volume of $\K$ with multiplicative precision $\epsilon$ requires $\Omega(\sqrt{n})$ quantum queries to the membership oracle $O_{\K}$ defined in \eqn{oracle-defn}.
\end{theorem}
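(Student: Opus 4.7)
The plan is to prove \thm{lower-bound} by reducing unstructured search on $\{0,1\}^n$ (with Hamming weight promised to be $0$ or $1$) to volume estimation, and then invoking the standard $\Omega(\sqrt{n})$ quantum query lower bound for this search problem \cite{bennett1997strengths}.

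First, for every string $s=(s_1,\ldots,s_n)\in\{0,1\}^n$ with $|s|\leq 1$, define the axis-aligned hyper-rectangle
\begin{align*}
\K_s := \bigtimes_{i=1}^{n}[0,2^{s_i}].
\end{align*}
Clearly each $\K_s$ is convex, satisfies the inner/outer ball relationship of \eqn{ball-relationship} with appropriate $r$ and $R$ (e.g.\ after a trivial translation/scaling), and has $\vol(\K_s)=2^{|s|}$, so $\vol(\K_s)=1$ if $s=0^n$ and $\vol(\K_s)=2$ if $|s|=1$. Because $(1+\epsilon)^2<2$ when $\epsilon<\sqrt{2}-1$, any estimate $\widetilde{\vol(\K_s)}$ satisfying \eqn{volume-estimation-defn} lies in $[1/(1+\epsilon),1+\epsilon]\subsetneq[1,\sqrt{2})$ in the first case and in $[2/(1+\epsilon),2(1+\epsilon)]\subsetneq(\sqrt{2},\infty)$ in the second, so the value $\sqrt{2}$ separates the two cases and $\widetilde{\vol(\K_s)}$ determines $|s|\in\{0,1\}$.

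Second, I would show that a single call to the membership oracle $O_{\K_s}$ can be simulated by a single call to the standard quantum oracle $O_s\colon|i,b\>\mapsto|i,b\oplus s_i\>$. Given a computational basis state $|x,0\>$ with $x\in\R^n$, reversibly compute the following predicates into fresh ancillas: (i) $A(x):=\bigwedge_i(0\le x_i\le 2)$; (ii) the indicator $J(x):=|\{i:x_i>1\}|$; (iii) when $J(x)=1$, the unique index $j(x)$ with $x_{j(x)}>1$. By inspection, $x\in\K_s$ iff $A(x)=1$ and either $J(x)=0$, or $J(x)=1$ and $s_{j(x)}=1$. All three predicates depend only on $x$ (not on $s$) and can be computed by standard reversible arithmetic without calling $O_s$. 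In the branch $J(x)=1$, a single query to $O_s$ at index $j(x)$ writes $s_{j(x)}$ into the answer register; in the other branches the answer is set without any query. Uncomputing the ancillas yields exactly $|x,\delta[x\in\K_s]\>$. This simulation works coherently on arbitrary superpositions of $x$, so any quantum algorithm using $T$ queries to $O_{\K_s}$ can be transformed into one using $T$ queries to $O_s$.

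Third, combining the previous two steps, an $\epsilon$-multiplicative volume estimator with $T$ queries would decide $|s|\in\{0,1\}$ using $T$ queries to $O_s$. The tight $\Omega(\sqrt{n})$ quantum query lower bound for unstructured search under this promise \cite{bennett1997strengths} then forces $T=\Omega(\sqrt{n})$, establishing \thm{lower-bound}. The main subtlety to handle carefully is the coherent simulation in the second step—specifically, making sure the simulation does not query $O_s$ on basis states where the result would be meaningless (e.g.\ when $J(x)\ge 2$ or when $x$ is outside $[0,2]^n$), because otherwise the query count per basis state could exceed one and inflate the reduction's overhead. The case analysis above is designed to invoke $O_s$ at most once regardless of $x$, so this obstacle is easily overcome; no other steps present serious difficulty.
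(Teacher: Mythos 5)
Your proposal is correct and follows essentially the same route as the paper: the same hyper-rectangle $\bigtimes_{i=1}^n[0,2^{s_i}]$ construction for $|s|\le 1$, the same one-query coherent simulation of the membership oracle via the case analysis on how many coordinates exceed $1$, and the same appeal to the $\Omega(\sqrt{n})$ search lower bound of \cite{bennett1997strengths}. The only cosmetic difference is that your separation argument uses the threshold $\sqrt{2}$ explicitly where the paper rounds $\log_2$ of the estimate to the nearest integer; these are equivalent.
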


\begin{proof}
We prove \thm{lower-bound} by reduction from the Hamming weight problem. In~\cite{nayak1999quantum} by Nayak and Wu, it is shown that if we are given an oracle $O_s\colon \ket{i,b}\mapsto\ket{i,b\oplus s_i}$ for an input $n$-bit string $s=(s_1,\ldots,s_n)\in\{0,1\}^n$, and given the promise that the Hamming weight of $s$ is either 0 or 1, it takes $\Omega(\sqrt{n})$ quantum queries to decide which is the case.

To establish an $\Omega(\sqrt{n})$ lower bound for volume estimation,
for an $n$-bit string $s\in\{0,1\}^n$ with Hamming weight $|s|_{\text{Ham}}\leq 1$, we consider the convex body
$\K=\bigtimes_{i=1}^n [0,2^{s_i}]$. The volume of $\K$ is $2^{|s|_{\text{Ham}}}\in\{1,2\}$, and membership in $\K$ is determined by the function
\begin{align}
\textsc{mem}_s(x):=\left\{\begin{array}{ll}
    1 & \text{if for each } i\in\range{n}, 0\leq x_i\leq 2^{s_i}, \\
    0 & \text{otherwise}.
    \end{array}\right.
\end{align}
The corresponding membership oracle $O_{\K}$ (defined in \eqn{oracle-defn}) can be simulated by querying $O_s$ using \algo{lower-bound}.

\begin{algorithm}[htbp]
\KwInput{A vector $x=(x_1,\ldots,x_n)\in\mathbb{R}^n$.}
\KwOutput{$\textsc{mem}_s(x)$.}
\For{$i=1,\ldots,n$}{\label{lin:lb1}
\If{$x_i>2$ \emph{or} $x_i<0$}{Return 0\;}
Set $y_i=1$ if $x_i>1$ and 0 otherwise\;
}
\eIf{$|y|_{\emph{Ham}}>1$}{Return 0\;}{
\eIf{$|y|_{\emph{Ham}}=1$}{Find $i$ such that $y_i=1$. Return $O_s(i)$\;\label{lin:lb2}}{Return 1\;}
}
\caption{Simulating $\textsc{mem}_s$ with one query to $O_s$.}
\label{algo:lower-bound}
\end{algorithm}

We now prove that for any positive integer $k$ and $s\in\{0,1\}^n$ with $|s|_{\text{Ham}}\leq 1$, if there is a $k$-query algorithm that computes the volume with access to $\textsc{mem}_s$, then there is a $k$-query algorithm for deciding whether $|s|_{\text{Ham}}>0$ with access to $O_s$. We first show that \algo{lower-bound} simulates the oracle $\textsc{mem}_s$. In the for loop of \lin{lb1}, we know that $y_i=1$ if and only if $1<x_i\leq 2$, which is inside the convex body if $s_i=1$. The case $|y|_{\text{Ham}}>1$ implies that there exist two distinct coordinates $i,j$ such that $x_i,x_j>1$, which implies that $x$ lies outside the convex body. Now we are left with the cases $|y|_{\text{Ham}}=1$ or 0. In \lin{lb2}, $y_i=1$ implies $1<x_i\leq 2$, which lies in the convex body if and only if $s_i=O_s(i)=1$. Also, $|y|=0$ implies that for every coordinate $i$, $0\leq x_i\leq 1$, which lies in the body for all $s$.

Finally, if there is a $k$-query algorithm that computes an estimate $\widetilde{\vol(\K)}$ of the volume of $\K$ up to multiplicative precision $0<\epsilon<\sqrt{2}-1$, then $s=\lceil\log_{2}\widetilde{\vol(\K)}\rfloor$ where $\lceil\cdot\rfloor$ returns the nearest integer. This immediately gives a $k$-query algorithm that decides whether $|s|_{\text{Ham}}=0$ or 1. Since there is an $\Omega(\sqrt{n})$ quantum query lower bound for this task, the $\Omega(\sqrt{n})$ lower bound on volume estimation follows.
\end{proof}

\begin{remark}
The proof of \thm{lower-bound} has similarity to \cite[Section 5]{vanApeldoorn2018convex}.
\end{remark}

%=========================================================
\subsection{An optimal quantum lower bound in $1/\epsilon$}\label{sec:quantum-lower-eps}
In this subsection, we prove:
\begin{theorem}\label{thm:lower-bound-eps}
Suppose $1/n\leq\epsilon\leq 1/3$. Estimating the volume of $\K$ with multiplicative precision $\epsilon$ requires $\Omega(1/\epsilon)$ quantum queries to the membership oracle $O_{\K}$ defined in \eqn{oracle-defn}.
\end{theorem}

Comparing with \thm{main}, this shows that our quantum algorithm for volume estimation is optimal in $1/\epsilon$ up to poly-logarithmic factors.

The proof constructs a convex body whose volume encodes the Hamming weight of a string. A membership oracle for this convex body can be implemented by querying the bits of the string. Then the tight lower bound of Nayak and Wu on the quantum query complextiy of approximating the Hamming weight~\cite{nayak1999quantum} implies a lower bound on the query complexity of volume estimation.

We construct the convex body by attaching hyperpyramids to the faces of the $n$-dimensional unit hypercube. The axis of each hyperpyramid is aligned with the axis of the face of the hypercube it corresponds to, and the height of the hyperpyramid is $1/2$. More concretely, if the unit hypercube is $H_{n}:=[-1/2,1/2]^{n}$, then the two hyperpyramids on the face perpendicular to the $i^{\text{th}}$ axis are
\begin{align}
P_{i,+}&:=\big\{x: x_{i}\geq 1/2, |x_{k}|+|x_{i}|\leq 1~\forall\,k\in\range{n}/\{i\}\big\}; \label{eqn:lower-pyramid-1} \\
P_{i,-}&:=\big\{x: x_{i}\leq -1/2, |x_{k}|+|x_{i}|\leq 1~\forall\,k\in\range{n}/\{i\}\big\}. \label{eqn:lower-pyramid-2}
\end{align}
We denote the convex body with all hyperpyramids attached by
\begin{align}
\cvx_{n}:=H_{n}\cup\Big(\bigcup_{i=1}^{n}(P_{i,+}\cup P_{i,-})\Big).
\end{align}
For illustration, the $3$-dimensional convex body $\cvx_{3}$ is shown in \fig{lower-eps}.

\begin{figure}
\centering
\includegraphics[width=0.4\textwidth]{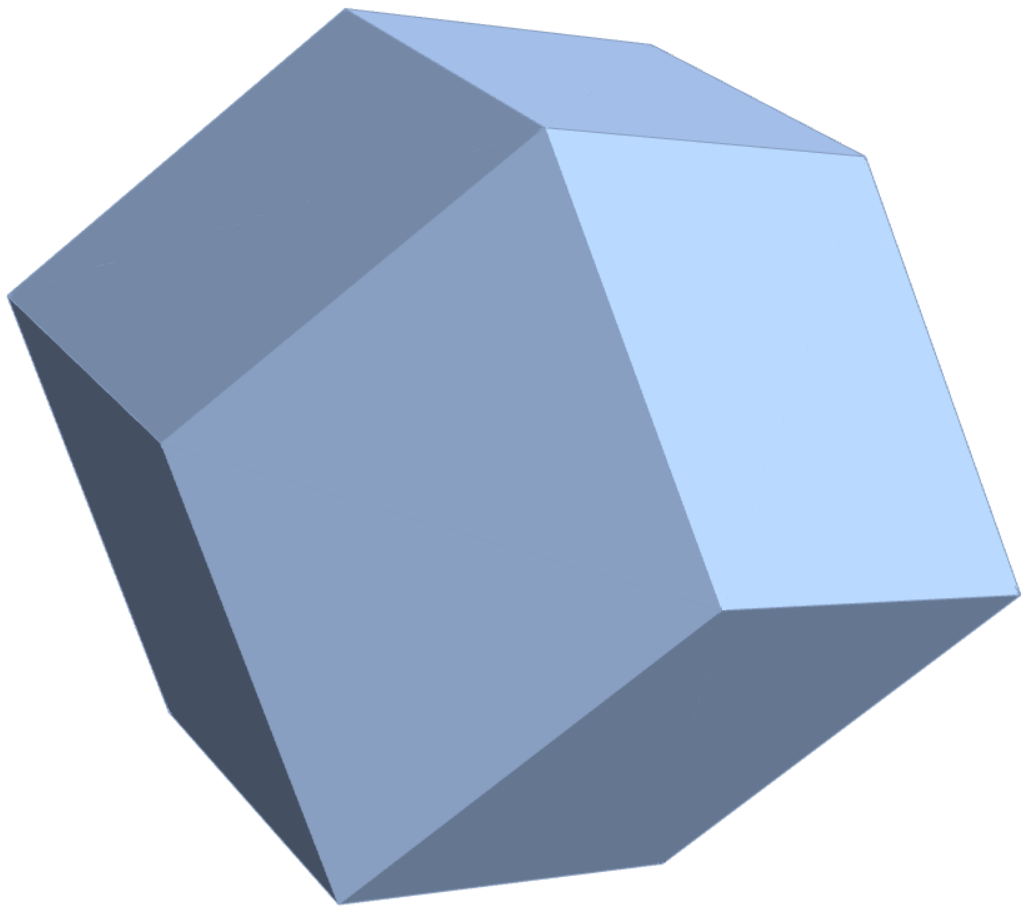}
\caption{The convex body $\cvx_{3}$.
}
\label{fig:lower-eps}
\end{figure}

We first prove:
\begin{lemma}\label{lem:lower-convexity}
$\cvx_{n}$ is convex for all $n\in\N$.
\end{lemma}

\begin{proof}
It suffices to show that if $x,y\in\cvx_{n}$ and $\alpha\in[0,1]$, then $\alpha x+(1-\alpha)y\in\cvx_{n}$. We consider three cases:

\paragraph{Case 1: $x,y\in H_{n}$}
This case is straightforward as $H_{n}$ is convex, hence $\alpha x+(1-\alpha)y\in H_{n}\subset\cvx_{n}$.

\paragraph{Case 2: $x\in \bigcup_{i=1}^{n}(P_{i,+}\cup P_{i,-}), y\in H_{n}$}
Let $i^* \in \range{n}$ such that $x\in P_{i^{*},+}\cup P_{i^{*},-}$. Then by \eqn{lower-pyramid-1} and \eqn{lower-pyramid-2}, $|x_{i^{*}}|\geq 1/2$ and $|x_{i}|+|x_{i^{*}}|\leq 1\ \forall i\in\range{n}\setminus\{i^{*}\}$, which implies $|x_{i}|\leq 1/2\ \forall i\in\range{n}\setminus\{i^{*}\}$. Also note that $y\in H_{n}$ implies $|y_{i}|\leq 1/2\ \forall i\in\range{n}$. Therefore,
\begin{align}
|\alpha x_{i}+(1-\alpha)y_{i}|\leq \alpha|x_{i}|+(1-\alpha)|y_{i}|\leq\frac{\alpha}{2}+\frac{1-\alpha}{2}=\frac{1}{2}\quad\forall\,i\in\range{n}/\{i^{*}\}.
\end{align}
If $|\alpha x_{i^{*}}+(1-\alpha)y_{i^{*}}|\leq 1/2$, then $\alpha x+(1-\alpha)y\in H_{n}\subseteq\cvx_{n}$. If $|\alpha x_{i^{*}}+(1-\alpha)y_{i^{*}}|>1/2$, then
\begin{align}
|\alpha x_{i^{*}}+(1-\alpha)y_{i^{*}}|+|\alpha x_{i}+(1-\alpha)y_{i}|&\leq \alpha(|x_{i^{*}}|+|x_{i}|)+(1-\alpha)(|y_{i^{*}}|+|y_{i}|) \\
&\leq\alpha+(1-\alpha)=1\quad\forall\,i\in\range{n}\setminus\{i^{*}\}.
\end{align}
Therefore, by \eqn{lower-pyramid-1} and \eqn{lower-pyramid-2} we have $\alpha x+(1-\alpha)y\in P_{i^{*},+}\cup P_{i^{*},-}\subset\cvx_{n}$. In any case, we always have $\alpha x+(1-\alpha)y\in\cvx_{n}$.

\paragraph{Case 3: $x,y\in \bigcup_{i=1}^{n}(P_{i,+}\cup P_{i,-})$}
Let $i^*,j^* \in \range{n}$ such that $x\in P_{i^{*},+}\cup P_{i^{*},-}$ and $y\in P_{j^{*},+}\cup P_{j^{*},-}$. If $i^{*}=j^{*}$, the proof is identical to that of Case 2 and we omit the details here. It remains to consider the case $i^{*}\neq j^{*}$. Then we have $|x_{i}|,|y_{i}|\leq 1/2\ \forall i\in\range{n}\setminus\{i^{*},j^{*}\}$. In addition,
\begin{align}
\nonumber |\alpha x_{i^{*}}+(1-\alpha)y_{i^{*}}|+|\alpha x_{j^{*}}+(1-\alpha)y_{j^{*}}|&\leq\alpha(|x_{i^{*}}|+|x_{j^{*}}|)+(1-\alpha)(|y_{j^{*}}|+|y_{i^{*}}|) \\
&\leq\alpha+(1-\alpha)=1
\end{align}
by \eqn{lower-pyramid-1} and \eqn{lower-pyramid-2}. This means that at most one of $|\alpha x_{i^{*}}+(1-\alpha)y_{i^{*}}|$ and $|\alpha x_{j^{*}}+(1-\alpha)y_{j^{*}}|$ can be more than $1/2$. If neither of them is more than $1/2$, then $\alpha x+(1-\alpha)y\in H_{n}\subset\cvx_{n}$. If exactly one of them is more than $1/2$, say $|\alpha x_{i^{*}}+(1-\alpha)y_{i^{*}}|>1/2$ and $|\alpha x_{j^{*}}+(1-\alpha)y_{j^{*}}|\leq 1/2$, then $\alpha x+(1-\alpha)y\in P_{i^{*},+}\cup P_{i^{*},-}\subset\cvx_{n}$. In any case, we always have $\alpha x+(1-\alpha)y\in\cvx_{n}$.
\end{proof}

We use the following lower bound on the quantum query complexity of approximating the Hamming weight:

\begin{proposition}[\cite{nayak1999quantum}]\label{prop:Nayak-Wu}
Suppose we are given the quantum oracle $O_{s}|i\>|0\>=|i\>|s_{i}\>\ \forall i\in\range{n}$ for some $s\in\{0,1\}^{n}$. Let $0\leq l<l'\leq n$ be two integers, $\Delta=|l-l'|$, and $m\in\{l,l'\}$ such that $|\frac{n}{2}-m|$ is maximized. Then the quantum query complexity of determining whether $s$ has Hamming weight at most $l$ or at least $l'$ is $\Theta(\sqrt{n/\Delta}+\sqrt{m(n-m)}/\Delta)$.
\end{proposition}

Now we can prove \thm{lower-bound-eps}.
\begin{proof}
Given a binary string $s\in\{0,1\}^{n}$, we consider the convex body
\begin{align}\label{eqn:cvx-s}
\cvx_{s}:=H_{n}\cup\Big(\bigcup_{i\colon s_{i}=1}(P_{i,+}\cup P_{i,-})\Big).
\end{align}
By \lem{lower-convexity} and the fact that each hyperpyramid is the intersection of $\cvx_{n}$ and the convex spaces $\{x:x_{i}\geq 1/2\}$ or $\{x:x_{i}\geq 1/2\}$, $\cvx_{s}$ is also convex. Furthermore, a query to the membership oracle in \eqn{oracle-defn} for $\cvx_{s}$ can be implemented using one query to the binary string oracle $O_{s}$: queries to points outside $\cvx_{n}$ or inside $H_{n}$ are trivially answered with 0 and 1, respectively, whereas queries to points in $P_{i,+}\cup P_{i,-}$ should return $s_{i}$. Also note that for each $i\in\range{n}$, the volume of the hyperpyramid $P_{i,+}$ is
\begin{align}
\vol(P_{i,+})=\int_{0}^{1/2}(1-2t)^{n-1}\d t=\frac{1}{2n}
\end{align}
since the intersection of $P_{i,+}$ and $\{x:x_{i}=1/2+t\}$ is an $(n-1)$-dimensional hypercube with side-length $1-2t$ and hence volume $(1-2t)^{n-1}$. By symmetry, we also have $\vol(P_{i,-})=\frac{1}{2n}$. Therefore
\begin{align}
\vol(\cvx_{s})&=\vol(H_{n})+\sum_{i:s_{i}=1}\big(\vol(P_{i,+})+\vol(P_{i,-})\big) \\
&=1+|s|_{\text{Ham}}\cdot \frac{2}{2n}=1+\frac{|s|_{\text{Ham}}}{n}.
\end{align}
In other words, estimating the volume of $\cvx_{s}$ with multiplicative error $\epsilon$ is equivalent to the Hamming distance problem with $\Delta=4\epsilon n$. Taking $m=\frac{n}{2}+\Delta$ in \prop{Nayak-Wu}, we find that the quantum query complexity of estimating the volume of $\cvx_{s}$ is at least
\begin{align}
\Omega\Big(\sqrt{\frac{n}{\epsilon n}}+\frac{\sqrt{n^{2}/4-\epsilon^{2}n^{2}}}{\epsilon n}\Big)=\Omega\Big(\frac{1}{\epsilon}\Big)
\end{align}
for any $1/n\leq\epsilon\leq 1/3$.
\end{proof}

\begin{remark}
The same proof strategy implies a classical lower bound of $\Omega(1/\epsilon^{2})$ for volume estimation if we replace \prop{Nayak-Wu} by its folklore classical counterpart. In particular, this shows that our quantum algorithm in \thm{main} achieves a provable quadratic quantum speedup in $1/\epsilon$.
\end{remark}

\begin{remark}
Although the proofs of both theorems consider well-rounded convex bodies, this assumption can be simply waived by assuming known multiplicative rescaling factors $c_{1},\ldots,c_{n}$ along all the $n$ directions. The proofs follow from the same arguments.
\end{remark}

%%%%%%%%%%%%%%%%%%%%%%%%%%%%%%%%%%%%%%%%%%%%%%%%%%%%%%%%%%%%%%%%%%%%%%%%%%%%%%

\section*{Acknowledgements}
We thank anonymous reviewers for helpful suggestions on earlier versions of this paper, including one of them that pointed out a minor technical issue in the nondestructive amplitude estimation circuit (\fig{SA-block-nondes}). TL and CW thank Nai-Hui Chia for helpful discussions. SC, AMC, TL, and XW were supported in part by the U.S. Department of Energy, Office of Science, Office of Advanced Scientific Computing Research, Quantum Algorithms Teams program. AMC also received support from the Army Research Office (MURI award W911NF-16-1-0349), the Canadian Institute for Advanced Research, and the National Science Foundation (grant CCF-1813814). SHH received support from the U.S. Department of Energy, Office of Science, Office of Advanced Scientific Computing Research, Quantum Testbed Pathfinder program under Award Number {DE-SC0019040}. TL also received support from an IBM Ph.D.\ Fellowship and an NSF QISE-NET Triplet Award (grant DMR-1747426). CW was supported by Scott Aaronson's Vannevar Bush Faculty Fellowship from the US Department of Defense. XW also received support from the National Science Foundation (grants CCF-1755800 and CCF-1816695).

%%%%%%%%%%%%%%%%%%%%%%%%%%%%%%%%%%%%%%%%%%%%%%%%%%%%%%%%%%%%%%%%%%%%%%%%%%%%%%

\end{document}